\newcommand{\ignore}[1]{}
\tikzset{node distance=1.1cm and 1.1cm, on grid, semithick, shorten <=2pt, shorten >=2pt,
state/.style ={circle,black, fill, inner sep=0pt, minimum width = 0.3 em}}
\tikzset{every node/.style={circle, fill=white, inner sep = 0.1 em}}
\newcommand{\bbP}{\mathbb P}
\newcommand{\N}{\mathbb N}
\newcommand{\Z}{\mathbb Z}
\renewcommand{\C}{\mathbb C}
\newcommand{\F}{\mathbb F}
\newcommand{\x}{{\mathbf x}}
\newcommand{\y}{{\mathbf y}}
\newcommand{\z}{{\mathbf z}}
\newcommand{\e}{{\mathbf e}}
\renewcommand{\a}{{\mathbf a}}
\newcommand{\cf}{\text{Coeff}}
\renewcommand{\sp}{\textup{sparsity}}
\newcommand{\calC}{{\mathcal C}}
\newcommand{\ROABP}{\textrm{ROABP}}
\newcommand{\size}{\mathsf{size}}
\newcommand{\SPS}{\Sigma\Pi\Sigma}
\renewcommand{\span}{\textup{sp}}
\newcommand{\SPSP}{\Sigma\Pi\Sigma\Pi}
\newcommand{\VBP}{\mathsf{VBP}}
\newtheorem{theorem}{Theorem}
\newtheorem{lemma}[theorem]{Lemma}
\newtheorem{question}{Open question}
\newtheorem{proposition}[theorem]{Proposition}
\newtheorem{claim}{Claim}
\newtheorem{corollary}[theorem]{Corollary}
\theoremstyle{definition}
\newtheorem{definition}[theorem]{Definition}
\newtheorem{example}[theorem]{Example}
\newtheorem{remark}{Remark}
\newtheorem{observation}{Observation}
\newcommand{\val}{\textsf{val}}
\newcommand{\bbC}{\mathbb{C}}
\newcommand{\eps}{\varepsilon}
\renewcommand{\epsilon}{\varepsilon}
\newcommand{\image}{\operatorname{image}}
\newcommand{\Kc}{\textup{\textsf{Kc}}}
\newcommand{\WR}{\mathsf{WR}}
\newcommand{\bwr}{\underline{\mathsf{WR}}}
\newcommand{\ELL}{\boldsymbol{\ell}}
\newcommand{\dlog}{\mathsf{dlog}}
\newcommand{\SES}{\Sigma\wedge\Sigma}
\newcommand{\pderiv}[2]{\partial_{#2}\inparen{#1}}
\newcommand{\inparen }[1]{\left(#1\right)} 
\newcommand{\GenSES}{\mathsf{Gen}}
\newcommand{\SPSSES}[1]{ \Sigma^{#1} \inparen{\Pi\Sigma / \Pi\Sigma} \inparen{\SES / \SES} }
\newcommand{\SPSfanin}[2]{\Sigma^{#1}\Pi^{#2}\Sigma}
\newcommand{\rank}{\mathsf{rank}}
\newcommand{\vect}[1]{\mathbf{#1}}
\newcommand{\VPk}[1]{\mathsf{VBP}_{#1}} 
\newcommand{\VPe}{\mathsf{VF}} 
\newcommand{\VF}{\VPe}
\newcommand{\approxbar}[1]{\overline{#1}} 
\newcommand{\FF}{\mathbb{F}} 
\newcommand{\Oh}{\mathcal{O}} 
\newcommand{\coeffset}[3]{{#1}_{(#2,#3)}}
\newcommand{\xa}{\x^{\a}}  
\newcommand{\ya}{\y^{\a}}
\DeclareMathOperator{\Span}{span}  
\newcommand{\abs}[1]{\lvert #1 \rvert}  
\DeclareMathOperator{\spanning}{span}
\DeclareMathOperator{\depending}{depend}
\newcommand{\ARO}{\mathsf{ARO}}
\newcommand{\VW}{\mathsf{VW}}
\renewcommand{\det}{\textup{det}}
\newcommand{\sgn}{\textup{sgn}}
\newcommand{\per}{\textup{per}}
\newcommand{\GL}{\operatorname{GL}}
\newcommand{\trace}{\textup{trace}}
\newcommand{\Sym}{\textup{Sym}}
\newcommand{\bP}{\mathbb{P}}
\newcommand{\bC}{\mathbb{C}}
\newcommand{\detdrei}{\det_3}
\newcommand{\End}{\mathrm{End}}
\newcommand\mop[1]{\operatorname{#1}}
\title{Recent Advances in Debordering Methods}
\author{\href{https://sites.google.com/view/pduttashomepage/home}{Pranjal Dutta} \thanks{Nanyang Technological University (NTU), Singapore \& Simons Institute for the Theory of Computing, UC Berkeley, USA. Email: \texttt{duttpranjal@gmail.com}} \and \href{https://qi.rub.de/lysikov}{Vladimir Lysikov} \thanks{Ruhr-Universit\"at Bochum, Bochum, Germany. Email: \texttt{vladimir.lysikov@rub.de}}}
\date{}
\begin{document}

\thispagestyle{empty}



\maketitle






%


\begin{abstract}

Border complexity captures functions that can be approximated by low-complexity ones. {\em Debordering} is the task of proving an upper bound on some non-border complexity measure in terms
of a border complexity measure, thus getting rid of limits. Debordering lies at the heart of foundational complexity theory questions relating Valiant's determinant versus permanent conjecture (1979) and its geometric complexity theory (GCT) variant proposed by Mulmuley and Sohoni (2001). The debordering of matrix multiplication tensors by Bini (1980) played a pivotal role in the development of efficient matrix multiplication algorithms. Consequently, debordering finds
applications in both establishing computational complexity lower bounds and facilitating algorithm
design. Recent years have seen notable progress in debordering various restricted border complexity measures. In this survey, we highlight these advances and discuss techniques underlying them.
\end{abstract}

{\small\textbf{Keywords:}
algebraic complexity, border complexity, debordering, geometric complexity theory, orbit closures, determinant, permanent
} 

\bigskip

{\small\textbf{2020 Math.~Subj.~Class.:}
Primary 68Q17; Secondary 68Q15, 14L30
}

\bigskip

{\small\textbf{ACM Computing Classification System:}
Theory of computation $\to$ Algebraic complexity theory; Circuit complexity; Complexity classes; Problems, reductions and completeness
} 

\tableofcontents

\section{Introduction}
A central goal of theoretical computer science is to understand the computational resources required to solve algorithmic problems. Computational complexity theory approaches this by organizing problems into complexity classes -- such as $\P$, $\NP$, and $\#\P$ -- and asking whether natural problems in one class can or cannot be efficiently computed using algorithms from another. The most famous of these questions is the $\mathsf{P}$ vs.\ $\mathsf{NP}$ problem.

In parallel to this Boolean setting, algebraic complexity theory studies the complexity of computing multivariate polynomials using arithmetic operations over a field. Rather than manipulating bits, the focus is on computing polynomials using only additions, multiplications, and constants. This algebraic viewpoint captures a wide range of problems in symbolic computation, algebraic geometry, invariant theory, and even numerical analysis.

\paragraph{Algebraic circuits.}~The fundamental model of computation here is the \emph{algebraic circuit}: a directed acyclic graph whose input nodes (nodes of in-degree zero) are labeled by variables $\{x_1,x_2,\ldots, x_n\}$ or the constants from the underlying field $\F$, the internal nodes labeled by `$+$' (addition gate) and `$\times$' (multiplication gate). Each edge is labeled by some field constant. Semantically, each node of the graph computes a polynomial on the input variables naturally. There is a single node of out-degree zero, called the output node of the circuit. The in-degree of a vertex is called its {\em fan-in} and out-degree
its {\em fan-out}. The {\em size} of an algebraic circuit is the size of the graph, which is the total number of nodes and edges.
The {\em depth} of the circuit is the length of the longest path from the root to a leaf node. Algebraic circuits can be assumed to be layered with alternating
layers of $+$ and $\times$ nodes. A circuit can compute a polynomial with exponentially large degree with respect to its size. For our purpose, we only focus on \emph{low degree} circuits, i.e.~the degree of the polynomial computed by the circuit is polynomially upper-bounded by the circuit size. One interesting polynomial that admits a polynomial-size circuit is the {\em symbolic determinant polynomial}, defined as follows:
\[
\text{det}_n\;:=\;\sum_{\sigma \in S_n}\sgn(\sigma)\prod_{i=1}^n x_{i,\sigma(i)}\;.
\]
\paragraph{Algebraic formulas and ABPs.}~Despite this elegant structural formulation, proving lower bounds against general arithmetic circuits remains a major open challenge. To make progress, researchers study restricted circuit models, such as formulas, ABPs, depth-bounded circuits, or bounded fan-in circuits. These restrictions allow us to understand how structural constraints affect expressiveness and open avenues for proving meaningful lower bounds. 
An {\em algebraic formula} is a circuit with the underlying structure to be a tree.  
On the other hand, every homogeneous degree $d$ polynomial $f$ can be written as a product
\[
f = \begin{pmatrix}\ell_{1,1,1} & \cdots & \ell_{1,n,1}\end{pmatrix}
\begin{pmatrix}
\ell_{1,1,2} &    \cdots   &   \ell_{1,n,2}     \\
  \vdots    & \ddots &   \vdots     \\
  \ell_{n,1,2}    &   \cdots    & \ell_{n,n,2}
\end{pmatrix}
\cdots
\begin{pmatrix}
\ell_{1,1,d-1} &    \cdots   &   \ell_{1,n,d-1}     \\
  \vdots    & \ddots &   \vdots     \\
  \ell_{n,1,d-1}    &   \cdots    & \ell_{n,n,d-1}
\end{pmatrix}
\begin{pmatrix}
\ell_{1,1,d} \\
  \vdots\\
  \ell_{n,1,d}
\end{pmatrix}
\]
of matrices whose entries are homogeneous linear polynomials.
We define $w(f)$ to be the smallest possible such $n$, and call it the \emph{homogeneous branching program width} of $f$.
For an inhomogeneous polynomial, we define $w(f) := \sum_{d\in\N}w(f_d)$ to be the sum of the widths of its homogeneous components. This notion is polynomially equivalent to the {\em determinantal complexity}: Given a polynomial $f$, its determinantal complexity $\textsf{dc}(f)$, is $m$ if it is the smallest size of a matrix $A$ whose entries are affine linear polynomials such that $\det(A)= f$. For more details and properties on these classes, we refer to~\cref{sec:prelimI}, and these beautiful surveys \cite{Shpilka10,mahajan2014algebraic}.

\paragraph{Algebraic complexity classes.}~Introduced by Valiant~\cite{Valiant79}, the class $\mathsf{VP}$ consists of families of polynomials ${f_n}$ over a field $\mathbb{F}$ that can be computed by arithmetic circuits of size and degree bounded by a polynomial in $n$. $\VPe$ consists of families of polynomials ${f_n}$ over a field $\mathbb{F}$ that can be computed by an algebraic formula of size bounded by a polynomial in $n$. $\VBP$ consists such polynomial families whose $w$, or the determinantal complexity $\textsf{dc}$ is polynomially bounded. The class $\mathsf{VNP}$ generalizes this by allowing an exponential sum (in fact {\em hypercube} sum) over polynomially computable polynomials: A polynomial family $(f_n) \in \VNP$, if there exists $(g_r) \in \VP$, such that 
\[
f(\x)\;=\;\sum_{\a \in \{0,1\}^m}\;g_r(\x,\a)\;.
\]
A natural complete problem for $\VNP$ is the {\em symbolic permanent polynomial}:
\[
\;\;\per_m:=\sum_{\sigma \in S_m}\prod_{i=1}^m x_{i,\sigma(i)}\;.
\]
Equivalently, $(f_n) \in \VNP$ if its {\em permanental complexity} $\textsf{pc}(f_n)$: the smallest
size of a matrix $A$ whose entries are affine linear polynomials such that $f_n=\per_m(A)$, is polynomially bounded. The class $\VNP$ plays the role of $\mathsf{NP}$ in this algebraic setting. 

The problem of separating algebraic complexity classes has been a central theme of this study. It is known that $\VF \subseteq \VBP \subseteq \VP \subseteq \VNP$~\cite{Valiant79, Toda92}.
The conjectures $\VF \neq \VNP$, $\VBP \neq \VNP$, $\VP \neq \VNP$, are known as \emph{Valiant's conjectures}. Especially $\VNP \not \subseteq \VBP$ is known as the \emph{determinant vs permanent} problem, which asks to prove the following statement: $\textsf{dc}(\per_m)$ is {\em not polynomially} bounded. Whereas, the $\mathsf{VP}$ vs.\ $\mathsf{VNP}$ problem asks to prove the following statement: $\size(\per_m)$ is {\em not polynomially} bounded. These questions naturally mirror the Boolean $\mathsf{P}$ vs.\ $\mathsf{NP}$ question. It is known that the nonuniform $\P \ne \NP$ conjecture implies $\VP \ne \VNP$, assuming Generalized Riemann Hypothesis (GRH)~\cite{burgisser2000cook}.


\subsection{Geometric Complexity Theory}
Over the years, impressive progress has been made towards resolving Valiant's conjectures, however, the existing tools have not been able to resolve this conclusively. Mulmuley and Sohoni strengthened the conjecture~\cite{MS01} by allowing the permanent to be approximated arbitrarily closely coefficientwise instead of being computed exactly. The hope in the Geometric Complexity Theory (GCT) program is to use available tools from algebraic geometry and representation theory, and possibly settle the question once and for all. 

The Mulmuley–Sohoni conjecture can be stated in terms of group orbit closures as $\ell^{n-m}\per_m \not \in \overline{\GL_{n^2}\det_{n}}$, if $n=\poly(m)$; here $\GL_{n^2}:=\GL(\C^{n\times n})$ acts on the space
of homogeneous degree $n$ polynomials in $n^2$ variables by (invertible) linear transformations of the variables\footnote{For a homogeneous polynomial $p$ and $g\in\GL_{n^2}$ define the homogeneous polynomial $gp$ via $(gp)(\x) := p(g^t \x)$. The orbit is defined as $\GL_{n^2}p := \{gp \mid g \in \GL_{n^2}\}$.}, $\ell$ is some homogeneous linear polynomial (one can assume $\ell:=x_{1,1}$),
and the closure can be taken equivalently in the Zariski or the Euclidean topology, see e.g.~\cite[AI.7.2 Folgerung]{Kra85}. %
The polynomial $\ell^{m-n}\per_n$ is called the `padded permanent', and the phenomenon of multiplying with a power of a linear form is called \emph{padding}.
Note here that the action of $\GL_{n^2}$ replaces variables by \emph{homogeneous} linear polynomials. One could formulate this setup without padding, but then the reductive group $\GL_{n^2}$ would have to be replaced by the general affine group (see e.g.~\cite{MS21}), which is \emph{not} a reductive group.
For reductive groups, every representation decomposes into a direct sum of irreducible representations.
This is important for the representation-theoretic attack proposed in \cite{MS01,MS08}, hence the padding is introduced in those papers.
The idea is that $\ell^{n-m}\per_m \in \overline{\GL_{n^2}\det_{n}}$ if and only if
$\overline{\GL_{n^2}\ell^{n-m}\per_m} \subseteq \overline{\GL_{n^2}\det_{n}}$.
Such an inclusion induces a $\GL_{n^2}$-equivariant surjection between the coordinate rings and between their homogeneous degree $\delta$ components, see e.g.~\cite{BLMW11}:
$
\C[\overline{\GL_{n^2}\det_{n}}]_\delta \;\twoheadrightarrow \;\C[\overline{\GL_{n^2}\ell^{n-m}\per_m}]_\delta.
$

Outside $\VP$ vs.~$\VNP$ implication, GCT has deep connections with computational invariant theory \cite{forbes2013explicit,mulmuley2012gct,garg2016deterministic, burgisser2018alternating,DBLP:journals/cc/IvanyosQS17}, algebraic natural proofs \cite{grochow2017towards,DBLP:conf/soda/BlaserILPS21,chatterjee2020existence,kumar2020if,berg2024algebraic}, lower bounds \cite{burgisser2013explicit, grochow2015unifying, LO15}, optimization \cite{allen2018operator,burgisser2019towards} and many more. We refer to~\cite[Sec.~9]{BLMW11} and \cite{mulmuley2012gct,Mul12} for expository references.

\subsection{Border Complexity}
The complexity notions mentioned above, such as formula size, circuit size, width $w$, and the permanental complexity, have an associated \emph{border complexity} variant: A polynomial has border complexity $\leq k$ if it is the limit of polynomials of complexity at most $k$. Here, the limit is taken in the Euclidean topology on the coefficient vector space, see e.g.~\cite{IS22}. This notion is also equivalent to taking the Zariski closure; we will discuss it in detail in~\cref{sec:bordercomplexity}.
Border complexity measures are usually indicated by an underlined symbol: e.g., $\underline{w}$ is the border homogeneous algebraic branching program width.
Clearly $\underline{w}(p)\leq w(p)$ for all polynomials $p$.


Border complexity is an old area of study in algebraic geometry. In theoretical computer
science it was introduced in \cite{BCRL79,Bin80} in the context of fast matrix
multiplication, and later studied in~\cite{CW90,LO15}. In algebraic complexity theory, border complexity was first discussed independently in~\cite{MS01,Bur04}. 

\paragraph{Connection to matrix multiplication.}~It turns out that understanding border Waring rank would lead to designing faster matrix multiplication algorithms. For a homogeneous polynomial $f$, its \emph{Waring rank} $\WR(f)$ is defined as the smallest number $k$, such that $f = \sum_{i=1}^k \ell_i^d$, where $\ell_i$ are homogeneous linear polynomials over $\C$. Its {\em border Waring rank} $\bwr(f)$ is defined as the smallest $k$ such that $f = \lim_{\eps \to 0} \sum_{i=1}^k \ell_i^d$, where $\ell_i$ are homogeneous linear polynomials (in $\x$) over $\C(\eps)$. For more details, see~\cref{sec:waring}. 

On the other hand, the matrix multiplication exponent is defined as 
\[
\omega = \inf\{ \tau : \text{ two $n \times n$ matrices can be multiplied using $O(n^\tau)$ scalar multiplications}\}.
\]
This fundamental constant can be defined in terms of the tensor rank and the tensor border rank of the matrix multiplication tensor \cite{Strassen:Gauss_elimination_not_optimal,BCRL79}. Let~$X_n := (x_{ij})_{i,j = 1 \cdots n}$ be a matrix of variables. Then, $\trace(X_n^3)$, the trace polynomial of the matrix $X_n^3$, is a homogeneous degree 3 polynomial in $n^2$ variables. The results of \cite{CHIL18} show the following.
\[
\omega \;=\; \lim_{n \to \infty} \log_n  \bwr( \trace(X_n^3))\;.\] 

\paragraph{Advantage using border complexity.}~
Let $f \in \F[x]$ be a degree $d$ univariate polynomial. Observe that even if $r$ is very small compared to $d$, interpolating the coefficient of $x^r$
in $f(x)$ {\em requires} $d + 1$
many evaluations. Interestingly, in the border, the situation is quite different, since we can express the coefficient of $x^r$ in $f(x)$ as the limit of a
sum of just $r + 1$ evaluations of $f$! The following lemma states this formally; and the proof works even when $f$ is a formal power series.

\medskip
\begin{lemma}[Border Interpolation]
Let $R$ be a commutative ring that contains a field $\F$ of at least $r + 1$ elements, and let $\alpha_0, \cdots, \alpha_r$ be distinct elements in $\F$. F. Then, there exists fields elements $\beta_0, \cdots, \beta_r$ such that
for any $\sum_{i \ge 0} f_i x^i =: f(x) \in R[[x]]$, we have
\[
f_r \;=\; \lim_{\eps \to 0}\; \frac{1}{\eps^r}\,\left(\sum_{i=0}^r \beta_i \cdot f(\eps \alpha_i)\right)\;.
\]
\end{lemma}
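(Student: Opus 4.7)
The plan is to choose $\beta_0, \ldots, \beta_r \in \F$ so that the finite linear combination $\sum_{i=0}^r \beta_i f(\eps \alpha_i)$, viewed as a power series in $\eps$ with coefficients in $R$, has vanishing coefficients in $\eps$-degrees $0, 1, \ldots, r-1$ and has coefficient exactly $f_r$ in $\eps$-degree $r$. If this is arranged, then dividing by $\eps^r$ leaves a power series in $\eps$ whose constant term is $f_r$, and the limit $\eps \to 0$ evaluates to $f_r$.

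To pin down the right $\beta_i$, I would first expand formally and swap the two summations:
\[
\sum_{i=0}^r \beta_i f(\eps \alpha_i) \;=\; \sum_{i=0}^r \beta_i \sum_{j \ge 0} f_j \alpha_i^j \eps^j \;=\; \sum_{j \ge 0} f_j \eps^j \Bigl(\sum_{i=0}^r \beta_i \alpha_i^j\Bigr).
\]
This reduces the lemma to solving the linear system $\sum_{i=0}^r \beta_i \alpha_i^j = \delta_{j,r}$ for $j = 0, 1, \ldots, r$. The coefficient matrix is the $(r+1) \times (r+1)$ Vandermonde matrix $(\alpha_i^j)$, which is invertible over $\F$ because the $\alpha_i$ are distinct (and the hypothesis $\lvert\F\rvert \ge r+1$ is precisely what guarantees we can pick them so). A clean closed form comes from Lagrange interpolation: the condition says that the linear functional $p \mapsto \sum_i \beta_i p(\alpha_i)$ on polynomials of degree $\le r$ must agree with the leading-coefficient functional, so one may take $\beta_i = 1/\prod_{k \ne i}(\alpha_i - \alpha_k)$.

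The one point that deserves a brief justification, and is essentially the only subtlety, is that $f$ is a formal power series rather than a polynomial, so the interchange of summations and the interpretation of the limit have to be explained. After our choice of $\beta_i$ the terms with $j < r$ vanish identically, hence $\eps^{-r}\sum_{i=0}^r \beta_i f(\eps \alpha_i)$ is a bona fide element of $R[[\eps]]$ with constant term $f_r$ and higher terms $\sum_{k \ge 1} f_{r+k}\bigl(\sum_i \beta_i \alpha_i^{r+k}\bigr) \eps^k$. Its ``limit as $\eps \to 0$'' can therefore be read either as formal substitution $\eps = 0$ in $R[[\eps]]$ or, when $R$ carries a topology (e.g.\ $R \subseteq \C$), as a genuine topological limit; both yield $f_r$. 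There is no real obstacle here, only the observation that a Vandermonde system encodes exactly the ``extract the coefficient of $y^r$'' functional in terms of $r+1$ point evaluations.
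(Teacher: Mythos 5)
Your proof is correct and takes essentially the same route as the paper: both are polynomial interpolation, choosing $\beta_i$ so that the linear functional $\sum_i \beta_i\,p(\alpha_i)$ extracts the degree-$r$ coefficient from polynomials of degree $\le r$, with the tail of the power series contributing only $O(\eps)$ after dividing by $\eps^r$. The only presentational difference is that you set up and solve the Vandermonde system (giving the explicit Lagrange form of $\beta_i$) and handle the tail inline, whereas the paper first splits $f$ into its degree-$\le r$ truncation $g$ and the remainder $h$ and treats the two pieces separately; the underlying argument is the same.
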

\begin{proof}[Proof sketch]
Let $g(x) := \sum_{i=0}^r f_i x^i$, and $h := f-g$. Interpolation on $\epsilon \cdot \alpha_i$ shows that there exist constants $\beta_0, \cdots, \beta_r$ such that $\eps^r f_r = \sum_{i=0}^r \beta_i \cdot g(\eps \alpha_i)$. Since, $\lim_{\eps \to 0} \frac{1}{\eps^r} h(\eps \alpha_i) = 0$, the conclusion follows.
\end{proof}

\paragraph{Other connections and importance.}~A central question in the GCT program is whether the class $\VP$ is {\em closed under approximation}, that is, whether $\approxbar{\VP} = \VP$~\cite{Mul12, DBLP:conf/icalp/GrochowMQ16}. Resolving this question -- either by proving or disproving it -- would have significant consequences in both algebraic complexity and algebraic geometry. If $\VP = \overline{\VP}$, then any proof separating $\VP$ from $\VNP$ would automatically imply that $\VNP \not\subseteq \overline{\VP}$, as conjectured in\cite{mulmuley2012gct}. On the other hand, if this closure fails, then any approach to separating $\VP$ from $\VNP$ must first separate the permanent from members of $\overline{\VP} \setminus \VP$, a task that currently appears well beyond reach. Moreover, a long line of depth reduction results~\cite{vsbr83, Agrawal08, Koi12, Tav13, Gupta16} and bootstrapping phenomena~\cite{AGS19, KST19, GKSS19, And20} show that debordering restricted models, such as the border of depth-3 or depth-4 circuits, is equally intriguing and interesting.

Debordering results are also closely tied to the {\em flip principle} in GCT~\cite{mulmuley2010geometric, mulmuley2012gct}, which emphasizes understanding upper bounds first, to leverage that understanding to eventually prove lower bounds. Even for restricted models, such as depth-3 circuits or small-width ABPs, establishing debordering results can have significant implications, especially for problems like derandomizing Polynomial Identity Testing (PIT). Derandomizing PIT is equivalent to constructing small, explicit hitting sets for $\VP$. This has far-reaching implications across computational mathematics, including but not limited to, graph algorithms~\cite{Lov79, MVV87, FGT19}, polynomial factoring~\cite{KSS14, 10.1145/3510359, bhargav2025primer,bhattacharjee2025closurefactorizationresultfurstenberg}, cryptography~\cite{Agrawal04}, and foundational results in hardness-vs-randomness \cite{HS80, nw94, Agrawal05, KI03, DSY09, DST21}. 
We refer readers to~\cite{Shpilka10, Sax14, kumar2019hardness,10.1145/3674159.3674165} for excellent surveys on these topics.

Going beyond $\VP$, PIT for the border class $\overline{\VP}$ is particularly significant due to its deep connections with algebraic geometry, as observed by Mulmuley~\cite{Mul12}. For example, the computational version of Noether's Normalization Lemma (NNL) -- a foundational result in algebraic geometry -- can be reduced to constructing explicit hitting sets for $\approxbar{\VP}$\cite{Mul12, forbes2013explicit}. In fact, certain formulations of NNL derandomization are known to be equivalent to proving explicit lower bounds\cite{Mul12, Mukh16}. The construction of a robust and explicit hitting set for $\VP$~\cite{FS18, GSS19} remains an important open goal in its own right. Naturally, strong debordering results would serve as a crucial tool in constructing such hitting sets for a wide range of algebraic models.

Debordering has found applications in the context of polynomial factoring as well, see~\cite{burgisser2000completeness,Bur04,10.1145/3510359,bhargav2024learning}.

\subsection{Purpose of This Survey}
This survey aims to provide a compact yet comprehensive overview of the most significant results related to debordering in algebraic complexity. We organize the known literature into two main categories: 

\begin{enumerate}
    \item {\bf Algebraic Characterizations of Border Complexity} (\cref{sec:bordercomplexity}):
Here, we present and sketch proofs of various characterizations of border complexity that offer a more algebraic viewpoint—serving as alternatives to the standard topological definition.
\item  {\bf Debordering Results} (\cref{sec:debordering}):
This section focuses on the known structural weaknesses of various circuit models and outlines the corresponding debordering results, organized by key techniques and frameworks.
\end{enumerate}
Within these two themes, we attempt to cover the major developments while highlighting common proof strategies and underlying frameworks. To make the survey accessible, we include the necessary background in the preliminaries. The exposition is written to require minimal prior knowledge, making it approachable to readers with a basic understanding of algebraic complexity. 

In several instances, we provide more detailed proof sketches than usual -- particularly for results such as the $\varepsilon$-degree of approximation discussed in~\cref{sec:bordercomplexity}, and the debordering of the border of depth-3 circuits with fan-in 2 discussed in~\cref{sec:debordering}. These proofs have long intrigued readers due to their subtle and intricate nature, but on many separate past occasions, we were told that the existing presentations are seemingly not-too-helpful in conveying the underlying ideas. We therefore felt it useful to provide reasonably detailed arguments for clarity and completeness.

\section{Preliminaries} \label{sec:prelimI}

\paragraph{Notation:} 
\begin{itemize}
    \item For a positive integer $k$, $[k]$ denotes the set of positive integers $\{1,2,\ldots,k\}$.
    \item For a finite set $S$, $|S|$ denotes the cardinality of the set $S$.
    \item We use boldface letters such as $\x$ to refer to an order tuple of variables such as $(x_1, \cdots, x_n)$. The size of the tuple would usually be clear from context
    \item We use bold-face letters (such as $\F, \C$) to denote fields. We use $\F[x]$ to denote the polynomial ring, $\F[x^{\pm 1}]$ to denote the ring of Laurent polynomials, $\F[[x]]$ to denote the ring of formal power series, and $\F((x))$ to denote the ring of formal Laurent series with respect to the variable $x$ with coefficients from the field $\F$. Throughout, we will work with $\F=\C$, the field of complex numbers, unless specified otherwise. 
    \item $\C[\x]_d$ denotes the set of homogeneous degree-$d$ polynomials, while $\C[\x]_{\le d}$ denotes the set of polynomials of degree at most $d$. In particular, $\C[\x]_1$ contains all the homogeneous linear forms $a_1x_1 + \cdots + a_n x_n$, where $a_i \in \C$. 
    \item  For an $\a=(a_1,a_2,\ldots,a_n)\in\Z_{\ge 0}^n$, $\x^\a$ denotes the monomial $\prod_{i=1}^nx_i^{a_i}$.
    \item Let~$A(\x)$ be a polynomial over~$\F$ in~$n$ variables. A polynomial $A(\x)$ is said to have \emph{individual degree} $d$,  if the degree of each variable is bounded by $d$ for each monomial in $A(\x)$. When~$A(\x)$ has individual degree~$d$, then the exponent~$\a$ of any monomial~$\xa$ of~$A(\x)$
is in the set
\[
 \M = \{0,1, \dots, d\}^n \,.
\]
\item By~$\cf_{\xa}(A) \in \F$ we denote the coefficient of the monomial~$\xa$ in~$A(\x)$.
Hence, we can write
\[
 A(\x) = \sum_{\a \in \M} \cf_{\xa}(A)\, \xa \,.
\]
The {\em sparsity\/} of polynomial~$A(\x)$ is the number of nonzero coefficients~$\cf_{\xa}(A)$.
\item {\bf Coefficient space.}~We also consider {\em matrix polynomials\/}
where the coefficients~$\cf_{\xa}(A)$ are $w \times w$ matrices, for some~$w$.
In an abstract setting,
these are polynomials over a $w^2$-dimensional $\F$-algebra of matrices~$\F^{w \times w}$.
The {\em coefficient space\/} is then defined as the span of all coefficients
of~$A$,
i.e., 
$\Span_{\F}\{\cf_{\xa}(A) \mid \a \in \M \}$,

Consider a partition of the variables~$\x$ into two parts~$\y$ and~$\z$, with $\abs{\y}=k$.
A polynomial~$A(\x)$ can be viewed as a polynomial in  variables~$\y$, 
where the coefficients are polynomials in~$\F[\z]$.
For monomial~$\ya$,
let us denote the coefficient of~$\ya$ in~$A(\x)$ by $\coeffset{A}{\y}{\a} \in \F[\z]$.
For example, in the polynomial $A(\x) = x_1 + x_1x_2 + {x_1}^2$, we have
$\coeffset{A}{x_1}{1} = 1 + x_2$, whereas
$\cf_{x_1}(A) = 1$.
Observe that $\cf_{\ya}(A)$ is the constant term in $\coeffset{A}{\y}{\a}$.

Thus, $A(\x)$ can be written as 
\begin{equation}\label{eq:C_ya}
A(\x) = \sum_{\a \in \{0,1, \dots,d\}^k} \coeffset{A}{\y}{\a} \, \ya \,.
\end{equation}
The coefficient $\coeffset{A}{\y}{\a}$ is also sometimes expressed 
in the literature as a partial derivative~$\frac{\partial A}{\partial \ya} $ 
evaluated at $\y = {\boldsymbol 0}$
(and multiplied by an appropriate constant), see~\cite[Section 6]{Forbes13-greybox-hs}.
\item For a set of polynomials~$\mathcal{P}$,
we define their $\F$-$\Span$ as
\[
\Span_{\F} \mathcal{P} = \left\{\sum_{A \in \mathcal{P}}
\alpha_A A \mid \alpha_A \in \F \text{ for all } A \in \mathcal{P}\right\}.
\]
The set of polynomials~$\mathcal{P}$ is said to be $\F$-{\em linearly independent\/}
if $\sum_{A \in \mathcal{P}} \alpha_A A = 0$ holds only for $\alpha_A = 0$, 
for all~$A \in \mathcal{P}$.
The \emph{dimension\/}~$\dim_\F \mathcal{P}$ of~$\mathcal{P}$ is
the cardinality of the largest $\F$-linearly independent subset of~$\mathcal{P}$.
\item {\bf Valuation.}~For any $g \in \C[\epsilon^{\pm 1}][\x]$, one can define $\val_{\epsilon}(g)$ as the minimum exponent of $\epsilon$ appearing in $g$. Clearly, $\lim_{\epsilon \to 0} g$ exists if and only if $\val_{\epsilon}(g) \ge 0$. We also assume that $\val_{\epsilon}(0) = +\infty$. 

\item {\bf Equivalence Relation.}~We introduce an equivalence relation of \emph{approximate equality} on~$\C[\eps^{\pm 1}][\x]$: given two polynomials $f_1,f_2$ whose coefficients depend rationally on $\eps$, we write $f_1 \simeq f_2$, iff $\lim_{\epsilon \to 0} f_1$ and $\lim_{\epsilon \to 0} f_2$ are both finite and they coincide. We often use this notation with either $f_1$ or $f_2$ not depending on $\eps$: if, for instance, $f_1$ does not depend on $\eps$, then $f_1 \simeq f_2$ means that $f_2 = f_1  + O(\eps)$.

\end{itemize}  

\subsection{Arithmetic branching programs}
\label{sec:abp}
An {\em arithmetic branching program\/} (ABP) is a directed graph 
with $\ell+1$ layers of vertices $V_0, \ldots, V_{\ell}$.
The layers $V_0$ and $V_\ell$ each contain only one vertex, 
the {\em start node\/}~$v_{0}$ and 
the {\em end node\/}~$v_{\ell}$, respectively.
The edges are 
only going from the vertices in the layer $V_{i-1}$ to the vertices in the layer $V_i$, 
for any $i \in [\ell]$.
All the edges in the graph have weights from~$\F[\x]$,
for some field~$\F$. 
The {\em length\/} of an ABP is the length of a longest path in the ABP, i.e.~$\ell$.
An ABP has {\em width\/}~$w$, 
if $\abs{V_i} \leq w$ for all $1 \leq i \leq \ell-1$.

For an edge~$e$, let us denote its weight by~$W(e)$.
For a path~$p$,
its weight~$W(p)$ is defined to be the product of weights of all the edges
in it,
\[ W(p) = \prod_{e \in p} W(e).\]
The {\em polynomial $A(\x)$ computed by the ABP\/}
is the sum of the weights of all the paths from $v_{0}$ to $v_{\ell}$,
\[
A(\x) = \sum_{p \text{ path } v_{0} \leadsto v_{\ell}} W(p).
\]

Let the set of nodes in $V_i$ be $\{v_{i,j} \mid j \in [w]\}$.
The branching program can alternately be represented by a matrix product
$\prod_{i=1}^{\ell} D_i  $,
where $D_1 \in \F[\x]^{1\times w}$, 
$D_i \in \F[\x]^{w \times w}$ for $2 \leq i \leq \ell-1$,
and $D_{\ell} \in \F[\x]^{w\times 1}$ 
such that 
\begin{eqnarray*}
D_1(j) &=& W(v_0,v_{1,j}),\; \text{ for } 1 \leq j \leq w,\\
D_i(j, k) &=& W(v_{i-1,j},v_{i,k}),\;  \text{ for } 1 \leq j,k \leq w \text{ and } 2 \leq i \leq n-1,\\
D_{\ell}(k) &=& W(v_{\ell-1,k},v_{\ell}),\;  \text{ for } 1 \leq k \leq w.
\end{eqnarray*}
Here we use the convention that $W(u,v) = 0$ if $(u,v)$ is not an edge in the ABP.


\subsection{Read-once Oblivious Arithmetic Branching Programs}
\label{subsec:roabpCharacterization}

An ABP is called a {\em read-once oblivious ABP (ROABP)}
if the edge weights in every layer are univariate polynomials in the same variable,
and every variable occurs in at most one layer.
Hence,
the length of an ROABP is~$n$, the number of variables.
The entries in the matrix~$D_{i}$ defined above come from~$\F[x_{\pi(i)}]$,
for all $i \in [n]$, where $\pi$ is a permutation on the set~$[n]$.
The order $(x_{\pi(1)}, x_{\pi(2)}, \dots, x_{\pi(n)})$ is said 
to be the \emph{variable order\/} of the ROABP.

We will view~$D_{i}$ as a polynomial in the variable~$x_{\pi(i)}$,
whose coefficients are $w$-dimensional vectors or matrices. The read-once property gives us an easy way to express the coefficients of the
polynomial~$A(\x)$ computed by an ROABP; namely for a polynomial $A(\x) = D_1(x_{\pi(1)}) D_2(x_{\pi(2)}) \cdots D_n(x_{\pi(n)})$ computed by an {\rm ROABP},
we have 
\begin{equation}
 \textrm{\cf}_{\xa}(A) \;=\; \prod_{i=1}^n \cf_{x_{\pi(i)}^{a_{\pi(i)}}}(D_i)~~ \in \F \,.
 \label{eq:coeff}
\end{equation}

We also consider matrix polynomials computed by an ROABP.
A matrix polynomial $A(\x) \in F^{w \times w}[\x]$ is said to be computed by an ROABP if
$A = D_1 D_2 \cdots D_n$, where $D_i \in F^{w \times w}[x_{\pi(i)}]$
for $i = 1, 2, \dots, n$ and some permutation~$\pi$ on~$[n]$.
Similarly, a vector polynomial $A(\x) \in F^{1 \times w}[\x]$
is said to be computed by an ROABP if 
$A = D_1 D_2 \cdots D_n$, where $D_1 \in F^{1 \times w}[x_{\pi(1)}]$ 
and $D_i \in F^{w \times w}[x_{\pi(i)}]$ for $i =  2, \dots, n$.
Usually, we will assume that an ROABP computes a polynomial in $\F[\x]$,
unless mentioned otherwise.

We state the definition of characterizing dependencies, which defines an ROABP layer by layer.

\medskip
\begin{definition}\label{def:dependencies}
Let $A(\x)$ be polynomial of individual degree~$d$, with variable-order $(x_{\pi(1)},  \cdots, x_{\pi(n)})$.
For any $0 \leq k \leq n$ and $\y_k = x_{\pi(1)},  \cdots, x_{\pi(k)}$, 
let
$$\dim_\F \{ \coeffset{A}{\y_k}{\a} \mid \a \in \{0,1,\dots, d\}^k \} \;\leq\; w ,$$
for some~$w$.

For $0 \leq k \leq n$,
we define the {\em spanning sets\/}
$\spanning_k(A)$ and the {\em dependency sets\/} $\depending_k(A)$ as subsets of $\{ 0,1, \dots, d \}^k$ as follows.

For $k = 0$, let
$\depending_0(A) = \emptyset$ and
$\spanning_0(A) = \{ (\,) \}$,
where $(\,)$ is the empty tuple. 
For $k > 0$, let
\begin{itemize}
\item 
$\depending_k(A) = \{ (\a , j) \mid \a \in \spanning_{k-1}(A) \text{ and } 0 \leq j \leq d \}$,
i.e.\
$\depending_k(A)$ contains all possible extensions of the tuples in $\spanning_{k-1}(A)$.
\item 
$\spanning_k(A) \subseteq \depending_k(A)$
is any set of size $\le w$,
such that 
for any $\vect b \in \depending_k(A)$, the polynomial~$\coeffset{A}{\y_k}{\vect b}$
is in the span of $\{\coeffset{A}{\y_k}{\a} \mid \a \in \spanning_k(A)\}$.
\end{itemize}
The linear dependencies of the polynomials in $\{\coeffset{A}{\y_k}{\a} \mid \a \in \depending_k(A)\}$
over $\{\coeffset{A}{\y_k}{\a} \mid \a \in \spanning_k(A)\}$ are the 
{\em characterizing set of dependencies}.
\end{definition}
The spanning set $\spanning_k(A)$ is not unique.

Nisan~\cite{nisan1991lower} gave an exact
width characterization for ROABPs (Nisan considers the model of noncommutative ABPs, but all statements can be translated to the ROABP setting). We follow the presentation of~\cite{Gurjar15} for this characterization.

\medskip
\begin{lemma}[{\cite{nisan1991lower}}] \label{lem:ROABPdim}
Let $A(\x)$ be polynomial of individual degree~$d$, computed by an ROABP of
width $w$, with variable-order $(x_{\pi(1)},  \cdots, x_{\pi(n)})$. For $k \in [n]$, let $\y = x_{\pi(1)},  \cdots, x_{\pi(k)}$, be the prefix of length $k$ and $\z$ be the suffix of length $n-k$. Then,
\[\dim_\F \{ \coeffset{A}{\y}{\a} \mid \a \in \{0,1,\dots, d\}^k \} \leq w\;.\]
Conversely, let $A(\x)$ be a polynomial of individual degree $d$, such that for any $k \in [n]$ and
$\y_k = (x_{\pi(1)}, \cdots, x_{\pi(k)})$, we have $\dim_\F \{ \coeffset{A}{\y}{\a} \mid \a \in \{0,1,\dots, d\}^k \} \leq w$. Then, there exists an ROABP of width $w$ for $A(\x)$ in
the variable order $ (x_{\pi(1)}, \cdots, x_{\pi(n)})$.
\end{lemma}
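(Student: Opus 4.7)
The plan is to handle the two directions separately. The forward implication will follow from a routine splitting of the matrix product at a cut, while the converse will require constructing the ROABP matrices layer by layer from a basis of each coefficient space.

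For the forward direction, I would write the given width-$w$ ROABP as $A = D_1(x_{\pi(1)}) \cdots D_n(x_{\pi(n)})$ and split the product at the $k$-th position as $A(\x) = P(\y_k) \cdot Q(\z)$ with
\[
P := D_1 \cdots D_k \in \F[\y_k]^{1\times w}, \qquad Q := D_{k+1}\cdots D_n \in \F[\z]^{w\times 1}.
\]
Expanding coordinate-wise gives $A = \sum_{j=1}^w P_j(\y_k)\, Q_j(\z)$, so extracting the coefficient of $\y_k^{\a}$ yields
\[
\coeffset{A}{\y_k}{\a} \;=\; \sum_{j=1}^{w} \cf_{\y_k^{\a}}(P_j)\cdot Q_j \;\in\; \Span_{\F}\{Q_1, \ldots, Q_w\},
\]
a subspace of dimension at most $w$.

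For the converse, I would define, for each $0 \leq k \leq n$,
\[
V_k := \Span_{\F}\bigl\{\coeffset{A}{\y_k}{\a} \mid \a \in \{0,\dots,d\}^k\bigr\} \;\subseteq\; \F[x_{\pi(k+1)}, \ldots, x_{\pi(n)}],
\]
of dimension $w_k \leq w$ by hypothesis, noting that $V_0 = \Span_{\F}\{A\}$ and $V_n \subseteq \F$ (so $w_0 = 1$ and $w_n \leq 1$). I would then fix a basis of each $V_k$ and arrange it as a column vector $S_k \in \F[x_{\pi(k+1)},\ldots,x_{\pi(n)}]^{w_k\times 1}$. The crucial step will be the identity
\[
\coeffset{A}{\y_k}{\a} \;=\; \sum_{j=0}^{d} x_{\pi(k+1)}^{j}\, \coeffset{A}{\y_{k+1}}{(\a,j)},
\]
which, by linearity, shows that every element of $V_k$ expands in $x_{\pi(k+1)}$ as a polynomial whose coefficients all lie in $V_{k+1}$. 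Applying this entry-wise to $S_k$ and re-expressing the resulting $V_{k+1}$-elements in the basis $S_{k+1}$ produces a matrix $D_{k+1}(x_{\pi(k+1)}) \in \F[x_{\pi(k+1)}]^{w_k \times w_{k+1}}$ with $S_k = D_{k+1}\, S_{k+1}$. Iterating gives $A = S_0 = D_1 D_2 \cdots D_n\, S_n$; absorbing the constant vector $S_n$ into $D_n$ and padding each matrix with zero rows and columns to uniform width $w$ yields the desired ROABP in the prescribed variable order.

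The main obstacle will be the recursive containment -- establishing that the $x_{\pi(k+1)}$-coefficients of every element of $V_k$ lie in $V_{k+1}$. Once that is in hand, the remainder of the construction reduces to routine linear algebra together with bookkeeping to match the formal ROABP matrix dimensions.
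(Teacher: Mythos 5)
The paper does not give a proof of this lemma --- it cites Nisan and notes it follows the presentation of Gurjar et al. --- so there is nothing in the text to compare against directly. That said, your argument is correct and is the standard proof of Nisan's characterization. The forward direction via the cut $A = P(\y_k)\cdot Q(\z)$ with $\coeffset{A}{\y_k}{\a} \in \Span\{Q_1,\dots,Q_w\}$ is exactly right, and the converse construction correctly builds $D_{k+1} \in \F[x_{\pi(k+1)}]^{w_k\times w_{k+1}}$ from the expansion $\coeffset{A}{\y_k}{\a} = \sum_j x_{\pi(k+1)}^j\,\coeffset{A}{\y_{k+1}}{(\a,j)}$ together with linearity of coefficient extraction; the entries of $D_{k+1}$ genuinely involve only $x_{\pi(k+1)}$ because the coordinate vectors $\alpha_j$ re-expressing the $c_j$'s in the basis $S_{k+1}$ are constants, and padding to uniform width $w$ is harmless. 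The one thing you flag as the ``main obstacle'' --- that $x_{\pi(k+1)}$-coefficients of every element of $V_k$ land in $V_{k+1}$ --- is in fact immediate from linearity applied to your displayed identity, so no further work is needed there.
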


A polynomial $A \in \F[\x]$ is computable by an {\em any-order ROABP} ($\ARO$) of size $w$, if for all possible permutations of variables there exists an ROABP of size at most $w$ in that variable order. It is easy to check that for an $\ARO$, \cref{lem:ROABPdim} holds wrt any variable-order.

One can also capture the space by the coefficient matrix (also known as the partial derivative matrix) where the rows are indexed by monomials $p_i$ from $\y$, columns are indexed by monomials $q_j$ from $\z = \x \backslash \y$ and $(i,j)$-th entry of the matrix is $\cf_{p_i \cdot q_j}(A)$.  We refer the reader to~\cite{saptharishi2015survey} for details on this matrix and its connection with coefficient polynomials.


\subsection{Some Technical Lemmas}
\paragraph{Newton Identities.}~ Let $e_k(x_1,\hdots,x_n)$ denotes the $k$-th {\bf elementary symmetric} polynomial, defined by
\[
e_k(x_1,\hdots,x_n)\;:=\;\sum_{1 \le j_1 < j_2 < \cdots <j_{k} \le n} x_{j_1}\cdots x_{j_k}\;;
\]
Recall that by definition $e_0 = 1$. It is easy to observe that
\[
\prod_{i = 1}^n (1+x_i) = \sum_{j =0}^n e_j(\x)\;.
\]

{\em Newton identities} are a central tool in this section; they relate the elementary symmetric polynomials and the {\em power sum} polynomial, defined as $p_k(\x):=x_1^k+\cdots+x_n^k$. 

\medskip
\begin{proposition}[Newton Identities, see e.g. \cite{Macdon:SymmetricFunctions}, Section I.2] \label{prop:Newt-Id}
Let $n,k$ be integers with $n \ge k \ge 1$. Then
\[ 
k\cdot e_k(x_1,\hdots,x_n)\;=\;\sum_{i \in [k]}\,(-1)^{i-1}e_{k-i}(x_1,\hdots,x_n) \cdot p_i(x_1,\hdots,x_n)\;.
\]
\end{proposition}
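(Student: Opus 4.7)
The plan is to use the generating function approach, which packages both sides of the identity into a single functional equation and then extracts Newton's identity as an equality of coefficients. Concretely, I introduce the auxiliary variable $t$ and consider
\[
E(t) \;:=\; \prod_{i=1}^n (1 + x_i t) \;=\; \sum_{k=0}^n e_k(\x)\, t^k,
\]
where the second equality is the standard expansion already noted in the excerpt. My goal is to relate the derivative $E'(t)$, which naturally carries the factor $k$ in front of $e_k$, to a product of $E(t)$ with a power-sum generating series.

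The key step is logarithmic differentiation. I would compute
\[
\frac{E'(t)}{E(t)} \;=\; \sum_{i=1}^n \frac{x_i}{1 + x_i t} \;=\; \sum_{i=1}^n \sum_{j \ge 0} (-1)^j x_i^{\,j+1} t^j \;=\; \sum_{j \ge 0} (-1)^j p_{j+1}(\x)\, t^j,
\]
where the geometric expansion is valid in the formal power series ring $\Q[\x][[t]]$, so no analytic convergence issue arises. Clearing the denominator, I obtain $E'(t) = E(t) \cdot \sum_{j \ge 0} (-1)^j p_{j+1}(\x)\, t^j$.

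The final step is to extract the coefficient of $t^{k-1}$ from both sides for $1 \le k \le n$. On the left, $E'(t) = \sum_{k \ge 1} k\, e_k(\x)\, t^{k-1}$ contributes $k\, e_k(\x)$. On the right, the coefficient of $t^{k-1}$ in the Cauchy product is
\[
\sum_{m + j = k-1} e_m(\x) \cdot (-1)^j p_{j+1}(\x) \;=\; \sum_{i=1}^{k} (-1)^{i-1} e_{k-i}(\x)\, p_i(\x),
\]
after the reindexing $i = j+1$, which is exactly the claimed right-hand side. Equating these proves the identity.

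The main (and only genuine) obstacle is bookkeeping: aligning the index shifts between $j$, $m$, and $i$ correctly so that the sign $(-1)^{i-1}$ and the range $i \in [k]$ come out right; everything else is formal manipulation in $\Q[\x][[t]]$. A minor sanity check I would run is the base case $k = 1$, where the identity reduces to $e_1 = p_1$, which is immediate.
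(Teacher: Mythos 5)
Your proof is correct. The paper does not supply its own proof of this proposition --- it cites it as a standard fact from Macdonald's \emph{Symmetric Functions and Hall Polynomials} (Section I.2) --- so there is no in-paper argument to compare against. Your generating-function derivation via logarithmic differentiation of $E(t)=\prod_i(1+x_it)$, followed by extraction of the $t^{k-1}$ coefficient from $E'(t)=E(t)\cdot\sum_{j\ge 0}(-1)^j p_{j+1}t^j$, is one of the standard proofs of Newton's identities, and the index bookkeeping ($i=j+1$, $m=k-i$) is carried out correctly; the sign $(-1)^{i-1}$ and the range $1\le i\le k$ come out exactly as in the stated identity. The only thing worth making explicit is that for $k\le n$ the coefficient of $t^{k-1}$ in $E(t)$ indeed involves $e_m$ for $0\le m\le k-1\le n-1$, all of which are genuine terms of the finite sum $\sum_{m=0}^n e_m t^m$, so the truncation at degree $n$ never interferes with the extraction.
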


{\noindent{\bf Power series and $\dlog$}.}~~One of the key benefits of the power series ring comes from the \emph{inverse} identity: $(1 - x)^{-1} = \sum_{i \ge 0}\, x^i$. This will be used widely in many proof sketches.

The logarithmic derivative operator $\dlog_{\,z}(f) \;:=\; (\partial_{z}f)/f$ is another key tool which {\em linearizes} the product gate, since 
\begin{equation}\label{eq:dlog-equation}
 \dlog_{y}(f \cdot g) \,=\, \partial_y(fg)/(fg) \,=\, (f \cdot \partial_y g \,+\, g \cdot \partial_y f)/(fg)\,=\,\dlog_{y}(f) + \dlog_y(g) \;. 
\end{equation}
This operator enables us to use power-series expansion, and converts the $\prod$-gate to $\wedge$.

\medskip
Let~$\ell \in \C[\x]$ be a linear polynomial such that the constant term is nonzero. For simplicity, suppose $\ell := 1 + \tilde{\ell}$, where $\tilde{\ell}$ is a homogeneous linear polynomial. Further, let~$\Phi: \x \mapsto z\x$. Note that $\Phi(\ell) = 1+ z \cdot \tilde{\ell}$. Therefore, by simple power series expansion as mentioned above, $\dlog_{z}(\Phi(\ell))$ becomes:
\begin{equation}\label{eq:dlog-linear-mod}
 \dlog_{z}(\Phi(\ell))  = \frac{\tilde{\ell}}{1 + z \cdot \tilde{\ell}} \,=\, \sum_{i \ge 0}\,(-1)^{i} z^{i} \cdot \tilde{\ell}^{i+1}\;.   
\end{equation}
In later proofs, we will generally work with transformations of the form $\x \mapsto z\x + \a$. While this alters the constant term (specifically the coefficient of $1$), we will nonetheless obtain a power series expansion that remains a sum of powers of linear forms.

One crucial fact that we will use throughout is the following. Let~$h \in \F[z]$, for a field $\F$ and suppose $\val_{z}(h) = 0$. Then $1/h$ is a power series in $z$, i.e.~$1/h \in \F[[z]]$. To give an explicit example,~let~$h:=z+\epsilon$; trivially 
\[\val_z(h)=0\;,\;\;\;\;\text{and}\;\;\;\; \frac{1}{z+\eps} \;=\; \sum_{i=0}^{d-1}\; (-1)^i \frac{z^i}{\epsilon^{i+1}}~\bmod~z^d\;.
\]

\medskip
\begin{proposition}[Valiant's criterion~\cite{Valiant79,burgisser2000completeness}]
\label{valiant-criterion}
Let function $\phi : \{0,1\}^{*} \rightarrow \N$ be in $\#\P/\poly$. Then, the family of polynomials defined by $ f_n(\x) := \sum_{\e \in \{0,1\}^n} \phi(\e)\cdot \x^{\e}$, is in $\VNP$.
\end{proposition}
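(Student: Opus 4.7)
The plan is to translate the $\#\P/\poly$ description of the coefficients $\phi(\e)$ into a $\VP$-polynomial whose Boolean hypercube sum recovers $f_n(\x)$. Unpacking the hypothesis, $\phi \in \#\P/\poly$ furnishes a polynomial $p$ and a family of Boolean circuits $C_n(\e,\y)$ of size $\poly(n)$ (with a polynomial-length non-uniform advice string hard-wired) such that $\phi(\e) = |\{\y \in \{0,1\}^{p(n)} : C_n(\e,\y)=1\}|$ for every $\e \in \{0,1\}^n$. Without loss of generality $C_n$ uses only AND, NOT, and constants.

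The immediate obstacle is that the naive arithmetization ($\neg a \mapsto 1-a$, $a\wedge b \mapsto ab$) of $C_n$ only controls size: the resulting polynomial can have degree as large as $2^{\mathsf{depth}(C_n)}$, which does not fit into $\VP$ and therefore cannot serve as the inner witness in the $\VNP$-definition. To tame the degree I would introduce one auxiliary Boolean variable $z_g$ per internal gate of $C_n$, intended to record that gate's value, and enforce consistency locally by a constant-degree indicator, e.g.\ $I_g(\z,\e,\y) = 1 - (z_g - uv)^2$ for an AND gate with inputs $u,v$ (each being either an input variable from $(\e,\y)$ or a preceding $z_{g'}$), and analogously $1-(z_g-(1-u))^2$ for NOT. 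Setting
\[
G_n(\e,\y,\z) \;:=\; z_{\mathrm{out}} \cdot \prod_{g} I_g(\z,\e,\y),
\]
I obtain a polynomial of polynomial size \emph{and} polynomial degree (both linear in the number of gates of $C_n$), hence $G_n \in \VP$. On $\{0,1\}$-inputs the unique $\z$ consistent with $(\e,\y)$ is the correct gate evaluation, so $\sum_{(\y,\z)} G_n(\e,\y,\z) = \phi(\e)$.

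To insert the monomial $\x^\e$ I would use the identity $x_i^{e_i} = 1 + e_i(x_i-1)$ valid for $e_i\in\{0,1\}$, so that $M_n(\x,\e) := \prod_{i=1}^n (1 + e_i(x_i-1))$ is manifestly in $\VP$ and equals $\x^\e$ on Boolean $\e$. The polynomial $H_n(\x,\e,\y,\z) := G_n(\e,\y,\z)\cdot M_n(\x,\e)$ lies in $\VP$ and satisfies
\[
\sum_{(\e,\y,\z)\in\{0,1\}^{N}} H_n(\x,\e,\y,\z) \;=\; \sum_{\e\in\{0,1\}^n}\phi(\e)\,\x^\e \;=\; f_n(\x),
\]
where $N = n + p(n) + (\text{number of gates of } C_n) = \poly(n)$, exhibiting $(f_n) \in \VNP$ by the definition recalled in the introduction.

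The main conceptual obstacle is exactly this degree control: $\VNP$ demands a $\VP$ witness, i.e.\ polynomial \emph{degree} and not just polynomial size, whereas arithmetizing a deep Boolean circuit naively incurs an exponential degree blow-up. The fresh per-gate variable trick sidesteps this by trading a deep arithmetic composition for a product of constant-degree local consistency polynomials, absorbing the extra variables into the outer hypercube sum; the remainder of the argument is bookkeeping.
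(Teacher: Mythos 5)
Your proof is correct and is essentially the standard argument for Valiant's criterion (the one given, for instance, in Bürgisser's book cited by the paper); the paper itself states the proposition without proof, citing \cite{Valiant79,burgisser2000completeness}. The two key moves — arithmetizing the Boolean circuit with a fresh Boolean gate-variable $z_g$ per internal gate, together with constant-degree consistency indicators $I_g$, to trade circuit depth for additional hypercube-sum variables and thereby keep the degree of the inner $\VP$-witness polynomial; and replacing $\x^\e$ by the multilinear proxy $\prod_i(1+e_i(x_i-1))$ — are exactly the standard ones. Your accounting of size and degree is accurate (each $I_g$ has degree at most $4$, so $G_n$ has degree linear in the number of gates), and the summation identity $\sum_{(\e,\y,\z)}H_n = f_n$ is verified correctly. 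The only cosmetic remark is that $\VP$ and $\VNP$ are classes of polynomial \emph{families}, so strictly one should write $(H_n)\in\VP$ and exhibit $(f_n)\in\VNP$ via the family $(H_n)$; you implicitly do this, and there is no gap.
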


\section{Border complexity and its algebraic characterization}
\label{sec:bordercomplexity}

Let us give a formal definition of our main object of study --- {\em border complexity}.

\medskip
\begin{definition}[Border complexity]
    Let $\Gamma$ be a complexity measure on polynomials.
    The corresponding \emph{border complexity} $\underline{\Gamma}(f)$ of a polynomial $f \in \bbC[x_1, \dots, x_n]$ is defined as the minimal~$s$ such that $f$ lies in the closure of the set $\mathcal{C}(s, n)$ of polynomials with complexity $s$, that is, if there exists a sequence of polynomials $f_k \in \bbC[\x]$ such that $\Gamma(f_k) \leq s$ and $f = \lim_{k \to \infty} f_k$.
\end{definition}

While this definition explains border complexity conceptually, it is not very convenient to work with.
For a certain class of complexity measures there is an equivalent algebraic definition.
We call these complexity measures \emph{parameterizable}.
A formal definition of a parameterizable complexity measure is stated in~\cref{def:parameterizable} below.
Intuitively, a parameterizable complexity measure $\Gamma$ is defined in terms of the minimal size of a ``device'', ``circuit'' or ``expression'' with parameters in~$\bbC$ computing polynomials in $\bbC[x_1, \dots, x_n]$, and can be extended to polynomials with coefficients in the algebra of Laurent polynomials $\bbC[\eps^{\pm 1}]$ (or any other algebra over $\bbC$) by changing the allowed parameter space.

\medskip
\begin{definition}
    Let $\Gamma$ be a parameterizable complexity measure.
    The \emph{border complexity} $\underline{\Gamma}(f)$ of a polynomial $f \in \bbC[x_1, \dots, x_n]$ is the minimal $s$ such that there exists $\tilde{f} \in \bbC[\eps][\x]$ such that $\tilde{f}|_{\eps = 0} = f$ and $\Gamma_{\bbC[\eps^{\pm 1}]}(\tilde{f}) = s$, where $\Gamma_{\bbC[\eps^{\pm 1}]}$ denotes the complexity measured over~$\bbC[\eps^{\pm 1}]$.
\end{definition}

Most of this section is devoted to the proof of this algebraic characterization of border complexity (see~\cref{thm:equivalent-definitions} below).
It was obtained by Alder~\cite{Alder-thesis} (see also~\cite[§20.6]{bcs97}) for tensor rank and circuit complexity, but applies more generally to all parameterizable complexity measures.
One direction of the proof is simple: given~$\tilde{f} \in \bbC[\eps][x_1, \dots, x_n]$ computed with complexity $s$ over $\bbC[\eps^{\pm 1}]$, we can form a sequence $f_n = \tilde{f}|_{\eps = 1/n}$ of polynomials of complexity $s$ converging to $f$.
The other direction is much more complicated and involves ideas from algebraic geometry.
These ideas are fundamental in deformation theory, where they relate geometric and formal viewpoints on deformations~\cite{fialowski1990comparison}.
The basic idea can be traced back to Hilbert, and was first applied in the context of algebraic complexity theory by Alder.
Our presentation mostly follows the proof presented in~\cite{lehmkuhl1989order} for the tensor rank complexity measure, which includes bounds on the degree and order of the Laurent polynomials involved.

\subsection{Algebro-geometric Preliminaries}

We first review some facts from algebraic geometry and prove several statements about the closure of the image of a polynomial map.
In this section, the terms ``open'' and ``closed'' refer to Zariski topology unless stated otherwise.

\paragraph{Affine and projective varieties.}~Affine and projective varieties are the spaces of solutions for a system of polynomial equation, studied in classical algebraic geometry.
We list some basic definitions in a very concrete setting where we only work with varieties embedded in an affine space $\bbC^n$ or a projective space $\bbP^n$.
We refer to algebraic geometry textbooks (e.\,g.~\cite{shafarevich1994basic}) for more information.

An \emph{affine variety} in $\bbC^n$ is the set of all common zeros of a finite set of polynomials $F_1, \dots, F_k \in \bbC[x_1, \dots, x_n]$.
The \emph{ideal} $I_\mathcal{X}$ of an affine variety $\mathcal{X} \subset \bbC^n$ consists of all polynomials $F \in \bbC[x_1, \dots, x_n]$ vanishing on $\mathcal{X}$.
The \emph{coordinate ring} $O_\mathcal{X}$ of $\mathcal{X}$ is defined as $\bbC[x_1, \dots, x_n]/I_\mathcal{X}$.
Elements of the coordinate ring can be identified with \emph{regular functions} on $\mathcal{X}$, that is, functions on $\mathcal{X}$ given by restrictions of polynomials on $\bbC^n$.
A \emph{morphism} between varieties $\mathcal{X} \subset \bbC^m$ and $\mathcal{Y} \subset \bbC^n$ is a map $\varphi \colon \mathcal{X} \to \mathcal{Y}$ given by restriction of a polynomial map from $\bbC^m$ to $\bbC^n$.
A \emph{closed subvariety} of an affine variety $\mathcal{X} \subset \bbC^n$ is a subset which is itself an affine variety.
The space $\bbC^n$ considered as an affine variety is called the \emph{affine space}.
The \emph{Zariski topology} on an affine variety is the topology in which closed sets are exactly closed subvarieties.
Note that in the Zariski topology every nonempty open set is dense.
A \emph{rational function} on an affine variety $\mathcal{X}$ is a quotient of two regular functions. It is defined on the open subset of $\mathcal{X}$ defined by the nonvanishing of the denominator.

Consider the equivalence relations on $\bbC^{n + 1} \setminus \{0\}$ in which two tuples are equivalent if and only if they are proportional to each other.
The set of all equivalence classes is called the \emph{$n$-dimensional projective space} $\bbP^n$.
The point of $\bbP^n$ corresponding to the tuple $x = (x_0, x_1, \dots, x_n)$ is denoted by $[x] = (x_0 : x_1 : \dots : x_n)$, and the elements $x_i$ are called \emph{homogeneous coordinates of $[x]$}.
If $F \in \bbC[x_0, \dots, x_n]$ is a homogeneous polynomial which vanishes on one tuple in the equivalence class $[x]$, then it vanishes on the whole equivalence class.
A \emph{projective variety} in $\bbP^n$ is the set of all points on which a finite set of homogeneous polynomials vanishes.
The Zariski topology on a projective variety is defined in the same way as on an affine one.
A \emph{rational map} between projective varieties $\mathcal{X} \subset \bbP^m$ and $\mathcal{Y} \subset \bbP^n$ is a map $\varphi \colon \mathcal{U} \to \mathcal{Y}$ defined on an open set $\mathcal{U} \subset \mathcal{X}$ such that every point of $\mathcal{U}$ has a neighborhood on which 
\[
\varphi\, :[x] \;\mapsto\; (f_0(x) : f_1(x) : \dots : f_n(x))\;,
\]
where $f_i$ are homogeneous polynomials of the same degree.
If a rational map is defined on the whole $\mathcal{X}$, it is a \emph{morphism} between varieties.
Morphisms of projective varieties have a very important property.

\medskip
\begin{theorem}[{\cite[Theorem 1.10]{shafarevich1994basic}}]
    Morphisms of projective varieties are closed, that is, they map closed sets to closed sets.
\end{theorem}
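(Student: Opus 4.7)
The plan is to reduce the theorem to a single key fact, the \emph{completeness of projective space}: for every variety $\mathcal{Y}$ and every $n$, the second projection $\pi_2 : \bP^n \times \mathcal{Y} \to \mathcal{Y}$ is a closed map. Granting this ``Main Lemma'', the theorem follows by a short graph argument. Given a morphism $\varphi : \mathcal{X} \to \mathcal{Y}$ between projective varieties and a closed subset $\mathcal{Z} \subseteq \mathcal{X}$, I would consider the graph $\Gamma_\varphi = \{(x, \varphi(x)) : x \in \mathcal{X}\}$, which is closed in $\mathcal{X} \times \mathcal{Y}$ as the preimage of the diagonal under $(\varphi, \mathrm{id})$. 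Then $\varphi(\mathcal{Z}) = \pi_2(\Gamma_\varphi \cap (\mathcal{Z} \times \mathcal{Y}))$; since $\mathcal{X}$ sits as a closed subvariety of some $\bP^n$, this intersection is closed in $\bP^n \times \mathcal{Y}$, and the Main Lemma closes the argument.

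For the Main Lemma, closedness is local on the base, so I would assume $\mathcal{Y} \subseteq \bbC^m$ is affine and let $\mathcal{W} \subseteq \bP^n \times \mathcal{Y}$ be cut out by finitely many polynomials $F_1(x;y), \ldots, F_k(x;y)$ homogeneous in $x = (x_0, \ldots, x_n)$. The image $\pi_2(\mathcal{W})$ is the set of $y$ such that the system $F_i(x;y) = 0$ has a nontrivial solution in $x$. By the projective Nullstellensatz, this fails exactly when the homogeneous ideal $J_y$ generated by the $F_i(\cdot;y)$ contains the irrelevant ideal $(x_0, \ldots, x_n)^N$ for some $N$, equivalently when, for some sufficiently large $d$, the multiplication map
\[
\mu_d(y) : \bigoplus_{i=1}^k \bbC[x]_{d - \deg_x F_i} \;\longrightarrow\; \bbC[x]_d, \qquad (G_i) \mapsto \sum_i G_i \, F_i(\cdot;y),
\]
is surjective. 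The matrix of $\mu_d(y)$ in monomial bases has entries that are polynomials in $y$, so each $U_d = \{y \in \mathcal{Y} : \mu_d(y) \text{ is surjective}\}$ is open in $\mathcal{Y}$ (defined by the non-vanishing of maximal minors). Surjectivity at degree $d$ implies surjectivity at all larger degrees, so the $U_d$ form an ascending chain of open sets whose union equals $\mathcal{Y} \setminus \pi_2(\mathcal{W})$, and this union is open, proving the Main Lemma.

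The main obstacle is precisely the elimination step in the middle paragraph --- translating the geometric condition ``the fiber $\pi_2^{-1}(y) \cap \mathcal{W}$ is nonempty'' into a closed algebraic condition on $y$, which is the content of the projective Nullstellensatz together with the uniform-in-$y$ rank description of $\mu_d(y)$. Projectivity is essential here: the analogous statement is false in affine space, as witnessed by the hyperbola $\{xy = 1\} \subseteq \bbC^2$ whose image under $(x,y) \mapsto y$ is the non-closed set $\bbC \setminus \{0\}$. In the projective setting, the possible ``point at infinity'' on such a curve is exactly what guarantees that the relevant fiber remains nonempty and hence that the projection is closed.
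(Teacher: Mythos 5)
The paper does not supply a proof of this theorem; it is stated as a black box with a citation to Shafarevich, so there is no in-paper argument to compare against. Your proof is correct and is the standard textbook argument (it is essentially the one Shafarevich himself gives): reduce by the graph trick to the statement that $\pi_2 \colon \bP^n \times \mathcal{Y} \to \mathcal{Y}$ is closed, localize on the base, and then invoke the main theorem of elimination theory. The one point worth spelling out a bit more in the reduction is why $\Gamma_\varphi$ is closed in $\mathcal{X}\times\mathcal{Y}$: this is because the diagonal in $\mathcal{Y}\times\mathcal{Y}$ is closed (for $\mathcal{Y}\subseteq\bP^m$ it is cut out by the bihomogeneous equations $y_i y'_j = y_j y'_i$), i.e.\ projective varieties are separated; your parenthetical ``preimage of the diagonal'' is using exactly that. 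The elimination step is also handled cleanly: the projective Nullstellensatz converts ``the fiber is empty'' into ``$(J_y)_d = \bbC[x]_d$ for some $d$,'' the cascade $(J_y)_d = \bbC[x]_d \Rightarrow (J_y)_{d+1} = \bbC[x]_{d+1}$ holds by multiplying through by $\bbC[x]_1$, and each $U_d$ is open because surjectivity of $\mu_d(y)$ is a maximal-minor-nonvanishing condition on a matrix whose entries are polynomial in $y$. So the complement of $\pi_2(\mathcal{W})$ is a union of opens and hence open, as you say. No gaps.
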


An affine or projective variety defined by one nonconstant polynomial (homogeneous in the projective case) is called a \emph{hypersurface}.
A hypersurface defined by an affine linear form (linear in the projective case) is called a \emph{hyperplane}.

\medskip
\begin{lemma}\label{lem:Zariski-open-Euclidean-dense}
    A nonempty Zariski open subset of $\bbC^n$ is dense in Euclidean topology.
\end{lemma}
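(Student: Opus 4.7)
The plan is to show the contrapositive: if a subset $U \subseteq \bbC^n$ is \emph{not} Euclidean dense, then either $U$ is empty or it fails to be Zariski open. Equivalently, I will show that a nonempty Zariski closed proper subset $V \subsetneq \bbC^n$ cannot contain a nonempty Euclidean open ball, so its complement meets every Euclidean open set. Since any proper Zariski closed set is contained in the zero set $Z(f) = \{x \in \bbC^n \mid f(x) = 0\}$ of some nonzero polynomial $f \in \bbC[x_1,\dots,x_n]$, it suffices to prove the following geometric fact: the zero set of a nonzero polynomial contains no nonempty Euclidean open ball.

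I would prove this key fact by induction on the number of variables $n$. For the base case $n=1$, a nonzero univariate polynomial has finitely many roots, so its zero set cannot contain an open disk. For the inductive step, suppose $f \in \bbC[x_1, \dots, x_n]$ is nonzero and vanishes on a Euclidean open ball $B \subset \bbC^n$. Write $f = \sum_{j=0}^d f_j(x_1,\dots,x_{n-1})\, x_n^j$ with at least one $f_j$ nonzero. For every fixed $(a_1, \dots, a_{n-1})$ lying in the projection $\pi(B) \subset \bbC^{n-1}$ (which is itself Euclidean open), the univariate polynomial $f(a_1,\dots,a_{n-1},x_n) \in \bbC[x_n]$ vanishes on an open disk in $\bbC$, hence vanishes identically by the base case, giving $f_j(a_1,\dots,a_{n-1}) = 0$ for all $j$. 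Applying the inductive hypothesis to each $f_j$ shows $f_j \equiv 0$, contradicting the nontriviality of $f$.

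From this fact the lemma follows immediately: given a nonempty Zariski open $U$, pick any point $p \in \bbC^n$ and any Euclidean open ball $B$ around $p$; since $U^c \subseteq Z(f)$ for some nonzero $f$ (take $f$ to be any polynomial vanishing on the closed set $U^c$, which exists and is nonzero because $U$ is nonempty), $B \not\subseteq Z(f)$, so $B \cap U \neq \emptyset$. No step presents a serious obstacle; the only conceptual point to handle carefully is the transition from ``$U$ is Zariski open'' to ``$U^c$ is the zero set of a single nonzero polynomial,'' which is where one uses that $U \neq \emptyset$ to guarantee that not every polynomial vanishes on $U^c$. As a brief remark, one could alternatively invoke the identity principle for holomorphic functions---polynomials are entire, and an entire function vanishing on a nonempty Euclidean open set vanishes identically---but the elementary induction above is self-contained and keeps the exposition within the algebraic setting the survey uses elsewhere.
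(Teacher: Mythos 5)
Your proposal is correct and shares the paper's overall reduction: a proper Zariski closed subset of $\bbC^n$ is contained in the zero set $Z(F)$ of some nonzero polynomial $F$, so it suffices to show $Z(F)$ contains no Euclidean open ball. Where you diverge is in how this last fact is established. The paper dispatches it in a single sentence by observing that polynomials are analytic, so a polynomial vanishing on an open ball must be identically zero by the identity theorem for analytic functions. You instead give an elementary induction on the number of variables: fix a point in the projection of the ball, reduce to the univariate case, and propagate vanishing to every coefficient polynomial. Your route is more self-contained and stays entirely within elementary algebra, which fits the tone of a complexity-theory survey and avoids importing a theorem from complex analysis; the paper's route is shorter and leverages a fact readers in this area generally take for granted. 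You note the analyticity alternative yourself, so you have in effect reproduced the paper's argument together with a more elementary variant of its key step. One small wording nit: you say ``take $f$ to be any polynomial vanishing on the closed set $U^c$, which exists and is nonzero because $U$ is nonempty'' --- not every such polynomial is nonzero (the zero polynomial always vanishes), so this should read that a \emph{nonzero} such $f$ exists because $U^c \neq \bbC^n$; the intent is clear and the argument goes through.
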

\begin{proof}
    It is enough to prove that every Zariski closed subset does not contain an open ball around any of its points.
    To see this, note that a Zariski closed subset $\mathcal{X}$ is contained in a hypersurface defined by some polynomial $F \in I_\mathcal{X}$, and the hypersurface does not contain a ball around any of its point because if a polynomial vanishes on an open ball, then it is a zero polynomial (because polynomials are analytic functions).
\end{proof}

The affine space $\bbC^n$ can be embedded into $\bbP^n$ as an open subset $U_0$ consisting of points on which the homogeneous coordinate $x_0$ is nonzero; these points have the form $(1 : x_1 : \dots : x_n)$ and are identified with $(x_1, \dots, x_n) \in \bbC^n$).
An intersection of a projective variety with $U_0$ is an affine variety.
Each open subset $U_i = \{ [x] \in \bbP^n \colon x_i \neq 0 \}$ forms an affine space.
These subsets are called \emph{standard affine patches} of $\bbP^n$. The standard affine patches form a covering of $\bbP^n$, that is, every point of $\bbP^n$ lies in some standard affine patch.

A variety is \emph{irreducible} if it cannot be presented as a union of nontrivial closed subvarieties.
Every variety is a finite union of irreducible closed subvarieties called its \emph{irreducible components}.

Dimension is a fundamental property of a variety.
It has many definitions which come from different points of view on varieties, one of which is the following.

\medskip
\begin{definition}[Dimension of a variety]
    The \emph{dimension} $\dim \mathcal{X}$ of an irreducible variety $\mathcal{X}$ is the maximal length $m$ of a decreasing chain $\mathcal{X} = \mathcal{X}_0 \supset \mathcal{X}_1 \supset \dots \supset \mathcal{X}_n$ of nonempty irreducible closed varieties.
    The dimension of an arbitrary variety is the maximum among the dimensions of its irreducible components.
    We say that a variety $\mathcal{X}$ is \emph{of pure dimension $n$} if all its irreducible components have dimension $n$.
    The \emph{codimension} of an irreducible variety $\mathcal{X} \subset \bbC^m$ is $m - \dim X$.
\end{definition}

A variety of dimension $0$ is a finite set of points. An irreducible variety of dimension $1$ is called a \emph{curve}.

\paragraph{Smooth and singular points.}~In algebraic geometry, the properties of a variety that are local at one of its points are studied through the ideal corresponding to this point.
In particular, the properties of this ideal determine if the variety is smooth at the point, and it can also be used to construct the complete local ring --- an algebraic object containing a fine description of the shape of the singularity if the point is singular.

Consider a point $x$ on an affine variety $\mathcal{X}$.
The evaluation map $\mathrm{ev}_x \colon \bbC[\mathcal{X}] \to \bbC$ sends each regular function $f$ on $\mathcal{X}$ to its value $f(x)$.
This map is a ring homomorphism, and its kernel is the ideal of the point $x$ considered as a subvariety, that is, the ideal $\mathfrak{m}_x$ consisting of functions vanishing on $x$.  
Since $\bbC[\mathcal{X}] / \mathfrak{m}_x \cong \bbC$ is a field, the ideal $\mathfrak{m}_x$ is maximal.

\medskip
\begin{definition}[Tangent and cotangent space]
The vector space $\mathfrak{m}_x/\mathfrak{m}_x^2$ is called the \emph{cotangent space} $T^*_x\mathcal{X}$ of $\mathcal{X}$ at $x$, and its dual -- the \emph{tangent space} $T_x\mathcal{X}$.
\end{definition}

\medskip
\begin{definition}[Smooth variety]
A point $x \in \mathcal{X}$ is called \emph{smooth} if $\dim T^*_x\mathcal{X} = \dim \mathcal{X}$, and \emph{singular} otherwise.
    A variety $\mathcal{X}$ is called \emph{smooth} if it has no singular points.
\end{definition}

The behaviour of a rational function $F$ on $\mathcal{X}$ at $x$ is described by its images in the quotients $\bbC[\mathcal{X}]/\mathfrak{m}_x^p$ for different $p$.
If $x$ is a smooth point of $\mathcal{X}$, these quotients contain the same information as the successive Taylor approximations of $F$ at $x$.
Collecting all these approximations we obtain an object that plays the role of the Taylor series --- the {\em germ} of $F$ at the point $x$.
The ring where germs of regular functions naturally live is an analog of the ring of formal Taylor series.
It is a special case of a categorical construction called a \emph{projective limit}, but we will not use this terminology.

\medskip
\begin{definition}[Complete local ring]
    Consider a point $x$ on an affine variety $\mathcal{X}$.
    The \emph{complete local ring} $\hat{O}_{x, \mathcal{X}}$ is defined as the set of sequences $(F_0, F_1, \dots)$ with $F_p \in \bbC[\mathcal{X}]/\mathfrak{m}_x^{p+1}$ which are \emph{compatible} in the sense that for $p \leq q$ we have that $F_p$ coincides with the image of $F_q$ under the natural projection $\bbC[\mathcal{X}]/\mathfrak{m}_x^{q+1} \to \bbC[\mathcal{X}]/\mathfrak{m}_x^{p+1}$.
    The \emph{germ of a regular function $F \in \bbC[\mathcal{X}]$ at point $x$} is an element of $(F_0, F_1, \dots, ) \in \hat{O}_{x, \mathcal{X}}$ where $F_p$ is the coset of $F$ in $\bbC[\mathcal{X}]/\mathfrak{m}_x^{p+1}$.
    The \emph{value~$\tilde{F}(x)$ of an element $\tilde{F} \in \hat{O}_{x, \mathcal{X}}$ at $x$} is the image of $\tilde{F}_0$ under the identification $\bbC[\mathcal{X}]/\mathfrak{m}_x \cong \bbC$.
\end{definition}

The complete local ring $\hat{O}_{x, \mathcal{X}}$ contains information about the shape of the variety $\mathcal{X}$ near the point $x$.
In particular, a smooth point on an $n$-dimensional variety has the complete local ring isomorphic to $\bbC[[x_1, \dots, x_n]]$.
We present a proof for the case of curves.

\medskip
\begin{lemma}
    If $x$ is a smooth point on an affine curve $\mathcal{E}$, then its complete local ring $\hat{O}_{x, \mathcal{E}}$ is isomorphic to $\bbC[[\eps]]$.
\end{lemma}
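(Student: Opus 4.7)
My plan is to exhibit a \emph{uniformizer} $t \in \mathfrak{m}_x$ which generates the maximal ideal of the local ring at $x$, and then to check that the powers $1, t, t^2, \ldots$ induce compatible bases of the truncations $\bbC[\mathcal{E}]/\mathfrak{m}_x^{p+1}$. Once this is in place, the isomorphism $\bbC[[\eps]] \xrightarrow{\sim} \hat{O}_{x,\mathcal{E}}$ sending $\eps \mapsto t$ will follow directly from the definition of $\hat{O}_{x,\mathcal{E}}$ as an inverse limit.

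\textbf{Constructing $t$ and principalizing the maximal ideal.} By smoothness together with $\dim\mathcal{E}=1$, the cotangent space $\mathfrak{m}_x/\mathfrak{m}_x^2$ has dimension $1$ over $\bbC$, so I would start by picking any $t\in \mathfrak{m}_x$ whose class spans it. Next I would pass to the local ring $O_{x,\mathcal{E}}:=\bbC[\mathcal{E}]_{\mathfrak{m}_x}$ with maximal ideal $\mathfrak{m}$; since the image of $t$ generates $\mathfrak{m}/\mathfrak{m}^2$, Nakayama's lemma applied to the finitely generated $O_{x,\mathcal{E}}$-module $\mathfrak{m}$ will yield $\mathfrak{m}=(t)$. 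Because $\mathcal{E}$ is an irreducible curve, $O_{x,\mathcal{E}}$ is an integral domain, and $t \neq 0$ (it does not lie in $\mathfrak{m}_x^2$), so $t$ is a non-zero-divisor.

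\textbf{Graded pieces and the final identification.} Multiplication by $t^p$ will give a $\bbC$-linear isomorphism $O_{x,\mathcal{E}}/\mathfrak{m} \xrightarrow{\sim} \mathfrak{m}^p/\mathfrak{m}^{p+1}$, and $O_{x,\mathcal{E}}/\mathfrak{m}\cong\bbC[\mathcal{E}]/\mathfrak{m}_x\cong\bbC$ since $\mathfrak{m}_x$ is the ideal of a single point. Hence each $\mathfrak{m}^p/\mathfrak{m}^{p+1}$ is one-dimensional, spanned by $t^p$, and a short induction shows that $\{1,t,\ldots,t^p\}$ is a $\bbC$-basis of $O_{x,\mathcal{E}}/\mathfrak{m}^{p+1}$. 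This quotient coincides with $\bbC[\mathcal{E}]/\mathfrak{m}_x^{p+1}$, because the latter is already local with nilpotent maximal ideal and so every element outside $\mathfrak{m}_x$ is automatically a unit in it, making the localization at $\mathfrak{m}_x$ trivial. Consequently every compatible sequence $(F_0,F_1,\ldots)\in\hat{O}_{x,\mathcal{E}}$ admits a unique expansion $F_p\equiv a_0 + a_1 t + \cdots + a_p t^p \pmod{\mathfrak{m}_x^{p+1}}$ with $a_i\in\bbC$, where $a_p$ is forced by the class of $F_p-(a_0+\cdots+a_{p-1}t^{p-1})$ in the one-dimensional quotient. Sending such a sequence to $\sum_{i\geq 0} a_i\eps^i\in\bbC[[\eps]]$ will be a bijection respecting ring operations, which completes the construction of the isomorphism.

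\textbf{Main obstacle.} The one nontrivial geometric input is the Nakayama step above: converting the one-dimensionality of $\mathfrak{m}_x/\mathfrak{m}_x^2$ into the statement that a single element generates all of $\mathfrak{m}$ in $O_{x,\mathcal{E}}$. Without smoothness, the cotangent space could be strictly larger than $\dim\mathcal{E}$, the graded pieces $\mathfrak{m}^p/\mathfrak{m}^{p+1}$ would grow past dimension one, and the resulting completion would be a quotient of a formal power series ring in several variables rather than $\bbC[[\eps]]$. Every other step will then be a mechanical unwinding of the definition of $\hat{O}_{x,\mathcal{E}}$.
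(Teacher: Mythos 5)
Your proof is correct and rests on the same core idea as the paper's: smoothness forces $\dim\mathfrak{m}_x/\mathfrak{m}_x^2=1$, so a single element generates each graded piece $\mathfrak{m}_x^p/\mathfrak{m}_x^{p+1}$, and the truncations $\bbC[\mathcal{E}]/\mathfrak{m}_x^{p+1}$ become $\bbC[\eps]/(\eps^{p+1})$, assembling to $\bbC[[\eps]]$ in the limit. The only stylistic difference is that you detour through the localization $O_{x,\mathcal{E}}$ and Nakayama's lemma, whereas the paper manipulates representatives in $\bbC[\mathcal{E}]$ directly and then observes that products of $k$ elements of $\mathfrak{m}_x$ reduce to multiples of $e^k$ modulo $\mathfrak{m}_x^{k+1}$; as you note, the two viewpoints coincide because each quotient $\bbC[\mathcal{E}]/\mathfrak{m}_x^{p+1}$ is already local. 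One small merit of your write-up is that you explicitly use irreducibility of $\mathcal{E}$ (hence $O_{x,\mathcal{E}}$ being a domain) to rule out $t^p\in\mathfrak{m}^{p+1}$, i.e.\ to guarantee the graded pieces never collapse to zero; the paper leaves this implicit when it asserts $\bbC[\mathcal{E}]/\mathfrak{m}_x^{k+1}\cong\bbC[e]/\langle e^{k+1}\rangle$.
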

\begin{proof}
    We have $\dim \mathfrak{m}_x/\mathfrak{m}_x^2 = 1$.
    Let $e$ be the element of $\bbC[\mathcal{E}]$ corresponding to a nonzero element of this vector space.
    For every element $f$ of $\mathfrak{m}_x$ it holds that $f \equiv \alpha e \pmod{\mathfrak{m}_x^2}$, and by taking a product of $k$ such elements we obtain that every element of $\mathfrak{m}_x^k$ is a multiple of $e^k$ modulo $\mathfrak{m}_x^{k + 1}$.
    We then compute $\bbC[\mathcal{E}]/\mathfrak{m}_x^{k + 1} = \bbC[e]/\left<e^{k + 1}\right>$ and $\hat{O}_{x, \mathcal{E}} = \bbC[[\eps]]$ with $\eps$ being the germ of $e$.
\end{proof}

All the notions defined in this paragraph for affine varieties can also be defined for projective varieties by looking at an affine patch containing the point of interest.

\paragraph{Degree of projective varieties.}~Degree is a fundamental invariant of a projective variety which governs the size of its intersection with other varieties.
We first need a basic fact about dimensions is that the intersection of a variety with a hypersurface typically cuts the dimension by $1$.

\medskip
\begin{lemma}[{\cite[Theorem~1.23]{shafarevich1994basic}}]
    If $\mathcal{X}$ is an irreducible variety of dimension $n$ and $\mathcal{H}$ is a hypersurface not containing $\mathcal{X}$, then all irreducible components of $\mathcal{X} \cap \mathcal{H}$ have dimension $n - 1$.
\end{lemma}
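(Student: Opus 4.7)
The plan is to establish both inequalities $\dim \mathcal{Y} \leq n-1$ and $\dim \mathcal{Y} \geq n-1$ for every irreducible component $\mathcal{Y}$ of $\mathcal{X} \cap \mathcal{H}$. The upper bound is immediate from the definition of dimension: since $\mathcal{H}$ does not contain $\mathcal{X}$, we have $\mathcal{X} \cap \mathcal{H} \subsetneq \mathcal{X}$, so $\mathcal{Y}$ is a proper irreducible closed subvariety of $\mathcal{X}$. Appending $\mathcal{X}$ to any maximal chain of irreducible closed subvarieties inside $\mathcal{Y}$ produces a strictly longer chain in $\mathcal{X}$, and since $\dim \mathcal{X} = n$, this forces $\dim \mathcal{Y} \leq n-1$.

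The lower bound $\dim \mathcal{Y} \geq n-1$ is the heart of the statement and the main obstacle, because it requires genuine algebraic input rather than mere manipulation of chains. I would first localize to the affine setting: pick a standard affine patch $U$ that meets $\mathcal{Y}$ and replace $\mathcal{X}$ by $\mathcal{X} \cap U$. On this patch the hypersurface is cut out by a single polynomial whose restriction becomes a nonzero, non-invertible element $f$ of the coordinate ring $\bbC[\mathcal{X} \cap U]$, which is a Noetherian integral domain since $\mathcal{X}$ is irreducible. The irreducible components of the zero locus $V(f) \subseteq \mathcal{X} \cap U$ are in bijection with the minimal prime ideals of $\bbC[\mathcal{X} \cap U]$ containing $f$.

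The key tool I would invoke is Krull's Hauptidealsatz: in a Noetherian ring, every minimal prime over a nonzero non-unit principal ideal has height exactly $1$. Combined with the dimension formula $\dim \mathcal{Y} + \mathrm{ht}(\mathfrak{p}_\mathcal{Y}) = \dim \mathcal{X}$ --- valid for irreducible closed subvarieties of an affine variety because coordinate rings here are finitely generated $\bbC$-domains and thus satisfy the catenary dimension identity --- this gives $\dim \mathcal{Y} = n - 1$ exactly. A purely geometric alternative, more in the spirit of the Shafarevich textbook cited in the excerpt, is to induct on $n$ via a Noether-normalization-type projection that turns $\mathcal{X}$ into a finite cover of an $n$-dimensional base. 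Since finite morphisms preserve dimension, $\mathcal{Y}$ has the same dimension as its image, and the hypersurface condition can be transferred to the base where one knows the result by induction. Either route is acceptable, but both ultimately route the main difficulty through the same commutative-algebra content as Krull's theorem.
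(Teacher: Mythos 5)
The paper states this lemma purely as a citation to Shafarevich's textbook and does not reproduce a proof, so there is no in-paper argument to compare against; I assess your proposal on its own merits. Your proof is correct. The upper bound by chain extension is exactly what the paper's chain-length definition of dimension delivers. For the lower bound, invoking Krull's Hauptidealsatz to get that each minimal prime over the principal ideal $(f)$ has height $1$, and then applying the dimension identity $\dim A = \dim A/\mathfrak{p} + \mathrm{ht}(\mathfrak{p})$ (valid for finitely generated $\bbC$-domains, not merely from catenarity but from the fact that such rings are equidimensional with all maximal chains of primes having full length) is the standard commutative-algebra route. Your sketched alternative via Noether normalization and invariance of dimension under finite morphisms is essentially the argument Shafarevich gives in the cited Theorem~1.23, so both routes you describe are sound and the difference is one of toolbox, not substance. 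Two minor points to make explicit in a full write-up: (i) in the affine case $\mathcal{X}\cap\mathcal{H}$ may be empty, in which case the claim is vacuously true; (ii) when it is not, choosing the affine patch $U$ to meet the particular component $\mathcal{Y}$ under consideration is what guarantees that the restriction $f\in\bbC[\mathcal{X}\cap U]$ is a nonunit (its zero set is nonempty), which is needed before the Hauptidealsatz applies.
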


A hypersurface defined by a homogeneous polynomial $F$ does not contain $\mathcal{X}$ if and only if $F$ does not vanish on $\mathcal{X}$. Most hypersurfaces {\em do not} contain $\mathcal{X}$, in the following sense.

\medskip
\begin{lemma}\label{lem:generic-hypersurface}
    The set of homogeneous polynomials $F \in \bbC[x_0, \dots, x_n]_d$ such $F$ does not vanish on any irreducible component of a nonempty affine variety $\mathcal{X} \subset \bbP^n$ is open.
\end{lemma}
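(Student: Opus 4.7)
I would prove the statement by passing to the complement and showing that the set of ``bad'' polynomials --- those which vanish on at least one irreducible component of $\mathcal{X}$ --- is closed in the finite-dimensional vector space $\bbC[x_0, \dots, x_n]_d$. Since every (projective) variety decomposes into finitely many irreducible components $\mathcal{X} = \mathcal{X}_1 \cup \cdots \cup \mathcal{X}_k$, it suffices to show that the set of $F$ vanishing on a fixed irreducible component $\mathcal{X}_i$ is closed, and then take the finite union.

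\textbf{Key step.} A homogeneous polynomial $F$ of degree $d$ vanishes on $\mathcal{X}_i$ if and only if $F$ lies in the degree-$d$ graded piece $I_{\mathcal{X}_i, d}$ of the homogeneous ideal $I_{\mathcal{X}_i} \subset \bbC[x_0, \dots, x_n]$. This graded piece is a linear subspace of the finite-dimensional vector space $\bbC[x_0, \dots, x_n]_d$, and hence closed in both the Zariski and the Euclidean topology. Thus the set of ``bad'' polynomials is
\[
\bigcup_{i=1}^k I_{\mathcal{X}_i, d},
\]
a finite union of linear subspaces, which is closed. Its complement --- precisely the set described in the statement --- is therefore open.

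\textbf{Potential obstacles.} This is essentially a definition chase, so no serious obstacles arise. The two points that need to be acknowledged are: (i) that a projective variety admits a finite irreducible decomposition (a standard fact, equivalent to the Noetherianity of $\bbC[x_0, \dots, x_n]$); and (ii) that ``$F$ vanishes on $\mathcal{X}_i$'' means exactly $F \in I_{\mathcal{X}_i}$, which follows directly from the definition of the ideal of a variety. Finally, one should note that since each $I_{\mathcal{X}_i, d}$ is a linear subspace, closedness is automatic in every reasonable topology on $\bbC[x_0, \dots, x_n]_d$, so the statement is robust to the interpretation of ``open'' used in the application.
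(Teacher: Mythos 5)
Your proof is correct and follows essentially the same route as the paper's: both decompose $\mathcal{X}$ into its finitely many irreducible components, observe that the polynomials vanishing on each component form a linear subspace of $\bbC[x_0,\dots,x_n]_d$ (you phrase this as the degree-$d$ graded piece $I_{\mathcal{X}_i,d}$ of the homogeneous ideal, the paper as the solution set of the linear conditions $F(x)=0$ for $x\in\mathcal{X}_i$ --- the same object), and conclude that the complement of this finite union of linear subspaces is open.
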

\begin{proof}
    For a fixed point $x$ the condition $F(x) = 0$ is a linear equation on the coefficients of the polynomial $F$.
    By considering all points of an irreducible variety $\mathcal{Y}$ we obtain that the set of polynomials vanishing on $\mathcal{Y}$ is a linear subspace in $\bbC[x_0, \dots, x_n]_d$, and the set of polynomials vanishing on an arbitary variety $\mathcal{X}$ is a union of finitely many linear subspaces corresponding to irreducible components of $\mathcal{X}$, and its complement is an open subset.
\end{proof}

We say that a general hypersurface of degree $d$ does not contain any irreducible component of $\mathcal{X}$.
Repeated application of the previous facts gives us the following corollary.

\medskip
\begin{corollary}
    Let $\mathcal{X} \subset \bbP^m$ be a projective variety of dimension $n$.
    A general linear subspace of codimension $n$ intersects $\mathcal{X}$ in a finite set of points, that is,
    there exists an open subset of tuples $(L_1, \dots, L_n)$ of linear forms such that the intersection $\mathcal{X} \cap \mathcal{L}$ with a projective linear subspace $\mathcal{L} = \{ [x] \in  \colon L_1(x) = L_2(x) = \dots = L_n(x) = 0 \}$ is finite.
\end{corollary}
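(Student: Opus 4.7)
The plan is to cut $\mathcal{X}$ by one hyperplane at a time, inducting on $n = \dim \mathcal{X}$. The base case $n = 0$ is immediate, since $\mathcal{X}$ is already finite and there are no linear forms to choose.

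For the inductive step with $n \geq 1$, I apply \cref{lem:generic-hypersurface} in degree $1$ to $\mathcal{X}$: the set of linear forms $L_1 \in \bbC[x_0, \ldots, x_m]_1$ that avoid every irreducible component of $\mathcal{X}$ is a nonempty Zariski open set $U_1$. For any $L_1 \in U_1$, the hyperplane $\{L_1 = 0\}$ contains no irreducible component of $\mathcal{X}$, so the preceding dimension-cutting lemma, applied componentwise, forces every irreducible component of $\mathcal{X}_1 := \mathcal{X} \cap \{L_1 = 0\}$ to have dimension $n - 1$. Hence $\mathcal{X}_1$ is a projective variety of pure dimension $n - 1$, and the inductive hypothesis supplies a nonempty open set of tuples $(L_2, \ldots, L_n)$ for which $\mathcal{X}_1 \cap \{L_2 = \cdots = L_n = 0\} = \mathcal{X} \cap \mathcal{L}$ is finite. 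Iterating establishes the existence of at least one good tuple.

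To upgrade existence into a Zariski open subset of the full product space of tuples, the cleanest route is via the incidence variety
\[
    \mathcal{Z} \;=\; \bigl\{\, ((L_1, \ldots, L_n), [x]) \,:\, L_1(x) = \cdots = L_n(x) = 0 \,\bigr\} \;\subset\; (\bbC[x_0, \ldots, x_m]_1)^n \times \mathcal{X},
\]
with projection $\pi$ onto the first factor. The fiber of $\pi$ over $(L_1, \ldots, L_n)$ is exactly $\mathcal{X} \cap \mathcal{L}$, and by upper semicontinuity of fiber dimension the locus in the base where the fiber has dimension $\geq 1$ is Zariski closed; its complement is the desired open subset on which $\mathcal{X} \cap \mathcal{L}$ is finite, and it is nonempty by the inductive construction.

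The main obstacle is this openness claim rather than nonemptiness: the hyperplane-cutting induction is a direct unrolling of the two preceding lemmas, but asserting openness in the whole tuple space cleanly requires the semicontinuity-of-fiber-dimension principle, which the survey does not formalize as a named result. An alternative avoiding this tool would be to track openness recursively through the induction, noting that at step $i$ the good set of $L_i$ depends on the irreducible components of $\mathcal{X}_{i-1}$, which themselves vary with the earlier choices $(L_1, \ldots, L_{i-1})$; this is workable but demands more careful bookkeeping on how the components deform with parameters.
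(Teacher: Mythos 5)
Your hyperplane-cutting induction is exactly the argument the paper is gesturing at with ``repeated application of the previous facts,'' and you are right that the paper leaves the openness claim implicit: the nested construction gives, for each $L_1$ in an open $U_1$, an open set of $(L_2,\dots,L_n)$ that \emph{depends on} $L_1$ (through the irreducible components of $\mathcal{X}_1$), and the union of these slices is not automatically open. Your incidence-variety fix is the standard way to close this gap and it does work, but one detail should be made explicit: upper semicontinuity of fiber dimension is, in its basic form, a statement about the source $\mathcal{Z}$ (the set $\{z : \dim_z \pi^{-1}(\pi(z)) \geq 1\}$ is closed in $\mathcal{Z}$), and to transport this to a closed set in the base you need $\pi$ to be a closed map. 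Here that is automatic because $\mathcal{X}$ is projective, so the projection $V \times \mathcal{X} \to V$ is proper (precisely the closedness property the survey records as Theorem~1.10 of Shafarevich), and $\mathcal{Z}$ is closed in $V \times \mathcal{X}$, making $\pi$ proper. With that line added, your argument is complete and arguably more rigorous than what the paper provides. One could also note that your inductive step already shows the open set is nonempty, which is needed since ``general'' in the statement means ``a nonempty Zariski open.''
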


This fact allows us to define the degree of a variety.

\medskip
\begin{definition}[Degree of a variety]
    If $\mathcal{X}$ is a projective variety of pure dimension $n$, then its degree $\deg \mathcal{X}$ is defined as the maximal number of points in the intersection of $\mathcal{X}$ with a codimension $n$ projective linear subspace.
\end{definition}

For example, a hypersurface $\mathcal{H}$ defined by a homogeneous polynomial $F$ of degree $d$ has degree $d$, because number of intersection points of $\mathcal{H}$ with a projective line is the number of zeros of a degree $d$ polynomial on this line obtained by restriction of $F$.
It is not hard to see that the general number of zeros is $d$.
We can also relate the degrees of a variety and its intersection with a hyperplane.

\medskip
\begin{lemma}\label{lem:degree-hyperplane}
    If $\mathcal{X}$ is a projective variety of pure dimension and $\mathcal{H}$ is a hypersurface not containing any irreducible component of $\mathcal{X}$, then $\deg (\mathcal{X} \cap \mathcal{H}) \leq \deg \mathcal{X}$.
\end{lemma}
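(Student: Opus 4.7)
The plan is to reduce $\deg(\mathcal{X} \cap \mathcal{H})$ to a linear-slice count on $\mathcal{X}$ itself by observing that a codimension-$(n-1)$ linear slice of $\mathcal{X} \cap \mathcal{H}$ is simultaneously a codimension-$n$ linear slice of $\mathcal{X}$ -- provided $\mathcal{H}$ is a hyperplane, so that its intersection with a linear subspace remains linear. For a higher-degree hypersurface the natural bound is instead the B\'ezout product $\deg \mathcal{X} \cdot \deg \mathcal{H}$, so I read the hypothesis here as $\mathcal{H}$ being a hyperplane (a degree-$1$ hypersurface).

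First, applying the preceding lemma to each irreducible component of $\mathcal{X}$ (all of dimension $n$) shows that $\mathcal{X} \cap \mathcal{H}$ has pure dimension $n - 1$, so $\deg(\mathcal{X} \cap \mathcal{H})$ is well-defined by the same definition used for $\mathcal{X}$. Next, pick a codimension-$(n - 1)$ projective linear subspace $\mathcal{L}'$ that achieves this maximum, so that $(\mathcal{X} \cap \mathcal{H}) \cap \mathcal{L}'$ is a finite set of exactly $\deg(\mathcal{X} \cap \mathcal{H})$ points. The condition ``$\mathcal{L}' \subseteq \mathcal{H}$'' carves out a proper closed subset of the space of codimension-$(n - 1)$ subspaces, so I may additionally assume $\mathcal{L}' \not\subseteq \mathcal{H}$.

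Now set $\mathcal{L} \eqdef \mathcal{H} \cap \mathcal{L}'$. Since $\mathcal{H}$ is a hyperplane and $\mathcal{L}' \not\subseteq \mathcal{H}$, the subspace $\mathcal{L}$ has codimension exactly $n$, and by associativity of intersection,
\[
(\mathcal{X} \cap \mathcal{H}) \cap \mathcal{L}' \;=\; \mathcal{X} \cap (\mathcal{H} \cap \mathcal{L}') \;=\; \mathcal{X} \cap \mathcal{L}.
\]
Applying the definition of $\deg \mathcal{X}$ to the codimension-$n$ slice $\mathcal{L}$ bounds the cardinality of the right-hand side by $\deg \mathcal{X}$, yielding the desired inequality $\deg(\mathcal{X} \cap \mathcal{H}) \leq \deg \mathcal{X}$.

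The only subtle point is arranging both genericity conditions on $\mathcal{L}'$ simultaneously -- that it both realizes $\deg(\mathcal{X} \cap \mathcal{H})$ and is not contained in $\mathcal{H}$. Since each is an open condition on the space of codimension-$(n-1)$ linear subspaces, both can be imposed together, and with that the proof reduces to the one-line identification of two ways to cut out the same finite set of points.
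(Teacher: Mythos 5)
Your proof is correct and takes exactly the same route as the paper, which in one line observes that a codimension-$(n-1)$ linear slice of $\mathcal{X}\cap\mathcal{H}$ is a codimension-$n$ linear slice of $\mathcal{X}$; you simply spell out the genericity conditions (that $\mathcal{L}'$ can simultaneously realize the degree and avoid being contained in $\mathcal{H}$) which the paper leaves implicit. Your reading of the hypothesis is also correct: as stated, the lemma is only true when $\mathcal{H}$ is a hyperplane (a line cut by a conic already gives a counterexample for degree $2$), the paper's one-line proof only works in that case, and indeed the lemma is invoked in the paper exclusively with hyperplanes.
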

\begin{proof}An intersection of $\mathcal{X} \cap \mathcal{H}$ with a codimension $n - 1$ linear subspace is an intersection of $\mathcal{X}$ with codimension $n$ linear subspace, and the inequality follows from the definition of degree.
\end{proof}

The degree is a much more intricate invariant than the dimension.
In a general case, the number of points in the intersection is maximal, but even in the case when it is not, there is a way to count points with multiplicity so that the total number is correct.
This is the statement of Bezout's theorem, which we will need only for smooth curves.

\medskip
\begin{theorem}[Bézout's theorem for smooth curves, {\cite[Ch.3, §2.2]{shafarevich1994basic}}]
    If $\mathcal{E}$ is a smooth projective curve and $\mathcal{H}$ is a hypersurface defined by a squarefree polynomial $F$ not containing $\mathcal{E}$, then 
    \[
    \sum_{p \in \mathcal{E} \cap \mathcal{H}} \mathrm{mult}_p(\mathcal{H},\mathcal{E}) \;=\; \deg \mathcal{E} \cdot \deg \mathcal{H}\;,
    \]
    where $\mathrm{mult}_p(\mathcal{H}, \mathcal{E}) = \dim \hat{O}_{p, \mathcal{E}}/\left<\tilde{F}\right>$, and $\tilde{F}$ is the germ of $F$ in $\hat{O}_{p, \mathcal{E}}$.
\end{theorem}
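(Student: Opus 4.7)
The plan is to reduce B\'ezout for a smooth projective curve to the case of a generic hyperplane, via the general principle that the intersection number
\[
I(F) \;:=\; \sum_{p \in \mathcal{E} \cap V(F)} \mathrm{mult}_p(V(F), \mathcal{E})
\]
depends only on $\deg F$ whenever $V(F)$ does not contain $\mathcal{E}$. Once that invariance is established, a generic hyperplane $V(L_0)$ meets $\mathcal{E}$ in exactly $\deg \mathcal{E}$ distinct points, each with multiplicity $1$ (by \cref{lem:generic-hypersurface} together with the definition of $\deg \mathcal{E}$), so $I(L_0) = \deg \mathcal{E}$. For the given squarefree $F$ of degree $d$, I compare with a squarefree product $L_1 \cdots L_d$ of $d$ generic linear forms; invariance gives $I(F) = I(L_1 \cdots L_d)$, and since the germ of the product factors in each $\hat{O}_{p, \mathcal{E}}$ as $\prod_i \tilde{L}_i$, multiplicativity of the $\eps$-adic valuation on $\hat{O}_{p, \mathcal{E}} \cong \bbC[[\eps]]$ yields $I(L_1 \cdots L_d) = \sum_i I(L_i) = d \cdot \deg \mathcal{E}$.

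The technical heart is the invariance $I(F_1) = I(F_2)$ for two degree-$d$ hypersurfaces not containing $\mathcal{E}$. I would deduce this from the principal divisor theorem on $\mathcal{E}$: every nonzero rational function $g \in \bbC(\mathcal{E})$ has $\deg(g) := \sum_p \mathrm{ord}_p(g) = 0$, where $\mathrm{ord}_p$ is the $\eps$-adic valuation on $\hat{O}_{p, \mathcal{E}}$. Under the identification $\mathrm{mult}_p(V(F), \mathcal{E}) = \mathrm{ord}_p(\tilde{F})$ --- which is immediate because $\hat{O}_{p, \mathcal{E}}/\langle \tilde{F} \rangle \cong \bbC[[\eps]]/\langle \eps^{\mathrm{ord}_p(\tilde{F})}\rangle$ as $\bbC$-vector spaces --- applying $\deg(g) = 0$ to $g = F_1/F_2$ gives $I(F_1) - I(F_2) = 0$. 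To establish the vanishing $\deg(g) = 0$, I would view $g$ as a morphism $\phi_g \colon \mathcal{E} \to \bbP^1$ (which exists because a rational map from a smooth projective curve to a projective variety extends to a morphism, since every local ring at a smooth point of a curve is a discrete valuation ring) and compare the zero and pole fibers of $\phi_g$.

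The main obstacle is the fiber-size equality underlying the principal divisor theorem: for a nonconstant morphism $\phi \colon \mathcal{E} \to \bbP^1$, every fiber $\phi^{-1}(t_0)$ counted with multiplicity $\mathrm{ord}_p(\phi^*(t - t_0))$ must have the same total size, equal to $n = [\bbC(\mathcal{E}) : \bbC(t)]$. This is the standard statement that a finite surjective morphism from a smooth projective curve to $\bbP^1$ has constant fiber degree; it is proved by identifying $\sum_p \mathrm{ord}_p(\phi^*(t - t_0))$ with the $\bbC$-dimension of a certain finite semilocal quotient, and invoking flatness of the normalization map at smooth points. Granting this standard curve-theoretic fact, every other ingredient --- transversality of a generic hyperplane with $\mathcal{E}$, multiplicativity of valuations under products, and the identification $\hat{O}_{p, \mathcal{E}} \cong \bbC[[\eps]]$ --- is either proved in the excerpt or follows from it in a line.
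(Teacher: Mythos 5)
The paper itself does not prove this theorem; it is stated with a citation to Shafarevich. So there is no internal proof to compare against, and I will assess your argument on its own merits. Your high-level structure --- reduce $I(F_1)=I(F_2)$ to the principal divisor theorem on the smooth curve, then pin down the constant via a generic hyperplane --- is the standard textbook route, and the reductions you do carry out are correct: the identification $\mathrm{mult}_p(\mathcal{H},\mathcal{E}) = \mathrm{ord}_p(\tilde F)$ via $\hat{O}_{p,\mathcal{E}} \cong \bbC[[\eps]]$, the additivity of the valuation on the germ of $L_1\cdots L_d$, and the observation that applying $\sum_p\mathrm{ord}_p(g)=0$ to $g=F_1/F_2$ turns invariance of $I$ in the degree into a statement about principal divisors.

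There are, however, two genuine gaps. First, you assert that a generic hyperplane meets $\mathcal{E}$ in $\deg\mathcal{E}$ distinct points, each of multiplicity $1$, and attribute this to \cref{lem:generic-hypersurface} together with the definition of degree. But \cref{lem:generic-hypersurface} only guarantees that a generic hypersurface does not contain $\mathcal{E}$; it says nothing about multiplicities of intersection. Even granting the invariance $I(L)=N$ for all hyperplanes not containing $\mathcal{E}$, one only gets $\deg\mathcal{E}\le N$ (since $|\mathcal{E}\cap V(L)|\le I(L)$ always); to obtain equality you need to exhibit a hyperplane meeting $\mathcal{E}$ transversally. That requires a separate dimension count (e.g., that the locus in $(\bbP^n)^*$ of hyperplanes containing some tangent line $T_p\mathcal{E}$ is the image of a $\bbP^{n-2}$-bundle over $\mathcal{E}$, hence has dimension at most $n-1 < n$). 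This is true and standard, but it is neither proved in the excerpt nor a one-line consequence of it, so the attribution to \cref{lem:generic-hypersurface} is an over-claim. Second, and more substantively, your entire argument rests on the constancy of fiber degree for a nonconstant morphism $\mathcal{E}\to\bbP^1$, equivalently $\sum_p\mathrm{ord}_p(g)=0$, which you explicitly grant as "standard." That statement is essentially of the same depth as B\'ezout for curves --- once you assume it, the rest is bookkeeping --- so what you have is a clean reduction to an equal-strength fact rather than a self-contained proof.
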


The Bézout's theorem can be generalized to the intersection of pure-dimensional varieties, and for smooth varieties the multiplicity can be defined similarly, but in general, the definition of multiplicity is much more complicated.
One can also state this theorem for the intersections with dimension more than $0$, which is the start of a large area called the {\em intersection theory}.

We can also define the degree of a pure-dimensional affine variety $\mathcal{X} \subset \bbC^n$ in the same way as for projective varieties.
This is in fact equal to the degree of the closure of $\mathcal{X}$ in $\bbP^n$.
The lack of points at infinity makes many statements less precise.
However, the following statement is easier to state for affine varieties.

\medskip
\begin{lemma}[{\cite[Theorem 8.32]{bcs97}}]
    If $\mathcal{X} \subset \bbC^{m + n}$ and $\pi \colon \bbC^{m + n} \to \bbC^m$ is the projection onto the first $m$ coordinates.
    Then $\deg \overline{\pi(\mathcal{X})} \,\leq\, \deg \mathcal{X}$.
\end{lemma}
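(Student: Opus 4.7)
The plan is to compare both degrees via intersection with generic linear subspaces of complementary codimension, exploiting the key fact that the preimage of a linear subspace under $\pi$ is again a linear subspace (of the same codimension in $\bbC^{m+n}$). First I would reduce to the case when $\mathcal{X}$ is irreducible: since $\overline{\pi(\mathcal{X})} = \bigcup_i \overline{\pi(\mathcal{X}_i)}$ over the irreducible components $\mathcal{X}_i$, every irreducible component of $\overline{\pi(\mathcal{X})}$ is contained in some $\overline{\pi(\mathcal{X}_i)}$, so selecting the component that attains $\deg \overline{\pi(\mathcal{X})}$ reduces the claim to that single irreducible pair.

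Assuming $\mathcal{X}$ is irreducible of dimension $d$ and $\mathcal{Y} := \overline{\pi(\mathcal{X})}$ has dimension $k \leq d$, I would next isolate the locus where $\pi$ behaves well. Since $\pi(\mathcal{X})$ is constructible and dense in the irreducible $\mathcal{Y}$, the complement $\mathcal{Y} \setminus \pi(\mathcal{X})$ lies in some proper closed subvariety $\mathcal{Z} \subsetneq \mathcal{Y}$ of dimension~$<k$, and by upper semicontinuity of fiber dimension the locus of $y \in \mathcal{Y}$ with $\dim(\pi^{-1}(y)\cap\mathcal{X}) > d-k$ is another proper closed subvariety, which I absorb into $\mathcal{Z}$. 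Then I would choose a generic codimension-$k$ affine subspace $L = \{x : \ell_1(x) = \cdots = \ell_k(x) = 0\}$ of $\bbC^m$ so that $|L \cap \mathcal{Y}| = \deg \mathcal{Y}$ and $L \cap \mathcal{Z} = \emptyset$; both conditions are generic in the coefficients of the $\ell_i$'s (the second by a repeated application of~\cref{lem:generic-hypersurface}, since a generic codimension-$k$ subspace misses a subvariety of dimension $<k$). For such $L$, every $y \in L \cap \mathcal{Y}$ lies in $\pi(\mathcal{X})$ and has fiber $\pi^{-1}(y) \cap \mathcal{X}$ of dimension exactly $d-k$.

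To transport the count from $\mathcal{Y}$ back to $\mathcal{X}$, I would form the pullback codimension-$k$ subspace $\pi^{-1}(L) = \{(x,z) \in \bbC^{m+n} : \ell_i(x) = 0\}$ and cut by $d-k$ additional generic hyperplanes $H_1, \ldots, H_{d-k}$ in $\bbC^{m+n}$, producing a codimension-$d$ affine subspace $M = \pi^{-1}(L) \cap H_1 \cap \cdots \cap H_{d-k}$. The intersection with $\mathcal{X}$ decomposes along the fibers as
\[
M \cap \mathcal{X} \;=\; \bigsqcup_{y \in L \cap \mathcal{Y}} \bigl(\pi^{-1}(y) \cap \mathcal{X} \cap H_1 \cap \cdots \cap H_{d-k}\bigr),
\]
and each summand is a $(d-k)$-dimensional variety cut by $d-k$ generic hyperplanes, hence a nonempty finite set by repeated application of the dimension-cutting lemma. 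Therefore $\deg \mathcal{Y} = |L \cap \mathcal{Y}| \leq |M \cap \mathcal{X}| \leq \deg \mathcal{X}$, where the last inequality is just the definition of degree applied to the codimension-$d$ subspace $M$.

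The main obstacle I expect is coordinating the various genericity conditions simultaneously: the $\ell_i$'s must realize $\deg \mathcal{Y}$, avoid $\mathcal{Z}$, and keep the pulled-back subspace transverse enough for the fiber decomposition; the $H_j$'s must cut $M \cap \mathcal{X}$ down to a finite set while keeping each individual fiber intersection nonempty. Each constraint individually excludes only a proper Zariski-closed subset of the corresponding parameter space, so by~\cref{lem:Zariski-open-Euclidean-dense} a common generic choice exists; the care required is essentially in tracking these open conditions and invoking upper semicontinuity of fiber dimension, not in any deeper geometric input.
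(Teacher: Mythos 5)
The paper does not prove this lemma; it is cited verbatim from \cite{bcs97}, so there is no in-paper argument to compare against. Evaluating your proposal on its own terms: the overall strategy is sound and does give the result. Cutting $\mathcal{Y}=\overline{\pi(\mathcal{X})}$ with a generic codimension-$k$ affine linear space $L$ that realizes $\deg\mathcal{Y}$ while avoiding the bad locus, pulling $L$ back, slicing the $(d-k)$-dimensional fibers with $d-k$ further generic hyperplanes, and then observing that the resulting codimension-$d$ linear slice of $\bbC^{m+n}$ meets $\mathcal{X}$ in a finite set of at least $\deg\mathcal{Y}$ points, is a correct route.

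Three points deserve tightening. First, the assertion that ``the locus of $y$ with $\dim(\pi^{-1}(y)\cap\mathcal{X})>d-k$ is a proper closed subvariety'' is not literally true; that locus is only constructible, and the correct statement (and the one you actually use) is that it is \emph{contained} in a proper closed subset, which is what the theorem on generic fiber dimension (Shafarevich, I.6.3, Theorem 7) provides. Second, non-emptiness of each $F_y\cap H_1\cap\cdots\cap H_{d-k}$ is not supplied by the paper's projective dimension-cutting lemma alone: in the affine setting a hyperplane section of a positive-dimensional variety can be empty (e.g.\ $\{xy=1\}\cap\{x=0\}$). One has to argue via the projective closures that a generic affine hyperplane does meet a nonempty positive-dimensional affine variety, and then, as you note, that the finitely many fibers over $L\cap\mathcal{Y}$ impose only finitely many open-dense conditions on the $H_j$, so a simultaneous generic choice exists. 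You correctly flag this bookkeeping at the end; it is the only real delicacy. Third, your reduction to the irreducible case selects ``the component attaining $\deg\overline{\pi(\mathcal{X})}$,'' i.e.\ a max-over-components convention, whereas the paper's pure-dimensional definition (and the convention in \cite{bcs97}) gives the degree as a sum over top-dimensional components; the reduction still goes through by summing the component-wise inequalities (and noting that distinct $\overline{\pi(\mathcal{X}_i)}$ may coincide or nest, which only helps), but the sentence should be rephrased. None of these gaps is fatal; the argument is correct once these points are spelled out.
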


\paragraph{Constructible sets.}~Constructible sets in an affine or projective variety can be defined as follows.

\medskip
\begin{definition}[Constructible sets]
    Let $\mathcal{X}$ be an affine or projective variety.
    A subset of $\mathcal{X}$ is \emph{locally closed} if it is an open subset of a closed subvariety of $\mathcal{X}$, that is, an intersection of a closed subset with an open subset in $\mathcal{X}$.
    A subset of $\mathcal{X}$ is \emph{constructible} if it is a union of finitely many locally closed subsets.
\end{definition}

Alternatively, constructible sets are elements of the Boolean algebra generated by all open (or closed) sets.
This means that membership in a constructible set is defined by a logical formula involving polynomial equalities and inequalities.
We record some simple consequences of the definition.

\medskip
\begin{lemma}\label{lem:constructible-contains-open}
    If $\mathcal{C}$ is a constructible set, and $\mathcal{D}$ is an irreducible component of $\overline{\mathcal{C}}$, then $\mathcal{C}$ contains a Zariski open subset of $\mathcal{D}$.
\end{lemma}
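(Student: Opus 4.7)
The plan is to trace the irreducible component $\mathcal{D}$ back to one of the locally closed pieces defining $\mathcal{C}$, and then observe that this piece already supplies a Zariski open subset of $\mathcal{D}$ sitting inside $\mathcal{C}$. First I would write $\mathcal{C} = \bigcup_{i=1}^N (\mathcal{Z}_i \cap \mathcal{U}_i)$ as a finite union of locally closed sets, where each $\mathcal{Z}_i$ is closed and each $\mathcal{U}_i$ is open in the ambient variety. Since closure commutes with finite unions, $\overline{\mathcal{C}} = \bigcup_{i=1}^N \overline{\mathcal{Z}_i \cap \mathcal{U}_i}$, and each summand satisfies $\overline{\mathcal{Z}_i \cap \mathcal{U}_i} \subseteq \mathcal{Z}_i$ because $\mathcal{Z}_i$ is already closed.

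Next, because $\overline{\mathcal{C}}$ is a finite union of closed sets, the irreducible $\mathcal{D}$ must be contained in some $\overline{\mathcal{Z}_i \cap \mathcal{U}_i}$; moreover, since $\mathcal{D}$ is a \emph{maximal} irreducible closed subset of $\overline{\mathcal{C}}$, it must itself be an irreducible component of that summand. Fix such an index, say $i=1$, so that $\mathcal{D}$ is an irreducible component of $\overline{\mathcal{Z}_1 \cap \mathcal{U}_1}$.

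The key observation is that the irreducible components of $\overline{\mathcal{Z}_1 \cap \mathcal{U}_1}$ are exactly those irreducible components of $\mathcal{Z}_1$ that meet $\mathcal{U}_1$. If $\mathcal{Z}_1 = W_1 \cup \cdots \cup W_k$ is the decomposition into irreducible components, then $\mathcal{Z}_1 \cap \mathcal{U}_1 = \bigcup_j (W_j \cap \mathcal{U}_1)$, and for each $j$ either $W_j \cap \mathcal{U}_1 = \emptyset$ or $W_j \cap \mathcal{U}_1$ is a nonempty open subset of the irreducible variety $W_j$, hence dense, giving $\overline{W_j \cap \mathcal{U}_1} = W_j$. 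Taking closures then yields $\overline{\mathcal{Z}_1 \cap \mathcal{U}_1} = \bigcup_{W_j \cap \mathcal{U}_1 \neq \emptyset} W_j$, so $\mathcal{D}$ must equal one of these $W_j$, and in particular $\mathcal{D} \cap \mathcal{U}_1 \neq \emptyset$.

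It then follows that $\mathcal{D} \cap \mathcal{U}_1$ is a nonempty Zariski open subset of $\mathcal{D}$, and by construction $\mathcal{D} \cap \mathcal{U}_1 \subseteq \mathcal{Z}_1 \cap \mathcal{U}_1 \subseteq \mathcal{C}$, which is the claim. The argument is short and essentially obstacle-free; the only point deserving a moment of care is the identification of the irreducible components of $\overline{\mathcal{Z}_1 \cap \mathcal{U}_1}$, which reduces to the standard fact that a nonempty open subset of an irreducible variety is dense in it.
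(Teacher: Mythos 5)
Your proof is correct and follows essentially the same approach as the paper: decompose $\mathcal{C}$ into locally closed pieces, use that closure commutes with finite unions, and apply the fact that a nonempty open subset of an irreducible variety is dense. The only cosmetic difference is that the paper normalizes upfront (``WLOG $\mathcal{X}_i$ irreducible''), while you decompose $\mathcal{Z}_1$ into irreducible components only at the point where it is needed; the underlying argument is the same.
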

\begin{proof}
    Let $\mathcal{C} = \bigcup_{i = 1}^n \mathcal{X}_i \cap \mathcal{U}_i$ where $\mathcal{X}_i$ are closed and $\mathcal{U}_i$ are open.
    Without loss of generality $\mathcal{X}_i$ are irreducible (otherwise replace $\mathcal{X}_i \cap \mathcal{U}_i$ by a union of irreducible components of $\mathcal{X}_i$ intersected with the same $\mathcal{U}_i$).
    From general topology, $\overline{\mathcal{C}} = \bigcup_{i = 1}^n \overline{\mathcal{X}_i \cap \mathcal{U}_i}$.
    For each $i$, we either have $\mathcal{X}_i \cap \mathcal{U}_i = \varnothing$, or $\mathcal{X}_i \cap \mathcal{U}_i$ is a nontrivial Zariski open of $\mathcal{X}_i$, and thus $\overline{\mathcal{X}_i \cap \mathcal{U}_i} = \mathcal{X}_i$.
    Therefore, every irreducible component $\mathcal{D}$ of $\overline{\mathcal{C}}$ is equal to one of $\mathcal{X}_i$, and $\mathcal{X}_i \cap \mathcal{U}_i \subset \mathcal{C}$ is the required open subset of $\mathcal{D}$.
\end{proof}

\medskip
\begin{corollary}
    If $\mathcal{C}$ is a constructible set, then its Euclidean closure coincides with the Zariski closure.
\end{corollary}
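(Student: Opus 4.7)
One direction is immediate: every Zariski-closed set is Euclidean-closed (polynomials are Euclidean-continuous), so the Zariski closure of $\mathcal{C}$ is in particular Euclidean-closed, giving $\overline{\mathcal{C}}^{\text{Eucl}} \subseteq \overline{\mathcal{C}}^{\text{Zar}}$. The strategy for the reverse inclusion is to decompose $\overline{\mathcal{C}}^{\text{Zar}} = \mathcal{D}_1 \cup \cdots \cup \mathcal{D}_k$ into irreducible components and apply the preceding \cref{lem:constructible-contains-open} to pick, for each $i$, a nonempty Zariski-open subset $\mathcal{V}_i \subseteq \mathcal{C} \cap \mathcal{D}_i$. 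If each such $\mathcal{V}_i$ can be shown to be Euclidean-dense in $\mathcal{D}_i$, then $\mathcal{D}_i \subseteq \overline{\mathcal{V}_i}^{\text{Eucl}} \subseteq \overline{\mathcal{C}}^{\text{Eucl}}$ for every $i$, and taking the union of components gives $\overline{\mathcal{C}}^{\text{Zar}} \subseteq \overline{\mathcal{C}}^{\text{Eucl}}$.

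The entire proof is thereby reduced to the following strengthening of \cref{lem:Zariski-open-Euclidean-dense}: any nonempty Zariski-open subset $\mathcal{V}$ of an irreducible variety $\mathcal{D}$ is Euclidean-dense in $\mathcal{D}$. The plan is to first handle smooth points: the smooth locus $\mathcal{D}_{\text{sm}}$ is a nonempty Zariski-open subset of $\mathcal{D}$ (nonempty because $\mathcal{D}$ is irreducible) and is a complex manifold of dimension $\dim \mathcal{D}$. Around any smooth point $\mathcal{D}$ is locally biholomorphic to an open ball in $\bbC^{\dim \mathcal{D}}$, and $\mathcal{D} \setminus \mathcal{V}$ pulls back to the zero set of finitely many polynomials inside the ball, so \cref{lem:Zariski-open-Euclidean-dense} transported through these local charts shows that $\mathcal{V} \cap \mathcal{D}_{\text{sm}}$ is Euclidean-dense in $\mathcal{D}_{\text{sm}}$.

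The remaining and principal obstacle is to upgrade this to Euclidean density on all of $\mathcal{D}$, i.e.\ to prove that every singular point $p$ of $\mathcal{D}$ is an Euclidean limit of smooth points. The natural route is induction on $\dim \mathcal{D}$: slice $\mathcal{D}$ with a generic hyperplane through $p$ using \cref{lem:generic-hypersurface} to cut the dimension down, and then invoke the inductive hypothesis on an irreducible component of the slice passing through $p$. The base case of curves is precisely where the excerpt has already laid the groundwork: the complete local ring at a smooth point of a curve is $\bbC[[\eps]]$, and a Puiseux-type parameterization of the branches of the curve at $p$ produces Euclidean-analytic paths in $\mathcal{D}$ ending at $p$ whose interiors lie in the smooth locus. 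The main technical difficulty is ensuring that the generic hyperplane section does not lie entirely in the singular locus of $\mathcal{D}$, and that points smooth in the slice remain smooth in the ambient $\mathcal{D}$; both are handled by a genericity argument in the spirit of \cref{lem:generic-hypersurface}, using irreducibility of $\mathcal{D}$ to guarantee that the singular locus is a proper subvariety of strictly smaller dimension.
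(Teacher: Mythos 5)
Your overall structure matches the paper's intent — decompose the Zariski closure into irreducible components, use \cref{lem:constructible-contains-open} to find a Zariski-open $\mathcal{V}_i \subseteq \mathcal{C} \cap \mathcal{D}_i$, and argue that $\mathcal{V}_i$ is Euclidean-dense in $\mathcal{D}_i$. You also correctly identify something the paper's one-line proof glosses over: \cref{lem:Zariski-open-Euclidean-dense} is stated only for Zariski-open subsets of $\bbC^n$, while what the corollary actually needs is the analogous statement for Zariski-open subsets of an arbitrary irreducible subvariety $\mathcal{D}$. That gap is real and your proposal is genuinely trying to close it.

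However, the specific way you set up the fill creates an avoidable difficulty. You reduce to showing that every singular point of $\mathcal{D}$ is a Euclidean limit of \emph{smooth} points of $\mathcal{D}$, and then induct on dimension via hyperplane slicing. But with that formulation the inductive step really does hit the problem you flag: the inductive hypothesis applied to a slice $\mathcal{D}'$ yields approximation by points smooth \emph{in $\mathcal{D}'$}, which need not be smooth in $\mathcal{D}$, and you would then need an extra layer of the smooth-locus argument to convert them. The cleaner move is to induct directly on the statement you actually want: ``for every irreducible variety of dimension $\le d$ and every nonempty Zariski-open subset, the subset is Euclidean-dense.'' With that formulation the hyperplane slice through $p$ need only be chosen so that its component $\mathcal{D}'$ through $p$ meets $\mathcal{V}$ (the genericity lemma preceding \cref{thm:curve-existence} provides exactly this), and the inductive hypothesis applied to $\mathcal{V}\cap\mathcal{D}'$ immediately gives $p\in\overline{\mathcal{V}}^{\text{Eucl}}$, with no smooth/singular bookkeeping at all. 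This reduction to curves is precisely the content of \cref{thm:curve-existence}, and the curve base case is handled by the resolution-of-singularities machinery and the formal-series parameterization in \cref{thm:formal-series-existence}; those tools appear later in the paper than this corollary, which is likely why the paper's proof is terse, but they are the natural way to finish. Your proposal would be correct after this restructuring; as written, the induction on ``singular points are limits of smooth points'' does not close cleanly.
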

\begin{proof}
Follows from the previous Lemma and~\cref{lem:Zariski-open-Euclidean-dense}.
\end{proof}

Constructible sets are important for us because of the Chevalley's theorem, which implies that the image of a polynomial map is a constructible set.
The Chevalley's theorem holds in high generality for finitely presented morphisms of schemes~\cite[054K]{stacks-project}. We only state it for the situation we need.

\medskip
\begin{theorem}[Chevalley's theorem on constructible sets] \label{thm:chevalley}
    If $\varphi \colon \bbC^m \to \bbC^n$ is a polynomial map, then for every constructible set $\mathcal{C} \subset \bbC^m$ its image $\varphi(\mathcal{C}) \subset \bbC^n$ is constructible.
\end{theorem}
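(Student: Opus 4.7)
The plan is Noetherian induction on the closure $\mathcal{Y} := \overline{\varphi(\mathcal{C})}$. A constructible set is a finite union of irreducible locally closed sets, and image commutes with unions, so we may assume $\mathcal{C} = \mathcal{X}_0 \cap \mathcal{U}$ is itself irreducible locally closed; here $\mathcal{X}_0 := \overline{\mathcal{C}}$ is the irreducible Zariski closure of $\mathcal{C}$, making $\mathcal{C}$ a dense open subset of $\mathcal{X}_0$. By restricting to irreducible components of $\mathcal{Y}$ and grouping the corresponding preimages in $\mathcal{C}$, we may further assume $\mathcal{Y}$ is irreducible, and then $\varphi|_{\mathcal{X}_0} \colon \mathcal{X}_0 \to \mathcal{Y}$ is a dominant morphism of irreducible affine varieties.

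The key lemma driving the induction is: for any dominant morphism $\varphi \colon \mathcal{X} \to \mathcal{Y}$ of irreducible affine varieties, the image $\varphi(\mathcal{X})$ contains a nonempty Zariski open subset of $\mathcal{Y}$. To obtain such an open subset inside $\varphi(\mathcal{C})$ (as opposed to merely inside $\varphi(\mathcal{X}_0)$), pick a nonempty principal open $\mathcal{X}_g \subseteq \mathcal{C}$; then $\mathcal{X}_g$ is itself an irreducible affine variety dense in $\mathcal{X}_0$, so the restriction $\mathcal{X}_g \to \mathcal{Y}$ is still dominant and the key lemma provides a nonempty open $\mathcal{V} \subseteq \varphi(\mathcal{X}_g) \subseteq \varphi(\mathcal{C})$. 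Now write $\varphi(\mathcal{C}) \,=\, \mathcal{V} \,\cup\, \varphi\bigl(\mathcal{C} \cap \varphi^{-1}(\mathcal{Y} \setminus \mathcal{V})\bigr)$. The second piece is the image of a constructible subset of $\bbC^m$ whose image closure lies in $\mathcal{Y} \setminus \mathcal{V}$, a proper closed subvariety of the irreducible $\mathcal{Y}$ which is strictly smaller than $\overline{\varphi(\mathcal{C})}$, and thus handled by the Noetherian induction hypothesis. Together these pieces give constructibility of $\varphi(\mathcal{C})$.

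To establish the key lemma I plan to invoke a relative form of Noether normalization: the dominance of $\varphi$ makes the pullback $\varphi^* \colon \bbC[\mathcal{Y}] \hookrightarrow \bbC[\mathcal{X}]$ injective, and choosing elements $t_1, \dots, t_r \in \bbC[\mathcal{X}]$ that form a transcendence basis for $\bbC(\mathcal{X})$ over $\bbC(\mathcal{Y})$ and clearing denominators produces a nonzero $f \in \bbC[\mathcal{Y}]$ such that $\bbC[\mathcal{X}]_{\varphi^* f}$ is a finitely generated integral module over $\bbC[\mathcal{Y}]_f[t_1, \dots, t_r]$. Geometrically this factors $\varphi$, restricted to the principal open $\mathcal{X}_{\varphi^* f}$, as a finite surjective morphism onto $\mathcal{Y}_f \times \bbC^r$ (surjective by the lying-over theorem for integral extensions) followed by the projection onto $\mathcal{Y}_f$, yielding $\mathcal{Y}_f \subseteq \varphi(\mathcal{X})$ as the desired nonempty open subset. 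The main obstacle is this algebraic ingredient --- relative Noether normalization together with lying-over --- which transforms the qualitative geometric statement ``dominant implies almost-surjective'' into an actual proof; once it is in place, everything else is careful bookkeeping of constructible sets, irreducible components, and Noetherian descent.
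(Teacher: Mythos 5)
Your proof is correct, but it is worth noting that the paper itself does \emph{not} prove Chevalley's theorem: it states the result, cites the Stacks project entry on finitely presented morphisms for the general scheme-theoretic version, and remarks that in this generality it is equivalent to Tarski--Seidenberg quantifier elimination, which is a second, model-theoretic route to the same statement. Your argument is the classical commutative-algebra proof: Noetherian induction on the closure of the image, reduction to a dominant morphism of irreducible affine varieties, and the key lemma that such a morphism's image contains a nonempty Zariski open, obtained by relative Noether normalization over $\bbC(\mathcal{Y})$ (clearing denominators to find a nonzero $f \in \bbC[\mathcal{Y}]$ over which the map factors as a finite surjection onto $\mathcal{Y}_f \times \bbC^r$ followed by projection) and lying-over. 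This is well-posed and complete as a sketch. One small redundancy: once you have reduced to $\mathcal{C}$ irreducible locally closed, its continuous image $\varphi(\mathcal{C})$ and hence $\mathcal{Y} = \overline{\varphi(\mathcal{C})}$ is automatically irreducible, so the separate step ``restricting to irreducible components of $\mathcal{Y}$'' is vacuous there. Also, when you pass to the second piece $\mathcal{C} \cap \varphi^{-1}(\mathcal{Y} \setminus \mathcal{V})$, it need not be irreducible, so the induction really lands you back at the start of the whole reduction with a strictly smaller target closure; that is fine for Noetherian induction, but it is worth phrasing the inductive statement as ``for every constructible $\mathcal{C}'$ with $\overline{\varphi(\mathcal{C}')}$ strictly contained in $\mathcal{Y}$'' rather than tracking a single irreducible piece. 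Compared to the citation-plus-Tarski--Seidenberg approach in the paper, your proof buys self-containedness at the cost of invoking the normalization and lying-over machinery; the quantifier-elimination route is arguably shorter if one is willing to import it from model theory.
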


In this generality, the Chevalley's theorem is equivalent to Tarski-Seidenberg theorem from logic.

\medskip
\begin{theorem}[Tarski--Seidenberg theorem~\cite{seidenberg1954new}]
    First-order theory of an algebraically closed field admits quantifier elimination.
\end{theorem}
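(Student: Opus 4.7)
The plan is to reduce to coordinate projections, and then to prove the projection case by Noetherian induction with the dominant morphism lemma as the geometric engine. First, I would factor $\varphi$ through its graph: the map $x \mapsto (x, \varphi(x))$ is a closed embedding $\bbC^m \hookrightarrow \bbC^{m+n}$ (since $\Gamma_\varphi$ is cut out by $y_j - \varphi_j(x) = 0$), and it carries the constructible set $\mathcal{C}$ bijectively onto a constructible subset $\mathcal{C}' \subset \bbC^{m+n}$. Then $\varphi(\mathcal{C})$ is the image of $\mathcal{C}'$ under the coordinate projection onto the last $n$ coordinates. Iterating one coordinate at a time reduces the task to the single-coordinate projection $\pi \colon \bbC^{m+1} \to \bbC^m$ forgetting the last coordinate.

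Since projection commutes with finite unions and every constructible set in $\bbC^{m+1}$ is a finite union of locally closed sets of the form $\mathcal{X} \setminus \mathcal{X}'$ with $\mathcal{X}$ closed irreducible and $\mathcal{X}' \subsetneq \mathcal{X}$ closed, it suffices to prove that $\pi(\mathcal{X} \setminus \mathcal{X}')$ is constructible for every such pair. I would proceed by Noetherian induction on $\mathcal{X}$: the base case $\mathcal{X} = \emptyset$ is trivial, and we may assume the statement for every proper closed subvariety of $\mathcal{X}$.

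The crucial geometric input is the \emph{dominant morphism lemma}: if $\psi \colon \mathcal{Z} \to \mathcal{W}$ is a dominant morphism between irreducible affine varieties, then $\psi(\mathcal{Z})$ contains a nonempty Zariski open subset of $\mathcal{W}$. Applying it to the restriction of $\pi$ to $\mathcal{X} \setminus \mathcal{X}'$, and setting $\mathcal{Y} = \overline{\pi(\mathcal{X} \setminus \mathcal{X}')}$, I obtain a nonempty Zariski open $\mathcal{U} \subset \mathcal{Y}$ contained in $\pi(\mathcal{X} \setminus \mathcal{X}')$. Then
\[
\pi(\mathcal{X} \setminus \mathcal{X}') \;=\; \mathcal{U} \;\cup\; \pi\bigl((\mathcal{X} \setminus \mathcal{X}') \cap \pi^{-1}(\mathcal{Y} \setminus \mathcal{U})\bigr),
\]
and the second piece is the image of a constructible set sitting inside $\mathcal{X} \cap \pi^{-1}(\mathcal{Y} \setminus \mathcal{U})$, a proper closed subvariety of $\mathcal{X}$ (since $\mathcal{Y} \setminus \mathcal{U} \subsetneq \mathcal{Y}$ implies its preimage cannot contain all of $\mathcal{X} \setminus \mathcal{X}'$). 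By the inductive hypothesis this image is constructible, so $\pi(\mathcal{X} \setminus \mathcal{X}')$ is the union of a locally closed set and a constructible set, hence constructible.

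The main obstacle is establishing the dominant morphism lemma itself, which is the algebraic-geometric heart of the argument. I would prove it via Noether normalization applied to the ring extension $\bbC[\mathcal{W}] \hookrightarrow \bbC[\mathcal{Z}]$ induced by $\psi$: after localizing $\bbC[\mathcal{W}]$ at a suitable nonzero element $g$, the extension $\bbC[\mathcal{W}]_g \hookrightarrow \bbC[\mathcal{Z}]_g$ factors as a polynomial extension followed by a module-finite extension. Finite extensions induce surjective maps on spectra, and affine projections are surjective, so $\psi$ is surjective onto the open set $\mathcal{U} = \{g \neq 0\} \cap \mathcal{W}$. For the specific one-variable projection needed here, one can alternatively give a hands-on proof using resultants: exploiting that $\mathcal{X}$ is cut out by finitely many polynomials in $\bbC[x_1, \dots, x_m][t]$, elimination of $t$ via resultants explicitly exhibits the hypersurface in $\mathcal{Y}$ off of which every fiber of $\pi|_{\mathcal{X}}$ is nonempty.
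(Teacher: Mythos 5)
The paper offers no proof here: it cites Seidenberg and presents the statement as equivalent to Chevalley's theorem, which is also stated with a citation but without proof, so there is no in-paper argument to compare against. What you have written is essentially a correct proof of Chevalley's theorem over $\bbC$: the reduction to single-coordinate projections via the graph, the Noetherian induction on irreducible $\mathcal{X}$, and the use of the dominant-morphism lemma to carve out a nonempty open $\mathcal{U} \subset \overline{\pi(\mathcal{X}\setminus\mathcal{X}')}$ lying inside the image are all standard and sound. Two small technical repairs are needed: apply the dominant-morphism lemma to a nonempty principal affine open of $\mathcal{X}$ contained in $\mathcal{X}\setminus\mathcal{X}'$ (not directly to the quasi-affine set), and phrase the inductive hypothesis as ``the image of every constructible subset of every proper closed subvariety of $\mathcal{X}$ is constructible,'' since $\mathcal{X}\cap\pi^{-1}(\mathcal{Y}\setminus\mathcal{U})$ need not be irreducible.

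The genuine gap is that Chevalley's theorem over $\bbC$ is not, on its own, the Tarski--Seidenberg theorem as stated. Quantifier elimination is a claim about first-order formulas: it must hold uniformly across all algebraically closed fields, and the eliminating quantifier-free formula must have its coefficients in the same parameter ring as the input formula. Your Noetherian induction takes place inside a fixed $\bbC^n$, and the dominant-morphism lemma (via Noether normalization after localization) produces $\mathcal{U}$ through a generic, non-effective choice that is not a priori defined over the ring generated by the parameters. The resultant argument you sketch in your last sentence is the correct way to close this gap: elimination of one variable by iterated resultants is explicit, its output polynomials live in the subring generated by the coefficients of the input, and it is characteristic-independent — exactly what is needed to upgrade ``projections of constructible sets are constructible'' to ``every formula is equivalent to a quantifier-free one in the theory of algebraically closed fields.'' For the theorem as stated, that resultant computation should be the main engine, with the Noetherian-induction scaffolding serving only as an organizational device.
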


\paragraph{Approximating curves.}~A natural way to study higher-dimensional varieties is by intersecting them with hyperplanes; the following lemma ensures that such intersections can still retain geometric relevance to any given point in the variety. 

\medskip
\begin{lemma}
    Let $\mathcal{X}$ be an irreducible affine or projective variety of dimension at least $2$.
    If $\mathcal{U}$ is a nonempty open in $\mathcal{X}$ and $x \in \mathcal{X}$,
    then there exists a hyperplane $\mathcal{H}$ such that
    $x \in \mathcal{H}$,
    the intersection $\mathcal{Y} = \mathcal{X} \cap \mathcal{H}$ is proper,
    and the irreducible component of $\mathcal{Y}$ containing $x$ intersects with $\mathcal{U}$.
\end{lemma}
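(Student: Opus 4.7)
The plan is to let $\mathcal{X}' := \overline{\mathcal{X} \setminus \mathcal{U}}$, which is a proper closed subvariety of $\mathcal{X}$ because $\mathcal{X}$ is irreducible and $\mathcal{U}$ is a nonempty open. In particular $\dim \mathcal{X}' \le \dim \mathcal{X} - 1$. Let $\mathcal{V}_1, \dots, \mathcal{V}_m$ be the irreducible components of $\mathcal{X}'$ of maximal possible dimension $\dim \mathcal{X} - 1$ (possibly $m = 0$; the components of smaller dimension will take care of themselves). I would then pick $\mathcal{H}$ to be a hyperplane through $x$ that avoids containing any of $\mathcal{X}, \mathcal{V}_1, \dots, \mathcal{V}_m$, and argue that the component $\mathcal{Y}_0$ of $\mathcal{X} \cap \mathcal{H}$ through $x$ cannot lie inside $\mathcal{X}'$ for dimension reasons, hence it must meet $\mathcal{U}$.

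The first step is to produce such an $\mathcal{H}$. Work in the finite-dimensional vector space $\Lambda$ of linear forms (homogeneous in the projective case, affine in the affine case) on the ambient space that vanish at $x$; this is nonzero since the ambient space has dimension at least $2$. For each $\mathcal{W} \in \{\mathcal{X}, \mathcal{V}_1, \dots, \mathcal{V}_m\}$, the subspace $\Lambda_\mathcal{W} := \{L \in \Lambda \mid L|_\mathcal{W} \equiv 0\}$ is a \emph{proper} subspace of $\Lambda$: indeed, each such $\mathcal{W}$ is irreducible of dimension at least $1$, hence contains some point $p \neq x$, so the extra condition $L(p) = 0$ cuts $\Lambda$ strictly. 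Since $\bbC$ is infinite, $\Lambda$ is not a finite union of proper subspaces, so there exists a nonzero $L \in \Lambda \setminus (\Lambda_\mathcal{X} \cup \bigcup_i \Lambda_{\mathcal{V}_i})$; set $\mathcal{H} := \{L = 0\}$. By construction $x \in \mathcal{H}$ and $\mathcal{H}$ contains neither $\mathcal{X}$ nor any $\mathcal{V}_i$, so the intersection $\mathcal{Y} := \mathcal{X} \cap \mathcal{H}$ is proper.

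Now I would invoke the dimension lemma for hypersurface intersections quoted just before the statement: since $\mathcal{H}$ does not contain the irreducible $\mathcal{X}$, every irreducible component of $\mathcal{Y}$ has dimension exactly $\dim \mathcal{X} - 1$. In particular, the component $\mathcal{Y}_0$ containing $x$ satisfies $\dim \mathcal{Y}_0 = \dim \mathcal{X} - 1 \geq 1$. Suppose for contradiction that $\mathcal{Y}_0 \cap \mathcal{U} = \emptyset$; then $\mathcal{Y}_0 \subset \mathcal{X}'$, and by irreducibility $\mathcal{Y}_0$ lies inside some irreducible component $\mathcal{X}'_j$ of $\mathcal{X}'$. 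Since $\dim \mathcal{Y}_0 = \dim \mathcal{X} - 1$ and $\dim \mathcal{X}'_j \le \dim \mathcal{X} - 1$, we must have $\dim \mathcal{X}'_j = \dim \mathcal{X} - 1$, i.e., $\mathcal{X}'_j = \mathcal{V}_i$ for some $i$. But $\mathcal{H}$ was chosen not to contain $\mathcal{V}_i$, so applying the same dimension lemma to $\mathcal{V}_i \cap \mathcal{H}$ gives $\dim (\mathcal{V}_i \cap \mathcal{H}) \le \dim \mathcal{V}_i - 1 = \dim \mathcal{X} - 2$, contradicting $\mathcal{Y}_0 \subset \mathcal{V}_i \cap \mathcal{H}$.

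The step I expect to require the most care is the existence of $L$: one has to verify that each $\Lambda_\mathcal{W}$ really is a \emph{proper} subspace of $\Lambda$, which is where the hypothesis $\dim \mathcal{X} \geq 2$ is used crucially -- it guarantees that the relevant components $\mathcal{V}_i$ of $\mathcal{X}'$ have dimension at least $1$ (hence infinitely many points, hence a point distinct from $x$), and likewise that $\mathcal{X}$ itself is not just $\{x\}$. The remaining ingredients -- dimensions of hypersurface sections and the fact that an infinite-field vector space is not a finite union of proper subspaces -- are standard and already available from the preceding material.
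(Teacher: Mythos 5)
Your proof is correct and follows essentially the same approach as the paper: both consider the closed set $\mathcal{Z} = \mathcal{X}\setminus\mathcal{U}$, choose a hyperplane through $x$ not containing the relevant irreducible components, and conclude by a dimension count that the component of $\mathcal{X}\cap\mathcal{H}$ through $x$ cannot lie inside $\mathcal{Z}$. The only cosmetic differences are that you restrict the avoidance to top-dimensional components of $\mathcal{Z}$ (the paper instead permits $\mathcal{H}$ to contain $\{x\}$ if that is a component) and spell out the existence of $\mathcal{H}$ via the ``finite union of proper subspaces'' argument, which the paper references implicitly through \cref{lem:generic-hypersurface}.
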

\begin{proof}
    Let $\mathcal{Z} = \mathcal{X} \setminus \mathcal{U}$.
    Note that $\mathcal{Z}$ is a closed subvariety of $\mathcal{X}$ containing $x$, so $\dim \mathcal{Z} \leq \dim \mathcal{X} - 1$.

    Similarly to~\cref{lem:generic-hypersurface}, one can prove that a general hyperplane $\mathcal{H}$ containing $x$ does not contain any of the irreducible components of $\mathcal{Z}$ except $\{x\}$ if it is an irreducible component.
    It follows that all irreducible components of the intersection $\mathcal{H} \cap \mathcal{Z}$ have dimension at most $\dim \mathcal{Z} - 1$ or $0$.
    In both cases the dimension does not exceed $\dim \mathcal{X} - 2$.
    
    Let $\mathcal{Y}$ be the irreducible component of $\mathcal{H} \cap \mathcal{X}$ containing the point $x$.
    Since $\dim \mathcal{Y} = \dim \mathcal{X} - 1$, it is not contained in $\mathcal{H} \cap \mathcal{Z}$, which has dimension at most $\dim \mathcal{X} - 2$.
    Therefore, the intersection of $\mathcal{Y}$ with $\mathcal{U}$ is nonempty.
\end{proof}

\begin{theorem}\label{thm:curve-existence}
    Let $\mathcal{X}$ be an irreducible affine or projective variety.
    If $\mathcal{U}$ is a nonempty open in $\mathcal{X}$ and $x \in \mathcal{X}$,
    then there exists a curve $\mathcal{E} \subset \mathcal{X}$ such that $x \in \mathcal{E}$ and $\mathcal{E}$ intersects with $\mathcal{U}$.
    Moreover, $\deg \mathcal{E} \leq \deg \mathcal{X}$.
\end{theorem}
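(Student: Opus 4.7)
The plan is to prove the theorem by induction on $\dim \mathcal{X}$, using the preceding lemma as the main inductive engine. The base case is $\dim \mathcal{X} = 1$: here $\mathcal{X}$ itself is an irreducible curve containing $x$, and it meets $\mathcal{U}$ since $\mathcal{U}$ is nonempty and $\mathcal{X}$ is irreducible (so every nonempty open is dense). Taking $\mathcal{E} = \mathcal{X}$ gives $\deg \mathcal{E} = \deg \mathcal{X}$, so the degree bound is trivial.

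For the inductive step, assume $\dim \mathcal{X} \geq 2$ and the theorem holds for all irreducible varieties of strictly smaller dimension. I would apply the preceding lemma to produce a hyperplane $\mathcal{H}$ with $x \in \mathcal{H}$ such that $\mathcal{X} \cap \mathcal{H}$ is a proper intersection and some irreducible component $\mathcal{Y}$ of $\mathcal{X} \cap \mathcal{H}$ that contains $x$ already meets $\mathcal{U}$. Since $\mathcal{X}$ is irreducible and $\mathcal{H}$ does not contain $\mathcal{X}$, the earlier dimension-cutting lemma guarantees $\dim \mathcal{Y} = \dim \mathcal{X} - 1$, and by construction $\mathcal{Y}$ is irreducible. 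The set $\mathcal{Y} \cap \mathcal{U}$ is a nonempty open of $\mathcal{Y}$, so the inductive hypothesis applied to $\mathcal{Y}$, $x$, and $\mathcal{Y} \cap \mathcal{U}$ produces a curve $\mathcal{E} \subset \mathcal{Y} \subset \mathcal{X}$ containing $x$ and meeting $\mathcal{Y} \cap \mathcal{U} \subset \mathcal{U}$, which is the curve we seek.

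The main obstacle is the degree bound $\deg \mathcal{E} \leq \deg \mathcal{X}$, and the key is to pass the bound through each slicing step. By~\cref{lem:degree-hyperplane}, $\deg(\mathcal{X} \cap \mathcal{H}) \leq \deg \mathcal{X}$, where $\mathcal{X} \cap \mathcal{H}$ is pure of dimension $\dim \mathcal{X} - 1$. I would argue that for a general codimension-$(\dim \mathcal{X} - 1)$ projective linear subspace $\mathcal{L}$, the finite intersection $(\mathcal{X} \cap \mathcal{H}) \cap \mathcal{L}$ decomposes as a disjoint union over irreducible components of $\mathcal{X} \cap \mathcal{H}$, so the number of points contributed by the component $\mathcal{Y}$ is at most $\deg(\mathcal{X} \cap \mathcal{H})$; thus $\deg \mathcal{Y} \leq \deg(\mathcal{X} \cap \mathcal{H}) \leq \deg \mathcal{X}$. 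Combining with the inductive bound $\deg \mathcal{E} \leq \deg \mathcal{Y}$, we obtain $\deg \mathcal{E} \leq \deg \mathcal{X}$, closing the induction.

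The only subtle point deserving care is the claim that the degree of an irreducible component of a pure-dimensional variety is bounded by the degree of the ambient variety; this is the one place where I would spell out the argument using the definition of degree via intersections with general linear subspaces, since the genericity of $\mathcal{L}$ must be chosen to simultaneously cut all components transversally. Everything else is a straightforward assembly of the cited lemmas.
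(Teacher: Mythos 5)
Your proof follows the same inductive strategy as the paper's: slice with a hyperplane through $x$ chosen via the preceding lemma, take the irreducible component $\mathcal{X}'$ (your $\mathcal{Y}$) containing $x$, bound its degree using~\cref{lem:degree-hyperplane}, and recurse. You are slightly more explicit than the paper in justifying that the degree of an irreducible component of $\mathcal{X}\cap\mathcal{H}$ is bounded by $\deg(\mathcal{X}\cap\mathcal{H})$ — a step the paper folds implicitly into its citation of~\cref{lem:degree-hyperplane} — but this is a refinement of the same argument, not a different route.
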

\begin{proof}
    The proof is by induction on the dimension of $\mathcal{X}$.
    If $\dim \mathcal{X} = 1$, then $\mathcal{X}$ is the required curve.
    Otherwise, we choose a hyperplane $\mathcal{H}$ using the previous Lemma.
    Let $\mathcal{X}' \subset \mathcal{X}$ be the irreducible component of $\mathcal{X} \cap \mathcal{H}$ containing $x$.
    Since the intersection is proper, $\dim \mathcal{X}' = \dim \mathcal{X} - 1$.
    Since $\mathcal{X}'$ intersects $\mathcal{U}$, the intersection $\mathcal{U}' = \mathcal{X} \cap \mathcal{U}'$ is a nonempty open in $\mathcal{X}'$.
    Moreover, $\deg \mathcal{X}' \leq \deg \mathcal{X}$ by~\cref{lem:degree-hyperplane}.
    By induction hypothesis, we can find a curve in $\mathcal{X}'$ that contains $x$ and intersects with $\mathcal{U}'$.
    This is the required curve.
\end{proof}

\paragraph{Resolution of singularities on curves.}~Resolution of singularities -- the fact that every variety with singular points can be obtained as an image of a smooth variety --- is an important result in algebraic geometry in characteristic $0$.
The general result is very complicated, but the case of curves is classical and well understood.
We will state the resolution for curves in the following form.

\medskip
\begin{theorem}
    For every projective curve $\mathcal{E} \subset \bbP^n$ there exists a smooth projective curve~$\mathcal{D} \subset \bbP^{m}$ and a morphism $\sigma \colon \mathcal{D} \to \mathcal{E}$ such that $\sigma(\mathcal{D}) = \mathcal{E}$ and every smooth point of $\mathcal{E}$ has only one preimage under $\sigma$.
\end{theorem}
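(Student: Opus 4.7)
The plan is to construct $\mathcal{D}$ as the \emph{normalization} of $\mathcal{E}$, exploiting the special feature of dimension one: a Noetherian normal local ring of dimension one is automatically a discrete valuation ring, hence regular, so normalization of a curve already produces a smooth model. First I would reduce to the case where $\mathcal{E}$ is irreducible: writing $\mathcal{E} = \bigcup_i \mathcal{E}_i$ into irreducible components, constructing $(\mathcal{D}_i, \sigma_i)$ for each, and then combining them into a single embedded disjoint union (using a standard construction like a Segre/Veronese to place several projective curves inside one $\bbP^m$). From here on assume $\mathcal{E}$ is irreducible, and let $K = \bbC(\mathcal{E})$ be its function field, a finitely generated field over $\bbC$ of transcendence degree one.

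Next I would cover $\mathcal{E}$ by finitely many affine pieces $U_i = \mathcal{E} \cap U$ obtained from the standard affine patches of $\bbP^n$, with coordinate rings $A_i$. For each $i$, let $\tilde{A}_i$ be the integral closure of $A_i$ inside $K$; the Krull--Akizuki / Noether finiteness theorem for curves over $\bbC$ guarantees that $\tilde{A}_i$ is a finitely generated $A_i$-module, hence a finitely generated $\bbC$-algebra. The affine variety $\tilde{U}_i = \mathrm{Spec}\,\tilde{A}_i$ is a normal Noetherian curve, hence smooth (Dedekind). I would then glue the $\tilde{U}_i$ along the canonical isomorphisms between the normalizations of $U_i \cap U_j$ to obtain a smooth abstract curve $\mathcal{D}$, together with a finite morphism $\sigma \colon \mathcal{D} \to \mathcal{E}$ whose restriction to each $\tilde{U}_i \to U_i$ corresponds to the inclusion $A_i \hookrightarrow \tilde{A}_i$.

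To verify the desired properties, observe that the image of $\sigma$ is a closed subvariety of $\mathcal{E}$ (as $\sigma$ is finite, hence proper) which contains the open dense set where normalization is already an isomorphism, so $\sigma(\mathcal{D}) = \mathcal{E}$. For a smooth point $p \in \mathcal{E}$, the local ring $\mathcal{O}_{\mathcal{E}, p}$ is a regular one-dimensional Noetherian local ring, hence a DVR, which is already integrally closed in $K$; thus $\sigma$ is an isomorphism in a neighborhood of $p$ and $p$ has a unique preimage. Finally, to see that $\mathcal{D}$ admits an embedding into some $\bbP^m$, I would argue that $\sigma$, being finite, is a projective morphism: an ample line bundle on $\mathcal{E}$ (the restriction of $\mathcal{O}_{\bbP^n}(1)$) pulls back via a finite map to an ample line bundle on $\mathcal{D}$, and a sufficiently high power of this pullback gives a closed embedding $\mathcal{D} \hookrightarrow \bbP^m$.

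The main obstacle is precisely this last step --- producing the concrete projective embedding of $\mathcal{D}$ --- because the abstract construction by gluing affine normalizations only yields a smooth variety, and some line-bundle or ``projective over $\mathcal{E}$'' argument is needed to realize it inside some $\bbP^m$. The finiteness theorem for integral closure of one-dimensional finitely generated $\bbC$-algebras, which underlies the construction, is also nontrivial but standard. Everything else --- irreducibility reduction, smoothness of the normalization, surjectivity, and bijection over the smooth locus --- then follows by direct inspection of the local rings involved.
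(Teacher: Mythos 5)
Your proposal follows the standard normalization construction --- integral closure in the function field on affine patches, gluing, smoothness of a normal one-dimensional local ring via the DVR criterion, and re-projectivization by pulling back an ample bundle --- which is precisely the approach the paper alludes to but does not write out, instead referring to Kollár's book. The argument is correct, modulo a minor attribution slip: the relevant finiteness of integral closure for finitely generated $\bbC$-algebras is Emmy Noether's finiteness theorem rather than Krull--Akizuki (which yields Noetherianity of intermediate rings but not module-finiteness directly), though both suffice here.
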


We may assume that the resolution $\mathcal{D}$ is contained in $\bbP^n \times \bbP^m$ and $\sigma$ is the projection onto the first component by changing $\mathcal{D}$ to $\{(\sigma(x), x) \mid x \in \mathcal{D}\}$.

The proof of this theorem most often presented in the textbooks goes through a formal argument involving normalization of a curve, which recovers the required smooth curve via the integral closure of the coordinate ring of $\mathcal{E}$.
There are other proofs, one of which recovers the smooth resolution via successive projections of a curve from a singular point.
To the interested reader, we recommend the book of János Kollàr~\cite{kollar2009lectures}, which contains many different proofs of the resolution of singularities for curves.

We need this result to describe the shape of a curve near one of its points using formal series.

\medskip
\begin{theorem}\label{thm:formal-series-existence}
    Consider a point $\bar{x}$ on an affine curve $\mathcal{E} \subset \bbC^n$.
    There exists a tuple of formal series $\tilde{x} = (\tilde{x}_1, \dots, \tilde{x}_n) \in \bbC[[\eps]]^n$ such that $\mathrm{Coeff}_{\eps^0}(\tilde{x}_i) = \bar{x}_i$, for every $F \in I_\mathcal{E}$ we have $F(\tilde{x}) = 0$, and $\tilde{x}_i \neq \bar{x}_i$ unless $\mathcal{E}$ is contained in the hyperplane given by $x_i = \bar{x}_i$.
    Moreover, the valuation $\val_\eps(\tilde{x}_i - \bar{x}_i) \leq \deg \mathcal{E}$.
\end{theorem}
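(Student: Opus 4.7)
The plan is to reduce to a smooth curve via resolution of singularities, extract the desired power series from the complete local ring at a preimage of $\bar{x}$, and use a Bézout-type argument for the valuation bound. Without loss of generality I assume $\mathcal{E}$ is irreducible, since otherwise I can restrict to the irreducible component containing $\bar{x}$ (which is a curve, or if it is a point then $\bar{x}$ lies in every hyperplane $x_i = \bar{x}_i$ and the trivial choice $\tilde{x}_i = \bar{x}_i$ works). Pass to the projective closure $\bar{\mathcal{E}} \subset \bbP^n$ and apply the resolution-of-singularities theorem for curves to obtain a smooth projective curve $\mathcal{D}$ together with a surjective morphism $\sigma \colon \mathcal{D} \to \bar{\mathcal{E}}$ which is birational (every smooth point of $\bar{\mathcal{E}}$ has one preimage, hence $\deg \sigma = 1$). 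Pick any preimage $\tilde{p} \in \sigma^{-1}(\bar{x})$.

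Since $\mathcal{D}$ is smooth at $\tilde{p}$ and one-dimensional, the preceding lemma gives an isomorphism $\hat{O}_{\tilde{p}, \mathcal{D}} \cong \bbC[[\eps]]$. Because $\bar{x}$ lies in the affine patch, each coordinate function $x_i$ pulls back via $\sigma$ to a regular function at $\tilde{p}$, and I define $\tilde{x}_i \in \bbC[[\eps]]$ to be the germ of $\sigma^{*}x_i$ at $\tilde{p}$ under this identification. Property~(1) is immediate from the definition of germ value: $\mathrm{Coeff}_{\eps^0}(\tilde{x}_i) = (\sigma^{*}x_i)(\tilde{p}) = x_i(\bar{x}) = \bar{x}_i$. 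Property~(2) follows by functoriality: if $F \in I_\mathcal{E}$, then $F(x_1, \dots, x_n) = 0$ as a regular function on $\mathcal{E}$, so its pullback to $\mathcal{D}$ vanishes in $\hat{O}_{\tilde{p}, \mathcal{D}}$, giving $F(\tilde{x}_1, \dots, \tilde{x}_n) = 0$.

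For property~(3), suppose $\tilde{x}_i = \bar{x}_i$ in $\bbC[[\eps]]$. Then the germ of $\sigma^{*}(x_i - \bar{x}_i)$ at $\tilde{p}$ is zero, so the regular function $\sigma^{*}(x_i - \bar{x}_i)$ vanishes identically on some Zariski open neighborhood of $\tilde{p}$ in $\mathcal{D}$; irreducibility of $\mathcal{D}$ (which we can take by restricting to the irreducible component through $\tilde{p}$) forces it to vanish on all of $\mathcal{D}$, and surjectivity of $\sigma$ onto $\bar{\mathcal{E}}$ yields $\mathcal{E} \subseteq \{x_i = \bar{x}_i\}$, which is the forbidden case.

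The main obstacle is property~(4), the bound $\val_\eps(\tilde{x}_i - \bar{x}_i) \leq \deg \mathcal{E}$. Assuming $\mathcal{E}$ is not contained in the hyperplane $x_i = \bar{x}_i$, consider the linear form $L := x_i - \bar{x}_i x_0$ on $\bbP^n$ (with $x_0$ the homogenizing coordinate). Its restriction to $\bar{\mathcal{E}}$ is a nonzero section of the degree-one line bundle $\mathcal{O}_{\bbP^n}(1)|_{\bar{\mathcal{E}}}$, whose total degree on $\bar{\mathcal{E}}$ is $\deg \bar{\mathcal{E}} = \deg \mathcal{E}$ by definition of projective degree. Pulling back to $\mathcal{D}$ via the birational morphism $\sigma$ preserves this degree, so $\sigma^{*}L$ is a nonzero section on the smooth curve $\mathcal{D}$ whose divisor of zeros has total degree $\deg \mathcal{E}$. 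Applying Bézout's theorem for smooth curves to $\sigma^{*}L$ (viewed through the standard hyperplane class inherited from the embedding of $\bar{\mathcal{E}}$), the sum of multiplicities of its zeros equals $\deg \mathcal{E}$. In particular, the multiplicity at $\tilde{p}$, which by definition of the isomorphism $\hat{O}_{\tilde{p}, \mathcal{D}} \cong \bbC[[\eps]]$ equals $\val_\eps(\tilde{x}_i - \bar{x}_i)$, is at most $\deg \mathcal{E}$. The delicate part is the intersection-theoretic identification of the pullback degree with $\deg \mathcal{E}$, which requires invoking the projection formula (or equivalently, that the degree of the pullback of a hyperplane class under a birational morphism is preserved); this is where I expect the bulk of the technical care to be needed.
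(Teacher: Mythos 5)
Your proof follows essentially the same route as the paper: pass to the projective closure, apply resolution of singularities for curves to get a smooth $\mathcal{D}$ with a birational morphism $\sigma$ onto $\bar{\mathcal{E}}$, identify the complete local ring at a preimage with $\bbC[[\eps]]$, take $\tilde{x}_i$ to be the germ of $\sigma^*x_i$, and read off properties (1)--(3) directly. For the valuation bound you phrase things in terms of pullback of the line bundle $\mathcal{O}_{\bbP^n}(1)$ and degree preservation under birational maps of curves, while the paper simply notes that $\val_\eps(\tilde{x}_i - \bar{x}_i)$ is the intersection multiplicity of $\mathcal{D}$ with the hyperplane $\{x_i = \bar{x}_i\}$ at the chosen preimage and then implicitly invokes Bézout (with $\deg\mathcal{D} = \deg\bar{\mathcal{E}}$); these are the same argument in different dress, and you flag the same delicate point. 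One minor remark: in this survey a curve is by definition irreducible, so your reducibility caveat at the beginning is unnecessary (though harmless).
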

\begin{proof}
    Let $\overline{\mathcal{E}}$ be the closure of $\mathcal{E}$ in the projective space $\bbP^n$ and consider a resolution of singularities for $\mathcal{E}$, given by the projection of a smooth curve $\mathcal{D} \colon \bbP^n \times \bbP^m$ onto the first factor.

    Let $y \in \mathcal{D}$ be one of the preimages of $\mathcal{D}$.
    Restrict to the affine patch $\bbC^n \times \bbC^m$ containing $\bar{y}$.
    Consider the coordinate functions $\xi_1, \dots, \xi_n$ defined as $\xi_i(x) = x_i$.
    It is clear that $\xi_i$ satisfy the following conditions: we have $\xi_i(y) = y_i = \bar{x}_i$, $\xi_i$ is nonconstant on $\mathcal{D}$ unless $\mathcal{E}$ and, therefore, $\mathcal{D}$, is contained in the hyperplane with constant $i$-th coordinate, and if $F \in I_\mathcal{E}$, then $F$ also vanishes on $\mathcal{D}$, so $F(\xi_1, \dots, \xi_n) = 0$ in $\bbC[\mathcal{E}]$.

    Let $\tilde{x}_i$ be the germ of $\xi_i$ in $\hat{O}_{y, \mathcal{D}} \cong \bbC[[\eps]]$. The previously listed properties of $\xi_i$ imply that $\tilde{x}_i$ satisfy the conditions of the theorem.

    To get the degree bound, note that $\val_\eps(\tilde{x}_i)$ is the multiplicity of $y$ in the intersection of $\mathcal{D}$ with the hyperplane $x_i = 0$.
\end{proof}

By restricting to an affine patch containing the point of interest, we obtain the analog of the previous statement for the projective curves.

\paragraph{Closure of the image of a polynomial map.}~Now we state and prove an equivalent statement about the closure of the image of a polynomial map.

\medskip
\begin{theorem}[\cite{lehmkuhl1989order}]\label{thm:image-closure}
    Let $\varphi \colon \bbC^m \to \bbC^n$ be a polynomial map.
    Let $\mathcal{G} \in \bbC^m \times \bbC^n$ be the graph of $\varphi$, that is, the affine variety $\{(t, x) \mid \varphi(t) = x\}$, and $\deg \mathcal{G} = D$.
    The following statements are equivalent:
    \begin{enumerate}
        \item $x \in \overline{\image(\varphi)}$;
        \item there exists an algebraic curve $\mathcal{D} \subset \bbC^m$ such that $x \in \overline{\varphi(\mathcal{D})}$; moreover, $\deg \mathcal{D} \leq D$;
        \item there exists a tuple of formal series $\tilde{u} \in \bbC((\eps))^m$ such that $\varphi(\tilde{u}) = x + \eps y$ for some $y \in \bbC[[\eps]]^n$; moreover, $\val_\eps(\tilde{u}) \geq -D$
        \item there exists a tuple of Laurent polynomials $\tilde{v} \in \bbC[\eps^{\pm 1}]^m$ such that $\varphi(\tilde{v}) = x + \eps y$ for some $y \in \bbC[\eps]^n$; moreover, $\val_\eps(\tilde{v}) \geq - D$.
    \end{enumerate}
\end{theorem}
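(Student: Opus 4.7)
The plan is to prove a cycle of implications $(1)\Rightarrow(2)\Rightarrow(3)\Rightarrow(4)\Rightarrow(1)$. Two directions are short. For $(4)\Rightarrow(1)$, I would substitute a sequence of small nonzero $\eps_0 \in \bbC$ avoiding the finitely many poles of $\tilde{v}$; then $\tilde{v}(\eps_0) \in \bbC^m$ and $\varphi(\tilde{v}(\eps_0)) = x + \eps_0\, y(\eps_0) \to x$ in the Euclidean topology, placing $x$ in the Euclidean closure of $\image(\varphi)$, which agrees with the Zariski closure because $\image(\varphi)$ is constructible by Chevalley's theorem. For $(3)\Rightarrow(4)$, I would truncate $\tilde{u}$: write $\tilde{u} = \tilde{v} + \eps^N r$, with $\tilde{v} \in \bbC[\eps^{\pm 1}]^m$ collecting the terms of $\tilde{u}$ of $\eps$-order at most $N-1$ and $r \in \bbC[[\eps]]^m$. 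Since $\varphi$ is a polynomial, Taylor expansion at $\tilde{v}$ yields $\varphi(\tilde{u}) - \varphi(\tilde{v})$ with $\eps$-valuation at least $N - D(\deg \varphi - 1)$; choosing $N$ large enough forces $\varphi(\tilde{v}) \equiv x \pmod{\eps}$, and the truncation preserves $\val_\eps(\tilde{v}) \geq -D$.

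The heart of the proof is $(1)\Rightarrow(2)\Rightarrow(3)$, which requires the algebro-geometric machinery developed in this section in order to lift the boundary point $x$ to a formal curve in the source. By Chevalley's theorem (\cref{thm:chevalley}) and \cref{lem:constructible-contains-open}, if $\mathcal{X}$ is an irreducible component of $\overline{\image(\varphi)}$ containing $x$, then $\image(\varphi)$ contains a nonempty Zariski open $\mathcal{U} \subset \mathcal{X}$. Pick an irreducible component $\mathcal{G}'$ of $\mathcal{G}$ whose projection to $\bbC^n$ is dense in $\mathcal{X}$, and let $\overline{\mathcal{G}'}$ denote its closure in $\bbP^m \times \bbC^n$. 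Completeness of $\bbP^m$ makes the projection $\overline{\mathcal{G}'} \to \bbC^n$ closed with image containing $\mathcal{X}$, so there exists a point $p = (\bar{t}, x) \in \overline{\mathcal{G}'}$ with $\bar{t} \in \bbP^m$ possibly at infinity. Apply \cref{thm:curve-existence} inside $\overline{\mathcal{G}'}$ to produce a curve $\mathcal{E} \ni p$ meeting the nonempty open set $\pi_2^{-1}(\mathcal{U}) \cap \mathcal{G}'$, with $\deg \mathcal{E} \leq \deg \overline{\mathcal{G}'} \leq \deg \mathcal{G} = D$. Projecting $\mathcal{E}$ to $\bbP^m$ and intersecting with $\bbC^m$ yields the desired curve $\mathcal{D}$; the degenerate case in which the projection collapses to a point forces $x \in \image(\varphi)$, and any affine line through a preimage of $x$ then works.

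For $(2)\Rightarrow(3)$, I would apply \cref{thm:formal-series-existence} to $\mathcal{E}$ at $p$, working in an affine patch of $\bbP^m$ around $\bar{t}$; clearing the homogeneous-coordinate denominator may introduce negative $\eps$-powers into the resulting tuple $(\tilde{u}_1, \ldots, \tilde{u}_m, \tilde{x}_1, \ldots, \tilde{x}_n)$. Because $x$ is an ordinary affine point, the second block satisfies $\tilde{x}_j = x_j + O(\eps)$, and the equations $\varphi_j(t) = x_j$ defining $\mathcal{G}'$ force the formal identity $\varphi(\tilde{u}) = x + \eps y$. The bound $\val_\eps(\tilde{u}_i) \geq -D$ transfers from \cref{thm:formal-series-existence} via $\deg \mathcal{E} \leq D$. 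The principal obstacle is the compactification itself: when $x \notin \image(\varphi)$, the curve $\mathcal{D}$ must escape to infinity in $\bbC^m$, and correctly bookkeeping this ``point at infinity'' --- both when arguing the existence of $p \in \overline{\mathcal{G}'}$ and when pulling the degree bound on $\mathcal{E}$ back to a $\val_\eps$-bound on $\tilde{u}$ --- is the delicate thread tying the degree $D$ of the graph to the $\eps$-order of the parameterization.
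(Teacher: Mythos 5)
Your implications $(3)\Rightarrow(4)$ and $(4)\Rightarrow(1)$ match the paper's, and your $(2)\Rightarrow(3)$ (affine patch around the possibly-infinite preimage of $x$, then \cref{thm:formal-series-existence}, then dehomogenize to get Laurent denominators bounded by the degree of the curve) is also the paper's argument.

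Where you genuinely diverge is $(1)\Rightarrow(2)$. You apply \cref{thm:curve-existence} \emph{once}, directly inside the compactified graph $\overline{\mathcal{G}'}$, and then project the resulting curve to $\bbC^m$. The paper instead applies \cref{thm:curve-existence} \emph{twice}: first to $\overline{\image(\varphi)}\subset\bbC^n$ to get a curve $\mathcal{E}$ through $x$, then pulls $\mathcal{E}$ back to the graph, projects to $\mathcal{T}\subset\bbP^m$, and applies \cref{thm:curve-existence} a second time inside $\mathcal{T}$ to produce $\mathcal{D}$. Your one-shot version is conceptually cleaner, but it runs into a real difficulty with the degree bound that the paper's two-stage route is specifically designed to sidestep, and which you yourself flag as ``the delicate thread'' without resolving.

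Concretely: \cref{thm:curve-existence} is stated for an affine or projective variety, and the degree bound $\deg\mathcal{E}\le\deg\mathcal{X}$ there refers to degree \emph{in that ambient space}. Your $\overline{\mathcal{G}'}$ lives in $\bbP^m\times\bbC^n$ (or $\bbP^m\times\bbP^n$), which is neither affine nor projective in the sense used; to apply the theorem you must first realize it as a projective variety, e.g.\ via the Segre embedding. But the Segre degree of $\overline{\mathcal{G}}$ is \emph{not} the affine degree $D=\deg\mathcal{G}$ from the theorem statement, so your chain $\deg\mathcal{E}\le\deg\overline{\mathcal{G}'}\le\deg\mathcal{G}=D$ does not go through as written. (Closing up in $\bbP^{m+n}$ instead would preserve $D$, but then the projection to $\bbC^n$ is a rational map undefined at the center and your argument that $x$ lies in the image of the closure no longer follows from properness.) The paper avoids this by never computing a degree in the product of projective spaces: the curve $\overline{\mathcal{D}}$ is produced by hyperplane cuts in $\bbP^m$, its pullback $\mathcal{F}=\pi_1^{-1}(\mathcal{D})$ to $\mathcal{G}$ is realized as a linear section of $\mathcal{G}\subset\bbC^{m+n}$ whose degree is bounded by $D$ by \cref{lem:degree-hyperplane}, and then $\deg\mathcal{D}\le\deg\mathcal{F}$ by the affine projection lemma. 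To salvage your version you would need a separate argument that transfers a degree bound on a curve in $\bbP^m\times\bbP^n$ (say bidegree or Segre degree) to a degree bound on its image in $\bbP^m$ in terms of the affine degree $D$ of the graph; this is where the missing content lies.

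Two smaller points: $\mathcal{G}$ is the graph of a polynomial map and hence isomorphic to $\bbC^m$, so it is already irreducible and there is no need to choose an irreducible component $\mathcal{G}'$; and your ``degenerate case in which the projection collapses to a point forces $x\in\image(\varphi)$'' deserves a line of argument (it is not immediate that a curve in $\overline{\mathcal{G}'}$ through $p$ meeting the affine locus must project to a curve rather than a point; the paper handles the analogous issue by choosing the open set $\{t:\varphi(t)\neq\varphi(s)\}$ to force $\overline{\varphi(\mathcal{D})}$ to have at least two points).
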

\begin{proof}
    Let $\pi_1 \colon \bbC^m \times \bbC^n \to \bbC^m$ and $\pi_2 \colon \bbC^m \times \bbC^n \to \bbC^n$ be projections onto the first and second factors respectively.
    Note that $\pi_1$ gives an isomorphism between $\mathcal{G}$ and $\bbC^m$, with the inverse given by $\pi_1^{-1}(t) = (t, \varphi(t))$.
    Moreover, $\image(\varphi)$ is exactly $\pi_2(\mathcal{G})$.
    Embed $\bbC^m \times \bbC^n$ into the product of projective spaces $\bbP^m \times \bbP^n$, and extend the projections $\pi_1$ and $\pi_2$ accordingly,
    and further consider the closure $\overline{\mathcal{G}} \subset \bbP^m \times \bbP^n$.
    Since $\overline{\mathcal{G}}$ is a projective variety, $\pi_2(\overline{\mathcal{G}}) \subset \bbP^n$ is closed,
    and since it contains $\image(\varphi)$, it also contains its closure.

\medskip
    {\bf (1) $\Rightarrow$ (2):}
    First, apply~\cref{thm:curve-existence} to $\overline{\image(\varphi)}$ with the point $x \in \overline{\image(\varphi)}$ and an open subset contained in $\image(\varphi)$, which exists by~\cref{lem:constructible-contains-open} to obtain the curve $\mathcal{E} \subset \overline{\image(\varphi)}$.
    Let $\mathcal{Y} = \pi^{-2}(\mathcal{F})$. It is a closed subvariety of $\overline{\mathcal{G}}$.
    The image of every irreducible component of $\mathcal{Y}$ under $\pi_2$ is an irreducible closed subvariety of $\mathcal{E}$, that is, either $\mathcal{E}$ itself or a point.
    
    Let $\mathcal{Y}'$ be one of the irreducible components such that $\pi_2(\mathcal{Y}') = \mathcal{E}$.
    Let $\mathcal{T} = \pi_1(\mathcal{Y}')$.
    Since $\mathcal{E}$ contains points in $\image(\varphi)$, $\mathcal{T}$ intersects with the affine chart $\bbC^m$.
    Let $s$ be a point in $\mathcal{T} \cap \bbC^m$.
    Apply~\cref{thm:curve-existence} again to $\mathcal{T}$ with the point $s$ and the open set $(\mathcal{T} \cap \{t \in \bbC^m \colon \varphi(t) \neq \varphi(s) \}$ to get a curve $\overline{\mathcal{D}}$, and let $\mathcal{D} = \overline{\mathcal{D}} \cap \bbC^m$.
    By construction of $\mathcal{D}$, the closure $\overline{\varphi(\mathcal{D})}$ contains at least two points.
    Since it is an irreducible closed subset of $\mathcal{E}$, it coincides with the whole $\mathcal{E}$, hence it contains $x$.
    
    To prove the degree bound, let $\mathcal{F} = \pi_1^{-1}(\mathcal{D})$.
    From the construction, it follows that $\mathcal{F}$ is an intersection of $\mathcal{G}$ with a linear subspace, so $\deg \mathcal{F} \leq D$.
    Moreover, $\mathcal{D} = \pi_1(\mathcal{F})$, so $\deg \mathcal{D} \leq \deg \mathcal{F} \leq D$.

\medskip
    {\bf (2) $\Rightarrow$ (3):}
    Let $\mathcal{D}$ be the curve in (2).
    Let $\mathcal{F} = \pi_1^{-1}(\mathcal{D})$ and consider the closure $\overline{\mathcal{F}} \subset \overline{\mathcal{G}}$.
    The image $\pi_2(\overline{\mathcal{F}})$ is a closed subset of $\overline{\mathcal{G}}$ containing $\image(\varphi)$, so it also contains $x$.
    Consider the point $(t, x) \in \overline{\mathcal{F}} \subset \bbP^m \times \bbP^n$.
    Let $t = (t_0 : \dots : t_m)$, $x = (1 : x_1 : \dots : x_n)$ and choose $k$ such that $t_k \neq 0$.
    Restrict to the affine patch $U_k \times \bbC^n$ and
    apply~\cref{thm:formal-series-existence} to obtain formal series $\tilde{t}_i$, $\tilde{x}_i$ such that the following holds: 
    \[\mathrm{Coeff}_{\eps^0} (\tilde{t}_i) \;=\; t_i\,,\; \mathrm{Coeff}_{\eps^0} (\tilde{x}_i) \;=\; x_i\,,\;\text{and}\;(\tilde{t}, \tilde{x})\;\text{satisfies all equations of}\;\overline{\mathcal{G}}\;.
    \]
    In particular, the projectivized version of the equation $\varphi(u) = x$ holds for $\tilde{u}, \tilde{x}$, which means that $\varphi(\frac{\tilde{t}_1}{\tilde{t}_0}, \dots, \frac{\tilde{t}_n}{\tilde{t}_0}) = \tilde{x}$ where we take $\tilde{t}_k = 1$, so $\tilde{u}_i = \frac{\tilde{t}_i}{\tilde{t}_0} \in \bbC((\eps))$ form the required tuple. 
    Since $\deg \mathcal{F} \leq \deg \mathcal{G}$,
    By~\cref{thm:formal-series-existence} we have $\val_\eps(\tilde{t}_0) \leq D$ and therefore $\val(\tilde{u}_i) \geq -D$.

\medskip
    {\bf (3) $\Rightarrow$ (4):}
    Let $\tilde{u}_i = \sum_{k = -d}^\infty u_{ik} \eps^k$, with $-d$ being the most negative among the powers of $\eps$ appearing in $a$ (or $0$ if there are no negative powers).
    Suppose the polynomials $\varphi_j$ defining coordinates of the polynomial map $\varphi$ have degree at most $q$.
    Note that the terms of $\tilde{u}_i$ with powers of $\eps$ higher than $dq$ contribute only to the positive power of $\eps$
    in every monomial in $\tilde{u}$ of degree $p \leq q$, since the monomial can be expressed as $$\tilde{u}_{i_1} \dots \tilde{u}_{i_p} \;=\; \sum_{k = -dp}^\infty \sum_{k_1 + \dots + k_p = k} u_{i_1 k_1} \dots u_{i_p k_p} \eps^k,$$
    and $k_1 + \dots + k_p \geq dq - (p - 1) \cdot d$, when one of $k_i$ is greater than $dq$.
    If we define the Laurent polynomials $\tilde{v}_i$ by truncating the series $\tilde{u}_i$ at degree $dq$, that is, $\tilde{v}_i = \sum_{k = -d}^{qd} \tilde{u}_{ik} \eps^k$,
    the expressions $\varphi(\tilde{v})$ and $\varphi(\tilde{u})$ only differ in the positive powers of $\eps$, hence the required condition holds.

\medskip
    {\bf (2) $\Rightarrow$ (1):}
    Since $\varphi(\mathcal{D}) \subset \image(\varphi)$, we have $x \in \overline{\varphi(\mathcal{D})} \subset \overline{\image(\varphi)}$.

\medskip
    {\bf (4) $\Rightarrow$ (1):}
    Note that for every $\eps \neq 0$ the point $\varphi(\tilde{v}(\eps)) = x + \eps y$ lies in $\image(\varphi)$.
    It follows that $x = \lim_{\eps \to 0} \varphi(\tilde{v}(\eps))$ lies in the Euclidean closure of $\image(\varphi)$, and therefore in its Zariski closure.
\end{proof}

\subsection{Equivalent Definitions of Border Complexity}
\label{sec:equivalent}
In this section, we apply the algebro-geometric results developed earlier to show that the border complexity of a broad class of complexity measures can be characterized via one-parameter families of polynomials. These families can be interpreted either geometrically, as algebraic curves, or algebraically, as polynomials whose coefficients are univariate rational functions.

To define a complexity measure on polynomials, we typically describe a set of expressions or abstract devices (such as algebraic circuits) which compute polynomials and explain how to measure the {\em size} of a given expression.
The complexity of a polynomial is then defined as the size of the {\em minimal expression} for this polynomial.
To compute all polynomials over $\mathbb{C}$, the expressions must incorporate values from $\mathbb{C}$ as labels or parameters of some form.
In most cases, the coefficients of the computed polynomial depend algebraically on these parameters.
We formalize this construction in the following definition.

\medskip
\begin{definition}\label{def:parameterizable}
    Let $\Gamma$ be a complexity measure on polynomials.
    We define a \emph{$\Gamma$-expression} of size $s$ in $n$ variables as a polynomial $\varphi(\vect u,\x) \in \bbC[u_1, \dots, u_p][x_1, \dots, x_n]$
    such that $\Gamma(\varphi(\bar{u}, \x)) \leq s$ for every~$\bar{u} \in \bbC^p$.
    We say that $\Gamma$ is a \emph{parameterizable complexity measure} if for every $s, n \in \mathbb{N}$
    there exist a finite set $\Phi(s, n)$ of $\Gamma$-expressions such that every polynomial $f \in \bbC[\x]$ with $\Gamma(f) \leq s$ can be represented as $f = \varphi(\bar{u}, \x)$ for some $\varphi \in \Phi(s, n)$ and some vector $\bar{u}$.
\end{definition}

\medskip
\begin{example}[Circuit complexity]\label{ex:circuit-parameterizable}
    Circuit complexity is parameterizable.
    Define a \emph{circuit template} of size $s$ in the same way as a circuit of size $s$, but in every context where a constant from $\bbC$ appears in a label, use parameter-variables $u_i$ with $1 \leq i \leq s$ instead.
    Clearly, every circuit template $C$ of size $s$ with~$n$ input variables computes a polynomial $\varphi(\vect u,\x)$, and if we replace parameter-variables $u_i$ by constants $\bar{u}_i$, we obtain a circuit of size $s$ computing the polynomial $f = \varphi(\bar{u}; \x)$.
    In other words, every circuit template defines a circuit complexity expression.
    There is only a finite number of circuit templates of size at most $s$, and each circuit is obtained from some circuit template by replacing parameter-variables with constants.
    Thus, every polynomial $f$ with circuit complexity at most $s$ is covered by one of the circuit templates.
\end{example}

\medskip
\begin{example}[Determinantal complexity]
    Determinantal complexity is parameterizable, defined by the determinantal expressions $\det (u_{ij0} + \sum_{k = 1}^n u_{ijk} x_k)$ (with size of the expression being the size of the determinant).
\end{example}

\medskip
\begin{example}
    Waring rank of a polynomial is technically not a parameterizable complexity measure, but since it only applies to homogeneous polynomials, we only need to consider sequences of homogeneous polynomials of the same degree in the definition of border complexity.
    Restricted to degree $d$ homogeneous polynomials, Waring rank is a parametrizable complexity measure: every degree $d$ polynomial of rank at most $s$ can be obtained via substitution from the expression
    \[
    \varphi(\bar{u},\x) \;=\; \sum_{k = 1}^s \left(\sum_{i = 1}^n u_{ki} x_i\right)^d\;.
    \]        
\end{example}

\medskip
\begin{example}
    An important example of a \emph{non-parameterizable} complexity measure is the top fanin of a constant depth circuit, because the gates in the middle layers can have unbounded fanin and cannot be covered by a finite number of expressions.
    This will not be a significant problem, as we will see below that even for $\Sigma\Pi\Sigma$ circuits the border top fanin of every polynomial is $2$, and the construction does not require unbounded fanin gates; see~\cref{sec:depth3}.
\end{example}

A parameterizable complexity measure $\Gamma$ can also be used to measure complexity of polynomials in~$A[\x]$ where $A$ is an arbitrary algebra over~$\bbC$ in the same way it is used for polynomials over $\bbC$:
in every $\Gamma$-expression $\varphi \in \bbC[\vect u,\x]$ we can substitute a parameter vector $\tilde{u} \in A^p$ for $\vect u$, obtaining a polynomial in $A[\x]$. We define a complexity $\Gamma_A(\tilde{f})$ of a polynomial $\tilde{f} \in A[\x]$, as follows.

\medskip
\begin{definition}
For $f \in A[\x]$, the {\em $\Gamma$-complexity of $f$ over $A$}, denoted $\Gamma_A(\tilde{f})$, is the minimal number $s$ such that $\tilde{f} = \varphi(\tilde{u}; \x)$ for some $\Gamma$-expression $\varphi \in \bbC[\vect u,\x]$ of size $s$ and some $\tilde{u} \in A^p$.    
\end{definition}
In the following theorem, we prove an equivalence. and show that it suffices to focus on the computation of polynomials with coefficients in $\bbC((\eps))$ and $\bbC[\eps^{\pm 1}]$. 

\medskip
\begin{theorem}[Equivalence theorem]\label{thm:equivalent-definitions}
    Let $\Gamma$ be a parameterizable complexity measure
    and let $\underline{\Gamma}$ be the corresponding border complexity measure.
    Denote by $\mathcal{C}(s, n)$ the set of all polynomials in $x_1, \dots, x_n$ with complexity at most $s$.
    Let $f \in \bbC[x_1, \dots, x_n]$ be a polynomial.
    The following statements are equivalent:
    \begin{enumerate}
        \item $\underline{\Gamma}(f) \leq s$;
        \item $f$ is contained in the Zariski closure of $\mathcal{C}(s, n)$;
        \item there exists $\tilde{f} \in \bbC[[\eps]][\x]$ such that $\Gamma_{\bbC((\eps))}(\tilde{f}) \leq s$ and $\mathrm{Coeff}_{\eps^0}(\tilde{f}) = f$;
        \item there exists $\tilde{f} \in \bbC[\eps][\x]$ such that $\Gamma_{\bbC[\eps^{\pm 1}]}(\tilde{f}) \leq s$ and $\tilde{f}(0) = f$;
        \item there exists $\tilde{f} \in \bbC[\eps][\x]$ such that $\Gamma(\tilde{f}(\eps)) \leq s$ for all $\eps \neq 0$ and $\tilde{f}(0) = f$;
        \item there exists a curve $\mathcal{E} \subset \overline{\mathcal{C}(s, n)}$ such that $f \in \mathcal{E}$ and only a finite number of points of $\mathcal{E}$ lie outside of $\mathcal{C}(s, n)$;
    \end{enumerate}
\end{theorem}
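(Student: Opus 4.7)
The plan is to reduce everything to~\cref{thm:image-closure} applied to the individual $\Gamma$-expressions. Since $\Gamma$ is parameterizable, each $\varphi \in \Phi(s,n)$ gives rise to a polynomial map $\hat{\varphi} \colon \bbC^p \to \bbC^N$ from parameters to coefficient vectors of polynomials (restricting to a bounded degree, say the maximum produced by any expression in $\Phi(s,n)$), so that
\[
\mathcal{C}(s,n) \;=\; \bigcup_{\varphi \in \Phi(s,n)} \image(\hat{\varphi})\,.
\]
Each image is constructible by Chevalley's~\cref{thm:chevalley}, and a finite union of constructible sets is constructible; hence by the corollary to~\cref{lem:constructible-contains-open}, the Euclidean and Zariski closures of $\mathcal{C}(s,n)$ agree, giving $(1) \Leftrightarrow (2)$.

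For the cycle $(2) \Rightarrow (4) \Rightarrow (3) \Rightarrow (2)$: if $f \in \overline{\mathcal{C}(s,n)}$, then since closure distributes over finite unions, $f \in \overline{\image(\hat{\varphi})}$ for some $\varphi$. Invoking~\cref{thm:image-closure}$(1)\Rightarrow(4)$ yields $\tilde{v} \in \bbC[\eps^{\pm 1}]^p$ with $\hat{\varphi}(\tilde{v}) = f + \eps y$ for some $y \in \bbC[\eps]^N$; reinterpreted as polynomials, $\tilde{f}(\eps, \x) := \varphi(\tilde{v}, \x)$ lies in $\bbC[\eps][\x]$ with $\tilde{f}(0,\x) = f$, and $\Gamma_{\bbC[\eps^{\pm 1}]}(\tilde{f}) \leq s$ is witnessed by $\tilde{v}$ itself. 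This is $(4)$. The implication $(4) \Rightarrow (3)$ is immediate from the inclusions $\bbC[\eps^{\pm 1}] \subset \bbC((\eps))$ and $\bbC[\eps] \subset \bbC[[\eps]]$. For $(3) \Rightarrow (2)$, write $\tilde{f} = \varphi(\tilde{u}, \x)$ with $\tilde{u} \in \bbC((\eps))^p$; since $\tilde{f}$ has coefficients in $\bbC[[\eps]]$ with constant term $f$, the hypothesis of~\cref{thm:image-closure}(3) is satisfied, and $(3)\Rightarrow(1)$ there places $f$ in $\overline{\image(\hat{\varphi})} \subset \overline{\mathcal{C}(s,n)}$.

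The remaining implications are easier. For $(4) \Rightarrow (5)$, specializing $\eps$ to any nonzero $\eps_0 \in \bbC$ makes $\tilde{v}(\eps_0) \in \bbC^p$ a bona fide parameter vector witnessing $\Gamma(\tilde{f}(\eps_0, \x)) \leq s$. For $(5) \Rightarrow (1)$, the polynomial $f = \tilde{f}(0,\x) = \lim_{\eps \to 0} \tilde{f}(\eps, \x)$ is a Euclidean limit of polynomials in $\mathcal{C}(s,n)$. Finally, for $(2) \Leftrightarrow (6)$: the forward direction applies~\cref{thm:image-closure}$(1)\Rightarrow(2)$ to produce a curve $\mathcal{D} \subset \bbC^p$ with $f \in \overline{\hat{\varphi}(\mathcal{D})}$, and we set $\mathcal{E} := \overline{\hat{\varphi}(\mathcal{D})} \subset \overline{\mathcal{C}(s,n)}$; the constructible set $\hat{\varphi}(\mathcal{D})$, being contained in the irreducible curve $\mathcal{E}$ with $\mathcal{E}$ as its closure, contains a Zariski open dense subset of $\mathcal{E}$ by~\cref{lem:constructible-contains-open}, whose complement is a proper closed subset of $\mathcal{E}$, i.e.~a finite set of points. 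The converse $(6) \Rightarrow (1)$ is again Euclidean approximation of $f \in \mathcal{E}$ by the cofinite subset of $\mathcal{E}$ lying in $\mathcal{C}(s,n)$.

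The main subtlety I expect is the bookkeeping between three realms at once: parameters living in $\bbC^p$, $\bbC((\eps))^p$, or $\bbC[\eps^{\pm 1}]^p$; coefficient vectors in the corresponding $\bbC^N$, $\bbC[[\eps]]^N$, $\bbC[\eps]^N$; and the underlying $\bbC$-expression $\varphi \in \bbC[\vect u, \x]$. The key technical point that must be checked cleanly is that substitution of an $A$-valued parameter tuple into $\varphi$ commutes with reduction modulo $\eps$ (for $A = \bbC[[\eps]]$ or $A = \bbC[\eps]$), so that the ``constant term of $\tilde{f}$'' on the polynomial side matches what is produced on the parameter side. Conceptually this is routine, but it is exactly where the parameterizability hypothesis does the real work of converting the geometric~\cref{thm:image-closure} into a uniform statement about border complexity.
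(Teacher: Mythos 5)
Your proof is correct and follows essentially the same route as the paper: reduce to the finitely many polynomial maps $\hat{\varphi}$, invoke Chevalley for constructibility to get $(1)\Leftrightarrow(2)$, and funnel $(2)$, $(3)$, $(4)$ through \cref{thm:image-closure}, with $(5)$ and $(6)$ handled by direct arguments. The only small wrinkle is in $(2)\Rightarrow(6)$: you set $\mathcal{E} := \overline{\hat{\varphi}(\mathcal{D})}$ for the curve $\mathcal{D}$ provided by \cref{thm:image-closure}(2), but that statement alone does not guarantee $\overline{\hat{\varphi}(\mathcal{D})}$ is one-dimensional rather than a single point (the guarantee lives in the \emph{proof} of \cref{thm:image-closure}, not its statement); the paper sidesteps this by applying \cref{thm:curve-existence} directly to $\overline{\image(\hat{\varphi})}$ with the open set from \cref{lem:constructible-contains-open}, so you may prefer that phrasing to avoid relying on an unstated strengthening.
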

\begin{proof}
    Let $\varphi \in \bbC[\vect u,\x]$ be a $\Gamma$-expression of degree $d$ with respect variables $x_1, \dots, x_n$.
    It gives rise to the following polynomial map:
\[
\hat{\varphi}\; \colon\; \bbC^p \;\to\;\bbC[\x]_{\leq d}\,,~~~~~~~~\bar{u} \in \bbC^p \;\mapsto\; \varphi(\bar{u}, \x)\;.
\]
    Every element of $\mathcal{C}(s, n)$ can be obtained from some $\Gamma$-expression from a finite set $\Phi(s, n)$,
    which means that 
    \[
    \mathcal{C}(s, n) \;=\; \bigcup_{\varphi \in \Phi(s, d)} \image(\hat{\varphi}) \;\subset\; \bbC[\x]_{\leq d}\;,
    \]
    where $d$ is the maximal among the $x$-degrees of $\Gamma$-expressions in $\Phi(s, n)$.
    By Chevalley's theorem (\cref{thm:chevalley}), every image in this union is constructible, so $\mathcal{C}(s, n)$ is also constructible and its Zariski closure coincides with the Euclidean closure, therefore,
    $\underline{\Gamma}(f) \leq s$ if and only if $f$ lies in the Zariski closure of $\mathcal{C}(s, n)$.

    Since $\overline{\mathcal{C}(s, n)} = \bigcup_{\varphi \in \Phi(s, n)} \overline{\image(\hat{\varphi})}$, every $f \in \mathcal{C}(s, n)$, lies in $\overline{\image(\hat{\varphi})}$ for some $\varphi \in \Phi(s, n)$.
    We may also choose $\varphi$ in such a way that $\overline{\image(\hat{\varphi})}$ is maximal, so it is an irreducible component of $\overline{\mathcal{C}(s, n)}$.
    The equivalences (2) $\Leftrightarrow$ (3) $\Leftrightarrow$ (4) now follow from~\cref{thm:image-closure}.

    If $\Gamma_{\bbC[\eps^{\pm 1}]}(\tilde{f}) \leq s$, then there is an expression $\varphi \in \Phi(s, n)$ such that $\tilde{f} = \hat{\varphi}(\tilde{u})$ for some $\tilde{u} \in \bbC[\eps^{\pm 1}]$.
    Substituting any nonzero value of $\eps$, we get that $\tilde{f}(\eps) = \hat{\varphi}(\tilde{u}(\eps))$, so $\Gamma(\tilde{f}_\eps) \leq s$.
    This proves the implication (4) $\Rightarrow$ (5).
    The implication (5) $\Rightarrow$ (1) follows directly from the definition of border complexity.

    To prove (2) $\Rightarrow$ (6), apply~\cref{thm:curve-existence} with $\mathcal{X} = \overline{\image(\hat{\varphi})}$, the point $f \in \mathcal{X}$, and a nonempty open set contained in $\image(\hat{\varphi})$, which exists by~\cref{lem:constructible-contains-open}.
    Since an nonempty open subset of the resulting curve $\mathcal{E}$ is contained in $\image(\hat{\varphi}) \subset \mathcal{C}(s, n)$, the part outside of $\mathcal{C}(s, n)$ lies in a nontrivial closed set, hence finite.

    The implication (6) $\Rightarrow$ (2) is trivial, as we have $f \in \mathcal{E} \subset \overline{\mathcal{C}(s, n)}$.
\end{proof}

\subsection{Debordering via Interpolation}

Let $\Gamma$ be a parameterizable complexity measure
and let $\underline{\Gamma}$ be the corresponding border complexity measure.
The equivalence from~\cref{thm:equivalent-definitions} can be used to represent polynomials $f$ with $\underline{\Gamma}(f) = s$ as linear combinations of polynomials with complexity at most $s$.
This, in turn, can be used to bound the non-border complexity of the original polynomial.
The resulting bound is in general too high, since it involves certain hard-to-control degree parameters.

The construction is usually done algebraically using the polynomial interpolation on expressions with $\eps$, but it also has a clear geometric representation, which we present first.

\medskip
\begin{definition}
    Let $\Gamma$ be a parameterizable complexity measure
    and let $\underline{\Gamma}$ be the corresponding border complexity measure.
    An \emph{approximating curve} for a polynomial $f$ with $\underline{\Gamma}(f) = s$ is a curve satisfying condition (6) of~\cref{thm:equivalent-definitions}.
\end{definition}

\medskip
\begin{lemma}
    Let $\mathcal{E}$ be an affine curve of degree $d$. Then $\dim \span~\mathcal{E} \leq d + 1$.
    Moreover, for every nonempty open $\mathcal{U} \subset \mathcal{E}$ there exist $d + 1$ points in $\mathcal{U}$ spanning $\mathcal{E}$.
\end{lemma}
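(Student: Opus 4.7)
The plan is to reduce to a projective statement. I pass to the closure $\overline{\mathcal{E}} \subset \bbP^N$, which is a projective curve of degree $d$, and let $L$ denote its smallest enclosing projective linear subspace, of projective dimension $k$. The goal is to prove $k \leq d$; once that is done, $\mathcal{E}$ sits inside the affine chart $L \cap U_0$, an affine subspace of $\bbC^N$ of dimension at most $k$, so its linear span has dimension at most $k + 1 \leq d + 1$, which is the desired bound.

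To show $k \leq d$, I greedily build points $p_1, p_2, \ldots$ on $\overline{\mathcal{E}}$ whose projective spans strictly increase. This is possible at each step because if the current span $\langle p_1, \ldots, p_i \rangle \subsetneq L$ already contained $\overline{\mathcal{E}}$, the minimality of $L$ would be contradicted. After $k$ steps the chosen points span a projective hyperplane $H_L$ of $L$. I then extend $H_L$ to a hyperplane $H \subset \bbP^N$ with $H \cap L = H_L$; this is a routine dimension count on the family of hyperplanes of $\bbP^N$ passing through $H_L$ and avoiding the proper subfamily of those containing all of $L$. Because $\overline{\mathcal{E}} \subset L$ but $L \not\subset H$, the curve is not contained in $H$, so the intersection $H \cap \overline{\mathcal{E}}$ is proper and thus contains at most $d$ points by the definition of degree. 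This intersection contains the $k$ distinct points $p_1, \ldots, p_k$, so $k \leq d$.

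For the moreover statement, given a nonempty open $\mathcal{U} \subset \mathcal{E}$, I again proceed greedily, now selecting $q_i \in \mathcal{U}$. At each step, as long as $\span\{q_1, \ldots, q_i\}$ is a proper linear subspace of $\span \mathcal{E}$, the set $\mathcal{U} \cap \span\{q_1, \ldots, q_i\}$ is a proper Zariski-closed subset of $\mathcal{U}$; since $\mathcal{E}$ is irreducible every nonempty open is dense, so the complement in $\mathcal{U}$ is nonempty and furnishes a valid $q_{i+1}$. The process terminates after at most $d + 1$ steps by the first part, and if strictly fewer suffice I pad the family with arbitrary additional elements of $\mathcal{U}$. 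The main technical hurdle is the projective step: producing a hyperplane of $\bbP^N$ through the $k$ chosen points that nevertheless does not contain $\overline{\mathcal{E}}$, so that the degree bound on the intersection can be invoked. Once that is handled, the translation from projective to affine dimensions and the greedy construction inside $\mathcal{U}$ are routine.
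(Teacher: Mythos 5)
Your proposal is correct in substance and arrives at the same core idea as the paper -- bounding $\dim\span\mathcal{E}$ by intersecting $\mathcal{E}$ with a well-chosen hyperplane and invoking the degree bound -- but it is organized quite differently. The paper's argument is a single compact sweep: pick one hyperplane $\mathcal{H}$ inside $\span\mathcal{E}$ that misses the finitely many points of $\mathcal{E}\setminus\mathcal{U}$; by the degree bound $\mathcal{E}\cap\mathcal{H}$ has at most $d$ points, and $\span\mathcal{E}$ is spanned by those points together with one more point of $\mathcal{U}$ off $\mathcal{H}$, so both the dimension bound and the ``moreover'' statement fall out simultaneously. You instead pass to the projective closure, prove the bound $k\le d$ on the dimension of the smallest enclosing projective subspace $L$ by a greedy span-increasing point selection followed by a hyperplane section, then run a second greedy argument inside $\mathcal{U}$ to get the $d+1$ points. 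Your version is more explicit about which hyperplane is chosen and why the intersection is proper, at the cost of being longer and of essentially doing the spanning argument twice. The paper's single-hyperplane argument is shorter but more terse about why the intersection points actually span a full hyperplane's worth of directions.

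There is one genuine logical slip you should fix. You justify $\overline{\mathcal{E}}\not\subset H$ by writing ``Because $\overline{\mathcal{E}}\subset L$ but $L\not\subset H$, the curve is not contained in $H$.'' This implication is not valid: since $\overline{\mathcal{E}}$ is a proper subvariety of $L$, it could in principle lie entirely inside $L\cap H=H_L$ even though $L\not\subset H$. What actually saves you is the minimality of $L$: since $H_L=\span(p_1,\dots,p_k)$ is a proper projective subspace of $L$, minimality forces $\overline{\mathcal{E}}\not\subset H_L$, and because $\overline{\mathcal{E}}\cap H=\overline{\mathcal{E}}\cap L\cap H=\overline{\mathcal{E}}\cap H_L$, it follows that $\overline{\mathcal{E}}\not\subset H$. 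You already have all the ingredients for this; only the stated justification is wrong. The rest of the argument, including the dimension count for extending $H_L$ to a hyperplane $H$ of $\bbP^N$ and the density argument in the ``moreover'' part (which uses irreducibility of $\mathcal{E}$, built into the paper's definition of a curve), is sound.
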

\begin{proof}
    Consider $\mathcal{E}$ as a curve in $\span~\mathcal{E}$.
    Let $\mathcal{H}$ be a hyperplane not containing points of $\mathcal{E}$ outside of the open set $\mathcal{U}$
    By the Bézout's thorem, the intersection of $\mathcal{E}$ with $\mathcal{H}$ contains at most $d$ points.
    Since $\mathcal{H}$ lies in $\span~\mathcal{E}$, it is spanned by the points in the intersection,
    and the whole space $\span~\mathcal{E}$ is spanned by the hyperplane $\mathcal{H}$ and one point of $\mathcal{E}$ outside of $\mathcal{H}$, which can be taken from the open set $\mathcal{U}$.
\end{proof}

\begin{corollary}\label{thm:interpolation-geometric}
    If $\underline{\Gamma}(f) = s$ with an approximating curve $\mathcal{E}$ of degree $\deg \mathcal{E} = e$, then there exist $e + 1$ polynomials $f_1, \dots, f_{e + 1}$ with $\Gamma(f_i) \leq s$ and $e + 1$ coefficients $\alpha_1, \dots, \alpha_{e + 1}$ such that $f = \sum_{k = 1}^{e + 1} \alpha_i f_i$.
\end{corollary}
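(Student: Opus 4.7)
The plan is to derive the corollary as a direct application of the preceding Lemma about linear spans of affine curves. Let $\mathcal{E} \subset \overline{\mathcal{C}(s,n)}$ be the approximating curve for $f$, which by definition satisfies condition (6) of~\cref{thm:equivalent-definitions}: it contains $f$ and only finitely many of its points lie outside $\mathcal{C}(s,n)$. The key observation is that $\mathcal{E}$ sits inside an affine space of polynomials (say $\bbC[\x]_{\leq d}$ for suitable $d$), so it makes sense to talk about $\span~\mathcal{E}$ as a linear subspace, and the Lemma gives us exactly $e+1$ points of $\mathcal{E}$ that span this subspace.

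First I would define $\mathcal{U} := \mathcal{E} \cap \mathcal{C}(s,n)$. Since $\mathcal{E}\setminus \mathcal{U}$ is finite and $\mathcal{E}$, being a curve, is infinite, $\mathcal{U}$ is a nonempty Zariski open subset of $\mathcal{E}$. Applying the Lemma to $\mathcal{E}$ and this $\mathcal{U}$, I would obtain $e+1$ polynomials $f_1, \ldots, f_{e+1} \in \mathcal{U}$ whose linear span equals $\span~\mathcal{E}$. Because each $f_i$ lies in $\mathcal{C}(s,n)$, we have $\Gamma(f_i) \leq s$.

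To finish, I would use that $f \in \mathcal{E} \subset \span~\mathcal{E} = \span\{f_1, \ldots, f_{e+1}\}$, which immediately yields scalars $\alpha_1, \ldots, \alpha_{e+1} \in \bbC$ such that $f = \sum_{i=1}^{e+1}\alpha_i f_i$, as required. There is no substantive obstacle: the Lemma does the real work by combining the affine-span bound $\dim \span~\mathcal{E} \leq e+1$ with the ability to realize a spanning set using points from any prescribed open subset (via a generic hyperplane argument and Bézout). The corollary is essentially the observation that the finitely many ``bad'' points on the approximating curve, where the complexity measure may jump above $s$, can be avoided when choosing the spanning points, so the interpolation formula uses only genuinely low-complexity polynomials.
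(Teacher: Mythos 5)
Your proof is correct and is exactly the intended derivation from the preceding Lemma (the paper states this as a Corollary with no written proof). Taking $\mathcal{U} = \mathcal{E} \cap \mathcal{C}(s,n)$, noting it is a nonempty open subset of the (irreducible, hence infinite) curve because its complement is finite, applying the Lemma to obtain $e+1$ spanning points inside $\mathcal{U}$, and then writing $f \in \span~\mathcal{E}$ as a linear combination of them is precisely the argument the corollary is compressing.
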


If the complexity measure $\Gamma$ is such that the complexity of linear combinations can be obtained from the complexity of the polynomials in the combination (for example, if $\Gamma$ is circuit complexity or a rank-type measure), then this statement can be used to obtain debordering results for $\Gamma$.

\medskip
\begin{definition}
    The \emph{geometric error-degree} of a polynomial $f \in \bbC[\x]$ with $\underline{\Gamma}(f) = s$ is the minimal degree of an approximating curve for $f$.
\end{definition}

From the proof of~\cref{thm:equivalent-definitions} we see that the geometric error-degree of a polynomial $f \in \bbC[\x]$ with border complexity $\underline{\Gamma}(f) = s$ is bounded by the maximal among the degrees of irreducible components of $\overline{\mathcal{C}(s, n)}$.
In general, this is only bound on the error-degree available.

As an example, we prove a debordering result for circuit complexity similar to~\cite{Bur04correction},
which we denote by $L(f)$.
We will need the following degree bound for images of polynomial maps.

\medskip
\begin{lemma}[{\cite[Theorem~8.48]{bcs97}}]\label{lem:degree-upper-bound}
    If $\varphi \colon \bbC^m \to \bbC^n$ is a polynomial map with $\deg \varphi_i \leq d$, then $\deg \overline{\image(\varphi)} \leq d^m$.
\end{lemma}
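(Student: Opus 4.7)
The plan is to reduce the degree bound for the image to a degree bound for the graph of $\varphi$, and then to bound the graph's degree by a Bézout-type count. Consider the graph $\mathcal{G} = \{(t, \varphi(t)) \mid t \in \bbC^m\} \subset \bbC^{m+n}$; it is a closed subvariety cut out by the $n$ equations $x_i - \varphi_i(t) = 0$, and the projection $\pi_1 \colon \mathcal{G} \to \bbC^m$ is an isomorphism with inverse $t \mapsto (t, \varphi(t))$. In particular, $\mathcal{G}$ is irreducible of dimension $m$. Applying the projection inequality $\deg \overline{\pi(\mathcal{X})} \leq \deg \mathcal{X}$ recalled just before the statement (to the projection $\pi_2$ onto the last $n$ coordinates) gives $\deg \overline{\image(\varphi)} = \deg \overline{\pi_2(\mathcal{G})} \leq \deg \mathcal{G}$, so it suffices to show $\deg \mathcal{G} \leq d^m$.

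To bound $\deg \mathcal{G}$, I would intersect it with a generic codimension-$m$ affine linear subspace $\mathcal{L} \subset \bbC^{m+n}$ cut out by $m$ linear equations $L_1(t, x) = \cdots = L_m(t, x) = 0$. Using the parametrization $x = \varphi(t)$ of $\mathcal{G}$, the intersection $\mathcal{G} \cap \mathcal{L}$ is in bijection with the zero set in $\bbC^m$ of the $m$ polynomials $\tilde{L}_i(t) := L_i(t, \varphi(t))$, each of degree at most $d$, since every $L_i$ is affine linear and each $\varphi_j$ has degree at most $d$. By the definition of degree, for generic $\mathcal{L}$ this common zero set is finite, and a Bézout-type count in $\bbC^m$ then bounds $|\mathcal{G} \cap \mathcal{L}|$ by $\prod_i \deg \tilde{L}_i \leq d^m$, which is exactly the claim.

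The delicate step is the affine Bézout inequality itself: a naive affine application can fail when components at infinity absorb intersection mass. I would make it rigorous by homogenizing each $\tilde{L}_i$ to a degree-$d$ form in $m+1$ projective variables and invoking projective Bézout in $\bbP^m$, observing that for a generic choice of $\mathcal{L}$ the resulting projective system is zero-dimensional, no solutions are absorbed at infinity, and its affine part is exactly $\mathcal{G} \cap \mathcal{L}$. This genericity argument is what accounts for the gap between the weaker upper bound $d^n$ that one might naively extract by intersecting the $n$ defining hypersurfaces $x_i - \varphi_i(t) = 0$ of $\mathcal{G}$ directly, and the sharper bound $d^m$ obtained by working through the parametrization; it is the only real subtlety in the argument.
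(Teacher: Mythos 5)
The paper states this lemma as a citation to \cite{bcs97} without giving its own proof, so there is no in-paper argument to compare against. Your reduction to the degree of the graph via the projection-degree lemma is correct and is the right framework, and you rightly identify the affine B\'ezout inequality as the crux. However, the genericity argument you offer for it fails: it is not true that for generic $\mathcal{L}$ the homogenized system in $\bbP^m$ is zero-dimensional, nor that it has no solutions at infinity. Take $m = 3$, $n = 1$, $\varphi(t_1,t_2,t_3) = t_1^2$, so $d = 2$. Then $\tilde L_i(t) = a_{i1}t_1 + a_{i2}t_2 + a_{i3}t_3 + b_i t_1^2 + c_i$ has degree-$2$ part $b_i t_1^2$, and on the hyperplane $t_0 = 0$ of $\bbP^3$ the homogenized equations reduce to $b_i t_1^2 = 0$, cutting out $\{t_1 = 0\} \cong \bbP^1$ --- a positive-dimensional component at infinity for \emph{every} choice of $b_i \neq 0$. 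Structurally, the top-degree parts of the $\tilde L_i$ always lie in the fixed linear span of $\{\varphi_k^{(d)}\}_k$, so no genericity in $\mathcal{L}$ can shrink the common zero locus of that span at infinity.

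What you actually need is the full affine B\'ezout inequality: if $f_1,\dots,f_m \in \bbC[t_1,\dots,t_m]$ have only finitely many common zeros, their number is at most $\prod_i \deg f_i$ regardless of excess components at infinity. This is a true theorem, but it is not a formal corollary of projective B\'ezout plus genericity; it requires refined intersection theory (Fulton's residual intersection formula or Vogel's refined B\'ezout) or an argument via the length of the Artinian quotient $\bbC[t]/(f_1,\dots,f_m)$. Alternatively, for this specific graph one can avoid B\'ezout entirely: the projective closure of $\mathcal{G}$ is a linear projection of the degree-$d$ Veronese $V_d \subset \bbP^N$, since each coordinate of $[t_0:t]\mapsto[t_0^d : t_1 t_0^{d-1} : \dots : t_m t_0^{d-1} : \hat\varphi_1 : \dots : \hat\varphi_n]$ is a linear form in the degree-$d$ monomials; combined with $\deg V_d = d^m$ and the fact that linear projection does not increase degree, this gives $\deg \mathcal{G} \leq d^m$ directly.
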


\medskip
\begin{theorem}\label{thm:deborder-circuits}
    If $\underline{L}(f) \leq s$, then $L(f) \leq (3\cdot 2^{s^2} + 2) s$
\end{theorem}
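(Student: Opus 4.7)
The plan is to combine the geometric interpolation bound from Corollary~\ref{thm:interpolation-geometric} with the degree estimate of Lemma~\ref{lem:degree-upper-bound}, applied to the parameterization of size-$s$ circuits. By Example~\ref{ex:circuit-parameterizable}, circuit complexity is parameterizable: every $f$ with $L(f) \leq s$ arises as $\varphi(\bar{u}, \x)$ for some circuit template $\varphi \in \Phi(s, n)$. Because the size of a circuit counts both nodes and edges, and every field constant sits on an edge or at an input node, the number of scalar parameters $p$ in such a template is at most $s$. A standard unrolling shows that a size-$s$ circuit computes a polynomial of total degree at most $2^s$; in particular, each $\x$-coefficient of $\varphi(\vect{u}, \x)$, regarded as a polynomial in $\vect{u}$, has degree at most $2^s$.

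Next I would apply Lemma~\ref{lem:degree-upper-bound} to the parameterization map $\hat\varphi \colon \bbC^p \to \bbC[\x]_{\leq 2^s}$, $\bar{u} \mapsto \varphi(\bar{u},\x)$. Since each coordinate of $\hat\varphi$ has degree at most $2^s$ and $p \leq s$, we get
\[
\deg \overline{\image(\hat\varphi)} \;\leq\; (2^s)^p \;\leq\; 2^{s^2}.
\]
As $\overline{\mathcal{C}(s, n)}$ is the finite union of these images over $\varphi \in \Phi(s, n)$, every irreducible component has degree at most $e := 2^{s^2}$, so the geometric error-degree of $f$ is at most $e$. Corollary~\ref{thm:interpolation-geometric} therefore yields polynomials $f_1, \dots, f_{e+1}$ with $L(f_i) \leq s$ and scalars $\alpha_1, \dots, \alpha_{e+1}$ such that $f = \sum_{i=1}^{e+1} \alpha_i f_i$.

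Finally, I would assemble a circuit for this linear combination: place the $e+1$ subcircuits in parallel (sharing the $\x$-input vertices) using at most $(e+1)s$ nodes and edges, absorb the scalars $\alpha_i$ into the edge labels feeding the top, and add a binary tree of $e$ addition gates contributing at most $2e+1$ further nodes and edges to the total. The arithmetic identity
\[
(e+1)s + 2e + 1 \;=\; (3e+2)s - (2e+1)(s-1)
\]
together with $s \geq 1$ gives the claimed $L(f) \leq (3e+2)s = (3 \cdot 2^{s^2} + 2)\, s$. The main technical delicacy is the degree accounting: one must carefully verify that a size-$s$ circuit template has at most $s$ scalar parameters \emph{and} that the parameter-degree of each coefficient of $\varphi(\vect u, \x)$ is at most $2^s$, so that Lemma~\ref{lem:degree-upper-bound} delivers $2^{s^2}$ rather than a larger bound like $2^{s^3}$; once these are pinned down, the rest is a routine invocation of Corollary~\ref{thm:interpolation-geometric} followed by standard constant-counting when assembling the debordered circuit.
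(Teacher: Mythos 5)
Your proof is correct and follows essentially the same path as the paper's: parameterize by circuit templates, bound the degree of the image of $\hat\varphi$ by $(2^s)^s = 2^{s^2}$ via Lemma~\ref{lem:degree-upper-bound}, invoke Corollary~\ref{thm:interpolation-geometric} to write $f$ as a linear combination of $e+1$ polynomials of complexity $\leq s$, and assemble a circuit. The only cosmetic deviation is in the final constant-counting — you fold the scalars $\alpha_i$ into edge labels and give an explicit identity to justify $(e+1)s + 2e + 1 \leq (3e+2)s$, whereas the paper uses separate constant and multiplication gates; both arrive at the same bound, and your explicit accounting (including the justification that $p \leq s$ because every constant occupies a node or edge) is, if anything, slightly more careful than the paper's informal gate count.
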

\begin{proof}
    Consider the parameterization by circuit templates from~\cref{ex:circuit-parameterizable}.
    Every polynomial expression $\varphi$ of size $s$ defined by a circuit template computes a polynomial of degree $2^s$ in at most $s$ parameters, so the degree of the corresponding irreducible component is bounded by $e = (2^s)^s$.
    By~\cref{thm:interpolation-geometric}, there exists $(e + 1)$ many polynomials $f_1, \dots f_{e + 1}$ with $L(f_i) \leq s$ such that $f$ is a linear combination of $f_i$.
    This linear combination can be computed from $f_i$ using $(e + 1)$ constant gates, $(e + 1)$ multiplications and $e$ additions, so 
    \[
    L(f) \;\leq\; (3e + 2) s\;\leq\; (3 \cdot 2^{s^2} + 2) \cdot  s\;.
    \]
\end{proof}
The proof of~\cref{thm:interpolation-geometric} does not give explicitly the coefficients of the linear combinations.
They can be made more explicit by using a more algebraic version of the statement.

\medskip
\begin{definition}[Error-degree]
    The \emph{error-degree} of a polynomial $f \in \bbC[\x]$ with $\underline{\Gamma}(f) = s$ is the minimum degree $e$ such that there exists a polynomial $\tilde{f} = f + \eps f_1 + \dots + \eps^e f_e$, where $\Gamma(f(\eps)) \leq s$ for all $\eps \neq 0$.
\end{definition}

\medskip
\begin{theorem}\label{thm:interpolation-algebraic}
    If $\underline{\Gamma}(f) = s$ with error-degree $e$, then there exists $e + 1$ polynomials $f_1, \dots, f_{e + 1}$ with $\Gamma(f_i) \leq s$ and $e + 1$ coefficients $\alpha_1, \dots, \alpha_{e + 1}$ such that $f = \sum_{k = 1}^{e + 1} \alpha_i f_i$.
\end{theorem}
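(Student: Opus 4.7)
The plan is to turn the error-degree hypothesis into a one-variable Lagrange interpolation argument carried out coefficientwise in $\x$. By the definition of error-degree, I am handed a polynomial $\tilde{f}(\eps) = f + \eps f_1 + \dots + \eps^e f_e \in \bbC[\eps][\x]$ with the property that for every nonzero $\eps \in \bbC$ the specialization $\tilde{f}(\eps)$ is an honest polynomial in $\bbC[\x]$ with $\Gamma(\tilde{f}(\eps)) \leq s$. Viewed the other way, $\tilde{f}$ is a polynomial of $\eps$-degree at most $e$ with coefficients in $\bbC[\x]$, whose value at $\eps = 0$ is exactly $f$.

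The next step is to pick any $e + 1$ pairwise distinct nonzero scalars $\eps_0, \dots, \eps_e \in \bbC$ and set $h_i := \tilde{f}(\eps_i)$. By the hypothesis each $h_i$ satisfies $\Gamma(h_i) \leq s$, so the $h_i$ are the candidate polynomials that play the role of $f_1, \dots, f_{e+1}$ in the statement. Now I invoke classical Lagrange interpolation: for any $P(\eps) \in \bbC[\x][\eps]$ of $\eps$-degree at most $e$, the evaluation at $0$ is a fixed $\bbC$-linear combination of the evaluations at $\eps_0, \dots, \eps_e$, namely
\[
P(0) \;=\; \sum_{i=0}^{e} \alpha_i\, P(\eps_i), \qquad \alpha_i \;=\; \prod_{j \neq i} \frac{-\eps_j}{\eps_i - \eps_j}.
\]
Applied to $P = \tilde{f}$ this yields $f = \tilde{f}(0) = \sum_{i=0}^{e} \alpha_i h_i$, which is exactly the required representation.

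There is essentially no hard step: the argument is a direct interpolation identity. The only subtlety worth flagging is that the coefficients $\alpha_i$ depend solely on the chosen scalars $\eps_0, \dots, \eps_e$ and are therefore genuinely elements of $\bbC$, so the identity is a legitimate $\bbC$-linear combination in $\bbC[\x]$ rather than something involving $\x$-dependent weights; this is automatic because Lagrange interpolation is a fixed $\bbC$-linear functional on the subspace of polynomials of degree at most $e$, and the variables $\x$ simply play the role of scalars of extension. Compared to the geometric version in \cref{thm:interpolation-geometric}, the benefit is that the error-degree $e$ may be much smaller than the geometric error-degree, and the interpolation coefficients $\alpha_i$ are now made completely explicit.
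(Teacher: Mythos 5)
Your proof is correct and is essentially the same as the paper's, which also takes $f_i = \tilde{f}(i)$ for distinct nonzero evaluation points and recovers $f$ as the constant term via Lagrange interpolation; you have simply spelled out the interpolation coefficients explicitly, which the paper leaves implicit.
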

\begin{proof}[Proof sketch]
    The statement is true by polynomial interpolation.
    Take $f_i = \tilde{f}(i)$ and $\alpha_i$ to be the interpolation coefficients required to recover the constant term $f$ from $f_i$.
\end{proof}

The bounds for the error-degree are also typically {\em exponential}.
The following bound can be derived from the bound on valuation in~\cref{thm:image-closure}.

\medskip
\begin{theorem}
    Let $\Phi(s, n)$ be the finite set of $\Gamma$-expressions of size $s$ covering all polynomials with complexity at most $s$.
    For an expression $\varphi \in \bbC[\vect u,\x]$ let $e(\varphi) = q^{p + 2}$ where $q$ is the degree of $\varphi$ with respect to the parameter-variables $u_i$.
    Then the error-degree of every polynomial $f$ with $\underline{\Gamma}(f) = s$ is at most $\max \{ e(\varphi) \mid \varphi \in \Phi(s, n)\}$.
\end{theorem}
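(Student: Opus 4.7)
The plan is to combine the algebraic characterization of border complexity from \cref{thm:equivalent-definitions} with the valuation bound of \cref{thm:image-closure}, tracking degrees carefully through the substitution. Since $\underline{\Gamma}(f) \leq s$, there is some $\Gamma$-expression $\varphi \in \Phi(s,n)$ with $p$ parameter-variables and $u$-degree $q$ such that $f$ lies in the Zariski closure of $\image(\hat{\varphi})$, where $\hat{\varphi} \colon \bbC^p \to \bbC[\x]_{\leq d}$ is the associated polynomial map sending $\bar{u} \mapsto \varphi(\bar{u}, \x)$. Our task is to produce $\tilde{v} \in \bbC[\eps^{\pm 1}]^p$ such that $\hat{\varphi}(\tilde{v})$ is a polynomial in $\eps$ of degree at most $q^{p+2}$ whose constant term is $f$; by condition (4) of \cref{thm:equivalent-definitions}, this $\tilde{f} := \hat{\varphi}(\tilde{v})$ witnesses the desired error-degree bound.

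First, I would bound the degree $D$ of the graph $\mathcal{G} \subset \bbC^p \times \bbC[\x]_{\leq d}$ of $\hat{\varphi}$. The graph is the image of the polynomial map $u \mapsto (u, \hat{\varphi}(u))$, whose coordinate polynomials have degree at most $q$ (the linear coordinates contribute degree $1 \leq q$, and the $\hat{\varphi}$-coordinates contribute degree $q$). Applying \cref{lem:degree-upper-bound} directly yields $D \leq q^p$. Now invoking \cref{thm:image-closure}\,(4) with the point $f \in \overline{\image(\hat{\varphi})}$, we obtain $\tilde{v} \in \bbC[\eps^{\pm 1}]^p$ with $\hat{\varphi}(\tilde{v}) = f + \eps y$ for some $y \in \bbC[\eps]^N$, and crucially with $\val_{\eps}(\tilde{v}) \geq -D \geq -q^p$.

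The remaining step, which is the main technical part, is to bound the top $\eps$-degree of $\hat{\varphi}(\tilde{v})$. I would follow the truncation argument used in the proof of (3) $\Rightarrow$ (4) of \cref{thm:image-closure}: we may replace $\tilde{v}_i$ by its truncation at $\eps$-degree $Dq$, since higher-order terms only contribute to strictly positive powers of $\eps$ in any monomial of $\vect u$-degree at most $q$, and those already lie in the $+\eps y$ part. After this truncation, each $\tilde{v}_i$ lives in $\eps$-degree range $[-D, Dq]$, so any monomial of $\vect u$-degree $k \leq q$ in $\tilde{v}$ has $\eps$-degree at most $k \cdot Dq \leq Dq^2$. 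Hence $\hat{\varphi}(\tilde{v})$ is a Laurent polynomial whose top $\eps$-degree is at most $Dq^2 \leq q^p \cdot q^2 = q^{p+2} = e(\varphi)$; combined with the fact that its negative $\eps$-powers cancel out to give $f + \eps y$, it is a polynomial in $\eps$ of degree at most $e(\varphi)$. Taking the maximum over all expressions $\varphi \in \Phi(s,n)$ completes the argument. The main obstacle is the careful bookkeeping of valuations and truncation degrees, but the bound $q^{p+2}$ is dictated precisely by combining the graph-degree bound $q^p$ with the extra factor $q^2$ coming from plugging a Laurent polynomial of degree $Dq$ into a polynomial map of degree $q$.
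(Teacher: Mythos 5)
Your proof is correct and follows the same route as the paper: bound $\deg \mathcal{G} \leq q^p$ via \cref{lem:degree-upper-bound}, take the Laurent-polynomial parameter vector from \cref{thm:image-closure}(4) with $\val_\eps \geq -D$ and $\deg_\eps \leq Dq$ after truncation, and conclude $\deg_\eps \hat{\varphi}(\tilde{v}) \leq Dq^2 \leq q^{p+2}$. You have merely unpacked the truncation bookkeeping that the paper delegates to the proof of \cref{thm:image-closure}.
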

\begin{proof}
    Let $\varphi \in \Phi(s, n)$ and $\tilde{u} \in \bbC[\eps^{\pm 1}]$ be such that $\hat{\varphi}(\tilde{u}) = \tilde{f}$ with $\tilde{f}(0) = f$, where $\hat{\varphi}$ is defined as in the proof~\cref{thm:equivalent-definitions}.
    From the proof of~\cref{thm:equivalent-definitions} it follows that $\val_\eps(\tilde{u}) \geq -D$ and $\deg_\eps \tilde{u} \leq Dq$ where $q$ is the degree of $\varphi$ with respect to $u$ and $D$ is the degree of the graph of $\hat{\varphi}$.
    Applying the map $\hat{\varphi}$, we see that $\deg_\eps \tilde{f} \leq Dq^2$.
    From~\cref{lem:degree-upper-bound} we have $D \leq q^{s}$ where $s$ is the number of the parameter-variables $u_i$.
    The result follows.
\end{proof}

As we see, the bounds we get from the general construction are usually exponential.
On the other hand, case-by-case analysis of small cases show that at least for small values of reasonable complexity measures the error-degrees can be made small.
The bounds can likely be improved in general, but not with readily available methods.

\medskip
\begin{question}
    Prove better bounds for the error-degrees of specific polynomials with respect to circuit complexity or other common complexity measures.
\end{question}

The interpolation technique from~\cref{thm:interpolation-geometric} and~\cref{thm:interpolation-algebraic} can be used in cases when the expressions in $\eps$ involved are restricted. 
For example, Grochow, Mulmuley, and Qiao~\cite{DBLP:conf/icalp/GrochowMQ16} introduce the notion of $p$-definable degenerations, which are expressions of the form $g(A(\eps) x + a_0(\eps))$ where the entries of the matrix $A(\eps)$ and $a_0(\eps)$ have coefficients computable in terms of the binary representations of matrix indices and the power of $\eps$, so that the interpolation performed in~\cref{thm:interpolation-algebraic} can be implemented by a combinatorial hypercube sum as in the definition of $\VNP$.
They use this to deborder a subset of $\overline{\VNP}$ defined in terms of these $p$-definable degenerations. 


\subsection{A Presentable Version of Border
Complexity}
Recall that a class $\mathcal{C}$ is said to be border-closed if $\overline{\mathcal{C}} = \mathcal{C}$. While one might intuitively expect a class and its closure to be closely related, this is far from evident. The standard definition of approximation allows the use of arbitrary polynomials in $\epsilon$, potentially of unbounded complexity, as coefficients—making the notion inherently non-constructive and existential in nature. As a result, even fundamental questions remain open; for instance, it is not known whether $\overline{\VP} \subseteq \VNP$.

To address this, Bhargav, Dwivedi, and Saxena~\cite{bhargav2024learning} recently introduced a natural constructive refinement of approximation, termed {\em presentability}. The corresponding presentable class, denoted $\overline{\VP}_{\epsilon}$, captures the essence of approximation while enforcing additional structure that makes the definition more explicit and algorithmically meaningful. It is defined as follows:

\medskip
\begin{definition}[Presentable $\approxbar{\VP}$,{\cite[Definition~4.10]{bhargav2024learning}}]
$(f_n) \in \approxbar{\VP}_{\eps}$ over $\F$, if there
is an approximating polynomial $g_n \in \F[\eps][\x]$ satisfying
\[
g_n(\x,\eps) \;=\; \eps^M \cdot f_n(\x) + \eps^{M+1} \cdot S_n(\x,\eps)
\]
for some $S_n \in \F[\eps][\x]$, and $M \in \N$. Moreover, $\size_{\F}(g)$ and $\deg_{\x}(g)$ are bounded by $\poly(n)$.
\end{definition}
Note that, an additional condition in the definition of $\approxbar{\VP}_{\eps}$ is that all the polynomials in $\eps$ used as `constants'
in the approximating circuit $g_n(\x,\eps)$, have polynomial-size circuits themselves.

An earlier approach to making border classes more explicit was proposed through the notion of `degenerations'~\cite{DBLP:conf/icalp/GrochowMQ16}, specifically via what the authors call {\em $p$-definable one-parameter degenerations}. In this framework, the coefficients of the $\epsilon$-polynomials used in the approximation are required to be generated by arithmetic circuits in $\VP$. While this offers a more structured view of $\overline{\VP}$, the resulting subclass is still quite restricted.

In contrast, the presentable border class, introduced in~\cite{bhargav2024learning}, offers a more natural and general refinement of the standard notion of approximation. It allows approximation via structured and efficiently describable families of polynomials, yet does not constrain them to arise solely through p-definable degenerations. As a result, the presentable class $\overline{\VP}_{\epsilon}$ is incomparable with the class obtained via p-definable degenerations, and cannot be realized as a degeneration of $\VP$ in the sense of~\cite{DBLP:conf/icalp/GrochowMQ16}.

This framework naturally extends beyond $\VP$ and can be used to define the presentable border of $\VNP$ as well. In particular, one can define $\overline{\VNP}_{\epsilon}$ over any field $\F$ in a similar fashion.

\medskip
\begin{definition}[Presentable $\approxbar{\VNP}$,{\cite[Definition~1.2]{bhargav2024learning}}]
$(f_n) \in \approxbar{\VNP}_{\eps}$ over $\F$, if there
is an approximating polynomial $g_n \in \F[\eps][\x]$ satisfying
\[
g_n(\x,\eps) \;=\; \eps^M \cdot f_n(\x) + \eps^{M+1} \cdot S_n(\x,\eps)
\]
for some error polynomial~$S_n \in \F[\eps][\x]$, and order $M \in \N$. Moreover, there exists a verifier polynomial $h \in \F[x_1,\cdots,x_n,y_1,\cdots,y_m,\eps]$, with $m, \deg_{\x,\y}(h)$ and $\size_{\F}(h)$ all bounded by $\poly(n)$, satisfying a
hypercube-sum expression:
\[
\sum_{\a \in \{0,1\}^m} h(\x,\a,\eps) \;=\;g(\x,\eps)\;.
\]
\end{definition}
In~\cite{bhargav2024learning}, the authors showed an efficient debordering of the presentable $\approxbar{\VNP}$.

\medskip
\begin{theorem}[Presentable is Explicit,~{\cite[Theorem 1]{bhargav2024learning}}] \label{thm:vnpeps}
Over a finite field, $\approxbar{\VNP}_{\eps} = \VNP$.
\end{theorem}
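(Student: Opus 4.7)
The easy containment $\VNP \subseteq \approxbar{\VNP}_\eps$ is immediate: any $(f_n) \in \VNP$ is presentable via $g_n := f_n$, $M := 0$, $S_n := 0$, with the $\VNP$-verifier of $f_n$ serving directly as the verifier $h$.

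For the nontrivial direction over a finite field $\F_q$ (with $q = p^k$, $p$ prime), suppose $(f_n) \in \approxbar{\VNP}_\eps$, witnessed by a polynomial-size verifier $h \in \F_q[\x, \y, \eps]$ and an order $M \in \N$ with
\[
g_n(\x, \eps) \;:=\; \sum_{\a \in \{0,1\}^m} h(\x, \a, \eps) \;=\; \eps^M f_n(\x) \,+\, \eps^{M+1} S_n(\x, \eps)\,.
\]
The plan is to exhibit $f_n$ itself as a hypercube sum of a $\VNP$-polynomial; since two nested hypercube sums collapse into one via concatenation of the Boolean indices, this places $f_n$ back in $\VNP$.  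Extracting the coefficient of $\eps^M$ on both sides and using linearity of the Boolean sum in $\a$,
\[
f_n(\x) \;=\; \cf_{\eps^M}(g_n) \;=\; \sum_{\a \in \{0,1\}^m} H(\x, \a)\,, \quad\text{where}\; H(\x,\y) \,:=\, \cf_{\eps^M}\bigl(h(\x,\y,\eps)\bigr)\,,
\]
with $h$ viewed as a polynomial in $\eps$ over $\F_q[\x, \y]$. So it suffices to show that $H(\x, \y) \in \VNP$ over $\F_q$, and then merge the outer hypercube sum with the one witnessing $H$.

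To place $H$ in $\VNP$, I would invoke Valiant's criterion (\cref{valiant-criterion}) in positive characteristic. Each coefficient of $H$ equals some $\cf_{\x^\alpha \y^\beta \eps^M}(h) \in \F_q$. Since $h$ has a polynomial-size circuit over $\F_q$, such individual coefficients are computable in the finite-characteristic analogue of $\ppoly$, namely $\mathsf{Mod}_p\,\mathsf{P}/\mathrm{poly}$, by weighted parse-tree counting of the circuit modulo $p$.  The exponent $M$, though possibly as large as $2^{\mathrm{poly}(n)}$ in value, has polynomial-length binary representation and can be absorbed into the nonuniform advice.  The $\F_q$-version of Valiant's criterion then yields $H \in \VNP$ over $\F_q$, and the outer hypercube sum over $\a$ produces $f_n \in \VNP$, as desired.

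The principal obstacle is to make the last step rigorous: justifying the finite-field Valiant's criterion, i.e., that coefficient extraction of a $\VP$-polynomial over $\F_q$ produces a $\VNP$-polynomial.  Naive Lagrange interpolation in $\eps$ is infeasible because the $\eps$-degree of $h$ can be $2^{\mathrm{poly}(n)}$ while $|\F_q|$ may be far smaller; the argument instead has to go through the combinatorial characterization of circuit coefficients via parse trees and modular counting.  Crucially, this uses that $\F_q$ is finite and its elements have constant bit length, which is precisely why the theorem is stated over finite fields: in characteristic zero the coefficients can have exponential bit complexity, breaking the $\sharpp/\mathrm{poly}$ reduction.
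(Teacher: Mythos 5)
Your setup is correct and matches the paper's: the easy inclusion is as you say, and the reduction to showing $H(\x,\y) := \cf_{\eps^M}(h) \in \VNP$ (followed by collapsing the two nested hypercube sums) is exactly the right first move, since $\cf_{\eps^M}(\eps^{M+1}S_n) = 0$ for $S_n \in \F[\eps][\x]$.

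The gap is in the step that places $H$ in $\VNP$. You note that naive Lagrange interpolation in $\eps$ fails because $\deg_\eps(h)$ can be $2^{\poly(n)}$, and then claim the remedy is ``the combinatorial characterization of circuit coefficients via parse trees and modular counting.'' But that remedy fails for the same reason. The coefficient $\cf_{\x^\alpha\y^\beta\eps^M}(h)$ is a weighted count of parse trees (monomial certificates) of the circuit for $h$ producing the monomial $\x^\alpha\y^\beta\eps^M$, and each such parse tree has a leaf for every factor of the monomial — so roughly $M + |\alpha| + |\beta|$ leaves. When $M = 2^{\poly(n)}$, these certificates are exponentially large, and there is no polynomial-length witness for a $\#\P$- or $\mathrm{Mod}_p\P$-machine to guess. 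The standard $\VP \subseteq \VNP$ / coefficient-extraction argument crucially needs the \emph{total} degree to be polynomially bounded, and the presentable definition only bounds $\deg_{\x,\y}(h)$, not $\deg_\eps(h)$. Being over a finite field helps with bit complexity of the field constants, but it does not make the parse trees short.

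What the paper actually does is keep interpolation, but make the exponential interpolation sum succinct. Working in an extension of $\F_q$ containing a root of unity $\omega$ of order $k > 2^s$, the coefficient is $C_e = \frac{1}{k}\sum_{i=0}^{k-1} \omega^{-ei} G(\omega^i)$. The key observation is that the exponential index $i$ can be re-expressed as a hypercube variable: with $r = \lceil \log k\rceil$ and $\hat{H}(z,z_1,\dots,z_r) = \prod_{j=1}^r (z_j z^{2^{j-1}} + 1 - z_j)$, one has $\hat{H}(\omega,\vect i) = \omega^i$, so the inner sum becomes a hypercube sum over $\vect i \in \{0,1\}^r$ of evaluations of a circuit of size $O(s)$. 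This ``binary unfolding'' of interpolation — combined with an induction over the variables — is the missing ingredient; it converts the exponential-degree coefficient extraction into a legitimate hypercube sum, after which Valiant's criterion (\cref{valiant-criterion}) applies.
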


\begin{proof}[Proof sketch of~{\cref{thm:vnpeps}}]
Although interpolation may initially seem unhelpful in this context, the core of the proof in fact hinges on a clever use of interpolation. By definition, we have the inclusion $\VNP \subseteq \overline{\VNP}_{\epsilon}$. To establish the reverse inclusion, we appeal to~\cref{valiant-criterion}, which asserts that any polynomial of low degree whose coefficients are efficiently computable in the Boolean setting also lies in $\VNP$ in the algebraic setting. We carry out the argument over a finite field $\F_q$, where $q = p^t$ for some prime $p$.

Let $f = \sum_{\e} c_{\e}\x^{\e} \in \approxbar{\VNP}_{\eps}$. We would like to show that $\phi: \e  \mapsto c_{\e}$, is computable in $\#\P/\poly$. By definition, there exist $g(\x,\eps), h(\x,\y,\eps)$ and $S(\x,\eps)$ such that 
\[
\sum_{\a \in \{0,1\}^m} h(\x,\a,\eps) \;=\; g(\x,\eps) \;=\;\eps^M \cdot f + \eps^{M+1} \cdot S\;.\]
To extract the coefficient of $\epsilon^M \x^{\e}$ from a polynomial $g$, one can strategically choose interpolation points to be roots of unity of large (specifically, exponential) multiplicative order. This careful choice allows the recovery of the desired coefficient $c_{\e}$ as a hypercube sum over an algebraic circuit of polynomial-size, albeit with exponential degree. While the full proof proceeds via a delicate inductive argument, we outline the core idea in the base case, which considers univariate polynomials of exponential degree. 

\paragraph{Unfolding univariate interpolation.}~Let $G = \sum_{e} C_e y^e$, of degree $D = 2^s$, such that $G = \sum_{\a} H(y,\a)$, such that $\size(H) \le s$. We would like to express $c_{e}$ as an exponential sum of structured small circuits. Note that, by simple Vandermonde inverse, we have 
\[C_{e}\;=\;\sum_{i=0}^{k-1}\,\frac{\omega^{-ei}}{k} \cdot G(\omega^i)\;=\;\sum_{\a}\sum_{i=0}^{k-1}\,\frac{\omega^{-ei}}{k} \cdot H(\omega^i,\a)\;,
\]
where $2^s = D < k < \Theta(k)$, and $\omega$ is a root of unity of order $k$. A direct circuit computing the inner sum in the expression above would, in general, have exponential size in the parameter $s$. However, an elegant workaround is to express this sum as a hypercube sum, by cleverly encoding the powers of $\omega$ using the binary representation of the exponent. Let $r=\lceil \log k \rceil$, and define the polynomial
\[
\hat{H}(z,z_1, \cdots, z_r) \;:=\; \prod_{i=1}^r (z_i \cdot z_i \cdot z^{2^{i-1}} + 1 -z_i)\;.
\]
Now, for any integer $i \in {0, \ldots, k-1}$, let $\vect i = (i_1, \ldots, i_r)$ denote its binary representation. It is easy to verify that $\hat{H}(\omega,\vect i) = \omega^i$. Therefore, one can rewrite the coefficient $C_e$ as follows:
\[
C_e \;=\; \sum_{\a}\sum_{\vect i} \frac{1}{k} \cdot \hat{H}(\hat{H}(\omega^{-1},e),\vect i) \cdot H(\hat{H}(\omega,\vect i), \a)\;.
\]
The inner sum described above can be interpreted as the evaluation of a circuit $t_{\e}$ at the input $(\a, \vect i)$, where $\a$ encodes auxiliary parameters and $\vect i$ ranges over the binary hypercube. Crucially, this circuit has size $\size(t_{\e}) = O(s)$, where $s$ is the size of the original representation. This implies that the entire sum can be expressed as a hypercube sum over evaluations of a small circuit.

The construction naturally extends inductively to the multivariate setting. Furthermore, it is well known that a hypercube sum over a polynomial-size algebraic circuit corresponds to a function in $\#\P/\poly$. Combining these observations with Valiant's criterion~\cref{valiant-criterion}, we conclude that the coefficient $c_{\e}$ is computable in $\#\P/\poly$, and hence $f \in \VNP$.
\end{proof}

\section{Debordering Results}
\label{sec:debordering}
This section describes debordering results for various restricted models of computation. Due to various structural results in algebraic circuits \cite{Agrawal08,Gupta16}, it is known that a strong debordering result of restricted classes like depth-3 and depth-4 will lead to significant progress
in understanding the difference between Valiant’s determinant vs permanent conjecture~\cite{Valiant79}, and Mulmuley and Sohoni’s variation which uses border determinantal
complexity~\cite{MS01}. Thus, restricted classes not only provide various challenges to generate new techniques but they can also be seen as stepping stones toward the general problem. 

Due to discussions in \cref{sec:equivalent}, we will work over $\C[\eps^{\pm 1}]$ from now on.
\subsection{Debordering ROABPs}
It turns out that for a single ROABP,
border does not add any power, i.e. $\approxbar{\ROABP}(w,n,d) = \ROABP(w,n,d)$, where $\ROABP(w,n,d)$ denotes the class of $n$-variate polynomials computed by width-$w$ ROABPs of individual degree at most $d$; we omit $w,n,d$ for the simplicity of notation.

\medskip
\begin{lemma}[\cite{Forbes16}]\label{lem:aro}
A polynomial $f \in \F[\x]$ in the border class of width $w$ ROABPs can also be computed by an ROABP of width at most $w$. The same holds for $\ARO$s.
\end{lemma}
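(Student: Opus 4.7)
The plan is to apply Nisan's rank characterization (\cref{lem:ROABPdim}) together with upper semi-continuity of matrix rank under specialization $\eps \to 0$.

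First, I would invoke the algebraic characterization of border complexity (\cref{thm:equivalent-definitions}, condition (4)) to obtain an approximating polynomial $\tilde f \in \C[\eps][\x]$ with $\tilde f|_{\eps=0} = f$, computed by a width-$w$ ROABP over $\C[\eps^{\pm 1}]$ in some variable order $(x_{\pi(1)}, \ldots, x_{\pi(n)})$. Since $\C[\eps^{\pm 1}] \subset \C(\eps)$, the same device is a width-$w$ ROABP over the field $\C(\eps)$, so \cref{lem:ROABPdim} applied over $\C(\eps)$ gives that for every prefix $\y_k = (x_{\pi(1)}, \ldots, x_{\pi(k)})$ one has $\dim_{\C(\eps)} \{\coeffset{\tilde f}{\y_k}{\a} \mid \a \in \{0,\ldots,d\}^k\} \le w$. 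Equivalently, the coefficient (partial-derivative) matrix $M_k(\eps)$ with rows indexed by prefix-monomials and columns indexed by suffix-monomials has $\rank_{\C(\eps)} M_k \le w$. Crucially, because $\tilde f \in \C[\eps][\x]$ has no negative powers of $\eps$, every entry of $M_k(\eps)$ is a polynomial in $\eps$, so the specialization $M_k(0)$ is well-defined and equals the coefficient matrix of $f = \tilde f|_{\eps=0}$ for the same prefix.

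The key step is that specialization cannot increase rank: every $(w+1)\times(w+1)$ minor of $M_k(\eps)$ is a polynomial in $\eps$ which is identically zero over $\C(\eps)$, hence vanishes also at $\eps=0$, so $\rank_\C M_k(0) \le w$. This means $\dim_\C \{\coeffset{f}{\y_k}{\a} \mid \a \in \{0,\ldots,d\}^k\} \le w$ for every prefix $\y_k$, and the converse direction of \cref{lem:ROABPdim} (over $\C$) yields a width-$w$ ROABP for $f$ itself in the order $(x_{\pi(1)},\ldots,x_{\pi(n)})$. The $\ARO$ statement follows by running the same argument for every permutation: if $\tilde f$ has width $\le w$ in every variable order, then the rank bound passes to $f$ order by order.

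I do not expect a genuine obstacle here; the only subtlety is to cleanly separate the Laurent ring $\C[\eps^{\pm 1}]$, in which the approximating ROABP is allowed to compute, from the polynomial ring $\C[\eps]$, in which the coefficients of $\tilde f$ actually live, so that the substitution $\eps=0$ makes sense on the coefficient matrix. Once this distinction is drawn, the argument reduces to a single application of upper semi-continuity of matrix rank, and no width is lost in the transition from border to exact.
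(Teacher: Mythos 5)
Your proposal is correct and follows essentially the same route as the paper: apply Nisan's rank characterization to the approximating ROABP, observe that all $(w+1)\times(w+1)$ minors of the coefficient matrices are polynomials in $\eps$ that vanish identically and hence at $\eps=0$, and conclude via the converse direction of Nisan's characterization. Your version is slightly more pedantic in invoking rank over the field $\C(\eps)$ rather than over $\C[\eps^{\pm1}]$, and in citing the equivalence theorem to justify the form of the approximation, but the argument is the same.
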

\begin{proof}
Let $g = f + \eps \cdot S$, where $g$ can be computed by an ROABP of width $w$ over $\F[\eps^{\pm 1}]$. We need to show that $f$ can also be computed by an ROABP of width $\le w$, over $\F$. Let the unknown variable order of $g$ be $(y_1, \cdots, y_n)$. By applying Nisan’s characterization~\cref{lem:ROABPdim} on the polynomial $g$, we know that for all $k \in [n]$, the partial derivative matrix for each layer $M_k$ has rank at most $w$ over $\F[\eps^{\pm 1}]$. This means that
the determinant of any $(w+1) \times (w+1)$ minor of $M_k$ is identically zero. Observe that the entries of $M_k$ are coefficients of monomials of $g$ which are in $\F[\eps][\x]$. Thus, the determinant polynomial will remain zero even under the limit $\eps \to 0$. Hence, for $f \simeq g$, each matrix $M_k$  also has rank at most $w$ over $\F$. Therefore, by~\cref{lem:ROABPdim}, $f$ also has an ROABP of width at most $w$. Since the variable order did not change in the proof, it also holds for $\ARO$s.
\end{proof}
Although a single ROABP is closed under the border, it is unclear if the class consisting of sum of a constant number of ROABPs is equal to its border class.

\medskip
\begin{question}
Characterize the border of the sum of two ROABPs (possibly of different variable order) of width at most $w$.
\end{question}
\subsection{Debordering Depth-2 Circuits} \label{sec:depth-2}

A depth-$2$ circuit with the top gate being `$+$' gate denoted~$\Sigma\Pi$, often referred as {\em sparse polynomials}, computes a polynomial of the form
\begin{equation}
f(\x)\;=\;\sum_{i=1}^s\,c_{\e_i}\x^{\e_i},~\text{where}~c_{\e_i} \in \C\;.
\end{equation}
We use $d$ to denote the total degree bound for $f(\x)$. We use~$\Sigma\Pi(s,n,d)$ to denote the set of all such depth-$2$ circuits. 

On the other hand a depth-$2$ circuit with the top gate being `$\times$' gate denoted~$\Pi\Sigma$, often referred as {\em product of linear polynomials}, computes a polynomial of the form
\begin{equation}
f(\x)\;=\;\prod_{i=1}^d\,\ell_{i},~\text{where}~\ell_{i}~\text{are linear polynomials}\;.
\end{equation}
It is clear that $\deg(f) =d$. We use~$\Pi\Sigma(n,d)$ to denote the set of all such depth-$2$ circuits computing $n$-variate polynomials of total degree at most $d$. When the parameters $n$ and $d$ are clear, we will omit them and write $\Sigma\Pi(s)$, and $\Pi\Sigma$.

\paragraph{Debordering $\overline{\Sigma\Pi(s)}$.}~We will argue that $\overline{\Sigma\Pi(s)} = \Sigma\Pi(s)$. Obviously, $\Sigma\Pi(s) \subseteq \overline{\Sigma\Pi(s)}$. To see the other direction, let $f \in \overline{\Sigma\Pi(s)}$. By definition, there exists $g \in \Sigma\Pi(s)$, over $\C[\epsilon^{\pm 1}]$, such that $g = f + \epsilon \cdot S$, where $S \in \C[\epsilon][\x]$, i.e.~$f = \cf_{\epsilon^0}(g)$. Let $g(\x,\epsilon)= \sum_{i=1}^s\,c_{\e_i}\x^{\e_i}$, where $c_{\e_i} \in \C[\epsilon^{\pm 1}]$. Therefore, comparing $\epsilon$-degree, it is easy to see that each $c_{\e_i} \in \C[\epsilon]$, and hence,
\[
f\;=\;\cf_{\epsilon^0}(g)\;=\;\sum_{i=1}^s\,\cf_{\epsilon^0}(c_{\e_i}) \cdot \x^{\e_i}\;.
\]
In particular, this means that~$\sp(f) \le s$ implying $f \in \Sigma\Pi(s)$.

\paragraph{Debordering $\overline{\Pi\Sigma}$.}~We will argue that $\overline{\Pi\Sigma} = \Pi\Sigma$. Obviously, $\Pi\Sigma \subseteq \overline{\Pi\Sigma}$. To prove the other direction, let $f \in \overline{\Pi\Sigma}$.  By definition, there exists $g \in \Pi\Sigma$, such that $g = f + \epsilon \cdot S$, where $S \in \C[\epsilon][\x]$, i.e.~$f = \cf_{\epsilon^0}(g)$, and further assume that $f$ is a nonzero polynomial. Let $g = \prod_{i=1}^d\,\ell_i$, where $\ell_i \in \C[\epsilon^{\pm 1}][\x]$ are linear polynomials. Assume that $\val_{\epsilon}(\ell_i) = a_i$. Therefore, we can write $\ell_i = \epsilon^{a_i} \cdot \left(\sum_{j=0}^M \ell_{i,j} \epsilon^j\right)$, for some positive integer $M$, where $\ell_{i,j} \in \C[\x]$ are linear polynomials (not necessarily nonzero), and further by assumption, $\ell_{i,0}$ is nonzero. By definition of $g$, and the assumption on the nonzeroness of $f$, we have $0 =\val_{\epsilon}(g) = a_1 + \cdots + a_n$. Hence,
\[
f \;\simeq\;g \;\simeq\; \prod_{i=1}^d\,\left(\sum_{j=0}^M \ell_{i,j} \epsilon^j\right) \;\simeq\; \prod_{i=1}^d \ell_{i,0} \;\in\; \Pi\Sigma\;.
\]
In fact, a similar proof as above shows a much more general theorem for a class~$\calC$: 
\[\overline{\Pi\calC} \;\subseteq\; \Pi \overline{\calC}\;.\] 
Further, if $\calC$ is {\em closed} under approximation, i.e.~$\overline{\calC} = \calC$, then $\overline{\Pi\calC} = \Pi \calC$, since the following chain of containments holds: 
\[
\overline{\Pi\calC} \;\subseteq\; \Pi \overline{\calC}\;\subseteq \; \Pi \calC\;\subseteq\;\overline{\Pi \calC}\;.\]

\subsection{Debordering Border Waring Rank} \label{sec:waring}

Given a homogeneous polynomial $f$ of degree $d$ over $\C$, its \emph{Waring rank} $\WR(f)$ is defined as the smallest number $k$ such that the following holds:
\[f \;=\; \sum_{i = 1}^k (a_{i1} x_1 + \cdots + a_{in}x_n)^{d},\]
where $a_{ij} \in \C$. Saxena~\cite{Saxena08} introduced depth-3 diagonal circuits. They are denoted by $\Sigma\wedge\Sigma$, and they compute polynomials of the form 
\begin{equation} 
\label{eq:depth-3-diag}
f(\x) \;=\; \sum_{i=1}^k (a_{i0} + a_{i1} x_1 + \cdots + a_{in}x_n)^{d_i}\;,
\end{equation}
where $a_{ij} \in \C$. Let $d = \max d_i$. Then, the Waring rank $\WR(f)$ is the minimal top fanin of a homogeneous $\Sigma \wedge \Sigma$ circuit computing $f$.

Let $\VW(k,n,d)$ denote the set of homogeneous $n$-variate polynomials of degree $d$, with Waring rank at most $k$. Similarly, one can define $\SES(k,n,d)$. We will omit $k,n,d$, when they are polynomially related, and simply write $\VW$. 

In the case of quadratic forms (polynomials of degree $2$), Waring rank is equivalent to the rank of the symmetric matrix associated with a quadratic form;
hence Waring rank can be regarded as a generalization of the rank of a symmetric matrix.
Unlike the case of matrices, when $d \geq 3$, Waring rank is in general not lower semicontinuous \footnote{A function $f$ is lower semicontinuous at $a$ if $\underset{x \rightarrow a}{\lim \inf}~f(x) \ge f(a)$.}, that is, a limit of a family of polynomials with low Waring rank can have higher Waring rank.
The simplest example is given by the polynomial $x^{d - 1} y$, which has Waring rank $d$ (this is a classical result~\cite{Oldenburger}), but can be presented as a limit
\[
    x^{d - 1} y \;=\; \lim_{\epsilon\to 0}\; \frac{1}{d \eps} \left[(x + \epsilon y)^d - x^d\right]
\]
of a family of Waring rank $2$ polynomials (note that we work over $\bbC$, so this expression can be rearranged into a sum of two powers by moving constants inside the parentheses).
The \emph{border Waring rank} is a semicontinuous variation of Waring rank defined as follows:
the border Waring rank of~$f$, denoted $\bwr(f)$, is the smallest $r$ such that $f$ can be written as a limit of a sequence of polynomials of Waring rank at most $r$.
We have $\bwr(x^{d - 1} y) = 2$ and $\WR(x^{d - 1} y) = d$.
There exist examples of polynomials of degree $d$ with $\WR(f)/\bwr(f) = d - o(1)$ (Zuiddam~\cite{ZUIDDAM201733} gives such examples for tensor rank, but a similar example also works for Waring rank).

One can ask how powerful is the border of $\VW$? It turns out that the border of $\VW$ is not too powerful:

\medskip
\begin{lemma}[{\cite{Forbes16,blaser2020complexity}}] \label{bordervw-in-vbp}
$\approxbar{\VW} \subsetneq \VBP$.  
\end{lemma}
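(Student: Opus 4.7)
The plan is to prove the two parts separately: first establish the inclusion $\overline{\VW} \subseteq \VBP$ by tracking a border-semicontinuous complexity measure (the dimension of the span of all partial derivatives), and then exhibit an explicit $\VBP$-family with an exponential-dimensional partial-derivative space to witness strictness.

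For the upper bound, the workhorse is the measure $\dim_{\C}\pspan{f}$. The key calculation is that, for a single power $\ell^d$ of a linear form, every iterated partial derivative is a scalar multiple of $\ell^{d-j}$ for some $0 \leq j \leq d$, so $\dim_{\C}\pspan{\ell^d} \leq d+1$. By linearity, any Waring decomposition $\sum_{i=1}^k \ell_i^d$ satisfies $\dim_{\C}\pspan{\cdot} \leq k(d+1)$. Now take $f \in \overline{\VW(k,n,d)}$ and invoke \cref{thm:equivalent-definitions}(4) to write $f \simeq g$ with $g = \sum_{i=1}^k \ell_i^d$ and $\ell_i \in \C[\eps^{\pm 1}][\x]$; the same computation gives $\dim_{\C(\eps)}\pspan{g} \leq k(d+1)$. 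I would then argue lower semicontinuity: any $k(d+1)+1$ partial derivatives of $f$ must satisfy a $\C$-linear dependence, obtained by clearing $\eps$-denominators from a corresponding $\C(\eps)$-relation among derivatives of $g$ and taking its leading coefficient in $\eps$. This yields $\dim_{\C}\pspan{f} \leq k(d+1)$. The final step invokes the standard conversion from bounded partial-derivative space to a small ABP (a Nisan-style construction: stratify $\pspan{f}$ by degree and assemble the layers of an ABP of width at most $\dim_{\C}\pspan{f}$ and length $d$), placing $f$ in $\VBP$.

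For strictness, consider the product family $f_n := x_1 x_2 \cdots x_n$, which is trivially in $\VBP$ via a width-$1$ ABP. Its $2^n$ distinct multilinear partial derivatives $\partial^S f_n = \prod_{i \notin S} x_i$ are monomials in disjoint supports, hence linearly independent, so $\dim_{\C}\pspan{f_n} = 2^n$. If $f_n$ were to lie in $\overline{\VW(k,n,n)}$ for some $k = \mathrm{poly}(n)$, the upper-bound argument above would force $\dim_{\C}\pspan{f_n} \leq k(n+1)$, contradicting the exponential bound. Therefore $(f_n) \in \VBP \setminus \approxbar{\VW}$, giving the strict inclusion.

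The main obstacle is making the semicontinuity step precise. The partial derivatives of $g$ live in $\C[\eps^{\pm 1}][\x]$ and can have unboundedly negative $\eps$-valuations; to recover a usable $\C$-relation among derivatives of $f$, one must pick a $\C(\eps)$-basis of $\pspan{g}$ whose members have finite valuations and such that the leading-$\eps$ coefficients remain independent after specialization. Concretely, one should rescale a minimal relation so that its coefficient vector has $\val_\eps = 0$ and then take the $\eps \to 0$ limit, exactly as in the debordering of ROABPs in \cref{lem:aro}, but applied to the ``partial-derivative coefficient matrix'' of $g$ rather than to an ABP-layer matrix.
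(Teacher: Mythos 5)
Your proof is correct, but it follows a genuinely different route from the paper's. The paper applies Saxena's duality trick (\cref{lem:duality}) to rewrite each $\ell_i^d$ as a short sum of products of univariates, placing the approximating polynomial $g = f + \eps S$ inside a small any-order ROABP, and then invokes the ROABP debordering (\cref{lem:aro}), which rests on Nisan's coefficient-space characterization, to conclude that $f$ itself has a small ROABP, hence lies in $\VBP$. You instead track the full partial-derivative space: $\dim_\C\pspan{f} \leq k(d+1)$ for rank-$k$ Waring decompositions, this dimension is lower semicontinuous by a vanishing-minors argument (the same mechanism underlying \cref{lem:aro}), and a bounded degree-graded partial-derivative space yields a small \emph{homogeneous} ABP directly. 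The one step you leave implicit --- and which is not a direct quote of any lemma in the paper --- is that this last conversion uses Euler's formula $g = \frac{1}{d-j}\sum_i x_i \,\partial_{x_i} g$, applied layer by layer to a basis of each homogeneous piece of $\pspan{f}$, to produce the edge labels; calling it ``Nisan-style'' mislabels what is a different and somewhat simpler construction, and it would be worth stating. The paper's route buys the structurally stronger conclusion $\approxbar{\VW} \subseteq \ARO$ (read-once structure, useful for PIT); yours is shorter and carries a single semicontinuous measure all the way through, which also makes strictness immediate ($\dim_\C\pspan{x_1\cdots x_n} = 2^n$ is the same obstruction the paper packages as $\bwr(x_1\cdots x_n)\geq\binom{n}{\lfloor n/2\rfloor}$). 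One small correction to your worry paragraph: since $g = f + \eps S$ with $S \in \C[\eps][\x]$, the derivatives of $g$ already lie in $\C[\eps][\x]$ with nonnegative $\eps$-valuation, so the rescaling you describe is needed only for the coefficients of the $\C(\eps)$-linear dependence (clear denominators, divide out the common $\eps$-power, set $\eps = 0$), not for the derivatives or any chosen basis of $\pspan{g}$.
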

As an obvious corollary, we get $\approxbar{\SES} \subsetneq \VBP$.
\begin{proof}[Proof sketch]
Let $\bwr(f)= s$. One can argue that $f$ can also be computed by an $\ARO$ of width $O(snd)$. The key ingredient for the lemma is the {\em duality trick}. 
\begin{lemma}[Duality trick \cite{Saxena08}] \label{lem:duality}
        The polynomial $f = (x_1 + \hdots +x_n)^d$ can be written as
   \[
           f \;=\;\sum_{i \in [t]}\, f_{i1}(x_1) \cdots f_{in}(x_n),
      \]
\noindent where $t = O(nd)$, and $f_{ij}$ is a univariate polynomial of degree at most $d$. 
\end{lemma}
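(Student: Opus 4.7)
The plan is to prove the duality trick via a univariate interpolation argument applied to a carefully chosen auxiliary polynomial in an extra variable $y$. Concretely, I would introduce
\[
h(y) \;:=\; \prod_{j=1}^n \left(\sum_{k=0}^d \frac{(y\, x_j)^k}{k!}\right)\;,
\]
viewed as an element of $\C[y, x_1, \dots, x_n]$. Two observations fall out of the definition. First, for any scalar $\alpha_i \in \C$, the evaluation $h(\alpha_i) = \prod_{j=1}^n f_{ij}(x_j)$ factorizes as a product of univariates, where $f_{ij}(x_j) := \sum_{k=0}^d (\alpha_i x_j)^k / k!$ has degree at most $d$. Second, regarded as a polynomial in $y$, $h$ has degree at most $nd$, since each of the $n$ factors contributes at most $d$ to the $y$-degree.

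Next I would extract the $y^d$-coefficient of $h$. Expanding and collecting the terms contributing exactly $d$ powers of $y$ gives, by the multinomial theorem,
\[
[y^d]\, h(y) \;=\; \sum_{e_1 + \cdots + e_n = d}\; \prod_{j=1}^n \frac{x_j^{e_j}}{e_j!} \;=\; \frac{(x_1 + \cdots + x_n)^d}{d!}\;.
\]
Since $h(y)$ has $y$-degree at most $nd$, ordinary Lagrange interpolation at $nd + 1$ distinct points $\alpha_0, \dots, \alpha_{nd} \in \C$ yields scalars $\beta_0, \dots, \beta_{nd}$ with $[y^d]\, h(y) = \sum_{i=0}^{nd} \beta_i \cdot h(\alpha_i)$. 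Combining the two displays,
\[
(x_1 + \cdots + x_n)^d \;=\; d! \cdot \sum_{i=0}^{nd} \beta_i \cdot \prod_{j=1}^n f_{ij}(x_j)\;,
\]
which is exactly the claimed decomposition with $t = nd + 1 = O(nd)$.

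There is no substantive obstacle beyond picking the right auxiliary polynomial: once $h(y)$ is in hand, the control on its $y$-degree and the multinomial identification of $[y^d]\, h$ make the interpolation step routine. The only step demanding any insight is the guess to use the truncated exponential $\sum_{k=0}^d (y x_j)^k / k!$, which is motivated by the formal identity $e^{y(x_1 + \cdots + x_n)} = \prod_{j=1}^n e^{y x_j}$; truncating at degree $d$ preserves the $y^d$-coefficient on both sides while keeping everything polynomial, so that interpolation in $y$ applies.
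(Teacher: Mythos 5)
Your proof is correct and is essentially Saxena's original argument, which the paper simply cites rather than reproving: the truncated-exponential auxiliary polynomial $h(y)$, the multinomial identification of $[y^d]\,h$ with $(x_1+\cdots+x_n)^d/d!$, and Lagrange interpolation at $nd+1$ points are exactly the standard route. The only (harmless, since the paper works over $\C$) caveat worth noting is the division by $k!$ and $d!$, which requires characteristic $0$ or $>d$.
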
    
\noindent By assumption, $\sum_{i=1}^s\, \ell_i^d\,=\,g \,=\, f + \eps \cdot S$, where $\ell_i \in \C[\eps^{\pm 1}][\x]$ are homogeneous linear forms. Using~\cref{lem:duality} on each $\ell_i^d$, we get that $g$ can be computed by a small $\ARO$. Since, $\approxbar{\ARO} = \ARO$~\cref{lem:aro}, we have $\approxbar{\VW} \subseteq \VBP$.

On the other hand, by simple partial derivatives, one can conclude that $\bwr(x_1 \cdots x_n) \geq \binom{n}{\lfloor n/2 \rfloor}$, which is exponentially large as a function of $n$. This shows that $\approxbar{\VW} \subsetneq \VBP$ is strict.
\end{proof}

It is easy to show the following identity holds (often known as {\em Fischer's formula}):
\[
x_1 \cdots x_n \;=\; \frac{1}{n! 2^{n-1}} \cdot \sum_{s_2, \cdots, s_n \in \{\pm 1\}} \left(\prod_{i=2}^n s_i\right) (x_1 + s_2 x_2 + \cdots + s_n x_n)^n\;.
\]
However, it is not known whether this is the best bound possible for the border Waring rank of a monomial.

\medskip
\begin{question}
$\bwr(x_1 \cdots x_n) = 2^{n-1}$.
\end{question}
It is known that $\WR(x_1 \cdots x_n)=2^{n-1}$ \cite{CARLINI20125}. The same result on cactus rank (a scheme-theoretic version of Waring rank) is proved in \cite{ranestad2011rank}. There are (at least) two incorrect/incomplete proofs available online of the same result for border rank: the
early versions of \cite{Oed16}, and the first version of \cite{CGO:19}. A discussion on the gaps in the proofs is
available in the first version of \cite[Section 6.1]{BB-apolarityarxiv}.

\paragraph{Fixed-parameter debordering of Waring rank.} Another debordering result connects the border Waring rank of a polynomial with its usual Waring rank, giving an upper bound which is polynomial in the degree, but exponential in the border Waring rank parameter.
This means that polynomial families with fixed (or even logarithmic) border Waring rank are in $\VW$.

\medskip
\begin{theorem}[\cite{DBLP:conf/stacs/DuttaGIJL24}]
    If $f$ is a homogeneous polynomial of degree $d$ and border Waring rank $\bwr(f) = r$, then $\WR(f) \leq \binom{2r - 2}{r - 1} \cdot d$.
\end{theorem}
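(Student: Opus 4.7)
The plan is to convert the border Waring decomposition into an ordinary one by interpolating in the deformation parameter $\eps$. Starting from $\sum_{i=1}^r \ell_i(\eps)^d \simeq f$ with $\ell_i \in \C[\eps^{\pm 1}][\x]_1$, I would rescale each $\ell_i$ by a suitable power of $\eps$ (and, if necessary, substitute $\eps \mapsto \eps^N$ to clear fractional exponents) to bring the identity into the shape
\[
    G(\eps,\x) \;:=\; \sum_{i=1}^{r}\,\eps^{c_i}\, m_i(\eps,\x)^d \;=\; \eps^{\mu}\,f(\x) \,+\, \eps^{\mu+1}\,h(\eps,\x),
\]
with $m_i\in\C[\eps][\x]_1$, $c_i\ge 0$, and $\mu\ge 0$ the lowest $\eps$-order at which $f$ appears in $G$.

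Since $m_i^d \bmod \eps^{\mu+1}$ depends only on $m_i \bmod \eps^{\mu+1}$, I can truncate each $m_i$ so that $\deg_\eps m_i \le \mu$, making $G$ a polynomial in $\eps$ of degree at most $\mu d$; then $\tilde G(\eps,\x) := \eps^{-\mu} G(\eps,\x)$ is a polynomial of $\eps$-degree at most $\mu(d-1)$ with $\tilde G(0,\x) = f$. Lagrange interpolation then expresses
\[
    f(\x) \;=\; \sum_{j=1}^{\mu(d-1)+1}\gamma_j\,\tilde G(\alpha_j,\x),
\]
for distinct $\alpha_j\in\C^{*}$ and scalars $\gamma_j\in\C$. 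Each evaluation $\tilde G(\alpha_j,\x) = \alpha_j^{-\mu}\sum_i \alpha_j^{c_i} m_i(\alpha_j,\x)^d$ is a sum of $r$ scaled $d$-th powers of linear forms in $\x$, and the scalars can be absorbed inside the powers via $d$-th roots in $\C$. This yields $\WR(f) \le r\bigl(\mu(d-1)+1\bigr) \le r\mu d$.

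The main obstacle is establishing the uniform offset bound $\mu \le \binom{2r-2}{r-1}/r$, the Catalan number $C_{r-1}$; plugging this into $r\mu d$ delivers the claimed bound $\WR(f) \le \binom{2r-2}{r-1}\cdot d$. This is the combinatorial core of the proof: one must show that among all border decompositions of $f$ with $r$ summands, at least one has its $\eps$-error vanishing to order at most this central-binomial value. The natural route is apolarity theory, where the minimal $\mu$ is governed by the Hilbert function of the apolar algebra $\C[\partial]/\mathrm{Ann}(f)$; for a polynomial of border Waring rank $r$, this Hilbert function is controlled by that of a generic length-$r$ Artinian Gorenstein algebra, whose growth is dictated precisely by the central binomial coefficients, which in turn pins down how many cancellations can be forced among the leading $\eps$-coefficients of the $m_i^d$. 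The interpolation and truncation steps above should be routine once this $\mu$-bound is in hand.
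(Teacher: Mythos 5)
Your proposal takes a genuinely different route from the paper's, and it contains a real gap.

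The paper does not use $\eps$-interpolation at all. Instead, it transforms the border rank decomposition into a \emph{generalized additive decomposition}
\[
f = \sum_{k=1}^m \ell_k^{\,d - r_k + 1} g_k \quad \text{with } r_1 + \dots + r_m = r,\ \ell_k \in \C[\x]_1,\ g_k \in \C[\x]_{r_k - 1},
\]
by grouping the summands of the border decomposition according to the linear form appearing in the lowest $\eps$-degree (``local border rank decompositions'') and showing that these groups limit independently. Each summand $\ell_k^{\,d - r_k + 1} g_k$ is then debordered trivially using $\WR(g_k) \le \binom{2r_k - 2}{r_k - 1}$ (a pure dimension count in $r_k$ variables and degree $r_k - 1$), and the superadditivity of central binomials gives the final bound. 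This approach sidesteps the error-degree entirely, which is essential.

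The gap in your proposal is precisely the $\mu$-bound you yourself flag as ``the main obstacle.'' There is no known bound of the form $\mu \le \binom{2r-2}{r-1}/r$ on the $\eps$-order of the error in a border Waring decomposition. The paper's Section 3 shows that in full generality such error-degrees are only known to be bounded exponentially, and the text explicitly notes (end of the Waring rank section) that even the cleaner claim --- that the $\eps$-degree of each linear form can be taken $\le r - 1$ --- is an \emph{open conjecture} attributed to Ikenmeyer, verified only for $r \le 5$. Your apolarity heuristic does not resolve this: the Hilbert function of $\C[\partial]/\mathrm{Ann}(f)$ controls $\WR(f)$ and cactus rank directly, but there is no straightforward bridge from it to the $\eps$-order of a \emph{specific} approximating family, and in particular no argument that among all rank-$r$ border decompositions at least one has error-order $O(C_{r-1})$. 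Without this lemma the interpolation count $\WR(f) \le r(\mu(d-1)+1)$ proves nothing. (There is also a minor bookkeeping issue in the truncation step: after clearing denominators, $\deg_\eps G \le \max_i c_i + \mu d$, not $\mu d$, so the claimed bound $\deg_\eps \tilde G \le \mu(d-1)$ requires all $c_i = 0$, which you have not arranged.) In short, the interpolation-in-$\eps$ strategy is attractive but currently cannot be made to work with the available tools; the GAD decomposition is what makes the theorem go through.
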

In particular, the above theorem proves that when $r$ is constant, the Waring rank is at most $O(d)$. Before this work, explicit debordering for $r \le 5$ was known~\cite{landsberg2010ranks,ballico2018ranks}. 
\begin{proof}[Proof sketch]
    As the first step, we show that a polynomial with $\bwr(f) \leq r$ can be transformed into a polynomial in $r$ variables by a linear substitution.
    This is based on the fact that the space of first order partial derivatives of $f$ has dimension at most $r$, and a polynomial does not depend on variables with respect to which the derivative is zero.

    The case $d < r - 1$ is trivial, as in this case $\dim \bbC[\bold{x}]_d < \binom{2r - 2}{r - 1}$ and there exists a basis consisting of powers of linear forms, so every polynomial has Waring rank at most $\binom{2r - 2}{r - 1}$.

    To handle the nontrivial case $d \geq r - 1$ we transform the border rank decomposition into a \emph{generalized additive decomposition}:
    \begin{equation}\label{eq:-gad}
         f \;=\; \sum_{k = 1}^m \ell_k^{d - r_k + 1} g_k\;,
    \end{equation}
    with $r_1 + \dots + r_m = r$ where $\ell_k \in \bbC[\bold{x}]_1$, $g_k \in \bbC[\bold{x}]_{r_k - 1}$, and moreover $\bwr(\ell_k^{d - r_k + 1} g_k) \leq r_k$.
    In this decomposition again the Waring ranks of $g_k$ can be bounded from above using the trivial bound $\binom{2 r_k - 2}{r_k - 1}$, which implies the bound of $\binom{2 r_k - 2}{r_k - 1} \cdot d$ on the Waring ranks of summands, and 
    \[
    \left(\sum_{k = 1}^m \binom{2 r_k - 2}{r_k - 1}\right) d \;\leq\; \binom{2 r - 2}{r - 1} \cdot d\;
    \]
    on the total Waring rank of $f$.

    To obtain the generalized additive decomposition, we introduce an intermediate step: we separate the border rank decomposition
    \[ f = \lim_{\eps \to 0} \sum_{k = 1}^r \ell_k^d \]
    where $\ell_k \in \bbC[\eps^{\pm 1}][\bold{x}]_1$,
    into several \emph{local border rank decompositions} of the form,
    \[
        f_i \;=\; \lim_{\eps \to 0} \sum_{k = 1}^{r_i} \left(\eps^{q_k} \gamma_k \ell + \sum_{q = q_k + 1}^{q'_k} \eps^{q_k}\ell_{kq} \right)^d\;.
    \]
    That is, decompositions in which the linear form in the lower degree of $\eps$ is the same up to scaling.
    To prove that such separation is possible, that is, the parts having the same tend to a limit as $\eps \to 0$ independently, we use the following two facts.
    
    \begin{claim}
        The linear span of the $r_k$ powers $\left(\eps^{q_k} \gamma_k \ell + \sum_{q = q_k + 1}^{q'_k} \eps^{q_k}\ell_{kq} \right)^d$ tends to a subspace of $\ell^{d - r_k + 1}\bbC[\bold{x}]_{r_k - 1}$, as $\eps \to 0$
    \end{claim}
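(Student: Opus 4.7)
The plan is to combine binomial expansion with the $\ell$-adic filtration on $\bbC[\bold{x}]_d$. After factoring out the common lowest power of $\eps$, each of the $r_k$ linear forms appearing in the local decomposition may be normalized to $L_i = \gamma_i \ell + \eps M_i$ with $M_i \in \bbC[\eps][\bold{x}]_1$. The binomial expansion
\[
L_i^d \;=\; \sum_{j=0}^{d} \binom{d}{j}\gamma_i^{d-j}\ell^{d-j}\eps^j M_i^j
\]
exposes the decisive structural fact: for every $j \leq r_k - 1$ the summand lies in $V := \ell^{d-r_k+1}\bbC[\bold{x}]_{r_k-1}$, because $\ell^{d-j}\bbC[\bold{x}]_j \subseteq V$ throughout that range. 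Consequently $L_i^d = P_i(\eps) + \eps^{r_k}R_i(\eps)$ with $P_i \in V \otimes \bbC[\eps]$ and $R_i \in \bbC[\bold{x}]_d \otimes \bbC[\eps]$; in other words, modulo $V$ each $L_i^d$ is divisible by $\eps^{r_k}$.

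Next, I would pass to the quotient $\pi \colon \bbC[\bold{x}]_d \twoheadrightarrow W := \bbC[\bold{x}]_d/V$. To show that the limit $V_0 := \lim_{\eps\to 0}\Span_{\bbC(\eps)}\{L_1^d,\ldots,L_{r_k}^d\}$ lies inside $V$ it suffices to prove that, for every $\bbC(\eps)$-linear combination $F(\eps) = \sum_i c_i(\eps)L_i^d$ regular at $\eps = 0$, the image $\pi(F)(0) = \lim_{\eps\to 0}\eps^{r_k}\sum_i c_i(\eps)\pi(R_i(\eps))$ vanishes. If every $c_i$ satisfies $\val_\eps c_i \geq -(r_k-1)$, the factor $\eps^{r_k}$ immediately kills the limit, so the whole difficulty concentrates on the case $\min_i \val_\eps c_i = -s$ with $s \geq r_k$.

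The main obstacle is to rule out such large negative valuations under regularity. The approach is to examine the $s$ conditions coming from $[\eps^{-s}]F = [\eps^{-s+1}]F = \cdots = [\eps^{-1}]F = 0$, each of which is a polynomial identity in $\bbC[\bold{x}]_d$. Projecting each identity onto successive graded pieces of the $\ell$-adic filtration $G^p := \ell^p\bbC[\bold{x}]_{d-p}$, the equations at orders $-s, -s+1, \ldots, -s + r_k - 1$ produce an $r_k \times r_k$ linear system on the leading coefficients $c_{i,-s} := [\eps^{-s}]c_i$ whose matrix is a confluent Vandermonde built from the data $(\gamma_i, M_i(0))$. In generic position this system has full rank $r_k$, forcing $c_{i,-s} = 0$ for every $i$ and contradicting $\val_\eps c_i = -s$; in degenerate positions the $L_i^d$ become $\bbC(\eps)$-linearly dependent, lowering the effective value of $r_k$ and permitting induction. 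Establishing this confluent-Vandermonde rank statement is the heart of the argument and the main technical hurdle.
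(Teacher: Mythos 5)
Your setup is sound: the normalization to $L_i = \gamma_i\ell + \eps M_i$, the binomial expansion, and the observation that $\ell^{d-j}\bbC[\bold{x}]_j \subseteq V := \ell^{d-r_k+1}\bbC[\bold{x}]_{r_k-1}$ for $j \le r_k-1$ are all correct, and you have correctly isolated the real obstruction: a regular combination $F = \sum_i c_i L_i^d$ whose coefficients have valuation $\le -r_k$. Your decoupling observation is also right — projecting $[\eps^{-s+j}]F = 0$ onto $G^{d-j}/G^{d-j+1}$ only sees the leading coefficients $c_{i,-s}$, so one does obtain a homogeneous system in them whose $j$-th block is $\sum_i c_{i,-s}\gamma_i^{d-j}\bar M_i^j = 0$ in $\Sym^j(\bbC[\bold{x}]_1/\ell)$, where $\bar M_i = M_i^{(0)} \bmod \ell$.

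The genuine gap is in the degenerate case, and your proposed escape hatch is false. You write that ``in degenerate positions the $L_i^d$ become $\bbC(\eps)$-linearly dependent, lowering the effective value of $r_k$ and permitting induction.'' This is not true. Take $L_1 = x + \eps y$ and $L_2 = x + \eps y + \eps^2 z$ with $d=2$, $r_k=2$, $\ell = x$. Here $(\gamma_1,\bar M_1) = (\gamma_2,\bar M_2) = (1,y)$ so the system is rank $1$, yet $L_1^2$ and $L_2^2$ are $\bbC(\eps)$-linearly independent (they are distinct squares for every $\eps \ne 0$). Indeed $c_1 = \eps^{-2}$, $c_2 = -\eps^{-2}$ produces the regular combination $-2xz - 2\eps yz - \eps^2 z^2$ with $s = 2 = r_k$, and the Vandermonde block only yields $c_{1,-s}+c_{2,-s}=0$, not vanishing. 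The claim itself still holds here (the limit span is $\Span\{x^2, xz\} \subset x\,\bbC[\bold{x}]_1$), but your argument gives no mechanism to see this — you would need to pass to the next stratum of the filtration and use the \emph{difference} of the higher-order data $M_i^{(1)}$, which is exactly the confluent structure you gesture at but do not set up. Calling the system ``a confluent Vandermonde'' and then only invoking its generic full rank conflates two regimes: when knots $[\gamma_i : \bar M_i]$ are distinct it is a plain Vandermonde and full rank is provable (e.g.\ by projecting the $\bar M_i$ onto a generic line to reduce to the binary case); when knots coalesce, the confluent columns must be built from the $\eps$-jets of the $M_i$, and you have not shown that the resulting system is full rank nor that failure of full rank forces $\bbC(\eps)$-dependence.

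The survey itself only records that the original proof ``is proven by induction on degree using partial derivative methods,'' which is a different organizational principle (differentiating the curve $\eps \mapsto L_i^d$, or contracting against directions transverse to $\ell$, and inducting on $d$) rather than your coefficient-matching with a fixed $d$. Your route is not unreasonable, but as written it is incomplete: either develop the confluent block carefully enough to cover coalescing initial data, or replace the degenerate-case argument with an induction that does not rely on the false linear-dependence claim.
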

    This claim is proven by induction on degree using partial derivative methods.
    \begin{claim}
        The sum of spaces of polynomials of the form $\ell_k^{d - r_k + 1}\bbC[\bold{x}]_{r_k - 1}$ is direct if $\ell_k$ are distinct and $d \geq \sum_k r_k - 1$.
    \end{claim}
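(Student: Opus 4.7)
The plan is to reformulate the directness claim as an independence statement for vanishing conditions at distinct points, and then verify the latter by an explicit interpolation construction.

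First, I would set up the apolarity pairing $\langle f, g\rangle := g(\partial)(f)|_{\x = 0}$ between $\bbC[\x]_d$ and $\bbC[\y]_d$. A short Leibniz calculation shows that under this pairing, the annihilator of the single space $\ell^{d-r+1}\bbC[\x]_{r-1}$ inside $\bbC[\y]_d$ is precisely the degree-$d$ part of the fat-point ideal $\mathfrak{m}_{p_\ell}^r$, where $p_\ell \in \bbP^{n-1}$ is the projective point dual to $\ell$. Consequently, directness of $\sum_k V_k$ (with $V_k := \ell_k^{d-r_k+1}\bbC[\x]_{r_k-1}$) is equivalent to the codimension identity $\dim \bigcap_k V_k^\perp = \dim \bbC[\y]_d - \sum_k \dim V_k$, which says that the fat-point vanishing conditions at the distinct points $p_1,\dots,p_m$ with multiplicities $r_1,\dots,r_m$ impose independent linear constraints on $\bbC[\y]_d$.

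To prove this independence, I would exhibit, for each $k$, separator polynomials $G_k \in \bbC[\y]_d$ that vanish to order at least $r_j$ at every $p_j$ with $j \neq k$ and realize any prescribed $(r_k - 1)$-jet at $p_k$. Since the points are distinct, for each $j \neq k$ one can choose a linear form $L_j \in \bbC[\y]_1$ with $L_j(p_j) = 0$ and $L_j(p_k) \neq 0$; the product $\Lambda_k := \prod_{j \neq k} L_j^{r_j}$ then has degree $s_k := \sum_{j \neq k} r_j$, vanishes to the required orders at all $p_j$ with $j \neq k$, and satisfies $\Lambda_k(p_k) \neq 0$. Hence $\Lambda_k$ is a unit in the local ring $\mathcal{O}_{p_k}$, so multiplication by $\Lambda_k$ acts as an automorphism of the jet space $\mathcal{O}_{p_k}/\mathfrak{m}_{p_k}^{r_k}$. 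The hypothesis $d \geq \sum_k r_k - 1 = s_k + r_k - 1$ guarantees $d - s_k \geq r_k - 1$, so an arbitrary $M \in \bbC[\y]_{d - s_k}$ can be chosen to realize any prescribed $(r_k - 1)$-jet at $p_k$ after dehomogenizing in an affine chart. Setting $G_k := \Lambda_k \cdot M$ yields the required separator, and summing such separators over $k$ shows surjectivity of the evaluation map $\bbC[\y]_d \twoheadrightarrow \bigoplus_k \mathcal{O}_{p_k}/\mathfrak{m}_{p_k}^{r_k}$, from which independence, and hence directness, follow.

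The main obstacle is the first step: verifying that the apolar annihilator of $\ell^{d-r+1}\bbC[\x]_{r-1}$ coincides with the fat-point ideal component $\mathfrak{m}_{p_\ell}^r \cap \bbC[\y]_d$ requires a careful Leibniz unwinding and a dimension check, though this is classical (Macaulay inverse systems). The degree hypothesis $d \geq \sum_k r_k - 1$ then enters the interpolation step in exactly the right place, ensuring $d - s_k \geq r_k - 1$ and hence that $M$ has large enough degree to surject onto the jet space at $p_k$; any weakening of this hypothesis would break the construction.
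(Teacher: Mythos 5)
Your argument is correct. The paper does not actually prove this claim — it only remarks that the statement is ``a classical fact for bivariate polynomials'' and that ``the general case can be reduced to the bivariate case,'' without carrying out either step. Your route is genuinely different: rather than reducing to two variables, you dualize via the apolarity pairing, identify $\bigl(\ell^{d-r+1}\bbC[\x]_{r-1}\bigr)^\perp$ with the degree-$d$ piece of the fat-point ideal $\mathfrak{m}_{p_\ell}^r$ (indeed standard Macaulay inverse systems; checking it reduces, by $\GL$-equivariance, to the monomial case $\ell = x_1$), and then prove surjectivity of the jet evaluation map $\bbC[\y]_d \twoheadrightarrow \bigoplus_k \mathcal{O}_{p_k}/\mathfrak{m}_{p_k}^{r_k}$ by the explicit separator construction $G_k = \Lambda_k M$, $\Lambda_k = \prod_{j\neq k} L_j^{r_j}$. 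The degree hypothesis enters exactly where you place it: $d - s_k \ge r_k - 1$ is what lets $M \in \bbC[\y]_{d-s_k}$ realize an arbitrary $(r_k-1)$-jet at $p_k$, while $\Lambda_k(p_k)\neq 0$ makes multiplication by $\Lambda_k$ an automorphism of the jet space. The tradeoff between the two routes is that the paper's envisioned reduction, if carried out, would only need the $\bbP^1$ case (where the claim is just that a collection of at most $d+1$ vanishing-with-multiplicity conditions at distinct points is independent on degree-$d$ binary forms), whereas your argument is uniform in the number of variables and fully explicit, at the cost of front-loading the apolarity machinery. Yours has the advantage of actually constituting a complete, self-contained proof.
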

    This is a classical fact for bivariate polynomials, and the general case can be reduced to the bivariate case.
    
    Taken together, the claims say that if we group summands of the border rank decomposition by the linear forms in the lowest degree of $\eps$, we get subexpressions of the form $\eps^{p_k} \ell_k^{d - r_k + 1} g_k + \dots$, and the main terms of different subexpression cannot cancel each other, so $p_k \geq 0$ and we obtain a generalized additive decomposition.
    
    Alternatively, the existence of the generalized additive decomposition can be proven using algebro-geometric methods from~\cite{BBM-ranks,BuczBucz:SecantVarsHighDegVeroneseReembeddingsCataMatAndGorSchemes,BB-wild} involving $0$-dimensional schemes. For details, we refer to~\cite{DBLP:conf/stacs/DuttaGIJL24}.
\end{proof}

Shpilka~\cite{shpilka2025improved} improved the dependence of the debordered rank on $r$ by introducing a refined form of local decomposition, inspired by a {\em diagonalization trick}. The central idea is that, after an appropriate linear transformation and perturbation, the variable $x_i$ can be eliminated from $g_1, \cdots, g_{i-1}$ in \cref{eq:-gad}. With a suitable choice of parameters, this leads to the following result.
\medskip
\begin{theorem}[\cite{shpilka2025improved}]
    If $f$ is a homogeneous polynomial of degree $d$ and border Waring rank $\bwr(f) = r$, then $\WR(f) \leq r^{10 \sqrt{r}} \cdot d$
\end{theorem}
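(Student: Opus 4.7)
I would build directly on the DGIJL argument recalled above, which produces a generalized additive decomposition
\[
f \;=\; \sum_{k=1}^m \ell_k^{d-r_k+1}\, g_k, \qquad \sum_{k=1}^m r_k \;=\; r,
\]
with $g_k \in \bbC[\x]_{r_k-1}$ and $\bwr(\ell_k^{d-r_k+1} g_k) \leq r_k$, after reducing to at most $r$ essential variables. The bottleneck of the earlier proof is that each $g_k$ is estimated via the crude dimension bound $\WR(g_k) \leq \binom{2r_k-2}{r_k-1}$ in the full $r$-variable ambient space. Shpilka's refinement is to exhibit an equivalent GAD in which each $g_k$ depends on far fewer variables, and to pay for this with a smarter combinatorial estimate at the end.

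\textbf{Key steps.} First, by a linear change of coordinates together with a generic perturbation one may assume that the linear forms $\ell_k$ are linearly independent, and after rescaling that $\ell_k = x_k$ for $k=1,\dots,m$ (here $m \leq r$). Second, I would execute the \emph{diagonalization trick}: proceeding by reverse induction on $i = r, r-1, \dots, 2$, eliminate every occurrence of $x_i$ from $g_j$ for all $j < i$ by adding and subtracting perturbation terms of the form $\eps^N x_j^{d-r_j+1} \cdot h_{j \to i}$ to compensating summands, so that the net sum still equals $f$ modulo $\eps$. After the procedure terminates, $g_k \in \bbC[x_1,\dots,x_k]_{r_k-1}$. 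Third, I would apply the classical identity
\[
\WR\!\bigl(x_k^{d-r_k+1} m^{r_k-1}\bigr) \;\leq\; d
\]
for any linear form $m$, combined with a power-sum decomposition of $g_k$ into at most $\binom{k+r_k-2}{r_k-1}$ summands coming from the dimension bound in $k$ variables; this yields
\[
\WR(f) \;\leq\; d \cdot \sum_{k=1}^m \binom{k + r_k - 2}{r_k - 1}.
\]
Finally, I would optimize over partitions $r = r_1 + \cdots + r_m$. Using Stirling's estimate, the worst-case sum is attained roughly when $r_k \approx k$ and $m \approx \sqrt{2r}$, so that each binomial is $\binom{2k-2}{k-1} \approx 4^k$ and the dominant contribution is $\Theta(4^{\sqrt{r}})$; absorbing polynomial factors gives the stated $r^{10\sqrt{r}} \cdot d$ bound.

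\textbf{Main obstacle.} The technical heart is the second step, the perturbation argument that effects the diagonalization. The corrections used to remove $x_i$ from $g_j$ necessarily introduce new error terms at higher orders of $\eps$, and one must show that all of them can be absorbed into the existing local decompositions without inflating any $r_k$ and without breaking the local border-rank bound $\bwr(\ell_k^{d-r_k+1} g_k) \leq r_k$. The reverse-induction order (largest-indexed variable first, placed into strictly later summands) is what makes the process well-founded, since each new correction lands in a piece that has already been finalized with respect to the variables being eliminated later. Verifying that this ordering, together with the bounded-dimensional spaces of admissible corrections at each stage, actually terminates and yields a genuine GAD with the same $r$ is the step where the proof has to be executed carefully; everything else is essentially combinatorial bookkeeping on top of the DGIJL machinery.
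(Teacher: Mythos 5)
Your high-level outline — pass to a generalized additive decomposition à la DGIJL, then use a perturbation/diagonalization step to force each $g_k$ to live in a coordinate subspace $\bbC[x_1,\dots,x_k]$, then bound $\WR(f)$ by $d\cdot\sum_k\binom{k+r_k-2}{r_k-1}$ and optimize over partitions — is exactly the strategy the survey attributes to Shpilka, and your step~2 ("eliminate $x_i$ from $g_j$ for $j<i$ by reverse induction") matches the survey's one-sentence description of the diagonalization trick. Your "main obstacle" paragraph also correctly identifies where the real work lies.

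There is, however, a genuine gap in the final optimization. Without an extra constraint, $\sum_{k=1}^m\binom{k+r_k-2}{r_k-1}$ over partitions $r_1+\cdots+r_m=r$ is \emph{not} maximized near $r_k\approx k$; it can be exponential in $r$. Take $m=\lfloor r/2\rfloor+1$, $r_1=\cdots=r_{m-1}=1$ and $r_m=r-m+1\approx r/2$: the last term is $\binom{m+r_m-2}{r_m-1}=\binom{r-1}{\lfloor r/2\rfloor-1}\approx 2^{r}/\sqrt{r}$, swamping the claimed $4^{\Theta(\sqrt r)}$. The missing ingredient is that the $g_k$'s can be \emph{reordered before diagonalizing} so that $r_1\geq r_2\geq\cdots\geq r_m$; this forces $r_k\leq r/k$, since $r\geq r_1+\cdots+r_k\geq kr_k$. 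Under this ordering the $k$-th term is at most $\binom{k+r/k-2}{\,r/k-1\,}$, which (as a function of $k$) peaks near $k\approx\sqrt r$ at roughly $\binom{2\sqrt r}{\sqrt r}\approx 4^{\sqrt r}$, and the whole sum is $O(\sqrt r\cdot 4^{\sqrt r})\leq r^{10\sqrt r}$. You also need to verify (which you flag but do not do) that the reverse-induction perturbations respect this chosen ordering and do not inflate any $r_k$ or destroy the local border-rank control $\bwr(\ell_k^{\,d-r_k+1}g_k)\leq r_k$; without that the bookkeeping collapses. In short: the skeleton is right, but the endgame as written does not yield $r^{10\sqrt r}\cdot d$, and the reordering of the local pieces (or an equivalent constraint coupling $r_k$ and $k$) has to be made explicit and justified against the perturbation step.
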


As a corollary, we get a polynomial upper bound for families with polylogarithmic border Waring rank.

\medskip
\begin{corollary}
Let $f_n$ be a polynomial family, with $\deg(f_n)$ being polynomially bounded, and further $\bwr(f_n) = O\left(\left(\frac{\log n}{\log \log n}\right)^2\right)$, then $(f_n) \in \VW$.
\end{corollary}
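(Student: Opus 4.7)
The plan is to apply the preceding theorem of Shpilka directly and verify that the resulting bound is polynomial in $n$. By hypothesis $d_n := \deg(f_n)$ is polynomially bounded in $n$, and $r_n := \bwr(f_n) = O\left(\left(\frac{\log n}{\log \log n}\right)^2\right)$; the theorem then yields
\[
\WR(f_n) \;\leq\; r_n^{10 \sqrt{r_n}} \cdot d_n\,.
\]
Since $d_n$ is already polynomial in $n$, the task reduces to showing that $r_n^{10 \sqrt{r_n}}$ is likewise bounded by a polynomial in $n$.

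Taking logarithms, this amounts to verifying $10\sqrt{r_n}\,\log r_n = O(\log n)$. Substituting the hypothesis on $r_n$ gives $\sqrt{r_n} = O(\log n / \log \log n)$, and since $r_n$ is polylogarithmic in $n$, a further logarithm yields $\log r_n = O(\log \log n)$. Multiplying the two bounds cancels the $\log \log n$ factor, so $\sqrt{r_n}\,\log r_n = O(\log n)$, whence $r_n^{10\sqrt{r_n}} = n^{O(1)}$. Combined with the polynomial bound on $d_n$, we conclude $\WR(f_n) = n^{O(1)}$, and therefore $(f_n) \in \VW$.

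There is essentially no conceptual obstacle here beyond invoking the Shpilka bound; the only work is verifying the elementary logarithmic arithmetic above. It is worth noting that the exponent $2$ in the hypothesis sits at the natural threshold allowed by the theorem: the product $\sqrt{r}\,\log r$ absorbs exactly against $\log n$ when $r = O((\log n/\log\log n)^2)$, whereas any hypothesis of the form $\bwr(f_n) = O((\log n / \log \log n)^{2+\delta})$ with $\delta > 0$ would force $\sqrt{r_n}\log r_n = \omega(\log n)$, giving only a super-polynomial bound through this route. Strengthening the corollary beyond this threshold would therefore require a genuine improvement of the underlying debordering theorem, not a sharpening of the present calculation.
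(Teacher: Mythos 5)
Your proposal is correct and is the natural (and essentially only) way to derive the corollary from Shpilka's theorem: substitute the assumed bound on $\bwr(f_n)$, observe that $\sqrt{r_n}\log r_n = O(\log n)$ so $r_n^{10\sqrt{r_n}} = n^{O(1)}$, and combine with the polynomial degree bound. The paper states the corollary without proof, and your calculation supplies exactly the intended argument; your closing remark that the exponent $2$ is tight for this route is a nice extra observation.
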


We remark that the current best Waring and Border Waring bounds for determinant and permanent are {\em different}. Using partial derivative methods, one can show that both $\bwr(\det_n)$ and $\bwr(\per_n)$ are lower bounded by $\binom{2n}{n}$. On the other hand, $\bwr(\per_n) \leq \WR(\per_n) \le 4^n$~\cite{DBLP:journals/ejc/Glynn10}, an almost tight upper bound. However, the best Waring rank and border Waring rank upper bound for $\det_n$ are still $2^{O(n \log n)}$~\cite{DBLP:journals/cpc/HoustonGJ24}. If both lower bound for border Waring rank and upper bound for Waring rank of the determinant are asymptotically tight, then the best debordering result we can hope is the following conjecture.

\medskip
\begin{question}
If $f$ is a homogeneous polynomial of degree $d$ and border Waring rank $\bwr(f) = r$, then $\WR(f) \leq r^{O(\log r)} \cdot \poly(d)$. 
\end{question}
A stronger version of the above conjecture is to prove a polynomial upper bound i.e. $\WR(f) \le \poly(rd)$.
An even stronger conjecture $\WR(f) \le (r - 1)(d - 1) + 1$ was proposed by Ballico and Bernardi in~\cite{ballicobernardi2012}.
It is known to hold for $r \leq 5$~\cite{landsberg2010ranks,ballico2018ranks}.
We remark that for $r \le 5$ the $\eps$-degree of each linear form in the border rank decomposition inside each linear form for is at most $r-1$.
Christian Ikenmeyer, in private communication, conjectured that this holds true for any $r$, which would imply Ballico--Bernardi conjecture.

\subsection{Debordering Bounded Depth-3 Circuits}
\label{sec:depth3}
A depth-3 circuit with top gate `$+$' denoted~$\Sigma\Pi\Sigma$ computes a polynomial of the form 
\begin{equation}\label{eq:depth-3}
f(\x) \;=\; \ell_{1,1} \cdots \ell_{1,d_1} \;+\; \cdots + \ell_{k,1}\cdots \ell_{k,d_k}\;,  
\end{equation}
where $\ell_{i,j}$ are linear polynomials in $\C[\x]$; and $d_i$ are some parameters. The {\em top fanin} is $k$. We use~$\SPS(k,n,d)$ to denote the set of depth-3 circuits of the form~\cref{eq:depth-3} where $\deg(f) \le d := \max_{i} d_i$. When $n$ and $d$ are polynomially related, we will often omit them and simply write $\SPS(k)$. In algebraic geometry, when \cref{eq:depth-3} is homogeneous, $k$ is called the {\em Chow rank}.

A depth-3 circuit with top gate `$\times$' denoted~$\Pi\Sigma\Pi$ computes a polynomial of the form 
\begin{equation*}
f(\x) \;=\; \prod_{i=1}^k g_i\;,  
\end{equation*}
where $g_i$ are sparse polynomials. We use~$\Pi\Sigma\Pi(s,k)$ to denote the set of depth-3 circuits with top gate being a product gate with fanin $k$ such that each $\sp(g_i) \le s$, and $\deg(f) \le d$. Since, $\overline{\Sigma\Pi(s)} = \Sigma\Pi(s)$,  from the discussion in~\cref{sec:depth-2}, we can conclude that $\overline{\Pi\Sigma\Pi(s,k)} = \Pi\Sigma\Pi(s,k)$. Therefore, we will focus on understanding $\overline{\SPS(k)}$.

We also remark that if one could establish a strong debordering result such as $\overline{\SPS(s)} \subseteq \VP$, for $s = \poly(nd)$, then by known depth-reduction results~\cite{Gupta16}, it would follow that any polynomial $f \in \approxbar{\VP}$ can also be computed by a circuit of size $\exp(\sqrt{s} \cdot \log s)$. In this context, understanding the structure of $\approxbar{\SPS}$ becomes both significant and intriguing.

\paragraph{Universality of $\overline{\SPS(2)}$.}~Surprisingly, Kumar~\cite{kum20} showed that $\overline{\SPS(2)}$ is {\em universal}: For any $n$-variate $d$-degree polynomial $f \in \C[\x]$, there exists a $D$, depending on $n$ and $d$, such that $f \in \overline{\SPS(2,n,D)}$. The proof also works for nonhomogeneous polynomials, but for simplicity, we assume $f$ to be homogeneous. We present a proof sketch of this fact via defining \emph{Kumar complexity} (implicitly defined in~\cite{kum20}, and explicitly in~\cite{dutta2025geometric}). 

The Kumar complexity of $f$, denoted $\Kc(f)$, is the {\em smallest}
$s$ such that there exist a constant $\alpha \in \C$ and homogeneous linear polynomials $\ell_1, \cdots, \ell_s$ with the property that
\begin{equation}\label{eq:Kc}
f \;=\; \alpha\big(\prod_{i=1}^s (1+\ell_i)-1\big)\;.
\end{equation}
For instance, given a linear form $\ell$, we see that $\Kc(\ell^d)=d$, because $\ell^d= \prod_{j=1}^d (1+\omega^j \ell)-1$, where $\omega$ is a primitive $d$-th root of unity. However, not all polynomials have finite Kumar complexity: for example, it is easy to see that $x_1 \cdots x_n$ cannot be expressed as in \cref{eq:Kc}. The \emph{border Kumar complexity} of $f$, denoted $\underline{\Kc}(f)$, is the smallest $s$ such that  
\begin{equation}\label{eq:borderKc}
f \;=\; \lim_{\epsilon\to 0}\, \alpha(\eps) \cdot \big(\prod_{i=1}^s (1+\ell_i(\eps))-1\big),   
\end{equation}
for $\alpha(\eps) \in \C[\eps^{\pm 1}]$, and linear forms $\ell_i \in \C[\eps^{\pm 1}][\x]_1$. Let~$\val_{\eps}(\alpha)= M$. Then, one can assume that $\alpha = \gamma \cdot \eps^M$, for some $\gamma \in \C$. 

\medskip
\begin{observation}\label{obs:1}
$\underline{\Kc}(f) =s \implies f \in \overline{\SPS(2,n,s)}$    
\end{observation}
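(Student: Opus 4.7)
The plan is to unfold the border Kumar definition into an explicit one-parameter family of $\SPS(2,n,s)$ circuits whose $\eps\to 0$ limit is $f$. By hypothesis and~\cref{eq:borderKc}, there exist $\gamma \in \bbC$, $M \in \Z$, and linear forms $\ell_1,\ldots,\ell_s \in \bbC[\eps^{\pm 1}][\x]_1$ such that
\[
g(\x,\eps)\;:=\;\gamma\,\eps^{M}\Bigl(\prod_{i=1}^{s}(1+\ell_i(\eps)) \;-\; 1\Bigr)\;\simeq\;f\,.
\]
The key observation is that $\SPS(2)$ is precisely the class of sums of two products of affine linear forms, and the Kumar expression already exhibits the shape ``product minus product'', so the task reduces to a rewriting that verifies the degree and fanin budgets.

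The step I would carry out is to absorb the scalar $\gamma\eps^M$ into the first factor of the left product, yielding
\[
g(\x,\eps)\;=\;\bigl(\gamma\eps^M + \gamma\eps^M\,\ell_1(\eps)\bigr)\cdot\prod_{i=2}^{s}\bigl(1+\ell_i(\eps)\bigr) \;+\; \bigl(-\gamma\eps^M\bigr)\,.
\]
This is a sum of two products of affine linear forms over $\bbC[\eps^{\pm 1}][\x]$: the first product has $s$ affine linear factors of total degree at most $s$, and the second is a single constant factor of degree $0$. Hence $g \in \SPS(2,n,s)$ when viewed over $\bbC[\eps^{\pm 1}]$, and since $g \simeq f$, the algebraic characterization of border complexity in~\cref{thm:equivalent-definitions} delivers $f \in \overline{\SPS(2,n,s)}$.

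No real obstacle arises here: the border Kumar normal form is engineered to match the $\SPS(2)$ template, and the only care needed is pushing the $\eps$-dependent prefactor inside one of the linear factors so that the circuit presentation is literally a sum of two products of affine linear forms of degree at most $s$, rather than a scalar multiple of such an expression. In particular, one does not need to invoke interpolation, degree bounds on approximating curves, or any of the heavier machinery from~\cref{sec:equivalent}; the conclusion is a direct structural consequence of the shape of~\cref{eq:borderKc}.
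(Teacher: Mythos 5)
Your proof is correct. The paper states this as an unproved observation, and your argument supplies exactly the intended direct verification: starting from the border Kumar witness $g(\x,\eps) = \gamma\eps^M\bigl(\prod_{i=1}^{s}(1+\ell_i(\eps)) - 1\bigr)$, distributing the scalar $\gamma\eps^M$ and absorbing it into one linear factor yields a sum of two products of affine linear forms over $\bbC[\eps^{\pm1}]$ (the second a degenerate one-factor product, namely the constant $-\gamma\eps^M$), each of degree at most $s$, so $g$ is a $\SPS(2,n,s)$ circuit over $\bbC[\eps^{\pm1}]$ approximating $f$. Since the paper has already fixed the algebraic (Laurent-polynomial) model for border complexity at the start of \cref{sec:debordering}, no further appeal to the equivalence machinery is needed, as you note.
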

Implicitly, Kumar~\cite{kum20} showed the following (which was explicitly proved in~\cite{dutta2025geometric}).

\medskip
\begin{proposition}[\cite{kum20,dutta2025geometric}]
\label{pro:introkumar}
For all homogeneous $f$ we have
$\underline{\Kc}(f) \leq \deg(f) \cdot \bwr(f)$.
\end{proposition}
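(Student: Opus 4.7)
The plan is to start from a border Waring decomposition $f \simeq \sum_{i=1}^{r} \ell_i(\eps)^{d}$ with $r = \bwr(f)$ and $d = \deg f$, and convert it into a single product of $rd$ affine-linear factors $(1 + L_k)$ in the border. The workhorse is the classical roots-of-unity identity
\[
\prod_{j=0}^{d-1}(1 + \omega^{j} t) \;=\; 1 - (-1)^{d} t^{d}
\]
for $\omega$ a primitive $d$-th root of unity, equivalently $t^{d} = (-1)^{d+1}\bigl(\prod_{j=0}^{d-1}(1 + \omega^{j} t) - 1\bigr)$. This already yields $\Kc(\ell^{d}) \le d$ for a single linear form $\ell$, as noted in the paper; the remaining task is to combine $r$ such expressions into \emph{one} product rather than a sum.

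To do this, I would introduce a scaling parameter $\delta$ and consider the single product of $rd$ affine-linear factors
\[
Q(\delta, \eps, \x) \;:=\; \prod_{i=1}^{r}\prod_{j=0}^{d-1}\bigl(1 + \omega^{j}\,\delta\,\ell_i(\eps)\bigr)\;,
\]
each of which is precisely of the form $1 + (\text{homogeneous linear form in } \x)$. Applying the identity to the inner product over $j$ collapses $Q$ to $\prod_{i=1}^{r}\bigl(1 + (-1)^{d+1}\delta^{d}\ell_i(\eps)^{d}\bigr)$. Expanding the outer product in powers of $\delta$ gives
\[
Q(\delta, \eps, \x) - 1 \;=\; (-1)^{d+1}\delta^{d}\sum_{i=1}^{r}\ell_i(\eps)^{d} \;+\; O(\delta^{2d})\;,
\]
so that $(-1)^{d+1}\delta^{-d}\bigl(Q - 1\bigr) = \sum_{i=1}^{r}\ell_i(\eps)^{d} + O(\delta^{d})$.

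To close the argument, I would specialize $\delta = \eps^{N}$ for a sufficiently large positive integer $N$ and take $\eps \to 0$. The leading term becomes $\sum_i \ell_i(\eps)^d$, which by hypothesis tends to $f$, while the $O(\delta^{d})$ error is suppressed. This yields an expression of the exact shape $\alpha(\eps)\bigl(\prod_{k=1}^{rd}(1 + L_k(\eps, \x)) - 1\bigr)$ with $\alpha(\eps) = (-1)^{d+1}\eps^{-Nd}$ and $rd$ homogeneous linear forms $L_k \in \bbC[\eps^{\pm 1}][\x]_{1}$, which directly witnesses $\underline{\Kc}(f) \le rd = d \cdot \bwr(f)$ from the definition in \cref{eq:borderKc}.

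The one delicate point -- and the step where care is needed -- is the $\eps$-valuation bookkeeping. The individual linear forms $\ell_i(\eps)$ are allowed to have poles at $\eps = 0$, so while $\sum_i \ell_i(\eps)^{d}$ has nonnegative $\eps$-valuation by assumption, the cross-terms in the expansion of $Q$ (coming from products $\ell_{i_1}^{d}\cdots \ell_{i_s}^{d}$ with $s \ge 2$) can have very negative $\eps$-valuations. These are tamed by the factor $\delta^{sd}$ in front of each such cross-term: choosing $N$ larger than $2c$, where $c$ is the maximum $|\val_\eps(\ell_i)|$, makes $\delta^{d} = \eps^{Nd}$ dominate the worst negative power of $\eps$ in the error, so the limit is finite and equal to $f$. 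This valuation estimate is the main (modest) obstacle; once verified, the construction delivers the claimed bound.
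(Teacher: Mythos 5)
Your construction is correct and is essentially the paper's own proof, dressed differently. Both use the roots-of-unity factorization of $t^d$ to express each $\ell_i^d$ as a product of $d$ affine-linear factors minus $1$, form a single product of $rd$ factors after scaling the linear forms by $\eps^N$, divide by $\eps^{Nd}$, and choose $N$ large enough so the cross-terms (which the paper tracks via elementary symmetric polynomials $e_i$ for $i>d$, and you track as the $O(\delta^{2d})$ tail after first collapsing each inner product) vanish in the limit.
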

\begin{proof}
Let $\bwr(f) = r$ and let $\ell_1 \cdots \ell_r$ be linear forms depending rationally on $\eps$ such that 
$f \simeq \sum_{i=1}^r \ell_i^d$. 
Then one verifies that
\[
f \;\simeq\; -e_{d}(-\omega^0\ell_1,-\omega^1\ell_1,\ldots,-\omega^{d-1}\ell_1,\ldots \ldots, -\omega^0\ell_r,-\omega^1\ell_r,\ldots,-\omega^{d-1}\ell_r)
\]
and for all $0<i<d$ we have
\[
e_{i}(-\omega^0\ell_1,-\omega^1\ell_1,\ldots,-\omega^{d-1}\ell_1,\ldots \ldots, -\omega^0\ell_r,-\omega^1\ell_1,\ldots,-\omega^{d-1}\ell_r) \;=\; 0.
\]
Choose $N$ large enough so that for all $d < i \leq dr$ we have that 
\[\eps^{-Nd} \cdot e_i(-\eps^N\omega^0\ell_1,\ldots, -\eps^N\omega^{d-1}\ell_r) \;\simeq\; 0\;.\]
We obtain
$f \;\simeq\; -\eps^{-Nd}\big(\big((1-\eps^N\omega^0\ell_1) \cdots (1-\eps^N\omega^{d-1}\ell_r)\big)-1\big)$. Therefore
$\underline{\Kc}(f) \leq rd$.
\end{proof}
Since, for any homogeneous $f \in \C[\x]$, $\bwr(f)$ is finite, \cref{pro:introkumar} proves the universality of $\overline{\SPS(2)}$ circuits.  

\paragraph{More on Border Kumar complexity.}~Assume $\deg(f)=d$. If $f = \ell_1\cdots\ell_d$ is a product of homogeneous linear forms $\ell_i$, then $\underline{\Kc}(f) = d$, since 
$f \simeq \eps^d\big(\big(\prod_{i=1}^d (1+\eps^{-1}\ell_i)\big)-1\big)$. Interestingly, \cite{dutta2025geometric} showed a converse theorem to \cite{kum20}, that either~$\bwr(f) \le \underline{\Kc}(f)$, or $f$ is a product of linear forms. More formally, they showed the following.

\medskip
\begin{proposition}[{\cite[Theorem 2.7]{dutta2025geometric}}] \label{prop:mainthmmega}
If $f$ is not a product of linear forms, then $\bwr(f) \le \underline{\Kc}(f)$.     
\end{proposition}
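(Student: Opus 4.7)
The plan is to extract a Waring-type decomposition directly from the border Kumar expression via Newton's identity (\cref{prop:Newt-Id}), and to show that the only way this extraction can fail forces $f$ to be a product of linear forms.

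Let $s = \underline{\Kc}(f)$, so that $f = \lim_{\eps \to 0} \alpha \cdot (\prod_{i=1}^{s}(1+\ell_i) - 1)$, where $\alpha = \gamma \eps^{M}$ and the $\ell_i$ are homogeneous linear forms over $\C[\eps^{\pm 1}]$. Expanding the product yields $\alpha \sum_{k=1}^{s} e_k(\ell_1,\ldots,\ell_s)$, and since $f$ is homogeneous of degree $d$, matching degree-$k$ components in $\x$ gives $\lim \alpha e_d = f$ and $\lim \alpha e_k = 0$ for $k \in \{1,\ldots,s\} \setminus \{d\}$. I would then split on the $\eps$-valuations $v_i := \val_{\eps}(\ell_i)$.

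When all $v_i \ge 0$, every $\ell_i$ lies in $\C[\eps][\x]$, so each power sum $p_k = \sum_{i} \ell_i^k$ satisfies $\val_{\eps}(p_k) \ge 0$. Newton's identity gives
\[
d\, e_d \;=\; (-1)^{d-1} p_d \,+\, \sum_{i=1}^{d-1}(-1)^{i-1}\, p_i\, e_{d-i},
\]
and multiplying by $\alpha$ makes every ``error'' term $p_i \cdot \alpha e_{d-i}$ have $\eps$-valuation at least $0 + 1 = 1$; these vanish in the limit, so $\lim \alpha p_d = (-1)^{d-1} d f$. Choosing $\zeta \in \C$ with $\zeta^d = (-1)^{d-1}/d$ and a $d$-th root $\alpha^{1/d}$ (after a harmless reparameterization $\eps \mapsto \eps^d$ if needed), the linear forms $L_i := \zeta \alpha^{1/d} \ell_i$ satisfy $\sum_{i=1}^{s} L_i^d = \frac{(-1)^{d-1}}{d} \alpha\, p_d \to f$, giving $\bwr(f) \le s$.

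In the complementary case, some $v_i < 0$. Setting $U := -\sum_{v_i < 0} v_i > 0$, the product $\prod_i(1 + \ell_i)$ has $\eps$-valuation $-U$ with leading $\eps$-coefficient $\prod_{v_i < 0} \tilde\ell_i \cdot \prod_{v_i \ge 0}(1 + \ell_i|_{\eps = 0})$, where $\tilde\ell_i$ is the leading $\eps$-coefficient of $\ell_i$. Matching $\eps$-orders in $\alpha \cdot (\prod(1+\ell_i) - 1)$ forces $M = U$, and the limit equals $\gamma$ times the above product. Demanding that this equal the homogeneous polynomial $f$ forces, by a degree-counting argument, exactly $d$ of the $v_i$ to be strictly negative and $\ell_i|_{\eps = 0} = 0$ for every remaining $i$; otherwise the affine factors $1 + \ell_i|_{\eps=0}$ with nonzero linear part would contribute a nontrivial constant term in the product, producing mixed-degree components. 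This forces $f = \gamma \prod_{v_i < 0} \tilde\ell_i$, a product of $d$ linear forms.

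The main obstacle is the rigidity step in the second case: ruling out cancellations among the affine factors with $v_i \ge 0$, which amounts to observing that $\prod_j(1 + c_j)$ for nonzero linear forms $c_j$ cannot be a unit of $\C[\x]$ since $\C[\x]$ is a domain whose only units are nonzero scalars. Combining the two cases, if $f$ is not a product of linear forms then we must be in the good case, and $\bwr(f) \le s = \underline{\Kc}(f)$.
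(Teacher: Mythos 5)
Your proof is correct and takes a genuinely different, somewhat streamlined route compared to the paper's.

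The paper cases on the sign of $M = \val_\eps(\alpha)$, giving three branches: for $M \ge 1$ it reduces to debordering $\Pi\Sigma$ to conclude $f$ is a product of linear forms; for $M = 0$ it shows all $\ell_i$ converge, substitutes $\eps = 0$, and applies the \emph{exact} Newton identity to get $e_d(\ELL)$ proportional to $p_d(\ELL)$, yielding the stronger conclusion $\WR(f) \le s$; for $M < 0$ it deduces $\ell_i = \eps\ell_i'$ and runs a more delicate inductive $\eps$-valuation version of the Newton argument to get $\bwr(f) \le s$. You instead case on the valuations $v_i = \val_\eps(\ell_i)$. In your ``good'' branch (all $v_i \ge 0$) a single limiting Newton identity --- the error terms $\alpha e_{d-i} p_i$ have valuation $\ge 0 + 1$ and vanish --- handles both of the paper's $M = 0$ and $M < 0$ subcases at once, though you only obtain $\bwr(f) \le s$ where the paper gets $\WR(f) \le s$ when $M = 0$ (this is of course enough for the statement). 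Your ``bad'' branch (some $v_i < 0$) is also argued differently: rather than invoking $\overline{\Pi\Sigma} = \Pi\Sigma$, you analyze the leading $\eps$-coefficient of $\prod(1+\ell_i)$ directly and use that the units of $\C[\x]$ are scalars to pin down $\prod_{v_i \ge 0}(1+\ell_i|_{\eps=0}) = 1$, concluding $f = \gamma\prod_{v_i < 0}\tilde\ell_i$. Two small points worth making explicit: (i) the $d$-th root $\alpha^{1/d}$ lands outside $\C(\eps)$ unless $d \mid M$, so the reparameterization $\eps \mapsto \eps^d$ you mention is genuinely needed to stay within the definition of $\bwr$; and (ii) in the bad branch, the subcase $M > U$ (which would force $\lim\alpha\prod = 0$ and hence $f$ constant or zero) should be explicitly ruled out by $f$ being a nonconstant homogeneous form, rather than subsumed under ``matching $\eps$-orders.'' Neither affects correctness.
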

\begin{proof}[Proof sketch]
We quickly sketch the proof of the above proposition. In~\cref{eq:borderKc}, if $\alpha = \gamma \cdot \eps^M$, for some $M \ge 1$, then one can show that  $f  \simeq\gamma \eps^{M}\prod_{i=1}^s(1+\ell_i)$ implying $f$ must be a product of linear forms. 

If $M=0$, then $f \simeq \gamma \big(\prod_{i=1}^s(1+\ell_i)-1\big)$. One can verify that if even one of the $\ell_i$ diverges (i.e.~it has $1/\eps$ term), then the $j$-th homogeneous part of $f_\eps$ diverges, where $j$ is the number of diverging $\ell_i$.
Hence all $\ell_i$ converge, and we can set $\eps$ to zero.
Now, since $f$ is homogeneous, each homogeneous degree $i$ part of $f_\eps$ vanishes, $i <d$.
In other words, $e_i(\ELL)=0$ for all $1 \leq i<d$, where $\ELL=(\ell_1,\ldots,\ell_s)$. 
Therefore, the Newton identity (see~\cref{sec:prelimI}): $p_d=(-1)^{d-1}\cdot d \cdot e_d + \sum_{i=1}^{d-1}(-1)^{d+i-1} e_{d-i}\cdot p_i$ gives that
$e_d(\ELL)$ and $p_d(\ELL)$ are same up to multiplication by a scalar.
Hence
$\WR(f)\leq s$. 

If $M < 0$, then one can deduce that for each $i$ we have $\ell_i=\epsilon\ell'_i$ with $\ell'_i\in\C[\eps][\x]_1$. Let $f_{\eps,j}$ denote the homogeneous degree $j$ part of $f_\epsilon := \gamma \eps^{M}\prod_{i=1}^m(1+\epsilon\cdot \ell'_i)$, where by assumption $f \simeq f_{\eps}$. Since $f$ is homogeneous of degree~$d$, for $0 \leq j < d$ we have $f_{\eps,j}\simeq 0$. 

By expanding the product, observe that for all $0 < j < d$ we have 
\[
0 \;\simeq\; f_{\eps,j} \;=\; \gamma\eps^{M}e_j(\eps\ell'_1,\ldots,\eps\ell'_m) \;=\; \gamma\eps^{M+j}e_j(\ell'_1,\ldots,\ell'_m)\;.
\]
By induction, using the Newton's identities (\cref{sec:prelimI}), one can show that for all $1 \leq j < d$, we have $ \eps^{M+j}p_j(\ELL') \simeq 0$, where $\ELL':= (\ell'_1, \cdots, \ell'_m)$. We can use Newton's identities again in the same way
to conclude that $\eps^{M+d}p_{d}(\ELL')\simeq (-1)^{d-1} \cdot d \cdot \eps^{M+d}e_{d}(\ELL')$:
\[
\eps^{M+d}p_{d}(\ELL') 
\;=\;(-1)^{d-1} \cdot d \cdot \eps^{M+d}e_{d}(\ELL') + \sum_{i=1}^{d-1} (-1)^{d-1+i} \underbrace{\eps^{M+d-i} e_{d-i}(\ELL)}_{\simeq 0}\cdot  \underbrace{\eps^{-M}}_{\simeq 0} \cdot
\underbrace{\eps^{M+i} p_i(\ELL')}_{\simeq 0}.
\]
We are done now, since 
\[
f \;\simeq\; f_{\eps,d} \;=\; \gamma\eps^{M+d} e_d(\ELL') \;\simeq\; \gamma\eps^{M+d} \cdot \frac 1 d \cdot (-1)^{d-1} p_d(\ELL')\;,
\]
and
hence $\underline{\WR}(f)\leq s$. This finishes the proposition.
\end{proof}

\subsubsection{Debordering \texorpdfstring{$\overline{\SPS(2)}$}{}}
By definition, $\underline{\Kc}(f) =s \implies f \in \overline{\SPS(2,n,s)}$, and by the discussion above, it seems that understanding $\bwr(f)$ is `almost' good enough to understand $\underline{\Kc}(f)$. But is it sufficient to understand $\overline{\SPS(2,n,s)}$? Unfortunately, the answer is no due to the following.

Consider the polynomial~$f:= x_1 \cdots x_d + y_1 \cdots y_d$. Trivially, $f \in \overline{\SPS(2,2d,d)}$. On the other hand, using simple partial derivatives, one can show that $\bwr(f) = \exp(d)$. By~\cref{prop:mainthmmega}, $\underline{\Kc}(f) \ge \bwr(f) = \exp(d)$. 

Therefore, we return to the following question:

\begin{center}
 {\em How powerful are~~$\overline{\SPS(2)}$ circuits}?~~\footnote{$\SPS(2)$ circuits are not universal: the polynomial $x_1x_2 + x_3x_4 + x_5x_6$ cannot be expressed in this model} 
\end{center}

In~\cite{dutta2022}, Dutta, Dwivedi, and Saxena proved that these circuits are not very powerful, by showing $\overline{\SPS(k)} \subseteq \mathsf{VBP}$, for any constant $k \ge 2$. Formally, they showed the following.

\medskip
\begin{theorem}[Debordering bounded depth-3 circuits~{\cite{dutta2022}}] \label{thm:k=2}
If an $n$-variate $d$-degree polynomial $f$ can be approximated by a $\SPS(k)$ circuit of size $s$, then it can be computed by an ABP of size $(snd)^{\exp(k)}$.
\end{theorem}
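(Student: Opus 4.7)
The strategy is to combine the shift trick with the logarithmic-derivative linearization and to induct on the top fan-in $k$, with each inductive step costing only a polynomial factor in $(s,n,d)$. After computing $\dlog_z$ of the shifted circuit, I will integrate back and invoke \cref{lem:aro} (closure of ROABPs under border) to eliminate $\eps$.

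First, I apply a generic shift $\Phi:\x\mapsto z\x+\a$ with $\a\in\C^n$ chosen so that $\ell_{i,j}(\a)\neq 0$ for every linear form $\ell_{i,j}$ of the approximating $\SPS(k)$ circuit; this is possible since only $s$ affine conditions on $\a$ must be avoided. Each $\Phi(\ell_{i,j})=\alpha_{i,j}+z\tilde{\ell}_{i,j}$ then has a nonzero constant term, and hence is invertible in $\C[\eps^{\pm 1}][\x][[z]]$. Since $\deg f\le d$, everything happens modulo $z^{d+1}$, and $f$ is recovered from $\Phi(f)$ by setting $z=1$ and undoing the translation.

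Second, I compute $\dlog_z\Phi(f)$ inductively on $k$. The base case $k=1$ is direct: by \cref{eq:dlog-equation}, $\dlog_z\prod_j\Phi(\ell_{1,j})=\sum_j\dlog_z\Phi(\ell_{1,j})$, and by \cref{eq:dlog-linear-mod} each summand is a truncated power series in $z$ whose coefficients are powers of linear forms, so the duality trick (\cref{lem:duality}) yields an ROABP of width $\poly(nd)$. For $k>1$, write $f=T_1+H$ with $H\in\SPS(k-1)$ and use
\[
\dlog_z\Phi(f) \;=\; \dlog_z\Phi(T_1)\;+\;\dlog_z\!\Bigl(1+\tfrac{\Phi(H)}{\Phi(T_1)}\Bigr)\;,
\]
expanding the second term as the truncated series $\sum_{m\geq 1}(-1)^{m-1}u^{m-1}\partial_z u$ modulo $z^{d+1}$ with $u=\Phi(H)/\Phi(T_1)$. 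Each $u^m$ is a product of $m$ copies of a $\SPS(k-1)$ approximant divided by an invertible product of linear forms; by the inductive hypothesis on the numerator and direct ROABP computation for the denominator, the result is an ABP of size polynomial in $(snd)^{\exp(k-1)}$, which telescopes to the claimed $(snd)^{\exp(k)}$ bound. Integration back via $\Phi(f)\equiv \Phi(f)|_{z=0}\cdot\exp\bigl(\int_0^z\dlog_z\Phi(f)\,dz'\bigr)\pmod{z^{d+1}}$ costs only $\poly(d)$, and \cref{lem:aro} then allows me to replace each border ROABP by an exact ROABP over $\C$ and pass to the limit $\eps\to 0$.

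The main obstacle will be controlling the inductive step: the power-series expansion of $\dlog_z(1+u)$ is valid only when $u$ has positive $z$-valuation (equivalently, when $\Phi(T_1)^{-1}$ is well-defined and $\Phi(H)/\Phi(T_1)$ vanishes at $z=0$), and one must decide carefully which $T_i$ to peel off first so that the $\eps$-valuations and $z$-valuations of the $\Phi(T_i)$ interact correctly in the border. A bookkeeping of both valuations, combined with possibly pre-dividing by the leading $\eps$-power of a suitable $T_i$, is delicate; it is precisely this iterated polynomial blow-up per level of the recursion, rather than any single bad step, that yields the exponential dependence on $k$ in the final bound.
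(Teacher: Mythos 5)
There is a genuine gap. The heart of the proposal is the inductive step
\[
\dlog_z\Phi(g) \;=\; \dlog_z\Phi(T_1)\;+\;\dlog_z\!\Bigl(1+\tfrac{\Phi(H)}{\Phi(T_1)}\Bigr)
\]
followed by the geometric-series expansion $\dlog_z(1+u)=\sum_{m\ge 1}(-1)^{m-1}u^{m-1}\partial_z u$ with $u=\Phi(H)/\Phi(T_1)$. This expansion is a $z$-adic geometric series, so it requires $u|_{z=0}=0$. But the generic shift $\Phi:\x\mapsto z\x+\a$ is chosen precisely so that each $\Phi(\ell_{i,j})|_{z=0}=\ell_{i,j}(\a)\neq 0$; consequently $\Phi(H)|_{z=0}$ and $\Phi(T_1)|_{z=0}$ are both nonzero constants in $\C[\eps^{\pm 1}]$, so $u|_{z=0}\neq 0$ and the series does not converge modulo $z^{d+1}$. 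You flag this as a ``main obstacle'' and suggest that choosing which $T_i$ to peel off might help, but it cannot: whichever $T_i$ you divide by, the leftover sum $H$ still evaluates to a nonzero constant at $z=0$ under the generic shift. One can normalize by writing $1+u=(1+u|_{z=0})(1+u')$ with $u'$ of positive $z$-valuation, but then the resulting powers $(u')^m$ are not $\SPS(k-1)$ approximants, and your induction hypothesis --- which is only about limits of $\SPS(k-1)$ circuits --- does not apply to them. (Relatedly, the intermediate object $H=T_2+\cdots+T_k$ need not converge as $\eps\to 0$, so ``inductively handle $H\in\SPS(k-1)$'' is not well-posed in the border.)

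The paper's DiDIL proof sidesteps both issues by reorganizing the recursion. First, it never expands $\dlog(1+u)$: the ``divide'' step divides the whole identity $\Phi(g)=\sum_i\Phi(T_i)$ by the full product $\tilde{T}_k$ so that $\Phi(T_k)/\tilde{T}_k=\eps^{-a_k}$ is $z$-independent, and the subsequent ``derive'' $\partial_z$ kills it outright --- this is how the fan-in drops by one. The $\dlog$ operator is applied only to products of linear forms $\prod_j(c_{i,j}+z\tilde{\ell}_{i,j})$, where each factor has a nonzero $z$-constant coefficient, so each $\dlog(c_{i,j}+z\tilde{\ell}_{i,j})$ has a legitimate power-series expansion. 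Second, the paper replaces your induction on $\SPS(k-1)$ by an induction inside the bloated depth-5 model $\GenSES(k,s)$ of expressions $\sum_{i=1}^k(U_i/V_i)(P_i/Q_i)$ with $U_i,V_i\in\Pi\Sigma$ and $P_i,Q_i\in\SES$; the key structural lemma is that this model is closed under one divide-and-derive step. This is the missing ingredient in your proposal: without an enlarged model stable under the recursion, the inductive hypothesis does not match the objects actually produced. (A minor further simplification in the paper: after divide-and-derive, the integration step is a direct term-by-term antiderivative of the $z$-power series, rather than the $\exp\circ\int$ identity you propose, which is a correct formal identity but harder to carry through the $\eps$-limit.)
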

The proof is quite complicated and uses a technique called \textsf{DiDIL}. We will sketch a detailed proof for $k=2$ in this section, and how to generalize it to general $k$ in~\cref{sec:proof-general-k}.

\begin{proof}[Proof sketch of~\cref{thm:k=2}]
Let us fix the basic notation:
\begin{equation}
g \;:=\; T_1 \;+\;T_2 \;=\; f + \epsilon \cdot S\;,
\end{equation} 
where the polynomials~$T_1, T_2 \in \C[\epsilon^{\pm 1}][\x]$,  and each of them is a product of linear polynomials $\Pi\Sigma$ over $\C[\epsilon^{\pm 1}]$, and $S \in \C[\epsilon][\x]$. Suppose, $\val_{\epsilon}(T_i) = - a_i$, i.e.~$T_i = \epsilon^{-a_i} \cdot \ell_{i,1} \cdots \ell_{i,s}$, where $a_i \in \Z$, and each $\ell_{i,j} \in \C[\epsilon][\x]$ are linear polynomials (in $\x$) such that each $\ell_{i,j,0} := \ell_{i,j} \rvert_{\epsilon =0}$ is nonzero. 

One can assume that $a_1 = a_2 > 0$: If one of them is $\le 0$, then for the limit to exist, each $a_i$ has to be nonpositive, implying $f \in \SPS(2) \subseteq \VBP$. And, if $a_1, a_2 > 0$, but $a_1 \ne a_2$, then clearly $\val_{\epsilon} (T_1 + T_2) = \min (-a_1, -a_2) < 0$, a contradiction. Therefore, we proceed with $a:= a_1 = a_2 > 0$. 

Let us define a homomorphism $\Phi$ as follows: 
\begin{equation}\label{eq:phi-map}
 \Phi: \C[\epsilon^{\pm 1}][\x] \to \C[\epsilon^{\pm 1}][\x, z]\,,\;\;\text{such that}\;\;x_i \mapsto z \cdot x_i + \alpha_i\;,   
\end{equation}
where $\alpha_i$ are {\em randomly chosen} from $\C$. Essentially, $\a$ ensures that $\ell_{i,j,0}(\a) \neq 0$, for $i \in [2], j \in [s]$. We will argue that $\Phi(f)$ has a $\poly(snd)$-size ABP, which would imply the same for $f$.

Let~$\Phi(T_i) = \epsilon^{-a} \cdot \tilde{T}_i$, where $\tilde{T}_i := \Phi(\ell_{i,1}) \cdots \Phi(\ell_{i,s}) \in \C[\eps][\x]$. Dividing both sides by $\tilde{T}_2$ and subsequently differentiating with respect to $z$, we get  
\begin{equation*}
 \Phi(f)/\tilde{T}_{2} \,+\, \epsilon \cdot \Phi(S)/\tilde{T}_{2} \,=\, \epsilon^{-a} \,+ \,\Phi(T_{1})/\tilde{T}_{2}    
\end{equation*}
\begin{equation}
\implies
\partial_z\,\left(\Phi(f)/\tilde{T}_2\right) + \epsilon \cdot \partial_z\,\left(\Phi(S)/\tilde{T}_2\right) \,=\, \partial_z\, (\Phi(T_1)/\tilde{T}_2) 
\label{eq:didi-step-k2}    
\end{equation}

This has reduced the number of summands on the right-hand side to $1$, unfortunately, the right-hand surviving summand has become more complicated now. Further, it seems that we have no control over the coefficient structure of the $\epsilon^0$-term. 

Let $\cf_{\epsilon^0}(\tilde{T}_i) =: t_i$. Observe that $t_i \in \C[\x,z]$ is a polynomial that is a product of linear polynomials, in particular, by simple interpolation, one can deduce that each $t_{i,j}$ can be computed by a $\poly(sn)$-size ABP, where $t_i := \sum_{j=0}^s t_{i,j} z^i$. Further, $t_i\rvert_{z=0}$ is a nonzero constant, which is ensured by the choice of $\a$. Hence, it is not hard to conclude that 
\begin{equation}\label{def:f1}
f_1 \;:=\; \partial_z\, (\Phi(T_1)/\tilde{T}_2)   \simeq \partial_{z}(\Phi(f)/t_{2})\;.
\end{equation}
Moreover, $f_1 \in \F(\x)[[z]]$. This also establishes that $\val_{\eps}(\partial_z\, (\Phi(T_1)/\tilde{T}_2))=0$.  Here is an important claim.

\medskip
\begin{claim}\label{claim:k=2-debordering}
Each~$\cf_{z^i}(f_1)$, for $0 \le i < d$, can be computed by a ratio of two $\poly(snd)$-size ABPs.     
\end{claim}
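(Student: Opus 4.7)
The plan is to exhibit $f_1$ as a ratio $M/D$ of small-ABP polynomials over $\C[\eps^{\pm 1}][\x,z]$, and then extract $\cf_{z^i}$ by formal power-series inversion of $D$ in the variable $z$. The random shift $\a$ is designed precisely so that $D$ has a scalar-in-$\x$, nonvanishing constant term in $z$, which is what keeps the inversion polynomial-size.

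By the quotient rule,
\begin{equation*}
f_1 \;=\; \frac{\tilde T_2\,\partial_z \Phi(T_1) \,-\, \Phi(T_1)\,\partial_z \tilde T_2}{\tilde T_2^2} \;=:\; \frac{M}{D}.
\end{equation*}
The denominator $D = \prod_{j=1}^s \Phi(\ell_{2,j})^2$ is a product of $2s$ affine linear forms in $(z,\x)$ and has a trivial $\poly(sn)$-size ABP. The numerator is a $\SPS$ expression of top fanin $O(s)$ and bottom fanin $O(sn)$ --- use $\partial_z \tilde T_i = \sum_{k=1}^s (\partial_z \Phi(\ell_{i,k})) \prod_{k' \ne k} \Phi(\ell_{i,k'})$, and similarly for $\partial_z \Phi(T_1)$ --- so it too admits a $\poly(sn)$-size ABP over $\C[\eps^{\pm 1}]$. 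Moreover, $D(\x, 0) = \prod_{j=1}^s \Phi(\ell_{2,j})(\x,0)^2$ is an element of $\C[\eps]$ (no $\x$), with $\eps = 0$ limit equal to $(\prod_j \ell_{2,j,0}(\a))^2 \ne 0$ by choice of $\a$.

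Writing $\Phi(\ell_{2,j}) = A_{2,j}(\eps) + z B_{2,j}(\eps,\x)$ and $L_{2,j} := B_{2,j}/A_{2,j}$, one has the explicit expansion
\begin{equation*}
\Phi(\ell_{2,j})^{-2} \;=\; A_{2,j}^{-2}\sum_{k \ge 0} (-1)^k (k+1)\, z^k L_{2,j}^k,
\end{equation*}
whose truncation modulo $z^d$ has each $z^k$-coefficient equal to a scalar (in $\C(\eps)$) times a power of the single linear form $L_{2,j}$. Multiplying the $s$ truncated factors modulo $z^d$ is realized as an iterated product of $s$ lower-triangular $d \times d$ Toeplitz matrices over $\C(\eps)[\x]$ --- an ABP of width $d$ and length $s$ --- yielding $\cf_{z^k}(D^{-1}\bmod z^d) = Q_k/A_0(\eps)^2$, where $A_0 := \prod_j A_{2,j}$ is a scalar in $\x$ and $Q_k \in \C[\eps][\x]$ has a $\poly(snd)$-size ABP. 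Convolving with the $z$-coefficients of $M$ (each $\cf_{z^j}(M) \in \C[\eps^{\pm 1}][\x]$ being a $\poly(sn)$-size ABP via $z$-interpolation on the global ABP for $M$),
\begin{equation*}
\cf_{z^i}(f_1) \;=\; \frac{1}{A_0(\eps)^2}\sum_{j=0}^i \cf_{z^j}(M)\cdot Q_{i-j},
\end{equation*}
exhibits $\cf_{z^i}(f_1)$ as a ratio of two $\poly(snd)$-size ABPs, the denominator $A_0(\eps)^2$ being a polynomial in $\eps$ alone.

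The only genuine subtlety is the $\eps^{-a}$ hidden inside $\Phi(T_1) = \eps^{-a}\tilde T_1$: a priori $M$ has $\val_\eps = -a$, but the identity $\tilde T_1 + \tilde T_2 \in \eps^a \C[\eps][\x,z]$ (the very condition that defined $a$) forces $\tilde T_2\,\partial_z\tilde T_1 - \tilde T_1\,\partial_z\tilde T_2 \in \eps^a\C[\eps][\x,z]$. Hence $M \in \C[\eps][\x,z]$ and $\val_\eps(\cf_{z^i}(f_1)) = 0$, so the ratio above specializes cleanly at $\eps=0$ to a polynomial in $\x$ with a $\poly(snd)$-size ABP --- the residual denominator $A_0(0)^2 \in \C^\times$ is a harmless nonzero constant. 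The main technical obstacle is the Toeplitz-product bookkeeping for $D^{-1} \bmod z^d$; beyond that, everything is assembly.
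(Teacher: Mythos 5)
Your proposal correctly identifies the quotient-rule decomposition $f_1 = M/D$ and the algebraic observation that $M = \tilde T_2\,\partial_z H - H\,\partial_z\tilde T_2 \in \C[\eps][\x,z]$ (after substituting $\tilde T_1 = \eps^a H - \tilde T_2$); that part is clean. The power-series inversion of $D$ via Toeplitz matrices is also fine in principle, since each $A_{2,j}(0) \ne 0$ makes the matrix entries converge.

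The gap is in the final step: ``the ratio above specializes cleanly at $\eps=0$ to a polynomial in $\x$ with a $\poly(snd)$-size ABP.'' The ABP you build for $M$ (and hence for $\cf_{z^j}(M)$) is an ABP \emph{over $\C[\eps^{\pm 1}]$}, using $\Phi(T_1) = \eps^{-a}\tilde T_1$ directly; its edge weights carry the $\eps^{-a}$ factor. The mathematical fact that the polynomial $M$ happens to lie in $\C[\eps][\x,z]$ does not allow you to substitute $\eps = 0$ into that ABP. Concretely, $\cf_{\eps^0}\bigl(\cf_{z^j}(M)\bigr) = \cf_{z^j}\bigl(t_2\,\partial_z\Phi(f) - \Phi(f)\,\partial_z t_2\bigr)$ depends on $\Phi(f)$ — the very polynomial whose ABP complexity is at stake — and there is no a priori small ABP for it. The cancellations that make $\cf_{z^i}(f_1)$ small happen only after convolving with the full $D^{-1}$ expansion, not term by term. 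What you need, and what your proof silently assumes, is a debordering statement for the class in which $\cf_{z^j}(M)$ lives; since $M$ is a $\Sigma\Pi\Sigma$ of top fan-in $\Theta(s)$ over $\C[\eps^{\pm 1}]$, this would amount to debordering $\overline{\SPS(\Theta(s))}$, which is not available (it is essentially the theorem this entire section is working to prove for \emph{constant} fan-in, and would otherwise require something like $\overline{\VBP} = \VBP$).

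The paper sidesteps this precisely by choosing a different decomposition. Applying $\dlog$ instead of the quotient rule gives $\partial_z(\Phi(T_1)/\tilde T_2) = \eps^{-a}(\tilde T_1/\tilde T_2)\bigl(\dlog\tilde T_1 - \dlog\tilde T_2\bigr)$. The first factor converges to $t_1/t_2$, a ratio of products of linear forms over $\C$ with small honest ABPs. The $z^j$-coefficients of the second factor (after absorbing $\eps^{-a}$ into $Q_j$) are sums of powers of linear forms — a $\Sigma\wedge\Sigma$ expression — with $\val_\eps(Q_j) \ge 0$, so their $\eps^0$-coefficients $Q_{j,0}$ lie in $\overline{\SES}$. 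The debordering $\overline{\SES}\subsetneq\VBP$ (\cref{bordervw-in-vbp}) is exactly what turns $Q_{j,0}$ into an ABP over $\C$, and it is the single nontrivial ingredient your route is missing: the $\dlog$ factorization isolates the genuinely ``border'' content inside a class known to deborder, whereas the quotient rule leaves it in a class that does not. (Your formula $\cf_{z^k}(D^{-1}\bmod z^d) = Q_k/A_0(\eps)^2$ also has the denominator wrong — the $\eps$-denominator grows as $\prod_j A_{2,j}^{k_j+2}$ over partitions, so it should be a higher power of $A_0$ — but that is a minor repairable error compared to the specialization issue.)
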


Let us first argue why \autoref{claim:k=2-debordering} is sufficient to prove that $\Phi(f)$ can be computed by a polynomial-size ABP. Let us assume that $f_1 = \sum_{i \ge 0} C_i z^i$, where $C_i \in \C(\x)$, and by~\autoref{claim:k=2-debordering}, each $C_i$, for $0 \le i < d$ can be computed by a ratio of two polynomial-size ABPs. Then,  by definite integration, we have
\begin{equation} \label{eq:pfidea-thm1-finaleq}
   \Phi(f)/t_2 - \left(\Phi(f)/t_2\right)\rvert_{z=0} \;=\; \sum_{i \ge 1} (C_i/i) \cdot z^i\;. 
\end{equation}
What is $\Phi(f)/t_2 \rvert_{z=0}$? As $\Phi(f)/t_2 \in \F(\x)[[z]]$, clearly~$\Phi(f)/t_2\,\rvert_{z=0} \in \F(\x)$. But in fact, by assumption $\Phi(T_1)$ and $\tilde{T}_2$, evaluated at $z=0$ are non-zero elements in $\C[\eps^{\pm 1}]$. Considering the $\eps^0$--term in~\cref{eq:didi-step-k2}, we get:
\begin{equation} \label{eq:pfidea-thm1-z1=0-lim}
    \Phi(f)/t_2 \,\rvert_{z=0} \;\simeq\;\left(\Phi(T_1)/\tilde{T}_2\,\rvert_{z=0} + \epsilon^{-a}\right) \;\simeq\; c\;.
\end{equation}
for some $c \in \C$. Therefore, \cref{eq:pfidea-thm1-finaleq} gives us that $\Phi(f) = \left(\sum_{i \ge 0} C'_i \cdot z^i\right) \cdot t_2$, where $C_i':= C_i/i$, for $i \ge 1$ and $C_0 = c$. In particular,
\[
\cf_{z^r}(\Phi(f))\;=\; \left(\sum_{i \ge 0} C'_i \cdot z^i\right) \cdot \left(\sum_{j \ge 0} t_{2,j} \cdot z^j\right)\;=\;\sum_{i+j=r} C'_i \cdot t_{2,j}\;.
\]
Since both the addition and multiplication of two ABPs incur only an additive blow-up in the size, clearly $\cf_{z^r}(\Phi(f))$, for each $0 \le r \le d$ can be written as a ratio of two $\poly(snd)$-size ABPs. However, $\Phi(f) \in \C[\x,z]$, implying $\cf_{z^r}(\Phi(f)) \in \C[\x]$. Therefore, one can use the standard division elimination trick by Strassen~\cite{strassen1973vermeidung}, to conclude that each coefficient can be computed by a $\poly(snd)$-size ABP. 

This concludes that $\Phi(f)$, as well as $f$ can be computed by a $\poly(snd)$-size ABP. Therefore, from now on, we will only focus on proving~\autoref{claim:k=2-debordering}.

\begin{proof}[Proof sketch of~\autoref{claim:k=2-debordering}]

As argued above, we want to understand the expression $\pderiv{ \Phi(T_1) / \tilde{T}_2}{z}$. Here, we use \emph{logarithmic derivative}, i.e.~the $\dlog$ operator which has many useful properties; see~\cref{sec:prelimI}. Recall the notations:  $\Phi(T_i) = \epsilon^{-a_i} \cdot \tilde{T}_i$, where $\tilde{T}_i := \Phi(\ell_{i,1}) \cdots \Phi(\ell_{i,s})$. Assume that~$\Phi(\ell_{i,j}) = c_{i,j} + z \cdot \tilde{\ell}_{i,j}$, for some linear form $\tilde{\ell}_{i,j} \in \C[\epsilon][\x]_1$. Then, the expression $\pderiv{ \Phi(T_1) / \tilde{T}_2}{z}$ can be re-written as
\begin{equation*}
 \pderiv{ \Phi(T_1) / \tilde{T}_2}{z}\;  =\; \epsilon^{-a} \cdot \partial_{z}(\tilde{T}_1/\tilde{T}_2)    
\end{equation*}
\begin{equation}\label{eq:crucial-k-2}
\implies \epsilon^{-a} \cdot (\tilde{T}_1/\tilde{T}_2) \cdot \dlog \left(\tilde{T}_1/\tilde{T}_2\right) \\
    \;  =\;  \epsilon^{-a} \cdot \left(\tilde{T}_1/\tilde{T}_2\right) \cdot \left(\dlog(\tilde{T}_1) - \dlog (\tilde{T}_2)\right)\;.     
\end{equation}


Since the $\dlog$ operator distributes the product terms (see~\cref{sec:prelimI}), by the discussion in~\cref{sec:prelimI} and \cref{eq:dlog-linear-mod}, we get that
\begin{displaymath}\label{dlog-ses}
\dlog (\tilde{T}_i) \;=\; \sum_{j=1}^s\, \dlog(\Phi(\ell_{i,j}))\;=\;\sum_{j=1}^s\,\left(\frac{\tilde{\ell}_{i,j}}{c_{i,j} + z \cdot \tilde{\ell}_{i,j}}\right) \,=\, \sum_{j \ge 0} P_{i,j}(\x,\epsilon) \cdot z^j\;.
\end{displaymath}
In the above expression, each $P_{i,j}$ can be computed by a $\SES$ circuit of size $O(snj)$, over $\C(\eps)$. Define $Q_j := \epsilon^{-a} \cdot (P_{1,j} - P_{2,j})$. Similarly, each $Q_j$ can be computed by a $\SES$ circuit of size $O(snj)$, over $\C[[\epsilon]]$. Further, by definition, $\val_{\epsilon}(\tilde{T}_1/\tilde{T}_2)=0$, and  $\tilde{T_1}/\tilde{T}_2 \simeq t_1/t_2$. Therefore, looking at~\cref{eq:crucial-k-2}, we get 
\begin{equation}\label{eq:blah-2}
\cf_{\epsilon^0}\left(\pderiv{\Phi(T_1) / \tilde{T}_2}{z}\right)\; =\;\cf_{\epsilon^0}\left(\left(\tilde{T}_1/\tilde{T}_2\right) \cdot \left(\sum_{j \ge 0} Q_j z^j \right)\right)    = (t_1/t_2) \cdot \left(\cf_{\epsilon^0} \left( \sum_{j \ge 0} Q_j z^j \right)\right)\;.   
\end{equation}
We have already argued that $\cf_{\epsilon^0}(\tilde{T}_1/\tilde{T}_2) = t_{1}/t_2$, where $t_i = \sum_{j=0}^s t_{i,j}z^j$, and each $t_{i,j} \in \C[\x]$ can be computed by a $\poly(sn)$-size ABP. 

\cref{eq:blah-2} shows that $\val_{\eps}(Q_j) \ge 0$, and further $Q_{j,0}:= \cf_{\eps}(Q_j)$ can be computed by a $\poly(snj)$-size $\overline{\SES}$. This further implies that each of them can be computed by a $\poly(snj)$-size ABP (see~\cref{bordervw-in-vbp}). Unfolding \cref{def:f1} and \cref{eq:blah-2}, we get
\begin{equation}\label{eq:main-abp-k=2}
f_1\;=\;\cf_{\epsilon^0}\left(\partial_z\left(\frac{\Phi(T_1)}{\tilde{T}_2}\right)\right)\; =\;\frac{\sum_{j=0}^s t_{1,j}z^j}{\sum_{j=0}^s t_{2,j} z^j} \cdot \left(\sum_{j \ge 0} Q_{j,0} z^j\right) = \sum_{i \ge 0} f_{1,i} z^i\;.
\end{equation}
Since each $t_{i,j}$ and $Q_{j,0}$ can be computed by polynomial-size ABPs, by simple power series expansion, we get that each $f_{1,j}$ can also be computed by a ratio of two polynomial-size ABPs, proving~\autoref{claim:k=2-debordering}, as desired.
\end{proof}
This finishes a detailed proof sketch of $\approxbar{\SPS(2)} \subseteq \VBP$.
\end{proof}
\begin{remark}
If one can improve the debordering for $\overline{\VW}$, and show that $\overline{\VW} \subseteq \mathsf{VF}$, then \autoref{claim:k=2-debordering} shows that $\cf_{z^i}(f_1)$ can be written as a ratio of two polynomial-size formulas, improving the current debordering result to $\overline{\SPS(2)} \subseteq \mathsf{VF}$. 

Further, from~\autoref{obs:1} and \cref{pro:introkumar}-\ref{prop:mainthmmega}, it is {\em necessary} that $\overline{\SPS(2)} \subseteq \mathsf{VF}$ implies $\overline{\VW} \subseteq \mathsf{VF}$. Therefore, we can conclude the following interesting phenomenon:
\[
\overline{\SPS(2)} \subseteq \mathsf{VF}\;\iff\;\overline{\VW} \subseteq \mathsf{VF}\;.\]
\end{remark}
\subsubsection{Debordering \texorpdfstring{$\approxbar{\SPS(k)}$.}{}} \label{sec:proof-general-k}

We build our argument by starting with the base case $k = 2$ and then extending it to all $k \ge 3$ using induction. But rather than working directly in this inductive framework, we introduce a more powerful (and convenient) model: a depth-5 circuit class called $ \GenSES(k,s) := \SPSSES{ [k] }$; they compute elements of the form 
\[
\sum_{i=1}^k (U_i/V_i) \cdot (P_i/Q_i)\,,
\]
where $U_i, V_i \in \Pi\Sigma$, and $P_i, Q_i \in \SES$, and the circuit (with division allowed) has size $s$. Of course, it trivially subsumes $\SPSfanin{ [k] }{}$.

\textbf{1. Apply $\Phi$ and repeat the divide-and-derive steps.}
We begin with a polynomial $f \in \overline{\SPS(k)}$. We apply the map $\Phi$ (see~\cref{eq:phi-map}), and then perform the divide-and-derive step - similar to what is done in \cref{eq:didi-step-k2} - a total of $k-1$ times. After these steps, we obtain a polynomial $f_{k-1}$ that can be expressed as a ratio of two algebraic branching programs (ABPs), each of polynomial-size; this is similar to~\autoref{claim:k=2-debordering}.

\textbf{2. Why this stays within our bloated model.}
In the base case $k = 2$, Equation~\textup{\ref{eq:main-abp-k=2}} tells us that 
\[
f_1 \;\simeq\; \left(\overline{\Pi\Sigma}/\overline{\Pi\Sigma}\right) \cdot \overline{\SES}\,
\]
where the $\Pi\Sigma$ terms correspond to some polynomials $\tilde{T}_i$. The crucial insight—highlighted in \cref{dlog-ses} is that the coefficients of $z^i$ in $\dlog(\Pi\Sigma)$ have polynomial-size $\SES$ representations over $\C[\epsilon^{\pm 1}]$. Since the same holds for $\dlog(\SES)$, where the coefficients can be written as a ratio of polynomial-size $\SES$ circuits, it follows that $\GenSES(k, s)$ is {\em closed} under the \textsf{DiDIL} process. This closure is key: it ensures that the entire transformation stays within a controlled model, allowing us to establish an upper bound on $\overline{\GenSES(k, \cdot)}$.

\textbf{3. Final step: substitute $z = 0$.}
In the case $k = 2$, we analyzed the size of $\Phi(f)$ by setting $z = 0$ and isolating the $\epsilon^0$-coefficient (see \cref{eq:pfidea-thm1-z1=0-lim}).
Doing the same for the general case yields  
\[
\overline{\GenSES(k,\cdot)}\rvert_{z=0} \;\simeq\; \sum_{i \in [k]}\, \overline{c_i \cdot (P_i/Q_i)}\rvert_{z=0} \;\simeq \; \overline{\SES}/\overline{\SES}\;.\]
In the above, where each $c_i \in \mathbb{C}(\epsilon)$. This is because $\Pi\Sigma\rvert_{z = 0}$ lies in $\C[\epsilon^{\pm 1}]$, by the way $\Phi$ is defined. This structure is preserved across all inductive steps, so that $(\Pi\Sigma)/(\Pi\Sigma)\rvert_{z = 0} \in \C(\epsilon)$. Moreover, since $\approxbar{\SES}$ is {\em closed} under both addition and multiplication, the overall expression remains in the form of an $\approxbar{\SES}/\approxbar{\SES}$ circuit, with only a multiplicative blow-up in size.

\textbf{4. Wrapping up via interpolation.}
Finally, since $\approxbar{\SES} \subseteq \VBP$, by~\cref{bordervw-in-vbp}, we can use the same interpolation-based argument as in the base case (see \autoref{claim:k=2-debordering}) to complete the proof for $k \ge 3$. Since, each step incurs a multiplicative blowup in size, the final size becomes $s^{\exp(k)}$, i.e.~the proof yields polynomial-size upper bound when $k$ is constant, yielding $\approxbar{\SPS(k)} \subseteq \VBP$. For more details, we refer to~\cite{dutta2022,dutta2022tale}.

The following question remains open.

\medskip
\begin{question}
Is $\overline{\SPS(\log \log n)} \subseteq \mathsf{VBP}$?
\end{question}

\subsubsection{Exponential-hierarchy for Border Bounded Depth-3 Circuits} 

As discussed above, \cref{thm:k=2} shows that $\overline{\SPS(k)} \subseteq \mathsf{VBP}$. How tight is the debordering result? In~\cite{dutta2022separated} Dutta and Saxena proved that any $\overline{\SPS(k)}$ circuit computing an $n \times n$ symbolic determinant requires $\exp(n)$ size. In fact, they proved a far stronger result.

\medskip
\begin{theorem}[{\cite[Theorem 2]{dutta2022separated}}] \label{thm:k=2lb}
For any $k \ge 2$, the generalized inner product polynomial $P_{k+1,d} := \sum_{i=1}^{k+1} \Pi_{j=1}^d x_{(i-1)d+j}$  requires $\exp(d)$-size $\overline{\SPS(k)}$-circuits.    
\end{theorem}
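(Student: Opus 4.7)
The plan is to establish the lower bound via a $\overline{\SPS(k)}$-specific complexity measure, rather than routing through the $\VBP$ debordering of~\cref{thm:k=2} (which would only yield $s \geq \exp(d^{1/\exp(k)})$ and is therefore too weak). Suppose $P_{k+1,d} \in \overline{\SPS(k)}$ of size $s$, witnessed by $\sum_{i=1}^k T_i = P_{k+1,d} + \eps S$ with $T_i = \eps^{-a_i} \prod_j \ell_{i,j}$, and $\ell_{i,j}|_{\eps=0} \ne 0$. After the usual normalization (as in the proof of~\cref{thm:k=2}), one may assume all $a_i$ are equal to some fixed $a \ge 0$.

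The core strategy is to identify a complexity measure $\mu$ on $\C[\x]$ that (i) is subadditive and bounded by $\poly(\size(T))$ on any single product-of-linear-forms $T = \prod_j \ell_j$, uniformly in $\deg T$; (ii) is lower-semicontinuous, so that if $f_\eps \to f$ then $\mu(f) \le \liminf \mu(f_\eps)$, ensuring the bound transfers from $\SPS(k)$ to $\overline{\SPS(k)}$; and (iii) satisfies $\mu(P_{k+1,d}) \ge \exp(d)$. The natural candidate is a shifted-partial-derivative-type measure adapted to the disjoint-block partition $X_1, \ldots, X_{k+1}$ of the variables of $P_{k+1,d}$: roughly, we look at the dimension of the space spanned by $\x^\alpha \partial^\beta f$ for multi-indices $\alpha,\beta$ supported in carefully chosen combinations of blocks. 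Properties (i) and (ii) would come from standard structural arguments on products of linear forms and the Zariski-closedness of the condition $\mu(f) \le r$; property (iii) would come from the combinatorial fact that $\partial^\beta P_{k+1,d}$ for $\beta$ supported in block $X_l$ yields $2^d$ essentially distinct monomials per block, giving $(k+1) \cdot 2^d$ linearly independent directions in total.

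The final step bundles these together: $\mu(P_{k+1,d}) \le k \cdot \max_i \mu(T_i) \le k \cdot \poly(s)$ by (i)-(ii), while $\mu(P_{k+1,d}) \ge \exp(d)$ by (iii), forcing $s \ge \exp(d)$. The pigeonhole intuition is that the $k+1$ disjoint blocks of $P_{k+1,d}$ contribute $k+1$ independent ``directions'' of hardness, while only $k$ products of linear forms are available in $\SPS(k)$; each such product can, in a precise measure-theoretic sense, ``cover'' only one block, so one block must be left uncovered and forces either a contradiction (in the exact setting) or an exponential blowup (in the border setting).

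The main obstacle is the design and verification of the measure $\mu$. The tension between conditions (i) and (iii) is severe: naïve partial derivative dimension bounds for $\prod_j \ell_j$ grow as $2^{\deg T}$, which could be $2^s$ in the border regime where $\deg T$ is unconstrained and can greatly exceed $d$. Getting around this likely requires exploiting the fact that $P_{k+1,d}$ is homogeneous of degree $d$ and ``set-multilinearly rigid'' across the $k+1$ blocks, so that one can restrict to a measure that only counts ``block-concentrated'' derivations — on such derivations a single $T_i$ contributes essentially at most one block's worth of dimension regardless of $\deg T_i$, while $P_{k+1,d}$ contributes $k+1$ independent block-dimensions. Making this block-concentration argument lower-semicontinuous under $\eps \to 0$ is the most delicate piece, and is the step where the disjoint-variable structure of the $m_l$'s must be leveraged decisively (for instance, by a generic coordinate shift analogous to $\Phi$ in~\cref{eq:phi-map} combined with a specific coefficient-extraction that isolates the inter-block contributions).
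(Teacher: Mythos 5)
There is a genuine gap: you have not constructed the measure $\mu$, and as you acknowledge, its existence is itself the entire difficulty. The wishlist (subadditive, lower-semicontinuous, polynomially bounded on a single $\Pi\Sigma$ term \emph{independently of its degree}, yet $\exp(d)$ on $P_{k+1,d}$) is not known to be realizable by any single rank-type measure, and the ``block-concentrated derivations'' idea is too vague to close the tension you identify. In particular, a border $\Pi\Sigma$ term can have degree $s \gg d$ and its relevant degree-$d$ homogeneous component (after cancellations) need not be a product of linear forms at all, so a partial-derivative measure evaluated on $\lim_{\eps\to 0} T_i$ individually does not even make sense --- the limits of the $T_i$'s do not exist separately. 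That is the technical heart the paper has to confront and your measure framework simply does not.

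The paper's actual route is quite different. Rather than seeking a single measure that is simultaneously small on each $T_i$ and large on $P_{k+1,d}$, it first splits into cases according to whether the $\eps$-free parts $\ell_{i,j}|_{\eps=0}$ are homogeneous: the easy and intermediate cases are ruled out by reducing modulo ideals generated by a few such linear forms and deriving a contradiction with the homogeneous structure of $P_{k+1,d}$. In the remaining ``all-non-homogeneous'' case, it applies the scaling map $\Phi: x_i \mapsto z x_i$ (notably \emph{without} the affine shift you suggest; the shift is only needed in the debordering direction, not here) followed by divide-and-derive and $\dlog$, which converts the question into showing that $P_{3,d}$ would then have a small $\overline{\Sigma\wedge\Sigma}$ (border Waring rank) representation, contradicting a standard partial-derivative lower bound against $\overline{\SES}$. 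So the exponential lower bound against $\overline{\SES}$ is the only ``measure'' used, and it is invoked only after a structural reduction; the $\Pi\Sigma$ gates are handled by $\dlog$ and power series, not by bounding their measure directly. Your proposal gives no analogue of the case split or of the $\dlog$-based reduction, and without them the key obstacle you flag (unbounded $\deg T_i$) remains unresolved.
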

Since, $P_{k+1,d} \in \SPS(k+1)$, this shows an exponential gap between $\overline{\SPS(k+1)}$ and $\overline{\SPS(k)}$, for any $k \ge 2$. Below, we will sketch it for $k=2$. This also uses \textsf{DiDIL}, but in a more refined way. 

\begin{proof}[Proof sketch of \cref{thm:k=2lb}]
Suppose,
\begin{equation}\label{eq:main-fanin2}
g \;:=\; T_1 \;+\;T_2 \;=\; P_{3,d} + \epsilon \cdot S\;,
\end{equation} 
where the polynomials~$T_1, T_2 \in \F[\epsilon^{\pm 1}][\x]$,  and each of them is a product of linear polynomials $\Pi\Sigma$ and they have size at most $s$ over $\F[\epsilon^{\pm 1}]$, and $S \in \F[\epsilon][\x]$. Suppose, $T_i = \epsilon^{-a_i} \cdot \ell_{i,1} \cdots \ell_{i,s}$, where $a_i \in \Z_{\ge 0}$, and each $\ell_{i,j} \in \F[\epsilon][\x]$ are linear polynomials (in $\x$) such that $\ell_{i,j} \rvert_{\epsilon =0} \ne 0$. Now, one of the three things can happen.
\begin{enumerate}
\setlength\itemsep{.1mm}
    \item[1](Easy case).~{\em Both} $T_i$ have at least one linear factor, say $\ell_{1,1}$ and $\ell_{2,1}$ whose $\epsilon$-free term is a homogeneous linear form over $\F$;
    \item[2](Intermediate case).~{\em Exactly} one of $T_i$, say wlog, $T_1$,  has at least one factor, say $\ell_{1,1}$ whose $\epsilon$-free term is a homogeneous  linear form;
    \item[3](Hard case).~{\em None} of the factors of $T_i$, has $\epsilon$-free term as a homogeneous linear form.
\end{enumerate}
The first two cases can be ruled out via direct arguments, while the third requires a more involved analysis, where we ultimately prove an exponential lower bound.

For the first case, we reduce modulo the ideal $\langle \ell_{1,1}, \ell_{2,1} \rangle$ in equation~\cref{eq:main-fanin2}. Note that under this reduction, $g \equiv 0$. It is not hard to argue that after this reduction, we obtain a relation of the form 
\[
P_{3,d}(\ell_1, \cdots, \ell_{3d}) \;=\; 0\;,
\]
for linear forms $\ell_1, \ldots, \ell_{3d}$ with $\rank(\ell_1, \ldots, \ell_{3d}) \in {3d-1, 3d-2}$. This is easily seen to be impossible, ruling out the first case. It is easy to show that this can never happen. 

In the second case, we reduce modulo the ideal $\langle \ell_{1,1} \rangle$. Then,
\[
T_2 \bmod \langle \ell_{1,1} \rangle = g \bmod \langle \ell_{1,1} \rangle = P_{3,d}(\ell_1, \cdots, \ell_n) + \epsilon \cdot S'\;,
\]
where $\rank(\ell_1, \cdots, \ell_n) =n-1$. The constant term (coefficient of $\epsilon^0$) on the left is a product of non-homogeneous linear forms, while on the right it is the homogeneous polynomial $P_{3,d}(\ell_1, \ldots, \ell_n)$, a contradiction.

In the third case, we introduce the notion of the {\em all-non-homogeneous} property: a term $T_i$ is said to satisfy this property if, for every linear form $\ell_{i,j}$ appearing in $T_i$, its constant-term projection $\ell_{i,j} \rvert_{\epsilon=0}$ is a nonzero non-homogeneous linear polynomial. When all $T_i$ in the expression for $g$ satisfy this, we say that $P_{3,d}$ is computed by an all-non-homogeneous $\overline{\SPS(2)}$ circuit.

This setting is more subtle and requires a technical analysis. In this case, we show that any such representation must have size at least $\exp(d)$, thus proving an exponential lower bound. 

The two primary claims leading to the lower bound for case III are as follows.
\begin{claim}\label{claim:main-claim-fanin}
If $P_{3,d}$ is computed by an all-non-homogeneous $\overline{\SPS(2)}$ circuit of size $s$, then $P_{3,d}$ can also be computed by a $\overline{\SES}$ circuit of size~$\poly(s)$.
\end{claim}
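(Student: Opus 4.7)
} The plan is to exploit the all-non-homogeneous structure to place each $T_i$ in a Kumar-like form, extract strong vanishing constraints imposed by the homogeneity of $P_{3,d}$, and combine these with Newton's identities to rewrite the degree-$d$ part of $T_1+T_2$ as a small border $\Sigma\wedge\Sigma$ circuit.

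First I would normalize: since every $\ell_{i,j}|_{\eps=0}$ has nonzero constant term, factor $\ell_{i,j}=c_{i,j}(\eps)\,(1+m_{i,j})$, where $c_{i,j}(0)\ne 0$ (so $c_{i,j}$ is a unit in $\C[[\eps]]$) and $m_{i,j}\in\C[[\eps]][\x]$ is a homogeneous linear form. Setting $\gamma_i:=\eps^{-a}\prod_j c_{i,j}\in\C((\eps))$, we get $T_i=\gamma_i\prod_{j=1}^s(1+m_{i,j})=\gamma_i\sum_{k=0}^s e_k(m_{i,1},\dots,m_{i,s})$. Let $E_k:=\gamma_1 e_k(m_{1,\cdot})+\gamma_2 e_k(m_{2,\cdot})$. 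Since $P_{3,d}$ is homogeneous of degree $d$, comparing homogeneous parts in $T_1+T_2=P_{3,d}+\eps S$ yields $\val_\eps(E_k)\ge 1$ for $k\ne d$ and $E_d\simeq P_{3,d}$. In particular $\gamma_1+\gamma_2\simeq 0$, so the leading $\eps^{-a}$-order pieces of $T_1,T_2$ must cancel, forcing $\prod_j\ell_{1,j}|_{\eps=0}=-\prod_j\ell_{2,j}|_{\eps=0}$. By unique factorization of polynomials, after reindexing and rescaling $c_{i,j}$ we may assume $m_{1,j}^0=m_{2,j}^0$ for every $j$, so that $m_{1,j}=m_{2,j}+\eps\,\delta_j$ with $\delta_j\in\C[[\eps]][\x]_1$.

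Next, using $(\gamma_1+\gamma_2)\prod_j(1+m_{2,j})\simeq 0$, I would rewrite
\[
T_1+T_2\;\simeq\;\gamma_1\Bigl(\prod_j(1+m_{1,j})-\prod_j(1+m_{2,j})\Bigr).
\]
Telescoping the difference gives $\prod_j(1+m_{1,j})-\prod_j(1+m_{2,j}) = \eps\sum_j\delta_j\,Q_j$, where each $Q_j$ is a product of $s-1$ non-homogeneous linear forms and is bounded as $\eps\to 0$. Extracting the degree-$d$ homogeneous part in $\x$ and invoking Newton's identities (\cref{prop:Newt-Id}) to convert elementary symmetric functions into power sums, the vanishing constraints $E_k\simeq 0$ for $k<d$ together with $\Delta p_j:=p_j(m_{1,\cdot})-p_j(m_{2,\cdot}) = O(\eps)$ should force all the problematic product-of-power-sums terms produced by the recursion to collapse (modulo $O(\eps^{a+1})$) into a $\C((\eps))$-linear combination of $\poly(s,d)$ single power sums $p_j(m_{i,\cdot})$. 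Each such power sum is a $\Sigma\wedge\Sigma$ circuit of size $O(sn)$ over $\C[[\eps]]$, so summing and extracting the $\eps^0$ coefficient yields an $\overline{\SES}$ circuit of size $\poly(snd)$ for $P_{3,d}$.

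The main obstacle is precisely this inductive collapse. The naive Newton expansion of $e_d$ into $p_1,\dots,p_d$ produces cross-products $p_{j_1}p_{j_2}$ that do not belong to $\SES$; the argument must therefore rely on delicate $\eps$-valuation bookkeeping, showing that the constraints $\val_\eps(E_k)\ge a+1$ for $k<d$, combined with the $\eps$-absorption from $\Delta p_j=O(\eps)$, raise each such cross-product to $\eps$-order $\ge a+1$, leaving only single power sums alive at the $\eps^a$-order determining $P_{3,d}$. This step is the analogue, for the difference of two Kumar-type expressions, of the $M<0$ computation in the proof of \cref{prop:mainthmmega}, with the structural match $m_{1,j}^0=m_{2,j}^0$ playing the central role. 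Carrying out this bookkeeping cleanly --- while keeping intermediate circuits inside $\overline{\SES}$ rather than in larger models like $\Pi\overline{\SES}$ or $\Sigma\Pi\overline{\SES}$ --- is the technical heart of the argument.
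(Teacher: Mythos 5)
Your approach is genuinely different from the paper's, but it has a real gap at precisely the step you identify as "the technical heart of the argument", and that gap is not a minor bookkeeping detail but the entire content of the claim. You end the argument with "should force all the problematic product-of-power-sums terms … to collapse", which is a hope, not a proof.

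The difficulty is that the Newton-identity collapse from the paper's \cref{prop:mainthmmega} relies on having a \emph{single} Kumar-type product with the exact constraints $e_i(\ELL)=0$ for $0<i<d$ (or their $\eps$-scaled analogues holding with a uniform $\eps$-prefactor). Here you have a \emph{difference} of two such products with independent scalings $\gamma_1,\gamma_2$, and the constraints you extract are only $E_k = \gamma_1 e_k(m_{1,\cdot}) + \gamma_2 e_k(m_{2,\cdot}) = O(\eps)$ — a single linear relation per $k$, not the vanishing of the individual $e_k(m_{i,\cdot})$. When you expand $E_d$ via Newton's identity you get sums of $\gamma_1 e_{d-i}(m_{1,\cdot})p_i(m_{1,\cdot}) + \gamma_2 e_{d-i}(m_{2,\cdot})p_i(m_{2,\cdot})$; the structural matching $m_{1,j}=m_{2,j}+\eps\delta_j$ tells you $\Delta e_k,\Delta p_i = O(\eps)$, but the individual factors $e_{d-i}(m_{i,\cdot})$ do not become small, so there is no mechanism by which these cross-products are forced to an $\SES$ form or to disappear at the relevant $\eps$-order. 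Iterating the recursion only multiplies the number of such products, and you correctly note the intermediate expressions threaten to leave $\overline{\SES}$ for $\Pi\overline{\SES}$ or worse.

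The paper sidesteps this entirely with the DiDIL mechanism: apply the scaling $\Phi:x_i\mapsto z x_i$, divide by $\tilde T_2$, differentiate in $z$, and use the $\dlog$ operator (\cref{eq:dlog-equation}--\ref{eq:dlog-linear-mod}). The point is that $\dlog(\tilde T_i)$ is a \emph{sum} over $j$ of geometric series in $z$ whose $z^k$-coefficients are powers of a single linear form, so each $z$-coefficient of $\dlog(\tilde T_1)-\dlog(\tilde T_2)$ is automatically a small $\SES$ circuit over $\C[[\eps]]$ — no combinatorial collapse is needed. Because $\Phi(P_{3,d})=z^d P_{3,d}$ and the all-non-homogeneous hypothesis makes $\tilde T_i|_{z=0}$ a nonzero constant in $\C[[\eps]]$, extracting $\cf_{\eps^0 z^{d-1}}$ on both sides of \cref{eq:1} isolates $P_{3,d}$ (up to a nonzero scalar) as $\cf_{\eps^0}(Q_{d-1})$, an $\overline{\SES}$ of size $O(snd)$. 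If you want to keep your elementary-symmetric viewpoint, the honest fix is to re-derive the $\dlog$ identity in that language; as written, the collapse step is unproven and I do not see how it goes through without it.
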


\begin{claim}\label{claim:lb-k=2}
If a $\overline{\SES}$ circuit computes the polynomial $P_{3,d}$, then its size must be at least $2^{\Omega(d)}$.   
\end{claim}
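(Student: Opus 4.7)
The plan is to invoke the classical partial derivative method. For a $\SES$ circuit $C = \sum_{i=1}^k \ell_i^{d_i}$ of top fanin $k$, every $m$-th order partial derivative of a single summand $\ell_i^{d_i}$ is a scalar multiple of $\ell_i^{d_i-m}$ (interpreted as $0$ when $d_i < m$). Hence, writing $\partial^{=m}(f)$ for the $\mathbb{C}$-linear span of all $m$-th order partial derivatives of $f$, we immediately obtain $\dim \partial^{=m}(C) \leq k$. Since the top fanin is bounded by the size, every $\SES$ circuit of size $s$ satisfies $\dim \partial^{=m}(C) \leq s$ for every $m$.

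The next step is to transfer this dimension bound to the border class $\overline{\SES}$. The condition ``$\dim \partial^{=m}(f) \leq s$'' is cut out by the simultaneous vanishing of all $(s+1) \times (s+1)$ minors of a fixed matrix whose entries are $\mathbb{C}$-linear in the coefficients of $f$ (rows indexed by degree-$m$ derivation monomials, columns by degree-$(\deg f - m)$ output monomials). This is a Zariski-closed condition on the coefficient space, and therefore passes to the closure: any $f \in \overline{\SES}$ of size $s$ also satisfies $\dim \partial^{=m}(f) \leq s$. Alternatively, using the characterization of~\cref{thm:equivalent-definitions}, one can argue directly: if $\tilde{f} \in \C[\eps][\x]$ witnesses $f \in \overline{\SES}$ with complexity $s$ over $\C[\eps^{\pm 1}]$, then each $(s+1) \times (s+1)$ minor of the coefficient matrix of $\partial^{=m}(\tilde{f})$ vanishes identically as a polynomial in $\eps$, and hence vanishes at $\eps = 0$ as well.

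It remains to lower-bound $\dim \partial^{=d/2}(P_{3,d})$. Writing $P_{3,d} = M_1 + M_2 + M_3$ with $M_i = \prod_{j = (i-1)d+1}^{id} x_j$, for every subset $S \subseteq \{1, \ldots, d\}$ of size $d/2$, the differential operator $\prod_{j \in S}\partial_{x_j}$ annihilates $M_2$ and $M_3$ (since $S$ is disjoint from their variable sets) and sends $M_1$ to the squarefree monomial $\prod_{j \in \{1,\ldots,d\} \setminus S} x_j$. These $\binom{d}{d/2}$ monomials, indexed by $S$, are supported on disjoint multilinear subsets of variables and are therefore linearly independent over $\C$, giving
\[ \dim \partial^{=d/2}(P_{3,d}) \;\geq\; \binom{d}{d/2} \;=\; 2^{\Omega(d)}. \]
Combining the three steps yields $s \geq \binom{d}{d/2} = 2^{\Omega(d)}$, as claimed.

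The only technical subtlety lies in the second step, namely the semi-continuity of the partial-derivative dimension under taking borders. This reduces to verifying that the map $f \mapsto \partial^{=m}(f)$ is $\mathbb{C}$-linear in the coefficients of $f$, so that the rank condition is genuinely polynomial in these coefficients and the Zariski-closure argument applies. Modulo this standard verification, the bound follows cleanly from the partial derivative method, with the structure of $P_{3,d}$ ensuring that a single product term alone produces the full exponential gap.
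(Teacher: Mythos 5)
Your proof is correct and fills in exactly the argument the paper alludes to when it says ``follows from standard arguments based on partial derivatives and is relatively straightforward'' (the paper does not spell out the details). All three steps --- the top-fanin bound on $\dim \partial^{=m}$ for $\Sigma\wedge\Sigma$ circuits, the semicontinuity transfer to the border via vanishing of minors, and the lower bound $\dim \partial^{=d/2}(P_{3,d}) \geq \binom{d}{d/2}$ extracted from the first product term --- are sound. One small wording slip: the $\binom{d}{d/2}$ monomials $\prod_{j \in \{1,\dots,d\}\setminus S} x_j$ are \emph{not} supported on disjoint variable sets (two size-$d/2$ complements generally overlap); they are simply \emph{distinct} monomials, which is all that is needed for linear independence. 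With that phrasing fixed, the argument is complete.
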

It is straightforward to see that \autoref{claim:main-claim-fanin}, together with \autoref{claim:lb-k=2}, implies the desired lower bound $s \ge 2^{\Omega(d)}$, thereby completing the proof for the case $k=2$.

The proof of \autoref{claim:lb-k=2} follows from standard arguments based on partial derivatives and is relatively straightforward. Hence, we focus our efforts on proving \autoref{claim:main-claim-fanin}.

Apply a simple variable-scaling map $\Phi: x_i \mapsto z \cdot x_i$, to~\cref{eq:main-fanin2}; note that this is simpler than \cref{eq:phi-map}, and {\em does not require} any shift, unlike the debordering proof. Note that $\Phi(P_{3,d}) = z^d \cdot P_{3,d}$, and $\Phi(T_i) = \epsilon^{-a_i} \cdot \Phi(\ell_{i,1}) \cdots \Phi(\ell_{i,s})$. Now, for $i \in [2]$, let $\tilde{T}_i := \Phi(\ell_{i,1}) \cdots \Phi(\ell_{i,s})$. Divide and derive like before to get:
\begin{equation}\label{eq:1}
\partial_z\, \left(\Phi(P_{3,d} + \epsilon \cdot S)/\tilde{T}_2\right) = \partial_z\, (\Phi(T_1)/\tilde{T}_2)\;.    
\end{equation}

Since we are in the third case (all-non-homogeneous), we know that $\ell_{i,j} = c_{i,j} + \tilde{\ell}_{i,j}$, where each $\tilde{\ell}_{i,j} \in \F[\epsilon][\x]$ is a homogeneous linear polynomial, further $c_{i,j}\rvert_{\epsilon = 0} \ne 0$. Trivially, $\Phi(\ell_{i,j})= c_{i,j} + z \cdot \tilde{\ell}_{i,j}$. In fact $1/\tilde{T}_2 = c + \epsilon \cdot R(\x,\epsilon,z)$, where $0 \ne c \in \C$, and $R \in \C[[\epsilon,z]][\x]$. Now, a simple calculation shows that 
\begin{equation}\label{eq:2}
\cf_{\epsilon^0z^{d-1}}\left(\partial_z\,\left(\Phi(P_{3,d} + \epsilon \cdot S)/\tilde{T}_2\right)\right) = P_{3,d}/c\;.    
\end{equation}

On the other hand, using the $\dlog$-trick, and power series expansion, we get that 
\begin{equation*}
 \cf_{\epsilon^0z^{d-1}}\left(\pderiv{\Phi(T_1) / \tilde{T}_2}{z}\right) =\cf_{z^{d-1}}\left(\cf_{\epsilon^0}\left(\tilde{T}_1/\tilde{T}_2\right)\right) \cdot \left(\cf_{\epsilon^0} \left( \sum_{j \ge 0} Q_j z^j \right)\right)\;.   
\end{equation*}

In the above, $Q_j := \epsilon^{-a_1} \cdot (P_{1,j} - P_{2,j})$,  where as argued in the debordering proof (after \cref{eq:crucial-k-2}),  $\dlog (\tilde{T}_i) = \sum_{j \ge 0} P_{i,j}(\x,\epsilon) \cdot z^j\;$, and  each $Q_j$ has a $\SES$ expression of size $O(snj)$, over $\C[[\epsilon]]$. 

In fact, the above shows that the minimum $z$ power in the term $\epsilon^0$ in $\pderiv{\Phi(T_1) / \tilde{T}_2}{z}$ is $d-1$. Further it is easy to check that $\val_{z}\left(\cf_{\epsilon^0}(\tilde{T}_1/\tilde{T}_2)\right)=0$. Hence, looking at~\cref{eq:blah-2}, it must happen that $\val_{\epsilon}(Q_{j}) \ge 1$, for all $0 \le j \le d-2$. Therefore, there exists  some constant $c" \in \C$ such that 
\[
\cf_{\epsilon^0z^{d-1}}\left(\pderiv{\Phi(T_1) / \tilde{T}_2}{z}\right)\;=\; \left(\frac{\prod_{j=1}^s c_{1,j,0}}{\prod_{j=1}^s c_{2,j,0}}\right) \cdot \cf_{\epsilon^0} \left(Q_{d-1}\right)\;=\;c" \cdot \cf_{\epsilon^0}(Q_{d-1})\;.\]
Since we already argued that $Q_{d-1}$ can be computed by a $\SES$ circuit of size $O(snd)$, over $\C[[\epsilon]]$, we conclude that $\cf_{\epsilon^0z^{d-1}}\left(\pderiv{\Phi(T_1) / \tilde{T}_2}{z}\right)$ can be computed by a $\overline{\SES}$ circuit of size $O(snd)$. Combining equations~\cref{eq:1}–\ref{eq:2} with the final observation above, we obtain \autoref{claim:main-claim-fanin}. This also finishes the detailed sketch of $k=2$.
\end{proof}

A similar argument extends to the case $k \ge 3$; we refer the reader to~\cite{dutta2022separated} for a detailed treatment. Due to a multiplicative blow-up in size at each step of the reduction, this approach yields a lower bound of the form $\exp(d^{1/\exp(k)})$. Thus, we obtain an exponential lower bound as long as $k$ is constant. We conclude this section with the following open question.

\medskip
\begin{question}
Prove an exponential lower bound for $\overline{\SPS(o(n))}$ circuits. 
\end{question}

\subsection{Border Complexity of Symbolic Determinant under Rank One Restriction}

Symbolic determinant is known to be a complete polynomial for $\VBP$. More precisely, for any $f \in \C[\x]$, there exist some $m$ and $m \times m$ matrices $A_0, \cdots, A_n$ such that $f = \det(A_0 + \sum_{i \in [n]} A_i x_i)$. The class of our interest is the symbolic
determinant under rank one restriction: Consider the class of polynomials of form where
for each $1 \le i \le n$, the rank of $A_i$ is $k$, for some parameter $k$. One can define the class $\VBP_{[k]}$ based on such restrictions:
\begin{equation}\label{eq:vbp-1}
    \VBP_{[k]} \;:=\; \{ (f_n)_n \mid f_n = \det(A_0 + \sum_{i=1}^n A_i x_i), A_i \in \C^{\poly(n) \times \poly(n)}, \rank(A_i) =k\}\;.
\end{equation}
The class~$\VBP_{[1]}$  has been studied extensively in contexts of polynomial identity testing, combinatorial optimization, and matrix completion (see, for example, \cite{edmonds1979matroid,lovasz1989singular,murota1993mixed}). This class admits a deterministic polynomial-time identity testing algorithm in the white-box setting~\cite{lovasz1989singular}, and a deterministic quasipolynomial-time algorithm in the black-box setting~\cite{gurjar2017linear}. It coincides with the class of polynomial families computed as the determinant of symbolic matrices in which each variable appears {\em at most once} - commonly referred to as read-once determinants. Surprisingly, Chatterjee, Ghosh, Gurjar and Raj~\cite{chatterjee2023border} showed that~$\overline{\VBP_{[1]}} = \VBP_{[1]}$. More formally, they proved the following.

\medskip
\begin{theorem}[{\cite{chatterjee2023border}}]\label{theorem:rank1abp}
Given $A_0, A_1, A_2, \ldots, A_n \in \C[\eps^{\pm 1}]^{r\times r}$ such that for each $1\leq i\leq n$, $\rank(A_i) = 1$ over $\C[\epsilon^{\pm 1}]$. Let $f \simeq \det(A_0 + \sum_{i=1}^n A_ix_i)$. Then, there exists $B_0, B_1, B_2, \ldots, B_n$ in  ${\C^{(n+r)\times (n+r)}}$
such that $f = \det(B_0 + \sum_{i=1}^n B_ix_i)$ and $\rank(B_i) = 1$ over $\F$ for each $i\in [n]$. 
\end{theorem}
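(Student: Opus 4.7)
Since $\rank_{\C[\eps^{\pm 1}]}(A_i) = 1$ for each $i \in [n]$, factor $A_i = u_i(\eps)\, v_i(\eps)^T$ with $u_i, v_i \in \C[\eps^{\pm 1}]^r$. Let $U, V \in \C[\eps^{\pm 1}]^{r\times n}$ denote the matrices with columns $u_i$ and $v_i$, and let $X = \mathrm{diag}(x_1, \ldots, x_n)$. The classical Schur-complement identity gives
\[
\det\!\Big(A_0 + \sum_{i=1}^n x_i\, u_i v_i^T\Big) \;=\; \det \begin{pmatrix} A_0 & U \\ -X V^T & I_n \end{pmatrix}.
\]
In the resulting $(n+r)\times(n+r)$ matrix $M(\eps,\x)$, each variable $x_i$ appears only in row $r+i$, contributing $-v_i[k]$ in columns $k\in[r]$. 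Hence $M = \widetilde A_0(\eps) + \sum_{i=1}^n x_i \widetilde A_i(\eps)$, where every $\widetilde A_i$ has a single nonzero row and is therefore rank $1$ over $\C[\eps^{\pm 1}]$. This already gives the correct size and structure; only the coefficient field remains to be fixed.

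\textbf{Removing $\eps$ via diagonal scaling and continuity.}
The plan is to rescale rows and columns by monomial factors in $\eps$ so that $M$ has entries in $\C[\eps][\x]$, then set $\eps=0$. Given $D_L = \mathrm{diag}(\eps^{\ell_1}, \ldots, \eps^{\ell_{n+r}})$ and $D_R = \mathrm{diag}(\eps^{m_1},\ldots,\eps^{m_{n+r}})$, we have $\det(D_L M D_R) = \eps^{\sum \ell_j + \sum m_j}\,\det M$. Seeking $(\ell_j), (m_k)$ with $\sum \ell_j + \sum m_k = 0$ and
\[
\ell_j + m_k + \val_\eps M[j,k] \;\geq\; 0 \quad\text{for every }(j,k),
\]
would produce $\widetilde M := D_L M D_R \in \C[\eps][\x]^{(n+r)\times(n+r)}$ with $\det \widetilde M \simeq f$. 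Since diagonal rescaling preserves rank, each $x_i$-coefficient in $\widetilde M$ is still rank at most $1$. Setting $B_0 := \widetilde M\rvert_{\eps=0,\,\x=0}$ and $B_i := \partial_{x_i}\widetilde M\rvert_{\eps=0}$, continuity of determinant yields $f = \det(B_0 + \sum x_i B_i)$ with $\rank(B_i) \leq 1$. If some $B_i$ drops to rank $0$, then $x_i$ does not appear in $f$ (or can be trivially restored by a further column-augmentation), so the rank-$1$ conclusion is preserved.

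\textbf{Main obstacle: tropical feasibility of the scaling.}
The crux is showing that the linear program for $(\ell_j), (m_k)$ is feasible. The plan is twofold. First, exploit the gauge freedom $u_i \mapsto c_i(\eps)\, u_i$, $v_i \mapsto c_i(\eps)^{-1} v_i$ in the rank-$1$ decomposition to rebalance $\val_\eps u_i[j]$ and $\val_\eps v_i[k]$, thereby decoupling the valuations in the top-right block $U$ from those in the bottom-left block $-X V^T$. Second, allow replacing the $I_n$ block by $\mathrm{diag}(\eps^{c_1},\ldots,\eps^{c_n})$ --- which corresponds to a change of basis in the $n$ auxiliary coordinates and only affects $\det M$ by an $\eps^{\sum c_i}$ factor that can be reabsorbed by adjusting $\sum \ell_j + \sum m_k$. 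Infeasibility of the overall LP would, by Farkas' lemma, produce a nonnegative weighting of entry-valuations contradicting $\val_\eps(\det M) = \val_\eps(f + \eps S) = 0$, because a dual witness decomposes into permutations whose monomials of $M[j,k]$ would have strictly negative total $\eps$-valuation. The main technical effort is carrying out this duality argument cleanly and checking that the adjustments in the two freedoms above are simultaneously realisable; once done, the construction of $B_0, B_1, \ldots, B_n$ over $\C$ follows immediately.
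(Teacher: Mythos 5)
Your opening move --- factor $A_i = u_i v_i^T$ and pass to the Schur-complement bordering
$\det(A_0+\sum x_i u_iv_i^T) = \det\!\left(\begin{smallmatrix}A_0 & U\\ -XV^T & I_n\end{smallmatrix}\right)$
--- is sound and gives the right size and row-supported rank-1 structure. It is a genuinely different route from the paper's, which stays with the Cauchy--Binet expansion $\det(\sum A_i x_i) = \sum_S \det(U_S)\det(V_S)\prod_{j\in S}x_j$ and reduces the theorem to closedness of the coordinatewise product of two Pl\"ucker cones, proved via the valuated-matroid exchange axiom and Murota's splitting theorem. Both approaches share the factorization step; the question is how the $\eps$'s get removed.

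That is exactly where your argument has a genuine gap. You claim that infeasibility of the diagonal-scaling LP would give, via Farkas, a nonnegative (fractional perfect-matching) weighting of the entry valuations, hence a permutation $\sigma$ with $\sum_j \val_\eps M[j,\sigma(j)] < 0$, contradicting $\val_\eps(\det M)=0$. But this is not a contradiction: distinct permutation terms in $\det M$ can cancel, so $\val_\eps(\det M)\ge 0$ gives no lower bound on $\min_\sigma \sum_j \val_\eps M[j,\sigma(j)]$. A concrete failure: take $A_0=0$, $n=r=2$, $V=I_2$, and $U=\left(\begin{smallmatrix}\eps^{-1}&\eps^{-1}\\ 1&1+\eps\end{smallmatrix}\right)$. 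Then $\det(UXV^T)=x_1x_2$ over $\C$, yet in the $4\times4$ Schur matrix both perfect matchings meeting the $x_1x_2$-monomial have total valuation $-1$, and summing the LP constraints along either one forces $\sum\ell_j+\sum m_k\ge 1$, so your system with $\sum\ell_j+\sum m_k=0$ is infeasible even though the conclusion is trivially true. The two "extra freedoms" you invoke do not rescue this: the gauge $u_i\mapsto c_i u_i$, $v_i\mapsto c_i^{-1}v_i$ amounts to the shift $\ell_{r+i}\mapsto\ell_{r+i}-\val_\eps c_i$, $m_{r+i}\mapsto m_{r+i}+\val_\eps c_i$, and the $I_n\mapsto\mathrm{diag}(\eps^{c_i})$ replacement (after the compensating rescaling of $U$ that makes it legitimate) is just a scaling of the last $n$ columns; both are already variables of the LP, not new degrees of freedom. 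The obstacle you are running into is precisely the cancellation phenomenon that the paper handles at the level of \emph{minor} valuations rather than entry valuations: the functions $S\mapsto\val_\eps\det(U_S)$ and $S\mapsto\val_\eps\det(V_S)$ are valuated matroids, Murota's theorem splits $\min_S(\val_\eps\det U_S + \val_\eps\det V_S)$ into two separately-controllable minima after a linear shift, and closedness of the Pl\"ucker image then yields limit matrices $\widehat U,\widehat V$ over $\C$ with the right minors. Your entrywise LP is strictly stronger than what is needed and is provably not always satisfiable, so the duality step does not close.
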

We will discuss the geometric perspective as well as the proof idea of \cref{theorem:rank1abp} below. 

\paragraph*{An algebraic geometry perspective on \cref{theorem:rank1abp}.}
Consider the simpler case when $A_0=0$. Now, suppose $A_1, A_2, \dots, A_n$ are $m \times m$ matrices of rank $1$.
Let us write $A_i = \vect{u}^i \cdot {\vect{v}^i}^T$ for some vectors $\vect{u}^i, \vect{v}^i \in \C^{m}$ and define matrices $U, V \in \C^{m \times n}$ whose
$i$th columns are $\vect{u}^i$ and $\vect{v}^i$, respectively. 
%
%
It can be verified that  
\[\det\left(\sum_i A_ix_i\right) \;=\;
\sum_{S} \det(U_S) \det(V_S) \prod_{j \in S} x_j,\]
where the sum is over all 
size-$m$ subsets $S$ of $[n]$ and $U_S$ (or $V_S$) denotes the submatrix of $U$ (or $V$) obtained by taking columns with indices in the set $S$. 
The result of \cite{chatterjee2023border} shows that the image of the map 
\[(\C^{m \times n})^2 \to \C^{\binom{n}{m}}, 
\quad (U,V) \mapsto (\det(U_S)\times \det(V_S))_S\]
is {\em Zariski closed}. The following map is a closely related one and has been well-studied in algebraic geometry, which gives the Pl\"{u}cker coordinates of elements in the Grassmannian variety. 
\[\C^{m \times n} \to \C^{n \choose m}, 
t\quad  U \mapsto (\det(U_S))_S\;.\]
The image of this map is known to be a {\em closed set}. 
In other words, \cref{theorem:rank1abp} implies that the set obtained by taking coordinatewise products of pairs of points in the Grassmannian is closed. This property is quite special, as it does not hold for arbitrary varieties—indeed, there are simple examples where the coordinatewise product of pairs of points from a variety {\em fails} to be closed. 

%
%
%
%
For any $k\leq n$, let us define the map
$\phi_k: \C^{n^2}\xrightarrow{} \C^{\binom{n}{k}}$ as
\[\phi_k(A) = (\det(A_I))_{I \in \binom{[n]}{k}}\]
where $\binom{[n]}{k}$ is the set of all size-$k$ subsets of $[n]$.
\cref{theorem:rank1abp} also implies that the image of $\phi_k$ on $n\times n$ rank-$k$ matrices is closed.

\medskip
\begin{corollary}
For any $n>0$ and $k\leq n$, the image of the size $k$ principal minor map on $n\times n$ matrices with rank at most $k$ is closed in $\C^{\binom{n}{k}}$.
\end{corollary}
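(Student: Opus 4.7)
The plan is to reduce the corollary directly to Theorem~\ref{theorem:rank1abp} by re-expressing the image of $\phi_k$ on rank-$\leq k$ matrices as the coefficient set of a restricted family of rank-one ABP determinants, so that the closedness claim becomes an application of the already-announced fact that coordinate-wise products of Grassmannian cones form a closed set.

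First I would carry out a rank factorization. Any $A \in \C^{n \times n}$ with $\rank A \leq k$ factors as $A = X Y$ with $X \in \C^{n \times k}$ and $Y \in \C^{k \times n}$, so the $k \times k$ principal submatrix is $A_I = X_{I,*}\, Y_{*,I}$ and $\det(A_I) = \det(X_{I,*})\det(Y_{*,I})$. Setting $U := X^T, V := Y \in \C^{k \times n}$, the image of $\phi_k$ on rank-$\leq k$ matrices coincides with the image of
\[
\psi \,:\, \C^{k \times n} \times \C^{k \times n} \to \C^{\binom{n}{k}},\quad (U,V) \mapsto \bigl(\det(U_I)\det(V_I)\bigr)_{I \in \binom{[n]}{k}}.
\]
It therefore suffices to show that $\image(\psi)$ is closed.

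Next I would identify $\image(\psi)$ with a natural slice of rank-one ABP determinants. Let $u_i, v_i$ be the columns of $U, V$ and set $A_i := u_i v_i^T \in \C^{k \times k}$; each $A_i$ has rank at most one, and the Cauchy--Binet formula gives
\[
\det\Bigl(\sum_{i=1}^n A_i x_i\Bigr) \;=\; \sum_{I \in \binom{[n]}{k}} \det(U_I)\det(V_I)\prod_{i \in I} x_i.
\]
Hence a tuple $y = (y_I)_I$ lies in $\image(\psi)$ iff the multilinear homogeneous degree-$k$ polynomial $p_y(x) := \sum_I y_I \prod_{i \in I} x_i$ has the form $\det(\sum_i A_i x_i)$ for some rank-one $A_i \in \C^{k \times k}$. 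For any $y \in \overline{\image(\psi)}$, the polynomial $p_y$ is an $\eps$-limit of such determinants with $A_0 = 0$ and $r = k$, so Theorem~\ref{theorem:rank1abp} supplies $B_0 \in \C^{(n+k)\times(n+k)}$ and rank-one $B_1, \ldots, B_n \in \C^{(n+k) \times (n+k)}$ with $p_y(x) = \det(B_0 + \sum_i B_i x_i)$.

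The final and main obstacle is to collapse this inflated $(n+k)\times(n+k)$ representation with a possibly nonzero $B_0$ back into the desired form $\det(\sum_i A'_i x_i)$ with rank-one $A'_i \in \C^{k \times k}$. My plan for this step uses that $p_y$ is homogeneous of degree $k$: comparing coefficients in a new auxiliary variable $t$ gives the identity $\det(t B_0 + \sum_i B_i x_i) = t^n\, p_y(x)$. Write $B_i = b_i c_i^T$ and expand the left-hand side by row-multilinearity; the rank-one condition forces every surviving row-choice to pick $n$ rows from $B_0$ and $k$ rows from the rank-one block, indexed by a $k$-subset $J \subseteq [n+k]$. After reordering rows and separating the $b_i$-scalars from the $c_i$-rows, the coefficient of $\prod_{i \in I} x_i$ becomes a double sum over $J$ and a bijection $J \to I$ that factors, via one Laplace and one Cauchy--Binet expansion, as a product of two $k \times k$ minor determinants: one determinant built from the $b_i$'s together with appropriate rows of $B_0$, and the other built analogously from the $c_i$'s. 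Packaging these minors as $\det(\hat U_I)\det(\hat V_I)$ for suitable $\hat U, \hat V \in \C^{k \times n}$ exhibits $y \in \image(\psi)$. The core technical difficulty is performing this Cauchy--Binet packaging uniformly in $I$ while correctly tracking signs and the vanishing of all non-$t^n$ coefficients on the left-hand side (which provides the structural constraints on $B_0$ that make the packaging consistent); once that bookkeeping is complete, the corollary follows.
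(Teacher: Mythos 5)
Your first three moves are sound: the rank factorization $A = XY$ giving $\det(A_I) = \det(X_{I,*})\det(Y_{*,I})$, the identification of $\image(\phi_k)$ on rank-$\leq k$ matrices with $\image(\psi)$, and the Cauchy--Binet identity $\det(\sum_i A_i x_i) = \sum_I \det(U_I)\det(V_I)\prod_{i\in I}x_i$ are all exactly as in the paper. Applying \cref{theorem:rank1abp} with $A_0 = 0$, $r = k$ to get an exact representation $p_y = \det(B_0 + \sum_i B_i x_i)$ over $\C$ is also legitimate.

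The genuine gap is the final ``collapsing'' step, which you flag as bookkeeping but which is actually the whole content of the corollary. \cref{theorem:rank1abp} as stated outputs matrices of size $(n+k)\times(n+k)$ with a possibly nonzero $B_0$, and you need $k\times k$ matrices with no constant block. Your plan---homogenize with $t$, expand by multilinearity, and ``package'' the coefficient of $t^n\prod_{i\in I}x_i$ as $\det(\hat U_I)\det(\hat V_I)$---is where the argument stalls. After the row/column expansion, that coefficient is a signed sum over $k$-subsets $J\subseteq[n+k]$ and bijections $J\to I$, each term a mixed minor involving columns of $B_0$ and the rank-one vectors $b_i$. There is no visible Laplace/Cauchy--Binet manipulation that factors this sum, uniformly in $I$, into a product of two minors of fixed $k\times n$ matrices $\hat U, \hat V$; the ``structural constraints from the vanishing of non-$t^n$ coefficients'' are precisely what would need to be exploited, and unpacking them is exactly the statement you are trying to prove. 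Asserting ``once that bookkeeping is complete, the corollary follows'' is circular.

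The intended route avoids this detour entirely. In the paper's proof of \cref{theorem:rank1abp}, the case $A_0=0$ is handled by showing directly that the image of $\psi\colon(U,V)\mapsto(\det(U_S)\det(V_S))_S$ is Zariski closed: one uses Murota's splitting theorem for valuated matroids to decouple $\val(\det U_S)+\val(\det V_S)$, then the classical closedness of the cone over the Grassmannian for each factor, obtaining $U',V'$ of the \emph{same} size $k\times n$ with no constant term. So the output of the theorem, when $A_0=0$, is already of the form $\det(\sum_i B_ix_i)$ with $B_i\in\C^{k\times k}$, and the corollary is immediate from your steps 1--2 combined with that fact. Invoking only the literal theorem statement, rather than this feature of its proof, is what forced you into the unclosed collapsing argument.
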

%

\begin{proof}[Proof idea of \cref{theorem:rank1abp}.]
We will prove the simpler case, i.e.~when $A_0=0$. As discussed above, the goal is to show that the image of the following map is closed under limits:
\[(U,V) \mapsto (\det(U_S)\times \det(V_S))_S\;,\]
where the product is taken over all size-$m$ subsets $S \subseteq [n]$.
To prove this, we start with two matrices $U,V \in \C[\epsilon^{\pm 1}]^{m \times n}$ and aim to construct approximating matrices $\widehat{U}, \widehat{V} \in \C^{m \times n}$ such that for each size-$m$ subset $S\subseteq [n]$, we have
\[ \left( \det(U_S) \det(V_S) \right) \;\simeq  \;\det(\widehat{U}_S) \det(\widehat{V}_S). \]

Naturally, this can only be expected when the limit exists for each such $S$. Note that one cannot simply 
apply the limit operation on the matrix entries. Further, clearly, $\lim_{\eps \to 0} f$
exists if and only if $\val(f) \geq 0$.
So, we can assume that~$\val(\det(U_S) \det(V_S)) \geq 0$ for every $S$. Equivalently,
\[\min_S \{ \val(\det(U_S) \det(V_S)) \} = \min_S \{ \val(\det(U_S)) + \val( \det(V_S)) \}  =0.\] 
Note that in the limit, only those subsets $S$ that achieve this minimum will contribute nonzero terms.

A challenge arises because the valuation function does not distribute over sums in the usual way: i.e.,
\[\min_S \{ \val(\det(U_S)) + \val( \det(V_S)) \} \ne \min_S \{ \val(\det(U_S))\} + \min_S\{ \val( \det(V_S)) \}.\]
Nonetheless, for the valuation function, the failure of distributivity is limited. This is due to a combinatorial property resembling an exchange axiom:  for any two distinct $S,T\subseteq[n]$ of size $m$ and any $j\in T\setminus S$, there exists a $k\in S\setminus T$ such that \[\val(\det(U_S))+\val(\det(U_T))\;\geq\; \val(\det(U_{S-k+j}))+\val(\det(U_{T-j+k}))\;.
\]
%
This submodularity-like behavior underlies the theory of valuated matroids, introduced by Dress and Wenzel~\cite{dress1990valuated}. Going further, Murota~\cite{murota1996valuated} established a splitting theorem for valuated matroids:
the minimum of a sum of two such functions can be decomposed as a pair of independent minima, corrected by a linear term. Specifically, there exists a vector $\vect z \in \Z^{n}$ such that
\[
\min_S \{ \val(\det(U_S)) + \val( \det(V_S))\}\;=\; \min_S \{ \val(\det(U_S)) + \sum_{i \in S} {\vect z}_i\} 
+ \min_S\{ \val( \det(V_S))- \sum_{i \in S} {\vect z}_i \}.
\]
This decomposition is powerful because the correction term is linear and hence easy to handle. The problem now separates into two independent ones, involving only $U$ and $V$ respectively.
That is, given any two  matrices $U,V \in \C[\epsilon^{\pm 1}]^{m \times n}$, construct 
matrices $\widehat{U}, \widehat{V} \in \C^{m \times n}$ such that for each size-$m$ subset $S\subseteq [n]$, we have
\[ \det(U_S)\;\simeq\;  \det(\widehat{U}_S)
\text{ and } 
 \det(V_S) \;\simeq\; \det(\widehat{V}_S)\;. \]
 The problem now becomes tractable essentially because the image of the map 
  $U \mapsto (\det(U_S))_S$ is known to be closed. Putting it all together, we obtain:
  \[\det(\sum_{i=1}^n A_ix_i) \,\simeq\,\;\sum_{\substack{S \subseteq [n] \\ |S|=m}} \;\det(U_S)\det(V_S) \x_S \,\simeq\, \sum_{\substack{S \subseteq [n] \\ |S|=m}}\; \det(U'_S) \det(V'_S) \x_S \,=\, \det(\sum_{i=1}^n B_i x_i)\;.
  \]
\end{proof}
We leave this section by asking the following open question.

\medskip
\begin{question}
Is $\overline{\VBP_{[2]}} = \VBP_{[2]}$?
\end{question}

\subsection{Debordering boder of Width-2 ABPs}
For any positive integer $k \in \N$, the class $\VPk{k}$ contains the families of polynomials computable by width-$k$ ABPs of polynomially bounded size. 
Ben-Or and Cleve \cite{cleve1988computing} showed that $\VPk{k}=\VPe$ for all $k \geq 3$.
Later, Allender and Wang \cite{allender2016power} showed that width-2 ABPs cannot compute even simple polynomials such as $x_1x_2 + \cdots + x_{15}x_{16}$, so in particular we have a strict inclusion $\VPk{2} \subsetneq \VPk{3}$. Surprisingly, Bringmann, Ikenmeyer and Zuiddam~\cite{bringmann2018algebraic} showed the following.

\medskip
\begin{theorem}[{\cite{bringmann2018algebraic}}]\label{eq:vp2bar}
$\approxbar{\VPk{2}}\;=\;\approxbar{\VPe}$.
\end{theorem}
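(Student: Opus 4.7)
The argument splits into two inclusions. The containment $\approxbar{\VPk{2}} \subseteq \approxbar{\VPe}$ is immediate from $\VPk{2} \subseteq \VPk{3} = \VPe$ (the equality is Ben-Or--Cleve~\cite{cleve1988computing}) after taking closures. All the substance is in the reverse inclusion $\approxbar{\VPe} \subseteq \approxbar{\VPk{2}}$.

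The plan for the hard direction is to route through the algebraic characterization of border complexity (\cref{thm:equivalent-definitions}). Given $f \in \approxbar{\VPe}$ of size at most $s$, there exists $\tilde{f} \in \C[\eps][\x]$ with $\tilde{f}|_{\eps = 0} = f$ and $\tilde{f}$ computed by a formula of size $\poly(s)$ over $\C[\eps^{\pm 1}]$. Applying Ben-Or--Cleve inside the coefficient ring $\C[\eps^{\pm 1}]$, one realizes $\tilde{f}$ as a designated entry of a product $M_1 M_2 \cdots M_d$ of $3 \times 3$ matrices with affine-linear entries in $\C[\eps^{\pm 1}][\x]$ and $d = \poly(s)$. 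It then suffices to compile this width-$3$ ABP into a width-$2$ ABP over $\C[\eps^{\pm 1}]$ computing the same entry, so that a further appeal to \cref{thm:equivalent-definitions} places $f$ in $\approxbar{\VPk{2}}$.

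The compilation will rest on gadgets that simulate each elementary transvection $I + a\,E_{ij} \in \mathrm{SL}_3$ by a constant-length product of $2 \times 2$ matrices over $\C[\eps^{\pm 1}][\x]$. The template is
\[
\begin{pmatrix} 1 & 0 \\ \eps^{-1} b & 1 \end{pmatrix}
\begin{pmatrix} 1 & \eps a \\ 0 & 1 \end{pmatrix}
\;=\;
\begin{pmatrix} 1 & \eps a \\ \eps^{-1} b & 1 + ab \end{pmatrix},
\]
which realizes a product $ab$ in width two using $\eps$-scaled entries. By encoding a $3$-dimensional state as a $2$-dimensional $\eps$-weighted vector (e.g.\ $(u_1 + \eps u_2,\, u_3)$) and cataloguing such gadgets to cover the off-diagonal generators of $\mathrm{SL}_3$, the plan is to translate the whole chain $M_1 \cdots M_d$ into a width-$2$ product whose chosen entry agrees with $\tilde{f}$ (up to a global $\eps$-scaling that is absorbed before taking the limit).

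The main obstacle is the uniform bookkeeping of $\eps$-valuations throughout this chain. Each gadget injects $\eps^{-1}$ factors, yet the output entry must ultimately lie in $\C[\eps][\x]$ with constant term exactly $f$, so the negative $\eps$-powers accumulated over the $d$ gadgets must cancel coherently. Designing the encoding and the gadget library so that this cancellation is automatic --- rather than requiring a global $\eps$-rescaling that would blow up the size --- is the delicate step. Once the valuation is controlled for a single transvection simulation, an inductive composition through all $d$ matrices closes the argument.
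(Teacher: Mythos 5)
Your easy inclusion is fine and coincides with the paper's. The hard direction, however, has a genuine gap, and it is not merely a matter of ``delicate bookkeeping'' as you suggest.

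The central problem is that a width-$3$ ABP maintains a genuinely $3$-dimensional state $(R_1,R_2,R_3)$, while a width-$2$ border ABP has only a $2$-dimensional state $(S_1,S_2)$ over $\bbC[\eps^{\pm 1}][\x]$. Your proposed encoding $(u_1+\eps u_2,\,u_3)$ collapses $u_1$ and $u_2$ into a single ring element, so a transvection that mixes $R_1$ and $R_2$ with a variable coefficient (say $R_1 \leftarrow R_1 + \ell R_2$) cannot be implemented by any $2\times 2$ matrix multiplication, because you cannot recover $u_2$ from $u_1+\eps u_2$ without already knowing $u_1$. The transvection gadget you exhibit does produce a product $ab$ in one entry, but it does not preserve a faithful encoding of a $3$-vector; in particular it throws $\eps a$ and $\eps^{-1}b$ into the off-diagonal, which corrupts any encoded state after composition. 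A ``local'' step-by-step compilation of an \emph{arbitrary} width-$3$ sequence $M_1\cdots M_d$ into a width-$2$ product of length $O(d)$ is not achievable by this method, and you give no mechanism to make it so. Also note a secondary issue: you ask the width-$2$ ABP over $\bbC[\eps^{\pm 1}]$ to compute $\tilde f$ \emph{exactly}, but Allender--Wang's separation $\VPk{2}\subsetneq\VPk{3}$ is field-independent, so such an exact compilation cannot exist in general; you would need a second approximation parameter, which you never introduce.

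The paper's proof (\cite{bringmann2018algebraic}) never materializes a width-$3$ ABP at all. It works directly on a depth-$O(\log n)$ formula (obtained by Brent's balancing) and recurses through the formula tree with the $Q$-matrices $Q(f)=\begin{pmatrix}f&1\\1&0\end{pmatrix}$. Addition is exact ($Q(f)Q(0)Q(g)=Q(f+g)$); multiplication is reduced (over $\mathrm{char}\neq 2$) to squaring via $fg=\left(\frac{f+g}{2}\right)^2-\left(\frac{f-g}{2}\right)^2$; and squaring uses the $\eps$-degeneration trick but \emph{consumes} $\eps$-precision, degrading $\mu_3$ to $\mu_1$, which is repaired by rescaling $\eps\mapsto\eps^3$. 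The key resource that keeps both the matrix count $c_1^{\Delta}$ and the error degree $c_2^{\Delta}$ polynomial is the logarithmic depth $\Delta=O(\log n)$. This is a \emph{global, tree-recursive} construction: the doubling of the subexpression at each multiplication gate is exactly why the proof must go through balanced formulas, not through an arbitrary iterated matrix product. Your plan forfeits this control by first linearizing into a width-$3$ chain, and the part you flag as the ``delicate step'' is precisely where a new idea would be needed. As it stands, the proposal is a plausible plan but not a proof.

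One smaller point: routing through \cref{thm:equivalent-definitions} to get $\tilde f$ over $\bbC[\eps^{\pm1}]$ is unnecessary. Since border classes are idempotent under closure, it suffices to show the sharper inclusion $\VPe \subseteq \overline{\VPk{2}}$, which is what the paper does.
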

This result holds over any field of characteristic $\ne 2$. Interestingly, as a direct corollary of \cref{eq:vp2bar} and the result of Allender and Wang, the inclusion $\VPk{2} \subsetneq \overline{\VPk{2}}$ is {\em strict}.

\paragraph{The characteristic issue.}~The proof in~\cite{bringmann2018algebraic} used that the field characteristic is not $2$, since they used this simple identity: $x\cdot y = (\frac{x+y}{2})^2 - (\frac{x-y}{2})^2$. Therefore, it was left open whether $\approxbar{\VPk{2}}$ is even complete over $\F_2$. Later \cite{dutta2024power} proved the universality of border of width-2 ABPs even when $\text{char}(\F)=2$. Formally, they proved the following.

\medskip
\begin{theorem}[{\cite{dutta2024power}}]\label{thm:width-2-f2}
Any degree $d$ polynomial $f$, with the number of monomials $m$, can be approximated by $O(m2^d)$-size width-2 ABPs.  
\end{theorem}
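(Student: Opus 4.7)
The plan is to reduce the statement to approximating a single degree-$d$ monomial by a width-$2$ ABP of size $O(2^d)$ over $\F[\eps^{\pm 1}]$. Once this is established, an arbitrary $f$ with $m$ monomials is handled by serializing $m$ such single-monomial gadgets inside one width-$2$ ABP, inserting constant-weight bridge matrices between consecutive blocks that reinitialize the internal state and accumulate each finished monomial into the output coordinate. An $\eps$-scaling on the bridges ensures that products across different monomial blocks carry a strictly positive power of $\eps$ and therefore vanish in the limit; this contributes only an additive constant per monomial and leaves the overall size at $O(m \cdot 2^d)$.

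For the single-monomial step, I would start from the elementary identity
\[
 x_{i_1} \cdots x_{i_d} \;=\; \cf_{\eps^d}\Bigl(\,\prod_{j=1}^d (1 + \eps\, x_{i_j})\Bigr),
\]
which expresses the monomial as the top $\eps$-coefficient of a product of linear forms (a width-$1$ ABP object of length $d$). To extract this coefficient inside the border width-$2$ class I would use a Lagrange-type interpolation in the auxiliary variable $\eps$: choose $d+1$ distinct nodes $\alpha_0(\eps),\dots,\alpha_d(\eps) \in \F(\eps)$ and weights $c_0(\eps),\dots,c_d(\eps) \in \F(\eps)$, and write
\[
 x_{i_1} \cdots x_{i_d} \;=\; \sum_{k=0}^{d} c_k(\eps)\, \prod_{j=1}^d \bigl(1 + \alpha_k(\eps)\, x_{i_j}\bigr).
\]
Each summand is a width-$1$ product of length $d$. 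To fold the $(d{+}1)$ summands into one width-$2$ ABP I would use a balanced recursive merge: pairs of sub-ABPs are combined through an $\eps$-scaled $2 \times 2$ "adder" gadget, and the cross terms introduced by the merger are suppressed to strictly positive $\eps$-order via the scaling. Each level of the merge roughly doubles the length, giving the asserted $O(2^d)$ bound.

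The principal obstacle is $\mathrm{char}(\F) = 2$. The earlier proof of \cref{eq:vp2bar} by Bringmann--Ikenmeyer--Zuiddam used the polarization identity $ab = \bigl(\tfrac{a+b}{2}\bigr)^2 - \bigl(\tfrac{a-b}{2}\bigr)^2$, which is vacuous over $\F_2$. The route above avoids dividing by $2$ because the interpolation is carried out in $\F_2(\eps)$, which is infinite even though $\F_2$ is not, and because the $2 \times 2$ merge gadgets can be made explicit over the prime subfield. The delicate technical work lies in verifying that the $\eps$-scalings in the recursive merge are consistent, that the leading $\eps^0$ coefficient of the final ABP is exactly the monomial (and not some scalar multiple that collapses in characteristic $2$), and that the remaining error polynomial $S(\x, \eps) \in \F[\eps][\x]$ carries at least one positive power of $\eps$ so that the limit $\eps \to 0$ reproduces the monomial on the nose. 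Confirming these normalizations in characteristic $2$, where the familiar symmetries of $\mathrm{SL}_2$ partly degenerate, is the core of the argument.
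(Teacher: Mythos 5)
Your proposal identifies the right obstruction (the polarization identity $ab = \bigl(\tfrac{a+b}{2}\bigr)^2 - \bigl(\tfrac{a-b}{2}\bigr)^2$ breaking in characteristic~$2$), but the workaround you sketch has a structural gap that the interpolation step does not repair.

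The heart of the difficulty is not extracting a coefficient of $\eps^d$: it is putting a \emph{product} into a form that a width-$2$ ABP can hand off to an adder. The paper's proof works throughout with the matrix $Q(f)=\bigl(\begin{smallmatrix} f & 1 \\ 1 & 0 \end{smallmatrix}\bigr)$, in which addition is essentially free via $Q(f)\,Q(0)\,Q(g)=Q(f+g)$; the entire cost lies in building $Q(\x^{\e})$ from primitive $Q$-matrices. That is achieved by the explicit identity of \cite[Lemma~12]{dutta2024power}, which upgrades $Q(f)+\Oh(\eps^2)$ (realized by $s$ primitive factors) to $Q(fx)+\Oh(\eps)$ using $2s+4$ factors --- multiplication by a single variable \emph{doubles} the length. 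Iterating this $d$ times is exactly what produces the $O(2^d)$ per monomial, and this recursion is characteristic-free. Your proposal instead writes the monomial by Lagrange interpolation as $\sum_{k=0}^{d} c_k \prod_j (1+\alpha_k x_{i_j})$ and calls each summand ``a width-$1$ product of length~$d$.'' But a product of primitive $Q$-matrices is not a $Q$-matrix of the product: for instance $Q(\ell_1)Q(\ell_2)=\bigl(\begin{smallmatrix} \ell_1\ell_2+1 & \ell_1 \\ \ell_2 & 1 \end{smallmatrix}\bigr)\neq Q(\ell_1\ell_2)$. So those length-$d$ chains do not sit in the form your $2\times 2$ adder gadget needs, and converting any one of them into $Q(\prod_j(1+\alpha_k x_{i_j}))+\Oh(\eps)$ already requires the very multiplication gadget you were trying to sidestep, at cost $O(2^d)$ per summand. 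The interpolation therefore multiplies rather than reduces the work, giving $O(d\cdot 2^d)$ per monomial before the fold is even addressed.

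Two further issues. First, the adder gadget itself is left entirely to the imagination; since Allender and Wang proved that width-$2$ ABPs cannot exactly compute even $x_1x_2+\cdots+x_{15}x_{16}$, the existence and correctness of an approximating adder needs an explicit $\eps$-construction, and in the paper this is precisely the content of \autoref{cl:addition} (which only applies to inputs already in $Q$-form). Second, your size accounting is internally inconsistent: a balanced binary merge of $d+1$ pieces, with length doubling per level, has $O(\log d)$ levels and therefore yields $O(d^2)$, not $O(2^d)$; the exponential in the theorem genuinely comes from the multiplication recursion, not from any merge tree. I'd recommend discarding the interpolation and instead proving (or citing) the single $2\times 2$ identity that realizes $Q(fx)$ from $Q(f)$ over the prime field, then composing it $d$ times; the sum over $m$ monomials then follows for free from $Q(f)Q(0)Q(g)=Q(f+g)$.
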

This proof is {\em independent} of the characteristic of the field. Below, we will sketch both the proofs. 

\begin{proof}[Proof sketch of~\cref{eq:vp2bar}]
To facilitate understanding of the proofs and associated figures, recall that an algebraic branching program (ABP) naturally corresponds to an iterated matrix product, assuming a fixed numbering of the vertices in each layer (starting from 1). Specifically, for any two consecutive layers $i$ and $i+1$, define a matrix $M_i$ whose $(v, w)$-entry is the label of the edge from vertex $v$ in layer $i$ to vertex $w$ in layer $i+1$, or zero if no such edge exists. Then, the value computed by the ABP equals the matrix product $M_s \cdots M_2 M_1$. 

Additionally, for a polynomial $f \in \C[\eps^{\pm 1}][\x]$, define the matrix
\[
Q(f) \;:=\; \begin{pmatrix}
f & 1\\
1 & 0
\end{pmatrix}\;.
\]
A {\em primitive Q-matrix} is any matrix~$Q(\ell)$, where $\ell$ is a linear form over $\C[\eps^{\pm 1}]$.
For a $2\times 2$ matrix $M$ with entries in $\C[\eps^{\pm 1}][\x]$,  we use the shorthand notation $M + \Oh(\varepsilon^k)$ for
$M + \Big(\begin{smallmatrix}\Oh(\varepsilon^k)&\Oh(\varepsilon^k)\\\Oh(\varepsilon^k)&\Oh(\varepsilon^k)\end{smallmatrix}\Big)$,
where $\Oh(\varepsilon^k)$ denotes the set $\eps^k\, \FF[\eps, \x]$.
As a product of matrices, the ABP construction in the proof will be of the form $(\begin{smallmatrix}1 & 0\end{smallmatrix}) M_s \cdots M_2 M_1 (\begin{smallmatrix}1 \\ 0\end{smallmatrix})$ where the $M_i$ are primitive Q-matrices $Q(f)$ for which $f$ is either a constant from $\C[\varepsilon^{\pm 1}]$ or a variable. 

\paragraph{The measure $\mu_k$.}~For $k \in \N$, let us define the measure~$\mu_k(f) :=(s,t)$, where $s$ is the number of matrices such that $Q(f) + \Oh(\eps^k)$ can be written as $(\begin{smallmatrix}1 & 0\end{smallmatrix}) M_s \cdots M_2 M_1 (\begin{smallmatrix}1 \\ 0\end{smallmatrix})$, and $t$ is the highest $\eps$-error-degree. 

The following two claims constitute the main technical contributions and are sufficient to establish the main result. Their proofs are illustrated (primarily) through the figures presented below.

\medskip
\begin{claim}\label{cl:addition}
Given $f, g \in \C[\x]$, such that $\mu_k(f) = (s_1,t_1), \mu_k(g) = (s_2, t_2)$, we have $\mu_k(f+g) =  (s_1 + s_2 +1, t_1 + t_2)$
\end{claim}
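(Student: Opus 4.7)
The plan is to exploit the matrix identity
\[
Q(f+g) \;=\; Q(f)\cdot Q(0)\cdot Q(g),
\]
which will be verified by direct computation: first $Q(0)\cdot Q(g) = \bigl(\begin{smallmatrix}0&1\\1&0\end{smallmatrix}\bigr)\bigl(\begin{smallmatrix}g&1\\1&0\end{smallmatrix}\bigr) = \bigl(\begin{smallmatrix}1&0\\g&1\end{smallmatrix}\bigr)$, and then left-multiplication by $Q(f)$ yields $Q(f+g)$. The crucial structural observation is that $Q(0)$ is itself a primitive Q-matrix (it corresponds to the constant linear form $\ell = 0$), so this identity lets us paste two given primitive Q-matrix factorizations together with $Q(0)$ as glue, without leaving the model.

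Concretely, I would fix witnessing factorizations
\[
A_{s_1}\cdots A_1 \;=\; Q(f) + \Oh(\eps^k), \qquad B_{s_2}\cdots B_1 \;=\; Q(g) + \Oh(\eps^k),
\]
achieving $\mu_k(f) = (s_1, t_1)$ and $\mu_k(g) = (s_2, t_2)$, where each $A_i$ and $B_j$ is primitive. The candidate witness for $\mu_k(f+g)$ is then the product $A_{s_1}\cdots A_1 \cdot Q(0) \cdot B_{s_2}\cdots B_1$, which uses exactly $s_1 + 1 + s_2$ primitive Q-matrices and hence matches the first coordinate of the claim.

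For the $\eps$-error-degree, I would write $A_{s_1}\cdots A_1 = Q(f) + \eps^k E_1$ and $B_{s_2}\cdots B_1 = Q(g) + \eps^k E_2$, where the entries of $E_1$ and $E_2$ have $\eps$-degrees at most $t_1 - k$ and $t_2 - k$ respectively. Expanding the triple product and invoking the matrix identity gives
\[
A_{s_1}\cdots A_1 \cdot Q(0) \cdot B_{s_2}\cdots B_1 \;=\; Q(f+g) \;+\; \eps^k\bigl(E_1 Q(0) Q(g) + Q(f) Q(0) E_2\bigr) \;+\; \eps^{2k} E_1 Q(0) E_2.
\]
Since $Q(0)$, $Q(f)$ and $Q(g)$ are $\eps$-free, the $\eps$-degree of the full error term is bounded by $\max\bigl(t_1,\; t_2,\; 2k + (t_1 - k) + (t_2 - k)\bigr) = t_1 + t_2$, establishing the second coordinate.

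The main obstacle is genuinely modest: the proof reduces to checking the matrix identity and tracking $\eps$-degrees under multiplication by $\eps$-free matrices, both routine. The real work is deferred to the companion claim for $f \cdot g$, which will almost certainly need a more delicate matrix gadget --- something in the spirit of a commutator --- rather than a simple $Q(0)$ sandwich.
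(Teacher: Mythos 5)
Your proposal is correct and follows exactly the same route as the paper: the identity $Q(f)\,Q(0)\,Q(g) = Q(f+g)$, with $Q(0)$ supplying the single extra primitive $Q$-matrix. The paper's proof is the one-line statement of this identity (together with a figure), while you additionally spell out the direct verification and the bookkeeping of the $\eps$-error degree, which is a reasonable expansion of the same argument.
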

\begin{claim}\label{cl:squaring}
Given $f \in \C[\x]$ such that $\mu_3(f) = (s,t)$, we have $\mu_1(\pm f^2) = (O(s), O(t))$.
\end{claim}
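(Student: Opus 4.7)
Set $N := M_s \cdots M_2 M_1$, so the hypothesis $\mu_3(f) = (s,t)$ gives $N = Q(f) + E$ with every entry of the $2 \times 2$ error matrix $E$ lying in $\eps^3\,\C[\eps][\x]$ and of $\eps$-degree at most $t$. Every primitive Q-matrix is symmetric and so is $Q(f)$, hence the reversed product $N_{\mathrm{rev}} := M_1 M_2 \cdots M_s$ equals $N^T$ and satisfies $N_{\mathrm{rev}} = Q(f) + E^T$. The plan is to form the sandwich
\[
P \;:=\; N \cdot Q(\lambda\eps^{-2}) \cdot N_{\mathrm{rev}}
\]
for a nonzero $\lambda \in \C[\eps^{\pm 1}]$, and then prepend and append an $\Oh(1)$-length sequence of primitive Q-matrices to transform the leading term into $Q(\pm f^2)$, picking up only $\Oh(\eps)$ extra error.

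A direct $2\times 2$ calculation gives
\[
Q(f)\,Q(\lambda\eps^{-2})\,Q(f) \;=\; \begin{pmatrix} \lambda\eps^{-2} f^2 + 2f & \lambda\eps^{-2} f + 1 \\ \lambda\eps^{-2} f + 1 & \lambda\eps^{-2} \end{pmatrix}.
\]
Since the middle factor $Q(\lambda\eps^{-2})$ has an entry of size $\eps^{-2}$ while the errors $E, E^T$ have entries of $\eps$-valuation at least $3$, the cross and second-order error terms in the expansion of $P$ are each entrywise of order at most $\eps^{3}\cdot\eps^{-2} = \eps$. Hence
\[
P \;=\; Q(f)\,Q(\lambda\eps^{-2})\,Q(f) \;+\; \Oh(\eps),
\]
with $\eps$-degree still bounded by $\Oh(t)$.

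To conclude, I would choose constants $c_1,\ldots,c_r,c'_1,\ldots,c'_{r'} \in \C[\eps^{\pm 1}]$ so that pre- and post-multiplication of $P$ by $Q(c'_{r'})\cdots Q(c'_1)$ and $Q(c_1)\cdots Q(c_r)$ realizes a \emph{universal} $2\times 2$ identity, over $\C[\eps^{\pm 1}][y]$, converting $Q(y)\,Q(\lambda\eps^{-2})\,Q(y)$ into $Q(\pm y^2) + \Oh(\eps)$. Since $Q(y)\,Q(\lambda\eps^{-2})\,Q(y)$ has explicitly computable entries that are fixed polynomial combinations of $y$, $\lambda\eps^{-2}$, and the constants $1, 2$, the existence of such an $\Oh(1)$-length identity follows from Claim~1 applied to these fixed low-degree bivariate polynomials, each of which has $\mu_1$-complexity $\Oh(1)$. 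The sign $\pm$ arises from the parity of the number of swap matrices $Q(0)$ used to reposition the $\lambda\eps^{-2} f^2$ entry into the $(1,1)$-slot. Substituting $y = f$ and applying this $\Oh(1)$-length product to $P$ yields a product of $2s + \Oh(1) = \Oh(s)$ primitive Q-matrices equal to $Q(\pm f^2) + \Oh(\eps)$ with $\eps$-degree $\Oh(t)$, which is exactly $\mu_1(\pm f^2) = (\Oh(s), \Oh(t))$.

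The main obstacle is exhibiting the universal cancellation identity explicitly. One must verify that a constant-length product of primitive Q-matrices can simultaneously eliminate the $2y$, $\lambda\eps^{-2} y$, and $\lambda\eps^{-2}$ contributions across all four entries of $Q(y)\,Q(\lambda\eps^{-2})\,Q(y)$ \emph{and} rescale the $\lambda\eps^{-2} y^2$ term in the $(1,1)$-slot to $\pm y^2$, all without perturbing the surviving $y^2$ term by more than $\Oh(\eps)$ --- and all within the rigid format of primitive Q-matrices with constants in $\C[\eps^{\pm 1}]$. Following the strategy of~\cite{bringmann2018algebraic}, a cleaner route may be to lift the argument to an enlarged ``extended width-2'' formalism in which the desired cancellation corresponds to a transparent identity in the group generated by the primitive matrices, and then translate the result back at the end.
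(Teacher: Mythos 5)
Your overall strategy is the same as the paper's: sandwich two copies of a product that approximately computes $Q(f)$ between short, fixed sequences of primitive $Q$-matrices, and argue that the resulting product is $Q(\pm f^2) + \Oh(\eps)$. That much is correct in spirit, and your valuation arithmetic for why the cross-error terms stay at $\Oh(\eps)$ is the right kind of bookkeeping.

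However, there is a genuine gap, and it is exactly where you flag it. The claim that the existence of the correcting $\Oh(1)$-length $Q$-matrix products ``follows from Claim~1'' is not correct. Claim~1 (the addition claim) tells you how to concatenate two $\mu_k$-representations with a swap in order to \emph{add} the polynomials they compute; it gives you no handle whatsoever on the problem you actually face, which is to find fixed matrices $L$ and $R$ (each a product of a constant number of primitive $Q$-matrices over $\C[\eps^{\pm 1}]$) implementing the very rigid bilinear cancellation $L \cdot \bigl(Q(y)\, Q(\lambda\eps^{-2})\, Q(y)\bigr) \cdot R = Q(\pm y^2) + \Oh(\eps)$. That such $L$ and $R$ exist is a nontrivial fact, and the paper proves it by \emph{explicit construction}: it exhibits
\[
A = Q(-\eps^{-1})\, Q(\eps)\, Q(-\eps^{-1}),\qquad B = Q(1)\, Q(-1)\, Q(1)\, Q(\eps^{2}),\qquad C = Q(-\eps^{-1})\, Q(\eps-1)\, Q(1)\, Q(\eps^{-1}-1),
\]
and verifies by direct computation that $A \cdot (Q(f) + \Oh(\eps^3)) \cdot B \cdot (Q(f) + \Oh(\eps^3)) \cdot C = Q(-f^2) + \Oh(\eps)$. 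Without some such explicit computation (or a genuine structural argument about the monoid generated by primitive $Q$-matrices over $\C[\eps^{\pm 1}]$), the proof is incomplete; as written, the ``universal identity'' step is an assertion, not a proof.

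Two smaller remarks. First, the transpose/reversal trick is unnecessary: since $\mu_3(f)=(s,t)$ already provides a single product $N$ with $N = Q(f) + \Oh(\eps^3)$, one can simply use the same sequence twice, as the paper does; you gain nothing from $N_{\mathrm{rev}}$. Second, your design puts the $\eps^{-2}$ in the middle via $Q(\lambda\eps^{-2})$, whereas the paper keeps the middle matrix $B$ at nonnegative $\eps$-valuation (indeed $B = \bigl(\begin{smallmatrix}\eps^2 & 1 \\ -1 & 0\end{smallmatrix}\bigr)$) and pushes the $\eps^{-1}$ rescaling to the outer matrices $A$ and $C$. Both parametrizations can be made to work, but with the paper's choice the valuation tracking through the error terms is a little cleaner, which is worth knowing when you try to produce the missing explicit identity.
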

\begin{proof}[Proof sketch of~\autoref{cl:addition}]
Follows from the identity (also see~\cref{fig:addition}): 
\[(Q(f) + \Oh(\eps^k)) \cdot Q(0) \cdot (Q(g) + \Oh(\eps^k)) \;=\; Q(f+g) + \Oh(\eps^k)\;.\]  
\end{proof}
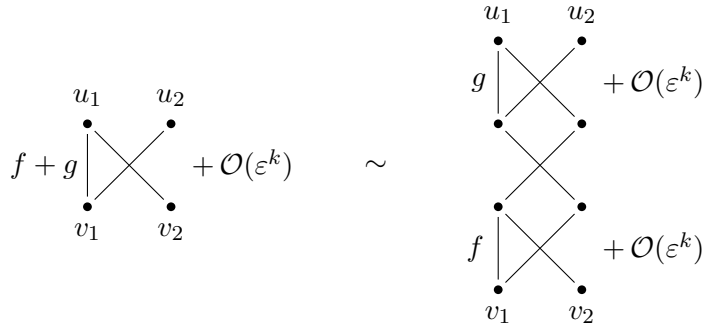
\begin{figure}[h]
\centering
\begin{minipage}{7em}
\begin {tikzpicture}
\node[state] (11) [label={$u_1$}]{};
\node[state] (12) [right=of 11, label={$u_2$}] {};
\node[state] (21) [below=of 11, label={-90:$v_1$}] {};
\node[state] (22) [below=of 12, label={-90:$v_2$}] {};
\path (11) edge (22);
\path (12) edge (21);
\path (11) edge node[left, xshift=-1pt] {$f + g$} (21);
\path (12) edge[draw=none] node[right] {$\hspace{0.5em}+\, \Oh(\eps^k)$} (22);
\end{tikzpicture}
\end{minipage}
\hspace{5em}$\sim$\hspace{2em}
\begin{minipage}{10em}
\begin {tikzpicture}
\node[state] (11) [label={$u_1$}]{};
\node[state] (12) [right=of 11, label={$u_2$}] {};
\node[state] (21) [below=of 11] {};
\node[state] (22) [below=of 12] {};
\node[state] (31) [below=of 21] {};
\node[state] (32) [below=of 22] {};
\node[state] (41) [below=of 31, label={-90:$v_1$}] {};
\node[state] (42) [below=of 32, label={-90:$v_2$}] {};
\path (11) edge (22);
\path (12) edge (21);
\path (12) edge[draw=none] node[right] {$\hspace{0.5em}+\, \Oh(\eps^k)$} (22);
\path (11) edge node[left, xshift=-1pt] {$g$} (21);
\path (21) edge (32);
\path (22) edge (31);
\path (31) edge (42);
\path (32) edge (41);
\path (32) edge[draw=none] node[right] {$\hspace{0.5em}+\, \Oh(\eps^k)$} (42);
\path (31) edge node[left, xshift=-1pt] {$f$} (41);
\end{tikzpicture}
\end{minipage}
\caption{Addition construction for \autoref {cl:addition}}\label{fig:addition}
\end{figure}
\begin{proof}[Proof sketch of~\autoref{cl:squaring}]
Let us define matrices $A,B,C$ as follows:
\[
A: = Q(-\eps^{-1}) \cdot Q(\eps)\cdot Q(-\eps^{-1}), B: = Q(1) \cdot Q(-1)\cdot Q(1) \cdot Q(\eps^{2})\,,
\]
\[
C := Q(-\eps^{-1}) \cdot Q(\eps - 1)\cdot Q(1) \cdot Q(\eps^{-1} -1)\;.
\]
Then one can check that 
\[
A \cdot (Q(f) + \Oh(\eps^3)) \cdot B \cdot (Q(f) + \Oh(\eps^3)) \cdot C = Q(-f^2) + \Oh(\eps)\;.
\]
One can check~\cref{fig:squaring}-\ref{fig:squaringsub} for the pictorial construction.
\end{proof}
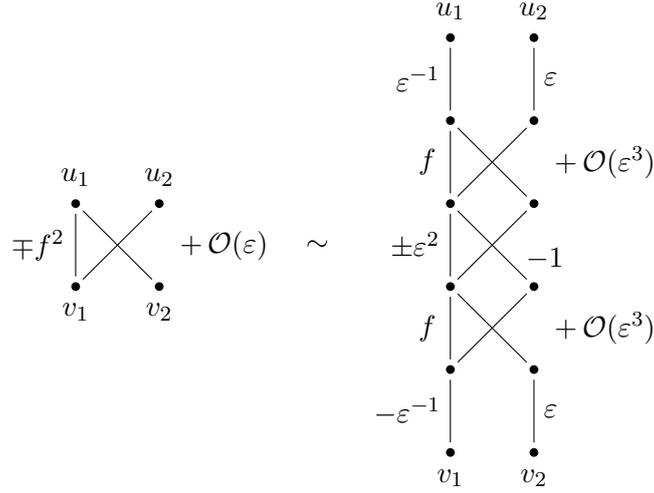
\begin{figure}[h!]
\centering
\begin{minipage}{7em}
\begin {tikzpicture}
\node[state] (11) [label={$u_1$}]{};
\node[state] (12) [right=of 11, label={$u_2$}] {};
\node[state] (21) [below=of 11, label={-90:$v_1$}] {};
\node[state] (22) [below=of 12, label={-90:$v_2$}] {};
\path (11) edge (22);
\path (12) edge (21);
\path (11) edge node[left, xshift=-1pt] {$\mp f^2$} (21);
\path (12) edge[draw=none] node[right] {$\hspace{0.5em}+\, \Oh(\eps)$} (22);
\end{tikzpicture}
\end{minipage}
\hspace{3em}$\sim$\hspace{1em}
\begin{minipage}{10em}
\begin {tikzpicture}
\node[state] (01) [label={$u_1$}]{};
\node[state] (02) [right=of 01, label={$u_2$}] {};
\node[state] (11) [below=of 01]{};
\node[state] (12) [right=of 11] {};
\node[state] (21) [below=of 11] {};
\node[state] (22) [below=of 12] {};
\node[state] (31) [below=of 21] {};
\node[state] (32) [below=of 22] {};
\node[state] (41) [below=of 31] {};
\node[state] (42) [below=of 32] {};
\node[state] (51) [below=of 41, label={-90:$v_1$}] {};
\node[state] (52) [below=of 42, label={-90:$v_2$}] {};
\path (01) edge node[left, xshift=-1pt] {$\eps^{-1}$} (11);
\path (02) edge node[right, xshift=1pt] {$\eps$} (12);
\path (11) edge (22);
\path (12) edge (21);
\path (12) edge[draw=none] node[right] {$\hspace{0.5em}+\, \Oh(\eps^3)$} (22);
\path (11) edge node[left, xshift=-1pt] {$f$} (21);
\path (21) edge node[right, xshift=10pt, yshift=-5pt] {$-1$} (32);
\path (22) edge (31);
\path (21) edge node[left, xshift=-1pt] {$\pm\eps^2$} (31);
\path (31) edge (42);
\path (32) edge (41);
\path (32) edge[draw=none] node[right] {$\hspace{0.5em}+\, \Oh(\eps^3)$} (42);
\path (31) edge node[left, xshift=-1pt] {$f$} (41);
\path (41) edge node[left, xshift=-1pt] {$-\eps^{-1}$} (51);
\path (42) edge node[right, xshift=1pt] {$\eps$} (52);
\end{tikzpicture}
\end{minipage}
\caption{Squaring construction for \autoref{cl:squaring}}\label{fig:squaring}
\end{figure}
\begin{figure}[h!]
\centering
\begin{minipage}{10em}
\begin {tikzpicture}
\node[state] (01) [label={$u_1$}]{};
\node[state] (02) [right=of 01, label={$u_2$}] {};
\node[state] (11) [below=of 01, label={-90:$v_1$}]{};
\node[state] (12) [right=of 11, label={-90:$v_2$}] {};
\node[state] (21) [below=of 11, yshift=-3em, label={$u_1$}] {};
\node[state] (22) [right=of 21, label={$u_2$}] {};
\node[state] (31) [below=of 21] {};
\node[state] (32) [below=of 22] {};
\node[state] (41) [below=of 31] {};
\node[state] (42) [below=of 32] {};
\node[state] (51) [below=of 41] {};
\node[state] (52) [below=of 42] {};
\node[state] (61) [below=of 51, label={-90:$v_1$}] {};
\node[state] (62) [below=of 52, label={-90:$v_2$}] {};
\path (01) edge node[left, xshift=-1pt] {$\eps^{-1}$} (11);
\path (02) edge node[right, xshift=1pt] {$\eps$} (12);
\path (12) edge[draw=none] node[rotate=90] {$\sim$} (21);
\path (21) edge (32);
\path (22) edge (31);
\path (21) edge node[left, xshift=-1pt] {$\eps^{-1} - 1$} (31);
\path (31) edge (42);
\path (32) edge (41);
\path (31) edge node[left, xshift=-1pt] {$1$} (41);
\path (41) edge (52);
\path (42) edge (51);
\path (41) edge node[left, xshift=-1pt] {$\eps-1$} (51);
\path (51) edge node[left, xshift=-1pt] {$-\eps^{-1}$} (61);
\path (51) edge (62);
\path (52) edge (61);
\end{tikzpicture}
\end{minipage}
\hspace{1em}
\begin{minipage}{10em}
\begin {tikzpicture}
\node[state] (01) [label={$u_1$}]{};
\node[state] (02) [right=of 01, label={$u_2$}] {};
\node[state] (11) [below=of 01, label={-90:$v_1$}]{};
\node[state] (12) [right=of 11, label={-90:$v_2$}] {};
\node[state] (21) [below=of 11, yshift=-3em, label={$u_1$}] {};
\node[state] (22) [right=of 21, label={$u_2$}] {};
\node[state] (31) [below=of 21] {};
\node[state] (32) [below=of 22] {};
\node[state] (41) [below=of 31] {};
\node[state] (42) [below=of 32] {};
\node[state] (51) [below=of 41] {};
\node[state] (52) [below=of 42] {};
\node[state] (61) [below=of 51, label={-90:$v_1$}] {};
\node[state] (62) [below=of 52, label={-90:$v_2$}] {};
\path (01) edge node[left, xshift=-1pt] {$\pm\eps^{2}$} (11);
\path (01) edge node[right, xshift=10pt, yshift=-5pt] {$-1$} (12);
\path (02) edge (11);
\path (12) edge[draw=none] node[rotate=90] {$\sim$} (21);
\path (21) edge (32);
\path (22) edge (31);
\path (21) edge node[left, xshift=-1pt] {$\pm\eps^2$} (31);
\path (31) edge (42);
\path (32) edge (41);
\path (31) edge node[left, xshift=-1pt] {$1$} (41);
\path (41) edge (52);
\path (42) edge (51);
\path (41) edge node[left, xshift=-1pt] {$-1$} (51);
\path (51) edge node[left, xshift=-1pt] {$1$} (61);
\path (51) edge (62);
\path (52) edge (61);
\end{tikzpicture}
\end{minipage}
\begin{minipage}{10em}
\begin {tikzpicture}
\node[state] (01) [label={$u_1$}]{};
\node[state] (02) [right=of 01, label={$u_2$}] {};
\node[state] (11) [below=of 01, label={-90:$v_1$}]{};
\node[state] (12) [right=of 11, label={-90:$v_2$}] {};
\node[state] (21) [below=of 11, yshift=-3em, label={$u_1$}] {};
\node[state] (22) [right=of 21, label={$u_2$}] {};
\node[state] (31) [below=of 21] {};
\node[state] (32) [below=of 22] {};
\node[state] (41) [below=of 31] {};
\node[state] (42) [below=of 32] {};
\node[state] (51) [below=of 41, label={-90:$v_1$}] {};
\node[state] (52) [below=of 42, label={-90:$v_2$}] {};
\path (01) edge node[left, xshift=-1pt] {$-\eps^{-1}$} (11);
\path (02) edge node[right, xshift=1pt] {$\eps$} (12);
\path (12) edge[draw=none] node[rotate=90] {$\sim$} (21);
\path (21) edge (32);
\path (22) edge (31);
\path (21) edge node[left, xshift=-1pt] {$-\eps^{-1}$} (31);
\path (31) edge (42);
\path (32) edge (41);
\path (31) edge node[left, xshift=-1pt] {$\eps$} (41);
\path (41) edge (52);
\path (42) edge (51);
\path (41) edge node[left, xshift=-1pt] {$-\eps^{-1}$} (51);
\end{tikzpicture}
\end{minipage}
\caption{Squaring construction subroutines for $C$, $B$, and $A$ for \autoref{cl:squaring}}\label{fig:squaringsub}
\end{figure}
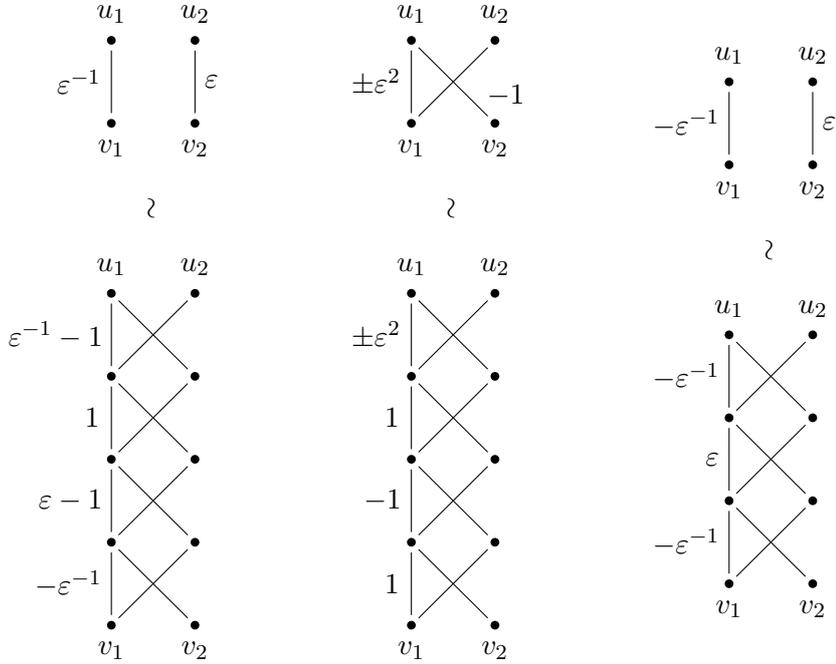
\paragraph{Wrapping up the proof.}~Using the identity: $f \cdot g = \left(\frac{f+g}{2}\right)^2 - \left(\frac{f-g}{2}\right)^2$, one can show that if $\mu_3(f) = (s_1,t_1)$ and $\mu_3(g) = (s_2,t_2)$, then 
\[
\mu_1(f\cdot g) \;=\; (O(s_1+s_2), O(t_1+t_2))\;.
\]
This allows us to induct on the depth of a formula: at each addition or multiplication gate, we apply \autoref{cl:addition} and the above multiplication trick to obtain the following proposition: 

If a depth-$\Delta$ formula of fanin-2 can compute $f$, then there exists some $c_1, c_2 \in \N$, such that
\[
\mu_1(f) \;=\; (c_1^{\Delta},c_2^{\Delta})\,.
\]
Now, given any $f \in \VPe$, we invoke the classical depth-reduction result of Brent~\cite{brent1974parallel} (see also~\cite[Lemma 5.5]{saptharishi2015survey}), states: If a family $(f_n)$ has polynomially bounded formula size, then there are formulas computing~$f_n$ that have size $\poly(n)$ and depth $\Delta = O(\log n)$.

Applying the proposition above, we conclude that $\VPe \subseteq \overline{\VPk{2}}$. Since it is immediate that $\VPk{2} \subseteq \VPe$ implies $\overline{\VPk{2}} \subseteq \overline{\VPe}$, we obtain the desired inclusion stated in~\cref{eq:vp2bar}.
\end{proof}

\begin{proof}[Proof sketch of~\cite{dutta2024power}.]
As seen earlier, the key building block in the proof of~\cite{bringmann2018algebraic} is the matrix $Q(f)$. That proof crucially relies on the identity: $fg = (\frac{1}{2}(f+g))^2 - (\frac{1}{2}(f-g))^2$, which fails over fields of characteristic 2. Consequently, the argument in~\cite{bringmann2018algebraic} does not extend to such fields.

The work of~\cite{dutta2024power} circumvents this barrier by avoiding the need to compute the product of arbitrary polynomials. Instead, to establish universality, it suffices to show how to compute $Q(fx)$ from $Q(f)$, where $x$ is a variable. The key advantage here is that $x$ is not an arbitrary polynomial, allowing us to explicitly use $2 \times 2$ matrices involving only constants and the variable $x$ in the computation of $Q(fx)$. In contrast, computing $Q(fg)$ directly would require access to both $f$ and $g$—which are typically available only as $Q$ matrices or through inductive constructions. This shift is captured in the central technical lemma from~\cite{dutta2024power}:

\medskip
\begin{claim}[{\cite[Lemma 12]{dutta2024power}}] \label{cl:char=2}
If $\mu_2(f) = (s,t)$, then $\mu_1(fx) = (2s+4, O(t))$.
\end{claim}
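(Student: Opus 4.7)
The plan is to exhibit four primitive $Q$-matrices $P_1,P_2,P_3,P_4$, with entries in $\C[\eps^{\pm 1}][x]$ of bounded $\eps$-degree, such that whenever $M_s\cdots M_1 = Q(f)+\Oh(\eps^2)$ is the given factorization, one has
\[
P_1 \cdot (M_s\cdots M_1) \cdot P_2 \cdot P_3 \cdot (M_s\cdots M_1) \cdot P_4 \;=\; Q(fx) + \Oh(\eps).
\]
Concatenating the two copies of the factorization with the four extras yields $2s+4$ primitive $Q$-matrices in total, and since the entries of $P_1,\dots,P_4$ use only a constant number of $\eps$-powers, the new error-degree is at most $t$ plus a constant, giving the $\Oh(t)$ bound.

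The guiding idea is the characteristic-free identity $fx = (f+x)\cdot f - f^2$ together with the addition formula $Q(g)\cdot Q(0)\cdot Q(h) = Q(g+h)$ from \cref{cl:addition}. Choosing $P_2 \cdot P_3 = Q(0)\cdot Q(\alpha x)$ for an appropriate $\alpha \in \C[\eps^{\pm 1}]$ collapses the middle portion of the product to $(Q(f+\alpha x)+\Oh(\eps)) \cdot (Q(f) + \Oh(\eps^2))$. The $(1,1)$-entry of this inner $2\times 2$ matrix then contains the cross-term $\alpha x f$ alongside an unwanted $f^2$ and a constant $1$. The outer primitives $P_1$ and $P_4$ are selected, by direct calculation, so that the compensating $\eps$-powers in their entries convert $\alpha x f$ to $xf$, push the $f^2$ contribution into the $\Oh(\eps)$-error, and simultaneously make the $(1,2)$, $(2,1)$, $(2,2)$-entries approach $1$, $1$, $0$ modulo $\Oh(\eps)$.

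The main obstacle is the very tight four-parameter budget: four scalar degrees of freedom (the linear-form entries of $P_1,\dots,P_4$) must be tuned to satisfy four independent entry-wise conditions on the final $2\times 2$ matrix, and the calculation must go through uniformly in the characteristic, since the polarization shortcut $(f+x)^2 - (f-x)^2 = 4fx$ is unavailable in characteristic~$2$. In contrast to \cref{cl:squaring}, which needs $11$ extra primitives because producing an exact diagonal $\eps$-rescaling already costs $3$ and producing the middle $\begin{pmatrix}\eps^2 & 1 \\ -1 & 0\end{pmatrix}$ another $4$, the multiplication-by-$x$ operation is sufficiently simpler that one should be able to get away with a total of $4$. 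Once the leading-order identity is verified by an explicit $2\times 2$ matrix multiplication, the error analysis is routine: each $\Oh(\eps^2)$-error in the two $Q(f)$-factorizations is amplified by at most one factor of $\eps^{-1}$ coming from the outer primitives, giving a combined $\Oh(\eps)$ error, and the constant $\eps$-degrees introduced by $P_1,\dots,P_4$ imply the new precision parameter is $\Oh(t)$.
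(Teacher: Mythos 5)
Your proposal has the right shape -- sandwiching two copies of the $Q(f)+\Oh(\eps^2)$ factorization between four fixed matrices -- and that does match the paper's construction, which is
\[
Q(fx)+\Oh(\eps)=\begin{pmatrix}\tfrac1\eps&0\\0&1\end{pmatrix}\cdot(Q(f)+\Oh(\eps^2))\cdot\begin{pmatrix}\eps&1\\0&1\end{pmatrix}\begin{pmatrix}\tfrac1\eps&x\\-1&1\end{pmatrix}\cdot(Q(f)+\Oh(\eps^2))\cdot\begin{pmatrix}1&0\\1&-\eps\end{pmatrix}.
\]
But your specific plan has a genuine gap, and I do not think it can be repaired as stated. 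First, the paper's four extra matrices are \emph{not} primitive $Q$-matrices: they are triangular and diagonal with determinants like $1/\eps$ or $-\eps$, which cannot be written as products of $Q(\ell)$'s (each $Q(\ell)$ has determinant $-1$). Insisting on primitive $Q$-matrices is therefore already a stricter problem than the paper solves, and it creates an obstruction you cannot get around: left/right multiplication by $Q(\ell_1)$, $Q(\ell_4)$ sends $M=\begin{pmatrix}a&b\\c&d\end{pmatrix}$ to a matrix whose $(2,2)$-entry is exactly $a$. So whatever your inner product $Q(f)\,P_2P_3\,Q(f)$ puts in the $(1,1)$-slot must itself be $\Oh(\eps)$, because the target $Q(fx)+\Oh(\eps)$ has $(2,2)$-entry $\Oh(\eps)$.

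This is precisely where your choice of middle pair fails. With $P_2P_3=Q(0)Q(\alpha x)=\begin{pmatrix}1&0\\\alpha x&1\end{pmatrix}$, a short computation gives $Q(f)\begin{pmatrix}N_{11}&N_{12}\\N_{21}&N_{22}\end{pmatrix}Q(f)$ with $(1,1)$-entry $N_{11}f^2+(N_{12}+N_{21})f+N_{22}$; here $N_{11}=1$, so you get $f^2+\alpha xf+1$. There is no way to make this $\Oh(\eps)$ uniformly in $f$ -- the $f^2$ term is the whole problem, and it lands in the $(2,2)$-slot after conjugation by the primitive outer factors. Your framing ``push the $f^2$ contribution into the $\Oh(\eps)$-error'' presupposes a degree of freedom (an $\eps$-rescaling of a row or column) that primitive $Q$-matrices simply do not have. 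More fundamentally, the paper never goes through the identity $fx=(f+x)f-f^2$ at all: its middle product $\begin{pmatrix}\eps&1\\0&1\end{pmatrix}\begin{pmatrix}1/\eps&x\\-1&1\end{pmatrix}=\begin{pmatrix}0&\eps x+1\\-1&1\end{pmatrix}$ has $(1,1)$-entry \emph{zero}, so $Q(f)\,(\cdot)\,Q(f)$ produces $\eps xf+1$ directly, with no $f^2$ ever appearing, and the outer $\begin{pmatrix}1/\eps&0\\0&1\end{pmatrix}$ and $\begin{pmatrix}1&0\\1&-\eps\end{pmatrix}$ then rescale and clean up. You can also check that no pair of primitive $Q$-matrices $Q(\ell_2)Q(\ell_3)$ can give $N_{11}=\ell_2\ell_3+1\approx 0$ together with $N_{12}+N_{21}=\ell_2+\ell_3$ having a nonzero $x$-coefficient modulo $\eps$: the conditions $\ell_2\ell_3\approx -1$ and $\ell_2+\ell_3\not\approx 0$ in $x$ are jointly unsatisfiable for linear forms over $\C[\eps^{\pm1}]$. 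So the four-primitive-matrix framework has to be abandoned; you need general width-$2$ transition matrices, and the right invariant to aim for is that the middle matrix be nilpotent in the $(1,1)$-slot, not that it realize an ``addition'' in the sense of \cref{cl:addition}.
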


The proof of~\autoref{cl:char=2}] follows from the identity:
\[
Q(fx) + \Oh(\eps) \;=\;\begin{pmatrix}
\frac{1}{\epsilon} & 0 \\
0 & 1 
\end{pmatrix} \cdot (Q(f) + \Oh(\eps^2)) \cdot \begin{pmatrix}
\epsilon & 1\\
0 & 1 
\end{pmatrix}  \begin{pmatrix}
\frac{1}{\epsilon} & x \\
-1 & 1
\end{pmatrix}  (Q(f) + \Oh(\eps^2))  \cdot
\begin{pmatrix}
1 & 0 \\
1 & -\epsilon 
\end{pmatrix}\;.
\]  
Although \autoref{cl:char=2} does not enable the multiplication of two arbitrary polynomials, it is nonetheless sufficient to establish universality. Let $f$ be a polynomial with $m$ monomials. For any monomial $\x^{\e}$ appearing in $f$, repeated application of \autoref{cl:char=2} yields a sequence of $\mathcal{O}(2^{\deg(\x^{\e})})$ matrices that approximately computes $Q(\x^{\e})$. Therefore, by linearity, one can approximately compute $Q(f)$ using a sequence of $\mathcal{O}(m \cdot 2^{\deg(f)})$ matrices.

This completes the universality argument in~\cite{dutta2024power}.
\end{proof}
\begin{question}
Is $\overline{\VPe}=\overline{\VBP_2}$, over fields of characteristics $2$?
\end{question}

\subsection{Debordering Border Depth-4 Circuits and Beyond}
Looking at the success story of debordering border bounded depth-3 circuits, one can 
analogously look at the depth-4 model. A depth-$4$ circuit $\Sigma\Pi\Sigma\Pi$ computes a polynomial of the form
 \begin{equation}\label{eq:depth-4}
 f(\x)\;:=\;T_1 \;+\;\cdots\;+\; T_k\;,~~~\textrm{and}~~~T_i\;=\;\prod_{i=1}^d\,f_{i,j}\,,
 \end{equation}
where $f_{i,j}$ are $s$-sparse polynomials. We use~$\SPSP(k,n,d,s)$ to denote the set of all such depth-4 circuits. Further, when $\deg(f_{i,j}) \le \delta$, we denote the set by $\SPSP_{\delta}(k,n,d,s)$. Note that $s$ can be at most $\binom{n+\delta}{\delta}$. The size of the circuit is defined to be $s':=knds$. We will often write $\SPSP(k)$, when there is no restriction on the degree of $f_{i,j}$, and $\SPSP_{\delta}(k)$, when the bottom product fanin is bounded by $\delta$; in both cases the size is polynomially bounded. 

We introduce two more models: $\Sigma\wedge\Sigma\Pi_{\delta}$, and $\Sigma\Pi\Sigma\wedge$. The first computes polynomials of the form $\sum_{i\in [k]} g_i^{e_i}$, where $\deg(g_i) \le \delta$, while the second computes polynomials of the form $\sum_{i \in [k]} \prod_{j=1}^d f_{i,j}$, where each $f_{i,j} = \sum_{t \in [n]} f_{i,j,t}(x_t)$, can be written as a sum of univariate polynomials. We will denote them by $\Sigma\wedge\Sigma\Pi_{\delta}(k)$, and $\Sigma\Pi\Sigma\wedge(k)$.

\cref{thm:k=2} shows that $\approxbar{\SPSP_{1}(k)} \subseteq \VBP$, for any constant $k \in \N$. What happens when $\delta =2$? One can use the \textsf{DiDIL} technique as before, to conclude the following.

\medskip
\begin{theorem}[Implicit in~{\cite{dutta2022}}]
Let $k, \delta \in \N$. Then, 
\[
\approxbar{\Sigma\wedge\Sigma\Pi_{\delta}(k)} \;\subseteq\; \VBP \;\implies\; \approxbar{\SPSP_{\delta}(k)} \;\subseteq\; \VBP\;.
\]
\end{theorem}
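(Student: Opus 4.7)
The plan is to lift the \textsf{DiDIL}-based proof of \cref{thm:k=2} (and its general-$k$ extension from \cref{sec:proof-general-k}) from bottom fanin $1$ to bottom fanin $\delta$, replacing the role of $\approxbar{\SES}$ throughout by $\approxbar{\Sigma\wedge\Sigma\Pi_{\delta}}$, which by hypothesis lies inside $\VBP$. Starting from $g = T_1 + \cdots + T_k \simeq f$ with $T_i = \eps^{-a_i}\prod_{j=1}^{d} f_{i,j}$ and each $f_{i,j} \in \C[\eps][\x]$ of $\x$-degree at most $\delta$, first apply the random shift-and-scale $\Phi : x_i \mapsto z x_i + \alpha_i$ of \cref{eq:phi-map} with $\a \in \C^n$ chosen so that $f_{i,j}(\a) \ne 0$ for all $i,j$. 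Then, dividing by $\tilde{T}_k := \eps^{a_k}\Phi(T_k)$ and differentiating with respect to $z$ peels off one summand, landing the expression inside a depth-$5$ ``bloated'' class $\Sigma^{[k-1]}(\Pi\Sigma\Pi_\delta/\Pi\Sigma\Pi_\delta)(\Sigma\wedge\Sigma\Pi_\delta/\Sigma\wedge\Sigma\Pi_\delta)$, the direct analog of $\GenSES(k,s)$. Iterating $k-1$ times and then setting $z=0$ leaves an expression of the shape $\approxbar{\Sigma\wedge\Sigma\Pi_\delta}/\approxbar{\Sigma\wedge\Sigma\Pi_\delta}$.

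\textbf{Key closure lemma.} The technical heart is the bottom-fanin-$\delta$ analog of \cref{dlog-ses}, which guarantees closure of the bloated class under one \textsf{DiDIL} step with only polynomial blow-up. After the shift, each factor decomposes as $\Phi(f_{i,j}) = c_{i,j}\bigl(1 + z\,h_{i,j}/c_{i,j}\bigr)$ where $c_{i,j} := f_{i,j}(\a) \in \C[\eps^{\pm 1}]^{\times}$ and $h_{i,j} \in \C[\eps][\x,z]$ has $\x$-degree at most $\delta$. Power-series expansion in $z$ (truncated modulo $z^{d+1}$) together with the product rule gives
\[
\dlog_z(\Phi(T_i)) \;=\; \sum_{j=1}^d \dlog_z(\Phi(f_{i,j})) \;=\; \sum_{j=1}^d \sum_{r \ge 0} (-1)^r\, c_{i,j}^{-(r+1)}\, h_{i,j}^{\,r}\bigl(h_{i,j} + z\,\partial_z h_{i,j}\bigr)\, z^r ,
\]
so the coefficient of $z^m$ is a $\Sigma\wedge\Sigma\Pi_{\delta}$-expression of size $\poly(n,d,m)$ over $\C[[\eps]]$ --- precisely \cref{dlog-ses} with $\wedge\Sigma$ upgraded to $\wedge\Sigma\Pi_\delta$. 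This is enough to propagate the entire induction of \cref{sec:proof-general-k} verbatim.

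\textbf{Finishing and the main obstacle.} After $k-1$ iterations, each coefficient of $z^i$ (for $0 \le i \le d$) in the resulting rational function is a ratio of two polynomial-size elements of $\approxbar{\Sigma\wedge\Sigma\Pi_\delta}$; the hypothesis converts both into polynomial-size ABPs, and Strassen's division elimination~\cite{strassen1973vermeidung} together with standard univariate interpolation in $z$ then recovers $\Phi(f)$ --- and hence $f$ --- by an ABP of size $\poly(n,d,s)^{\exp(k)}$. The main obstacle is the bookkeeping: one has to verify that each \textsf{DiDIL} step grows the wedge-fanin and the $\Pi\Sigma\Pi_\delta$-sizes of the bloated expression only polynomially, which amounts to checking that sums, products, and reciprocals (where defined in $\C[[\eps]]$) of $\Sigma\wedge\Sigma\Pi_\delta$-expressions can be re-expressed in the same class with a polynomial growth factor. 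Here the hypothesis does the heavy lifting: ratios of $\approxbar{\Sigma\wedge\Sigma\Pi_\delta}$-elements already live in $\VBP$, which is closed under all of these operations, so the induction closes and the reduction yields the desired containment $\approxbar{\SPSP_\delta(k)} \subseteq \VBP$.
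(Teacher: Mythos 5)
Your strategy coincides with the paper's intended one (lift \textsf{DiDIL} verbatim, swapping $\approxbar{\SES}$ for $\approxbar{\Sigma\wedge\Sigma\Pi_\delta}$), but your key closure lemma — the claim that $\cf_{z^m}\bigl(\dlog_z(\Phi(T_i))\bigr)$ is a polynomial-size $\Sigma\wedge\Sigma\Pi_\delta$-expression — does not follow from your displayed formula, and it is exactly the point where $\delta > 1$ stops being a carbon copy of $\delta = 1$. Since $f_{i,j}$ has degree up to $\delta$ in $\x$, the shifted factor $\Phi(f_{i,j}) = f_{i,j}(z\x + \a)$ has $z$-degree up to $\delta$, so $h_{i,j}$ in $\Phi(f_{i,j}) = c_{i,j} + z\,h_{i,j}$ lives in $\C[\eps][\x,z]$ and genuinely depends on $z$. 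The series $\sum_r (-1)^r c_{i,j}^{-(r+1)} h_{i,j}^r(h_{i,j}+z\partial_z h_{i,j})\,z^r$ is therefore not a $z$-power series with $z$-free coefficients; you cannot read off $\cf_{z^m}$ by taking $r=m$ as one does when $\tilde{\ell}_{i,j}$ is $z$-free in the $\delta=1$ proof. Expanding properly gives $\cf_{z^m}\bigl(\dlog_z\Phi(f_{i,j})\bigr) = \sum_{r=0}^{m}(-1)^r c_{i,j}^{-(r+1)}\tfrac{m+1}{r+1}\,\cf_{z^{m-r}}\bigl(h_{i,j}^{r+1}\bigr)$, and each $\cf_{z^{m-r}}(h_{i,j}^{r+1})$ is a sum over compositions of \emph{products} of up to $r+1$ degree-$\le\delta$ polynomials — a $\Sigma\Pi\Sigma\Pi_\delta$ object, not manifestly $\Sigma\wedge\Sigma\Pi_\delta$ — and rewriting such a product as a sum of powers via Fischer's identity costs $2^{r}$ summands, which is exponential when $r$ grows with the degree $d$.

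The missing ingredient is a univariate interpolation in $z$. Because $h_{i,j}^{r+1}$ has $z$-degree at most $(r+1)(\delta-1)$, one may write $\cf_{z^{m-r}}(h_{i,j}^{r+1}) = \sum_{l} \beta_l\, h_{i,j}(z{=}z_l,\x,\eps)^{r+1}$ as an interpolation over $O(r\delta)$ distinct evaluation points $z_l$, and each $h_{i,j}(z_l,\x,\eps)$ is a polynomial of $\x$-degree $\le \delta$; hence each term is a legitimate $\wedge\Sigma\Pi_\delta$-power and the whole coefficient is a $\Sigma\wedge\Sigma\Pi_\delta$-expression of size $\poly(n,d,m,\delta)$. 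With that step supplied, closure under one \textsf{DiDIL} step holds and the rest of your reduction (iteration, $z=0$, interpolation, division elimination) carries through as you describe. Without it, the step you call ``the technical heart'' is unsupported.
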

We omit the proof here, as it closely parallels the case of $\delta = 1$. The key idea remains the same: the $\dlog$ operator transforms a product gate into a $\wedge$ gate, while potentially increasing the top fan-in to an unbounded value.

Unfortunately, the first containment is not known even when $\delta=2$, because unlike $\delta=1$, the model $\Sigma\wedge\Sigma\Pi_{\delta}(k)$ is not contained in $\ARO$. However, it is known that the class $\Sigma\wedge\Sigma\wedge(k)$ is contained in $\ARO$, even for polynomially bounded $k$. Circuits in $\Sigma\wedge\Sigma\wedge(k)$ compute polynomials of the form $\sum_{i \in [k]} g_i^{e_i}$, where $g_i = \sum_{j \in [n]} g_{i,j}(x_j)$. This containment can be shown using \cref{lem:duality}. Once this is established, the \textsf{DiDIL} technique can be used to derive the following result.

\medskip
\begin{theorem}[\cite{dutta2022}]
$\approxbar{\Sigma\Pi\Sigma\wedge(k)} \subseteq \VBP$.   
\end{theorem}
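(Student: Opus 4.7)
The plan is to apply the \textsf{DiDIL} template that underlies~\cref{thm:k=2} for debordering $\overline{\SPS(k)}$, replacing the role played there by $\Sigma\wedge\Sigma$ with $\Sigma\wedge\Sigma\wedge$. The crucial algebraic input is the stated containment $\Sigma\wedge\Sigma\wedge \subseteq \ARO$, obtained by iterated application of the duality trick (\cref{lem:duality}): powers of a sum of univariates $\bigl(\sum_t g_t(x_t)\bigr)^d$ admit polynomial-size ROABPs in every variable order, and the $\ARO$ class is closed under $+$ and $\times$ with only an additive blow-up.

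Concretely, one writes $g = T_1 + \cdots + T_k \simeq f$ with $T_i = \prod_{j=1}^d f_{i,j}$ where each $f_{i,j} \in \C[\eps^{\pm 1}][\x]$ is a sum of univariates. After normalizing so that $\val_\eps(T_i) = -a$ is independent of $i$, apply the shift homomorphism $\Phi\colon x_t \mapsto z x_t + \alpha_t$ with random $\alpha_t \in \C$ chosen so that $f_{i,j}(\boldsymbol{\alpha}) \neq 0$ for all $(i,j)$. Writing $\Phi(f_{i,j}) = c_{i,j}\,(1 + z H_{i,j})$ with $c_{i,j} \in \C[\eps^{\pm 1}]^\times$, each $H_{i,j}$ is a polynomial in $z$ whose $z^k$-coefficients are sums of univariates in $\x$. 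The divide-and-derive step, i.e.\ forming $\partial_z(\Phi(f)/\tilde T_k)$, kills the last summand, reduces the top fan-in from $k$ to $k-1$, and uses the identity $\partial_z(A/B) = (A/B)\bigl(\dlog A - \dlog B\bigr)$ together with $\dlog(\prod_j \Phi(f_{i,j})) = \sum_j \dlog(\Phi(f_{i,j}))$.

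A direct power-series computation gives
\[
\dlog(1 + z H_{i,j}) \;=\; \bigl(H_{i,j} + z\,\partial_z H_{i,j}\bigr)\sum_{r \geq 0} (-z H_{i,j})^r,
\]
so the $z^j$-coefficient of $\dlog(\Phi(f_{i,j}))$ is a polynomial sum of products of at most $j{+}1$ sums of univariates; using the duality trick termwise, each such coefficient is computable by a polynomial-size $\Sigma\wedge\Sigma\wedge$, and hence by a polynomial-size $\ARO$. This suggests introducing the bloated class $\mathsf{Gen}(k') := \sum_{i=1}^{k'} (U_i/V_i)(P_i/Q_i)$ with $U_i,V_i \in \Pi(\Sigma\wedge)$ over $\C[\eps^{\pm 1}]$ and $P_i,Q_i \in \Sigma\wedge\Sigma\wedge$. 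The above calculation shows that one divide-and-derive step maps $\mathsf{Gen}(k')$ into $\mathsf{Gen}(k'-1)$ with only a polynomial size blow-up, so iterating $k-1$ times collapses the top fan-in to one.

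At the final stage, substituting $z=0$ and extracting the $\eps^0$-coefficient (as in \cref{eq:pfidea-thm1-z1=0-lim}) yields an expression in $\overline{\ARO}/\overline{\ARO}$, which by \cref{lem:aro} becomes $\ARO/\ARO$, i.e.\ a ratio of two polynomial-size ABPs. Chaining definite integration and coefficient extraction in $z$ as in~\cref{eq:pfidea-thm1-finaleq}--\ref{eq:main-abp-k=2} produces each $\mathrm{Coeff}_{z^i}(\Phi(f))$ as a polynomial-size ratio of ABPs; Strassen's division-elimination trick then gives an honest ABP of size $(snd)^{\exp(k)}$. Inverting the affine shift $\Phi$ transports the bound back to $f$, yielding $\overline{\Sigma\Pi\Sigma\wedge(k)} \subseteq \VBP$. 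The principal obstacle is maintaining the inductive invariant that at every step the intermediate rational functions stay inside $\mathsf{Gen}$ --- in particular, that $\dlog$-expansions of $\Pi(\Sigma\wedge)$ never escape $\Sigma\wedge\Sigma\wedge$. This is precisely where the ``$\wedge$-at-the-bottom'' restriction is used: if one tried to replace $\Sigma\wedge$ by a general $\Sigma\Pi$, the corresponding class $\Sigma\wedge\Sigma\Pi$ is not known to embed in $\ARO$, which is exactly why $\overline{\SPSP_\delta(k)} \subseteq \VBP$ remains open for $\delta \geq 2$.
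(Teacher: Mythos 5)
Your overall plan matches the paper's one-sentence sketch: run the \textsf{DiDIL} machinery with the inner class $\Sigma\wedge\Sigma\wedge$ playing the role that $\Sigma\wedge\Sigma$ plays in the $\overline{\SPS(k)}$ proof, invoke $\Sigma\wedge\Sigma\wedge(k)\subseteq\ARO$ via the duality trick, and close the loop with $\overline{\ARO}=\ARO$. So the route is not a genuinely different one.

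However, there is a concrete gap in the step where you claim that each $z^j$-coefficient of $\dlog(\Phi(f_{i,j}))$ is ``computable by a polynomial-size $\Sigma\wedge\Sigma\wedge$, and hence by a polynomial-size $\ARO$'' using ``the duality trick termwise.'' The reason this works so cleanly in the $\SPS(k)$ case is that the bottom layer is \emph{linear}: $\Phi(\ell) = c + z\tilde\ell$ with $\tilde\ell$ independent of $z$, so $\dlog(\Phi(\ell)) = \sum_{r\ge0}(-1)^r(\tilde\ell/c)^{r+1}z^r$ and the $z^j$-coefficient is a \emph{single} power of a linear form. Here $f_{i,j}\in\Sigma\wedge$ has higher degree, so $\Phi(f_{i,j})$ has $z$-degree up to $d$, and $H_{i,j}$ is genuinely $z$-dependent. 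Your expansion
\[
\dlog(1 + z H_{i,j}) \;=\; \bigl(H_{i,j} + z\,\partial_z H_{i,j}\bigr)\sum_{r \ge 0} (-z H_{i,j})^r
\]
then produces, for the $z^j$-coefficient, a sum over $r\le j$ of terms of the form $[z^{j-r}]\bigl((H+z\partial_z H)H^r\bigr) = \sum_{m_0+\cdots+m_r=j-r}h_{m_0}\cdots h_{m_r}$, i.e.\ a sum over compositions of $j-r$ into $r+1$ parts, which has $\binom{j-1}{r-1}$-type size and is exponential in $j$ for $r\approx j/2$. Worse, even a \emph{single} product $h_{m_1}\cdots h_{m_r}$ of $r$ distinct sums of univariates is not a $\wedge\Sigma\wedge$ gate; the duality trick (\cref{lem:duality}) applies to a \emph{power} $\bigl(\sum_t g_t\bigr)^d$, not to a product of different $\Sigma\wedge$ terms, and converting such a product into $\Sigma\wedge\Sigma\wedge$ by Fischer's identity costs $2^{r-1}$ summands. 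Since $j$ can reach $\Theta(d)$, this blows up exponentially. The claim ``polynomial-size $\Sigma\wedge\Sigma\wedge$'' is therefore unsupported as written, and this is exactly the invariant that the final $\overline{\Sigma\wedge\Sigma\wedge}\subseteq\ARO=\overline{\ARO}\subseteq\VBP$ step relies on. To repair it you cannot simply expand the power series; you need to keep $\dlog(\Phi(f_{i,j}))$ as a ratio whose numerator and denominator have $z^j$-coefficients in $\Sigma\wedge$ (each is a sum of univariate monomials), push the expression forward in that rational form through the inductive \textsf{DiDIL} steps, and only at the very end extract $z$-coefficients via linear recursion/Cramer's rule into ratios of polynomial-size ABPs, taking care that the $\eps^0$-extraction still lands in a class closed under approximation.
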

We leave this section with a couple of open questions.

\medskip
\begin{question}
Is $\approxbar{\Sigma\wedge\Sigma\Pi_{\delta}(k)} \;\subseteq\; \VBP$?   
\end{question}
We also introduce the following depth-5 model $\Sigma\wedge\Sigma\wedge\Sigma(k)$, which computed polynomials of the form $\sum_{i \in [k]} g_i^{e_i}$, where each $g_i$ can be computed by a polynomial-size $\SES$ circuit. We ask the following question.
\medskip
\begin{question}
Is $\overline{\Sigma\wedge\Sigma\wedge\Sigma(k)} \subseteq \VBP$?
\end{question}
We also ask whether we can extend the hierarchy theorem to bounded (top \& bottom fanin) depth-$4$ circuits. In particular,

\medskip
\begin{question}
Let $\delta \in \N$. Is $\overline{\SPSP_{\delta}(1)} \;\subsetneq\; \overline{\SPSP_{\delta}(2)} \;\subsetneq\; \overline{\SPSP_{\delta}(3)} \cdots$, where the respective gaps are exponential?   
\end{question}
Clearly, $\delta=1$ holds from~\cref{thm:k=2lb}, and it is unclear what happens even when $\delta=2$.


\subsection{Demystifying Border of \texorpdfstring{$3\times 3$}{} Determinants}

Consider the $3\times 3$ symbolic determinant
\[
\det_3 := \begin{pmatrix}
    x_1 & x_2 & x_3 \\
    x_4 & x_5 & x_6 \\
    x_7 & x_8 & x_9
  \end{pmatrix}  \in \C[x_1,\cdots,x_9]\;.
\]
We consider it as a homogeneous form of degree~$3$ on the space~$\bC^{3\times 3}$ of~$3\times 3$ matrices, denoted~$W$.
Let~$\C[W]_3$ denote the $165$-dimensional space of all homogeneous forms of degree~$3$ on~$W$.
The group~$G:=\GL(W)$ acts on~$\C[W]_3$ by right composition. 

For a nonzero~$f \in \C[W]_3$,
let~$\Omega(P)$ denote the (projective) orbit of~$P$, namely the set of all~$[P \circ a] \in \bP(\bC[W]_3)$, with~$a\in \GL(W)$.
The \emph{boundary} of the orbit of~$P$, denoted~$\partial\Omega(P)$, is~$\overline{ \Omega(P) } \setminus \Omega(P)$,
where~$\overline{\Omega(P)}$, denoted also~$\overline\Omega(P)$, is the Zariski closure of the orbit in~$\bP(\bC[W]_3)$. Understanding the boundary of such orbit closures is a key goal in geometric complexity theory (GCT), where one studies how structured polynomials (like $\det_n$ or $\per_n$) degenerate under linear transformations. 

In~\cite{huttenhain2016boundary} H\"uttenhain and Lairez characterized $\partial \Omega(\det_3)$ completely.

\medskip
\begin{theorem}[{\cite{huttenhain2016boundary}}]\label{thm:det3}
 The boundary of the orbit closure of $\det_3$, namely $\partial\Omega(\det_3)$ has exactly two irreducible components. These components are given by the closures of the orbits of the following two polynomials:   
 \begin{itemize}
  \item[(1)] A trace-zero symbolic matrix determinant:
  \[
  P_1 := \det
  \begin{pmatrix}
  x_1 & x_2 & x_3 \\
  x_4 & x_5 & x_6 \\
  x_7 & x_8 & -x_1 - x_5
  \end{pmatrix},
  \]
  which corresponds to a degeneration where the symbolic matrix is constrained to have trace zero.
  \item[(2)] A special quadric-in-cubic form:
  \[
  P_2 := x_4 x_1^2 + x_5 x_2^2 + x_6 x_3^2 + x_7 x_1 x_2 + x_8 x_2 x_3 + x_9 x_1 x_3,
  \]
  representing a structured degeneration where variables appear as coefficients of quadratic forms.
\end{itemize}
\end{theorem}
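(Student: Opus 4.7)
The plan has three stages: (a) show $\overline{\Omega(P_1)}, \overline{\Omega(P_2)} \subseteq \partial\Omega(\det_3)$, (b) verify that both are distinct irreducible components of codimension one in $\overline{\Omega(\det_3)}$, and (c) show that these exhaust all boundary components.

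For (a), I would construct explicit one-parameter families $g_\eps \in \GL_9$. The family for $P_1$ is transparent: take $g_\eps$ fixing $x_1,\ldots,x_8$ and sending $x_9 \mapsto \eps x_9 - x_1 - x_5$; this is invertible for $\eps \neq 0$, and $g_\eps \cdot \det_3$ is the determinant of the symbolic matrix whose $(3,3)$-entry is $\eps x_9 - x_1 - x_5$, converging to $P_1$ as $\eps \to 0$. For $P_2$ one needs a more delicate degeneration: scale the three ``first-row'' variables by one power of $\eps$ and the six ``lower'' variables by another, so that the $\eps \to 0$ limit of $g_\eps \cdot \det_3$, after an overall rescaling, extracts the leading-order term, which is quadratic in $\{x_1,x_2,x_3\}$ and linear in $\{x_4,\ldots,x_9\}$; a further $\GL_9$-change of basis brings this leading term into the normal form $P_2$. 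By~\cref{thm:equivalent-definitions}, both constructions place $P_1, P_2$ in $\overline{\Omega(\det_3)}$, and non-membership in $\Omega(\det_3)$ follows because the stabilizers of $P_1$ and $P_2$ in $\GL_9$ are both strictly larger than that of $\det_3$, a $\GL_9$-invariant obstruction.

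For (b), I would compute the Lie algebras of $\mathrm{Stab}(P_1)$ and $\mathrm{Stab}(P_2)$ in $\GL_9$ directly. Since $\dim \overline{\Omega(\det_3)} = 64$ in the $164$-dimensional projective space $\mathbb{P}(\C[W]_3)$, one expects $\dim \overline{\Omega(P_i)} = 63$; the explicit stabilizer computations verify this. Distinctness of the two orbit closures is then detected by a $\GL_9$-invariant: the stabilizer of $P_2$ preserves the splitting $W = \C^3 \oplus \C^6$ into its quadratic and linear variables, which imposes a reducibility structure not shared by any generic element of $\overline{\Omega(P_1)}$.

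The principal obstacle is (c). By~\cref{thm:image-closure}, every $f \in \overline{\Omega(\det_3)}$ arises as $\lim_{\eps \to 0} g_\eps \cdot \det_3$ along some curve $g_\eps$ in $\GL_9$, so the boundary is the union of all such limits. Using the left-right action of the stabilizer $(\GL_3 \times \GL_3) \rtimes \mathbb{Z}_2$ of $\det_3$, one reduces $g_\eps$ to a normal form depending on few parameters, and then studies the leading $\eps$-order term of $g_\eps \cdot \det_3$ via the weight decomposition of $\C[W]_3$ under a maximal torus of $\GL_9$. The claim is that, among all limits in $\partial\Omega(\det_3)$, only those $\GL_9$-equivalent to $P_1$ or $P_2$ yield orbit closures of the maximal boundary dimension $63$; all other limits lie in one of these two closures. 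Carrying out this enumeration, classifying Jordan types and weight configurations so as to rule out any third codimension-one component, is the delicate Lie-theoretic heart of the argument in~\cite{huttenhain2016boundary}.
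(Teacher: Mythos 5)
Your proposal for $P_1$, $P_2 \in \partial\Omega(\det_3)$ via explicit one-parameter families is the right strategy, and your $P_1$ construction (send $x_9 \mapsto \eps x_9 - x_1 - x_5$) works and is actually more direct than the survey's route through the hypersurface $Z = \{\det(a) = 0\}$. However, your construction for $P_2$ does not work as stated, and the gap is precisely the nontrivial ingredient of the argument.

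You propose to ``scale the three `first-row' variables by one power of $\eps$ and the six `lower' variables by another, so that the $\eps \to 0$ limit \ldots extracts the leading-order term, which is quadratic in $\{x_1,x_2,x_3\}$ and linear in $\{x_4,\ldots,x_9\}$.'' But every monomial of $\det_3$ contains exactly one factor from row $1$ and one from each of rows $2$ and $3$. Under the row split $(x_1,x_2,x_3)\mapsto\eps^a(\cdots)$, $(x_4,\ldots,x_9)\mapsto\eps^b(\cdots)$, every monomial of $\det_3$ acquires the same weight $a + 2b$: the $\eps$-filtration is trivial, and after rescaling the limit is just $\det_3$ again. No row (or column) split can produce $P_2$. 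The survey instead degenerates along the \emph{skew-symmetric/symmetric} decomposition of $W=\bC^{3\times 3}$: writing the general matrix as $A + \eps S$ with $A$ skew-symmetric and $S$ symmetric, one uses that $\det(A) \equiv 0$ in odd dimension, so the leading term of $\det(A+\eps S)$ is $\eps\,\mop{Tr}(\mop{adj}(A)S)$, which equals $u S u^T$ with $u$ the kernel vector of $A$ and is exactly $2P_2$. The essential point — finding a $3$-dimensional subspace of matrices on which $\det_3$ vanishes identically and reparametrizing $\GL_9$ around it before scaling — is exactly what a naive row/column scaling cannot discover.

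Two smaller remarks. For distinctness in (b), the survey uses the cleaner $\GL_9$-invariant $\nu(P) := \dim \Span(\partial P/\partial x_1, \ldots, \partial P/\partial x_9)$, noting $\nu(P_1) = 8$ but $\nu(P_2)=\nu(\det_3)=9$; this simultaneously handles non-membership of $P_1$ in $\Omega(\det_3)$ and distinctness of the two components, and is more robust than the reducibility-of-stabilizer heuristic you sketch. For (c), the survey, like you, defers the exhaustiveness argument to~\cite{huttenhain2016boundary}, so that deferral is fine.
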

It is straightforward to verify that $\dim(\Omega(\det_3)) = 64$, whereas $\dim(\Omega(P_1)) = \dim(\Omega(P_2)) = 63$. In what follows, we will only demonstrate that both $P_1$ and $P_2$ lie in the boundary and that their orbit closures form irreducible components. For the full proof that these are in fact the \emph{only} components of the boundary, along with further details, we refer the reader to~\cite{huttenhain2016boundary} and to H\"uttenhain's beautiful PhD thesis~\cite{huttenhain2017geometric}.

\medskip
\begin{proof}[Proof sketch of \cref{thm:det3}]
Define the rational map $\varphi : \mathbb{P}(\mathrm{End}(W)) \dashrightarrow \mathbb{P}(\Sym^3(W^*))$, via
\[
\varphi : [a] \mapsto [\det_3 \circ a]\;.
\]
This image on the open subset of invertible $a$ is the orbit $G \cdot \det_3$. Let also~$Z$ be the irreducible hypersurface of~$\bP(\End(W))$
\[ Z \;:=\; \left\{ [a]\in \bP(\End(W))\;\mid\; \text{det}(a) = 0 \right\}. \]
By definition, $\Omega(\detdrei) = \varphi(\bP(\End(W)) \setminus Z)$.
Let~$\varphi(Z)$ denote the image of the set of points of~$Z$ where~$\varphi$ is defined. The following claim proves (1) of \cref{thm:det3}.

\begin{claim}\label{cl:claim-1-det3}
$\overline{\varphi(Z)}$ is an irreducible component of~$\partial \Omega(\detdrei)$, and furthermore~$\overline{\varphi(Z)} = \overline\Omega(P_1)$.    
\end{claim}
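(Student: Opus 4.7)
The plan is to prove the claim in three stages, all exploiting the fact that $\varphi$ is $\GL(W)$-equivariant: with the natural action of $\GL(W)$ on $\bP(\End(W))$ by precomposition and its given action on $\bP(\C[W]_3)$, both $Z$ and $\overline{\varphi(Z)}$ become $\GL(W)$-stable subvarieties of $\bP(\End(W))$ and $\bP(\C[W]_3)$ respectively.

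First I would establish $\overline{\varphi(Z)} \subseteq \partial\Omega(\det_3)$, which has two halves. For $[a] \in Z$ with $\det_3 \circ a \neq 0$, the perturbation $a_\eps := a + \eps \cdot \mathrm{id}_W$ is invertible for generic $\eps$ and $\det_3 \circ a_\eps \to \det_3 \circ a$, so $[\det_3 \circ a]$ lies in the Euclidean closure and hence in $\overline{\Omega(\det_3)}$. Conversely, if $[a] \in Z$ then $a$ has a nonzero kernel vector $v \in W$, so $F := \det_3 \circ a$ is annihilated by the constant-coefficient derivation $\partial_v$. But for any $F' = \lambda\, \det_3 \circ g \in \Omega(\det_3)$ and any nonzero $v' \in W$, the chain rule gives $\partial_{v'} F' = \lambda\, (\partial_{gv'}\det_3) \circ g$, and the cofactor identity $\partial_w \det_3(X) = \mathrm{tr}(\mathrm{adj}(X)\, w)$ shows this to be a nonzero polynomial for every nonzero $w \in W$ (pick $X$ invertible with $\mathrm{tr}(X^{-1}w) \neq 0$). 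Hence $\varphi(Z) \cap \Omega(\det_3) = \emptyset$, and passing to the closure gives $\overline{\varphi(Z)} \subseteq \partial\Omega(\det_3)$.

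Next I would exhibit a concrete $a_0 \in Z$ with $\varphi([a_0]) = [P_1]$: take $a_0 \in \End(W)$ to leave the eight entries $x_1,\dots,x_8$ of a $3 \times 3$ matrix unchanged while sending the $(3,3)$-entry $x_9$ to $-x_1 - x_5$. This $a_0$ has rank $8$ with kernel spanned by the matrix unit $E_{33}$, so $[a_0] \in Z$, and by construction $\det_3 \circ a_0 = P_1$. Since $\overline{\varphi(Z)}$ is $\GL(W)$-invariant it contains the entire orbit $\Omega(P_1)$, and therefore $\overline{\Omega(P_1)} \subseteq \overline{\varphi(Z)}$.

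To conclude, I would combine these inclusions with a dimension count. The variety $Z$ is irreducible (as the $9 \times 9$ determinant is an irreducible polynomial), so $\overline{\varphi(Z)}$ is irreducible as well. A standard fact from the theory of reductive group actions is that the boundary of an open orbit in its orbit closure is pure of codimension one; applied here it tells us that every irreducible component of $\partial\Omega(\det_3)$ has the same dimension as $\overline{\Omega(P_1)}$. Hence the chain $\overline{\Omega(P_1)} \subseteq \overline{\varphi(Z)} \subseteq \partial\Omega(\det_3)$ forces $\overline{\varphi(Z)}$ to be an irreducible component of the boundary, and matching dimensions with the irreducible $\overline{\Omega(P_1)}$ sitting inside it forces $\overline{\varphi(Z)} = \overline{\Omega(P_1)}$. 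The main obstacle I foresee is justifying the purity of the boundary cleanly: for the given reductive action on projective space this is standard, but it is exactly the ingredient that promotes the pair of inclusions into an equality without having to compute $\dim \overline{\varphi(Z)}$ directly through a delicate fibre analysis of $\varphi|_Z$.
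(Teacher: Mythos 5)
Your argument is largely sound and follows the same skeleton as the paper's: show $\varphi(Z)\cap\Omega(\det_3)=\varnothing$ via a $\GL(W)$-invariant obstruction involving first-order derivatives, exhibit $P_1\in\varphi(Z)$, use irreducibility of $Z$, and finish with a dimension count. Your directional-derivative argument (a kernel vector of $a$ gives a vanishing derivation of $\det_3\circ a$, while no such derivation kills $\det_3\circ g$ for invertible $g$) is essentially the same obstruction the paper packages as the invariant $\nu(P)=\dim\,\mathrm{span}\{\partial P/\partial x_1,\dots,\partial P/\partial x_9\}$, with $\nu\le 8$ on $\varphi(Z)$ but $\nu=9$ on the orbit. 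Either form works.

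The one real problem is your appeal to \emph{purity of codimension one} of the boundary as a ``standard fact from the theory of reductive group actions.'' That statement is false in general: $\GL_n$ acting on $\bbC^n$ has open orbit $\bbC^n\setminus\{0\}$ whose complement in the orbit closure has codimension $n$, not $1$. Purity of orbit-closure boundaries is a delicate, case-dependent property; for $\det_n$ it is a theorem, not a black box you can freely cite. Fortunately, you do not need it. The paper's argument only uses the much weaker observation that $\partial\Omega(\det_3)$ is a \emph{proper} closed subvariety of the irreducible $64$-dimensional $\overline{\Omega(\det_3)}$, hence $\dim\partial\Omega(\det_3)\le 63$. Combined with $\dim\overline{\Omega(P_1)}=63$ and the chain $\overline{\Omega(P_1)}\subseteq\overline{\varphi(Z)}\subseteq\partial\Omega(\det_3)$, all three irreducible sets in sight have dimension $63$, which forces $\overline{\varphi(Z)}=\overline{\Omega(P_1)}$ and simultaneously forces $\overline{\varphi(Z)}$ to coincide with any irreducible component of the boundary containing it. Replace your purity step with this properness bound and the proof closes cleanly, exactly as the paper does.
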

\begin{proof}[Proof sketch of \autoref{cl:claim-1-det3}]
Consider the function~$\nu : \bC[W]_3\to \N$ which associates to~$P$ the
  dimension of the linear subspace of~$\bC[W]_2$ spanned by the partial
  derivatives~$\frac{\partial P}{\partial x_1}$,\,\dots\,, $\frac{\partial
  P}{\partial x_9}$. The function~$\nu$ is invariant under the action
  of~$\GL(W)$. 
  Because every form in~$\varphi(Z)$
  can be written as a polynomial in at most~$8$ linear forms, $\nu(P)\leq 8$ for all~$P\in\varphi(Z)$.
  On the other hand, $\nu(\det_3)=9$ and so~$\nu(P) = 9$ for any~$P\in\Omega(\detdrei)$.
  This shows that~$\varphi(Z)\cap\Omega(\detdrei) = \varnothing \implies \overline{\varphi(Z)} \subset \partial\Omega(\detdrei)$.
  Moreover~$\overline{\varphi(Z)}$ is irreducible because~$Z$ is.
  
  Clearly~$P_1 \in \varphi(Z)$ and  further one can show that $\Omega(P_1)$ has dimension~$63$.
  Since
  \[ \overline\Omega(P_1) \;\subset\; \overline{\varphi(Z)} \;\subset\; \partial\Omega(\detdrei), \]
  they all three have dimension~$63$ and $\overline\Omega(P_1) = \overline{\varphi(Z)}$
  because the latter is irreducible. This gives a component of~$\partial \Omega(\detdrei)$.
\end{proof}
The following claim shows (2) of \cref{thm:det3}. 
\begin{claim}\label{cl:claim-2-det3}
  The orbit closure~$\overline\Omega(P_2)$ is an irreducible component of~$\partial \Omega(\detdrei)$ and is distinct from~$\overline\Omega(P_1)$.
\end{claim}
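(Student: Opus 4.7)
The plan has four parts.

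First, I will establish $P_2\in\overline{\Omega(\detdrei)}$ by exhibiting an explicit one-parameter family. Concretely, I will produce a $3\times 3$ matrix $M_\eps(\x)$ whose entries are linear forms in $x_1,\dots,x_9$ with coefficients in $\C[\eps^{\pm 1}]$, such that the substitution $y_{ij}\mapsto (M_\eps)_{ij}$ is an invertible change of variables on the generic matrix for generic $\eps$ and $\lim_{\eps\to 0}\det M_\eps = P_2$. The natural ansatz has two rows whose $\eps^{-1}$-leading parts are both proportional to the vector $(x_1,x_2,x_3)$, so that those rows collapse to the same direction in the limit; the $\eps^0$-corrections are then chosen so that after the divergent $\eps^{-1}$ contributions cancel in the determinant expansion, the surviving terms assemble precisely into $P_2$, with the six monomials $x_1^2,x_2^2,x_3^2,x_1x_2,x_2x_3,x_1x_3$ paired with the coefficients $x_4,\dots,x_9$.

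Second, I will verify $P_2\notin\Omega(\detdrei)$ using the dimension counts already recorded after the statement of Theorem~\ref{thm:det3}, namely $\dim\Omega(\detdrei)=64$ and $\dim\Omega(P_2)=63$. Since distinct $\GL(W)$-orbits are disjoint, if $P_2$ lay in $\Omega(\detdrei)$ we would have $\Omega(P_2)=\Omega(\detdrei)$, contradicting the strict inequality of dimensions. Combined with the first step, this places $P_2$ in $\partial\Omega(\detdrei)$.

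Third, $\overline{\Omega(P_2)}$ is the Zariski closure of the image of the connected irreducible group $\GL(W)$ under the orbit map, so it is irreducible of dimension $63$. Since it lies in $\partial\Omega(\detdrei)$ and has the maximal possible dimension $63$ (the boundary of the $64$-dimensional irreducible $\overline{\Omega(\detdrei)}$ has dimension at most $63$), it is maximal among irreducible closed subsets of $\partial\Omega(\detdrei)$, and hence an irreducible component. To separate $\overline{\Omega(P_2)}$ from $\overline{\Omega(P_1)}$, I will reuse the $\GL(W)$-invariant $\nu$ from the proof of Claim~\ref{cl:claim-1-det3}. A direct computation gives $\nu(P_2)=9$: the six partials $\partial P_2/\partial x_i$ for $i\in\{4,\dots,9\}$ are the six independent degree-two monomials $x_1^2,x_2^2,x_3^2,x_1x_2,x_2x_3,x_1x_3$, while the three partials $\partial P_2/\partial x_i$ for $i\in\{1,2,3\}$ contain the variables $x_4,\dots,x_9$ and are linearly independent of the first six. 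Since $\nu(P_1)=8$ and the condition $\nu\le 8$ is Zariski closed (being the vanishing of all $9\times 9$ minors of the matrix of partial derivatives), $\overline{\Omega(P_1)}\subseteq\{\nu\le 8\}$, a closed set not containing $P_2$. Therefore $\overline{\Omega(P_2)}\ne\overline{\Omega(P_1)}$.

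The main obstacle is the first step: producing a specific approximating family whose limit is exactly $P_2$. While the rank-collapse mechanism is conceptually clear, arranging the $\eps^{-1}$-leading entries and $\eps^0$-corrections so that every divergent contribution cancels and the residual reassembles into precisely the six prescribed monomials of $P_2$ with the correct linear coefficients, without producing spurious cross-terms, requires a careful explicit computation.
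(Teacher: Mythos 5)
Your steps 2--4 match the paper's proof exactly: dimension $63 < 64$ forces $P_2 \notin \Omega(\detdrei)$; irreducibility of $\overline\Omega(P_2)$ from connectedness of $\GL(W)$ plus the dimension count makes it a component; and the $\GL(W)$-invariant $\nu$ (with $\nu(P_2)=9$, $\nu(P_1)=8$, $\{\nu\le 8\}$ closed) separates the two closures. Your spelling-out of why $\nu$ separates the \emph{closures}, not just the orbits, is a small but welcome elaboration.

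The genuine gap is the one you flag yourself: step 1 is not completed. You propose an ansatz (two rows whose $\eps^{-1}$-leading parts are proportional to $(x_1,x_2,x_3)$, with $\eps^0$-corrections arranged so that the divergent terms cancel and the remainder reassembles into $P_2$), but you do not exhibit the matrix or verify the cancellations, and you explicitly call this ``the main obstacle.'' This is where the paper uses a cleaner mechanism that sidesteps the cancellation bookkeeping: take $A$ to be the generic $3\times 3$ \emph{skew-symmetric} matrix in $x_1,x_2,x_3$ (so $\det A\equiv 0$, since $3$ is odd) and $S$ a generic symmetric matrix in $x_4,\dots,x_9$. Jacobi's formula gives $\det(A+\eps S)=\eps\cdot\mop{Tr}(\mop{adj}(A)S)+O(\eps^2)$, so projectively the limit of $[\det(A+\eps S)]$ as $\eps\to 0$ is $[\mop{Tr}(\mop{adj}(A)S)]$; and since $\mop{adj}(A)=u^T u$ for $u=(x_3,x_2,x_1)$, one computes $\mop{Tr}(\mop{adj}(A)S)=uSu^T=2P_2$. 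One also needs that $A+\eps S$ corresponds to an element of $\GL(W)$ for generic $\eps$, which holds since for $\eps\ne 0$ the $9\times 9$ coefficient matrix is invertible. Your ``proportional rows'' degeneration is a different geometric picture from the paper's ``perturb a singular skew-symmetric matrix'' picture, and until it is worked out explicitly it is not clear it lands exactly on $P_2$ rather than a $\GL(W)$-translate of some other boundary point; the skew-symmetric construction is what makes the computation a one-liner.
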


\begin{proof}[Proof of \autoref{cl:claim-2-det3}]
 We first prove that~$[P_2] \in \partial\Omega(\detdrei)$.
  Let
  \[ A = \begin{pmatrix} 0 & x_1 & -x_2 \\ -x_1 & 0 & x_3 \\ x_2 & - x_3 & 0.  \end{pmatrix}
    \text{ and }
    S = \begin{pmatrix} 2x_6 &  x_8 & x_9 \\ x_8 & 2x_5 & x_7 \\ x_9 & x_7 & 2x_4 \end{pmatrix}.
    \]
 Since~$\det(A) = 0$, by Jacobi's formula, it is not hard to argue that 
  the projective class of the polynomial~$\det(A+\eps S)$
  tends to~$[\mop{Tr}( \mop{adj}(A) S )]$ when~$\eps\to 0$, and by construction, this limit is a point in~$\overline \Omega(\detdrei)$.
  Besides, for $u = (x_3,x_2,x_1)$, we have
  \[ \mop{Tr}( \mop{adj}(A) S ) \;=\; u S u^T = 2P_2\;\implies [P_2] \in \overline\Omega(\detdrei)\;.\] 
  Yet~$[P_2]$ is not in~$\Omega(\detdrei)$, because its orbit has dimension~$63$, whereas the orbit of every point of~$\Omega(\detdrei)$
  is~$\Omega(\detdrei)$ itself.
  Therefore~$[P_2]$ is in the boundary~$\partial\Omega(\detdrei)$.
  Since~$\Omega(P_2)$ has dimension~$63$, this gives a compoment of~$\partial\Omega(\detdrei)$.
  It remains to show that~$[P_2]$ is not in~$\Omega(P_1)$, and indeed~$\nu(P_2) = 9$ whereas~$\nu(P_1) =8$, where~$\nu$ is the function introduced in the proof of \autoref{cl:claim-1-det3}.
\end{proof}
\let\qed\relax
\end{proof}
We leave this section by asking the following question.

\medskip
\begin{question}
Characterize $\partial \Omega(\det_4)$.
\end{question}
\section{Acknowledgements}
The authors thank the organizers of the {\em Workshop on Recent Trends in Computer Algebra} (\href{https://rtca2023.github.io/}{RTCA 2023}) at the Institut Henri Poincaré, Paris, for inviting the first author to give a survey talk on debordering and subsequently for inviting us to write this survey. They also thank the organizers and participants of the 8th Workshop on Algebraic Complexity Theory (\href{https://qi.rub.de/events/wact25/}{WACT 2025}) at Ruhr University Bochum for the engaging discussions, which helped clarify key aspects of the survey that warranted further emphasis.

P.D.~gratefully acknowledges support from the Jane Street Research Fellowship at the Simons Institute for the Theory of Computing, University of California, Berkeley. He is also supported by SUG from the College of Computing and Data Science, Nanyang Technological University Singapore, titled ``Debordering and Derandomization in Algebraic Complexity".   

V.L.~acknowledges support by the European Research Council (ERC Grant Agreement No. 101040907 ``SYMOPTIC''). Views and opinions expressed are however, those of the author(s) only and do not necessarily reflect those of the European Union or the European Research Council Executive Agency. Neither the European Union nor the granting authority can be held responsible for them.

The authors thank Fulvio Gesmundo, Christian Ikenmeyer, and Nitin Saxena for many helpful discussions.

\bibliographystyle{alpha}
\bibliography{references.bib}
\end{document}